\newcommand{\ourtitle}{{Efficient and Multiply Robust Risk Estimation\\under General Forms of Dataset Shift}}
\title{\ourtitle}
\date{}
\author[1]{Hongxiang Qiu}
\author[2]{Eric Tchetgen Tchetgen}
\author[2]{Edgar Dobriban\footnote{E-mail:
\texttt{qiuhongx@msu.edu},
\texttt{dobriban@wharton.upenn.edu},
\texttt{ett@wharton.upenn.edu}
}}
\affil[1]{Department of Epidemiology and Biostatistics, Michigan State University}
\affil[2]{Department of Statistics and Data Science, The Wharton School, University of Pennsylvania}
\begin{document}

\maketitle

\begin{abstract}
    Statistical machine learning methods often face the challenge of limited data available from the population of interest. One remedy is to leverage data from auxiliary source populations, which share some conditional distributions or are linked in other ways
    with the target domain.
    Techniques leveraging such \emph{dataset shift} conditions
    are known as \emph{domain adaptation} or \emph{transfer learning}.
    Despite extensive literature on dataset shift, limited works address how to efficiently use the auxiliary populations to improve the accuracy of risk evaluation for a given machine learning task in the target population.
    
    In this paper, we study the general problem of efficiently estimating target population risk under various dataset shift conditions, leveraging semiparametric efficiency theory.
    We consider a general class of dataset shift conditions, which includes three popular conditions---covariate, label and concept shift---as special cases.
    We allow for partially non-overlapping support between the source and target populations.
    We develop efficient and multiply robust estimators along with a straightforward specification test of these dataset shift conditions.
    We also derive efficiency bounds for two other dataset shift conditions, posterior drift and location-scale shift.
    Simulation studies support the efficiency gains due to leveraging plausible dataset shift conditions.
\end{abstract}

\tableofcontents

\section{Introduction} \label{sec: intro}

\subsection{Background}

A common challenge in statistical machine learning 
approaches to prediction
is that limited data is available from the population of interest, 
despite potentially large amounts of data from similar populations.
For instance, it may be of interest to predict HIV treatment response in a community based on only a few observations.
A large dataset from another community in a previous HIV study may help improve the training of such a prediction model.
Another example is building a classification or diagnosis model based on medical images for lung diseases \citep{Christodoulidis2017}. 
A key task therein is to classify the texture in the image, but the size of a labeled medical image sample is often limited due to the high cost of data acquisition and labeling. 
It may be helpful to leverage large existing public image datasets as supplemental data to train the classifier.

In these examples and others, 
it is desirable to use data from similar source populations to supplement target population data, under plausible \emph{dataset shift} conditions relating the source and target populations  \citep[see, e.g.,][]{Storkey2013,shimodaira2000improving,Sugiyama2012}.
Such methods are known as \emph{domain adaptation} or \emph{transfer learning} \citep[see, e.g.][]{kouw2018introduction,Pan2010}.

A great deal of work has been devoted to leveraging or addressing various dataset shift conditions. 
\citet{Polo2022} proposed testing for various forms of dataset shift. 
Among dataset shift types,
popular conditions include \emph{covariate shift}, 
where only the covariate distribution changes,
as well as
\emph{label shift}, where only the outcome distribution changes---also termed \emph{choice-based sampling} or \emph{endogenous stratified sampling} in \cite{manski1977estimation}, \emph{prior probability shift} in \cite{Storkey2013}, and \emph{target shift} in \cite{Scott2019,Zhang2013}.
Another popular condition is \emph{concept shift}, where the covariate or label distribution does not change---also termed \emph{conditional shift} in \cite{Zhang2013}. See, e.g., \cite{kouw2018introduction,moreno2012unifying,scholkopf2012causal}, for reviews of common dataset shift conditions.
These three conditions---covariate, label, and concept shift---are popular because they are interpretable, broadly applicable, and analytically tractable. 
There is extensive literature on machine learning under these dataset shift conditions \citep[e.g.,][among others]{sugiyama2007covariate,lipton2018detecting,Pathak2022,Ma2023}.

Other conditions and methods have also been studied for more specific problems; 
examples include (generalized, penalized) linear models \citep{bastani2021predicting,cai2022semi,chakrabortty2018efficient,gu2022robust,liu2020doubly,Liu2023,tian2022transfer,zhang2019semi,zhang2022class,zhou2022doubly}, binary classification \citep{cai2021transfer,Scott2019}, graphical models \citep{he2022transfer,li2022transfer}, and location-scale shifts \citep{Zhang2013}, among others.

For domain adaptation where a limited amount of fully observed data from the target population is available, 
there may be multiple valid methods to incorporate source population data.
It is thus important to understand which ones efficiently extract information from data in both source and target populations. 
However, the efficient use of source population data to supplement the target population data has only been recently studied. \cite{azriel2021semi,gronsbell2022efficient,Yuval2023,zhang2021double,zhang2022high} studied this problem for mean estimation and (generalized) linear models, 
under concept shift (i.e., for semi-supervised learning). 
\cite{Li2023} studied efficiency theory for data fusion with an emphasis on causal inference applications, a setting related to ours with a somewhat different primary objective. Other related works in data fusion include \citet{Angrist1992,Chatterjee2016,Chen2000,DOrazio2006,DOrazio2010,Evans2021,Rassler2012,Robins1995}, among others.
A study of domain adaptation with more general prediction techniques under general dataset shift conditions is lacking.

\subsection{Our contributions}

In this paper, we study the general problem of 
efficient model-agnostic risk estimation in a target population for data adaptation with fully observed data from both source and target populations under various dataset shift conditions.
We take the perspective of modern semiparametric efficiency theory \citep[see, e.g.,][]{Bolthausen2002,Pfanzagl1985,Pfanzagl1990,vanderVaart1998}, 
because many dataset shift conditions can be formulated as \emph{restrictions on the observed data generating mechanism}, 
yielding a semiparametric model.

We estimate the risk due to its broad applicability and central role in training predictive models and model selection \citep[e.g.,][etc]{Vapnik1992,Gyorfi2002}. 
Empirical risk minimization (ERM) is a fundamental approach in statistics and machine learning, 
where the goal is to minimize the empirical average of the loss---i.e., the risk---over a set of candidate models. 
Recent studies have highlighted the importance of accurate risk estimation for effective model selection
in settings where the risk cannot be estimated nonparametrically due to the presence of 
high-dimensional nuisance functions
\citep[e.g.,][etc]{vaart2006oracle, brookhart2006semiparametric, nie2021quasi, foster2023orthogonal}.  
Our risk estimators have the potential to be relevant and useful in such settings, by enabling improved model selection through improved risk estimation.
Likewise, the related goal of constructing prediction sets with guaranteed coverage \citep{Vovk2013,Qiu2022,Yang2022} often depends on precise estimates of the coverage error probability of constructed prediction sets \citep{Angelopoulos2021,Park2020,Park2022,Yang2022}. 

After presenting the general problem setup in Section~\ref{sec: setup}, we consider a general dataset shift condition, which we call \emph{sequential conditionals} (Condition~\ref{DScondition: independence} in Section~\ref{sec: independence}),
for scenarios where target population data is available while the source and target populations may have only partially overlapping support.
This condition includes covariate, label and concept shift as special cases.
We consider data where an observation $Z$ can be decomposed into components 
$(Z_1,\ldots,Z_K)$.
Under this condition, some of the conditional distributions $Z_k \mid (Z_1,\ldots,Z_{k-1})$ are shared between the target and source populations, for $k=1,\ldots, K$.
As our first main contribution, 
we propose a novel risk estimator that we formally show in Theorem~\ref{thm: independence eff and DR} to be semiparametrically efficient and multiply robust \citep{TchetgenTchetgen2009,Vansteelandt2007} under this dataset shift condition.

In particular, we propose to obtain flexible estimators $\hat{\theta}^k$ of certain nuisance conditional odds functions 
$\theta^k_*$ conditional on variables $(Z_1,\ldots,Z_k)$, and estimators $\hat{\ell}^k$ of conditional mean loss functions $\ell^k_*$ by \emph{sequential regression} of  $\hat{\ell}^{k+1}(Z_1,\ldots,Z_{k+1})$ on $(Z_1,\ldots,Z_k)$.
We show that our risk estimator is efficient given sufficient convergence rates of the product of errors of (i) $\hat{\theta}^k$ for $\theta^k_*$, and of (ii) $\hat{\ell}^k$ for $\ell^k_*$.
Moreover, when $\hat{\ell}^k_v$ converges to a certain limit function $\ell^k_\infty$, our estimator is $2^{K-1}$-robust.
Specifically, it is consistent if
for every $k=1,\ldots,K-1$, 
$\hat{\theta}^k$ is consistent for $\theta^k_*$ \emph{or} $\hat{\ell}^k$ is consistent for the \emph{oracle regression} $u^k$ of $\ell^k_\infty$; but not necessarily both.
The latter oracle regression is defined as the conditional expectation of $\ell^k_\infty$ on $(Z_1,\ldots,Z_k)$ under the true distribution.

Our choice of parametrization and the sequential construction of $\hat{\ell}^k$ are key to multiple robustness. 
Suppose instead that 
each true conditional mean loss function $\ell^k_*$ is instead parameterized \emph{directly}---rather than sequentially---as the regression of the loss on the variables $(Z_1,\ldots,Z_{k})$ in the target population, and accordingly construct $\hat{\ell}^k$ 
by direct regression in the target population.
Then the resulting estimating equation-based estimator using the efficient influence function is not guaranteed to be $2^{K-1}$-robust.

Based on this estimator, we further propose a straightforward specification test \citep{Hausman1978} of whether our efficient estimator converges to the risk of interest in probability, which can be used to test the assumed sequential conditionals condition.
In doing so, we theoretically analyze the behavior of our proposed estimator when the sequential conditionals condition fails. 
We 
analytically derive the bias 
due to the failure of sequential conditionals and show that, in this case, our estimator may diverge arbitrarily as sample size increases if the support of the source populations only partially overlaps with the target population. Under the sequential conditionals condition, such a scenario for the supports is allowed, 
but does not lead to this convergence issue. 
To obtain this result, we need a more careful analysis than the standard analysis of Z-estimators \citep[e.g., Section~3.3 in][]{vandervaart1996} because of the inconsistency of our estimator without the sequential conditionals condition.

Next, we investigate the efficient risk estimation problem in more detail for concept shift in the features and for covariate shift,  in Sections~\ref{sec: X con shift} and \ref{sec: cov shift}, respectively. 
We characterize when efficiency gains are large, develop simplified efficient and robust estimators, and study their empirical performance in simulation studies.
In particular, we show that our estimator is regular and asymptotically linear (RAL) \emph{even if the nuisance function is estimated inconsistently} under concept shift (Theorem~\ref{thm: X con shift efficiency robust}). 
We also show a new impossibility result about such full robustness for covariate or label shift under common parametrizations (Lemma~\ref{lemma: cov shift no eff and robust}).

We present additional new results in the Supplementary Material.
We present additional simulation results showcasing the efficiency gain from our proposed methods in model comparison and model training in Supplement~\ref{sec: more sim}.
We illustrate our proposed estimators in an HIV risk prediction example in Supplement~\ref{sec: data analysis}.
In Supplement~\ref{sec: other DS conditions}, we derive efficiency bounds for risk estimation under
three other widely-applicable dataset shift conditions, \emph{posterior drift} \citep{Scott2019}
\emph{location-scale shift} \citep{Zhang2013},
and \emph{invariant density ratio shape} motivated by \citet{Tasche2017}.
The proof of these results requires delicate derivations involving tangent spaces and their orthogonal complements, leading to intricate linear integral equations and, in some cases, a closed-form solution for the efficient influence functions.
In Supplement~\ref{sec: other DS conditions2}, we present additional results on other widely-applicable dataset shift conditions, including the invariant density ratio condition \citep{Tasche2017} and stronger versions of posterior drift \citep{Scott2019} and location-scale shift \citep{Zhang2013} conditions.
In Supplement~\ref{sec: pred set}, we describe how our proposed risk estimators can help construct prediction sets with marginal or training-set conditional validity.
Proofs of all theoretical results can be found in Supplement~\ref{sec: proof}.
We implement our proposed methods for covariate, label and concept shift in an R package available at \url{https://github.com/QIU-Hongxiang-David/RiskEstDShift}.

\section{Problem setup} \label{sec: setup}

Let $O$ be a prototypical data point consisting of the observed data $Z$ lying in a space $\mathcal{Z}$ and an integer indexing variable $A$ in a finite set $\mathcal{A}$ containing zero. 
The variable $A$ indicates whether the data point comes from the target population ($A=0$) or a source population ($A \in \mathcal{A} \setminus \{0\}$).\footnote{Throughout this paper, we emphasize certain aspects of the observed data-generating mechanism when employing the terms ``domain'' or ``population''. For example, when a random sample is drawn from a superpopulation and variables are measured differently for two subsamples, we may treat this data as two samples from two different populations because of the different observed data-generating mechanisms.}
The observed data $(O_1,\ldots,O_n)$ is an independent and identically distributed (i.i.d.) sample from an unknown distribution $P_*$. 
We will use a subscript $*$ to denote components of $P_*$ throughout this paper.
Data $Z$ is observed from both the source and target populations, i.e., for both $A=0$ and $A\neq 0$.

The estimand of interest is the risk, namely the average value of a given loss
function $\ell:\mZ\to \R$, in the target population:
\begin{equation}
    r_* := R(P_*) := \expect_{P_*}[\ell(Z) \mid A=0]. \label{eq: def risk}
\end{equation}
We often focus on the supervised setting where $Z=(X,Y)$, with $X \in \mathcal{X}$ being the covariate or feature and $Y \in \mathcal{Y}$ being the outcome or label.
In this case, our observed data are i.i.d.~triples $(X_i,Y_i,A_i) \in \mX \times \mY \times \ma$ distributed according to $P_*$.
Next, we provide two examples of loss functions $\ell$ below.

\begin{example}[Supervised learning/regression] \label{ex: regression}
    Let $f:\mX\to \mY'$ be a given predictor---obtained, for example,  from a separate training dataset---for some space $\mY'$ that can differ from $\mY$.
    It may be of interest to estimate a measure of the accuracy of $f$. 
    One common measure is the mean squared error, which is induced by the squared error loss $\ell(x,y)=(y-f(x))^2$. 
    For binary outcomes where $\mY = \{0,1\}$, it is also common to consider the risk induced by the cross-entropy loss, namely Bernoulli negative log-likelihood $\ell(x,y)=-y \log\{f(x)\} - (1-y) \log\{1-f(x)\}$, when $\mY'$ is the unit interval $(0,1)$ and $f$ outputs a predicted probability. Another common measure of risk is $P_*(Y \neq f(X) \mid A=0)$, measuring prediction inaccuracy.
    This is induced by the loss $\ell(x,y)=\ind(y \neq f(x))$ when $\mY'=\{0,1\}$ and $f$ outputs a predicted label.
\end{example}

\begin{example}[Prediction sets with coverage guarantees] \label{ex: prediction sets}
    It is often of interest to construct prediction sets with a coverage guarantee. 
    Two popular guarantees are marginal coverage and training-set conditional---or \emph{probably approximately correct} (PAC)---coverage \citep{Vovk2013,Park2020}. 
    To achieve such coverage guarantees, one may estimate---or obtain a confidence interval for---the coverage error of a given prediction set \citep{Vovk2013,Angelopoulos2021,Yang2022}. 
    Let $C: \mathcal{X} \to 2^{\mathcal{Y}}$ be a given prediction set. With the indicator loss $\ell(x,y)=\ind(y \notin C(x))$,  the associated risk is  the coverage error probability $P_*(Y \notin C(X) \mid A=0)$ of $C$ in the target population.
\end{example}

\begin{remark}[Broader interpretation of risk estimation problem] \label{rmk: risk interpretation}
     Our results in this paper apply to a broader range of problems beyond risk estimation, provided that the problem under consideration can be mapped to our setup.
     The loss function $\ell$ may be interpreted in a broad sense. We list a few examples below.
     Moreover,
     the data point $Z$ does not necessarily have to consist of a covariate vector $X$ and an outcome $Y$. If additional variables $W$ related to $Y$ are observed---for example, outcomes other than $Y$---these can be leveraged for risk estimation, even if the prediction model only uses the covariate $X$.
\end{remark}

\begin{example}[Mean estimation]
    If the estimand of interest is the mean $\expect_{P_*}[Z \mid A=0]$ for $Z \in \real$, we may take $\ell$ to be the identity function.
\end{example}

\begin{example}[Quantile estimation]
    To estimate a quantile of $Z \mid A=0$, we may consider $\ell$ ranging over the function class $\{z \mapsto \ind(z \leq t): t \in \real\}$.
\end{example}

\begin{example}[Model comparison]
    To compare the performance of two methods in the target population, we may take $\ell$ to be the difference between the loss of these two methods. For example, with $f^{(1)}$ and $f^{(2)}$ denoting two given predictors, we may take $\ell: (x,y) \mapsto (y-f^{(1)}(x))^2 - (y-f^{(2)}(x))^2$ to be the loss difference.
\end{example}

\begin{example}[Estimating equation]
    Suppose that the estimand is the solution $\beta_*$ to an estimating equation $\expect_{P_*}[\Psi_\beta(Z) \mid A=0] = 0$ in $\beta$, which includes linear regression, logistic regression, and parametric regression models as special cases. We may consider $\ell$ ranging over the function class $\{z \mapsto \Psi_\beta(z)\}$ indexed by $\beta$ and $\hat{r}_\beta$ an estimator of $\expect_{P_*}[\Psi_\beta(Z) \mid A=0]$. Let the estimator $\hat{\beta}$ of $\beta_*$ be the solution to $\hat{r}_\beta=0$ in $\beta$. Theorem~3.3.1 of \citet{vandervaart1996} implies that reducing the asymptotic variance (as $n \to \infty$) of $\hat{r}$ leads to a reduced asymptotic variance of $\hat{\beta}$.
\end{example}

Without additional conditions\footnote{We suppose that $\Var_{P_*}(\ell(X,Y) \mid A=0) < \infty$, a very mild condition, throughout this paper.} on
the true data distribution $P_*$---under a nonparametric model---the source populations are non-informative about the target population because they may differ arbitrarily.
In this case, a viable estimator of $r_*$ is \emph{the nonparametric estimator}, the sample mean over the target population data\footnote{In this paper, we define $0/0 = 0$.}:
\begin{equation}
    \hat{r}_{\nonparametric} := \frac{\sum_{i=1}^n \ind(A_i=0) \ell(Z_i)}{\sum_{i=1}^n \ind(A_i=0)}. \label{eq: np estimator}
\end{equation}
For any scalars $\rho \in (0,1)$ and $r \in \real$, we define
\begin{equation}
    D_\nonparametric(\rho,r): o=(z,a) \mapsto \frac{\ind(a=0)}{\rho} \{\ell(z) - r\}.\label{dnp}
\end{equation}
We denote $\rho_*:=P_*(A=0) \in (0,1)$, the true proportion of data from the target population. 
It is not hard to show that 
$D_\nonparametric$ is the influence function of $\hat{r}_{\nonparametric}$;
$\hat{r}_{\nonparametric}$ is asymptotically semiparametrically efficient under a nonparametric model and $\sqrt{n} (\hat{r}_{\nonparametric} - r_*) \overset{d}{\to} \mathrm{N}(0,\sigma_{*,\nonparametric}^2)$ with $\sigma_{*,\nonparametric}^2 := \expect_{P_*}[D_\nonparametric(\rho_*,r_*)$$(O)^2]$; see Supplement~\ref{sec: proof nonparametric}.

The nonparametric estimator $\hat{r}_{\nonparametric}$ ignores data from the source population. 
If limited data from the target population is available, namely $P_*(A=0)$ is small, 
this estimator might not be accurate. 
This motivates using source population data and plausible conditions to obtain more accurate estimators.

\noindent{\bf Notations and terminology}.
We next introduce some notation and terminology. 
We will use the terms ``covariate'' and ``feature'' interchangeably, and similarly for ``label'' and ``outcome''. 
For any non-negative integers $M$ and $N$, we use $[M:N]$ to denote the index set $\{M,M+1,\ldots, N\}$ if $M \leq N$ and the empty index set otherwise; we use $[M]$ as a shorthand for $[1:M]$.
For any finite set $S$, we use $|S|$ to denote its cardinality.

For a distribution $P$, we use $P_{\sharp}$ and $P_{\sharp \mid \natural}$ to denote the marginal distribution of the random variable $\sharp$ and the conditional distribution of $\sharp \mid \natural$, respectively, under $P$; we use $P_{*,\sharp}$ and $P_{*, \sharp \mid \natural}$ to denote these distributions under $P_*$.
We use $P^n$ to denote the empirical distribution of a sample of size $n$ from $P$.
When splitting the sample into $V>0$ folds, we use $P^{n,v}$ and $P_\sharp^{n,v}$ to denote empirical distributions in fold $v\in [V]$. 
All functions considered will be measurable with respect to appropriate sigma-algebras, which will be
kept implicit.
For any function $f$, any distribution $P$, we sometimes use $P f$ to denote $\int f \intd P$.
For any $p \in [1,\infty]$, we use $\| f \|_{L^p(P)}$ to denote the $L^p(P)$ norm of $f$, namely $( \int f(x)^p P(\intd x) )^{1/p}$. We also use $L^p(P)$ to denote the space of all functions with a finite $L^p(P)$ norm, and use $L^p_0(P)$ to denote $\{f \in L^p(P): \int f \intd P = 0\}$. 
All asymptotic results are with respect to the sample size $n$ tending to infinity.

We finally review a few concepts and basic results that are central to semiparametric efficiency theory.
More thorough introductions can be found in \citet{Bickel1993,Bolthausen2002,Pfanzagl1985,Pfanzagl1990,vanderVaart1998}.
An estimator $\hat{\theta}$ of a parameter $\theta_*=\theta_*(P_*)$ is said to be asymptotically linear if $\hat{\theta} = \theta_* + n^{-1} \sum_{i=1}^n \IF(O_i) + \smallo_p(n^{-1/2})$ for a function $\IF \in L^2_0(P_*)$. This asymptotic linearity implies that $\sqrt{n} (\hat{\theta} - \theta_*) \overset{d}{\to} \mathrm{N}(0,\expect_{P_*}[\IF(O)^2])$. The function $\IF$ is called the influence function of $\hat{\theta}$.
Under a semiparametric model, there may be infinitely many influence functions, but there exists a unique efficient influence function, which is the influence function of regular asymptotically linear (RAL) estimators with the smallest asymptotic variance.
Under a nonparametric model, all RAL estimators of a parameter $\theta_*$ share the same influence function, which equals the efficient influence function.

\section{Cross-fit estimation under a general dataset shift condition} \label{sec: independence}

\subsection{Statement of condition and efficiency bound}

We consider the following general dataset shift condition characterized by sequentially identical conditional distributions introduced by \citet{Li2023}. Under this condition, some auxiliary source population datasets are informative about one component of the target population. %
In this section, we may use $Q$ to denote the target population and allow data from $Q$ not to be observed, namely $\mathcal{A}$ might not contain index 0. We still use $r_*=\expect_Q[\ell(Z)]$ to denote the target population risk.
We allow $Z$ to be a general random variable rather than just $(X,Y)$ and allow more than one source population to be present.
Thus, let $Z$ be decomposed into $K \geq 1$ components $(Z_1,\ldots,Z_K)$.
Define $\bar{Z}_0 := \emptyset$, $\bar{Z}_k := (Z_1,\ldots,Z_k)$, 
and $\mathcal{Z}_{k-1}$ to be the support of $\bar{Z}_{k-1} \mid A=0$ for $k \in [K]$.
The condition is as follows.

\setcounter{DScondition}{-1}
\begin{DScondition}[General sequential conditionals] \label{DScondition: general independence}
    For every $k \in [K]$, there exists a known nonempty subset $\mathcal{S}'_k \subset \mathcal{A}$ such that, (i) the distribution of $\bar{Z}_{k-1}$ under $Q$ is dominated by $\bar{Z}_{k-1} \mid A \in \mathcal{S}'_k$ under $P_*$, and (ii) for all $a \in \mathcal{S}'_k$, $Z_k \mid \bar{Z}_{k-1}=\bar{z}_{k-1},A=a$ is distributed identically to $Z_k \mid \bar{Z}_{k-1}=\bar{z}_{k-1}$ under $Q$ for $Q$-almost every $\bar{z}_{k-1}$ in the support 
    $\mathcal{Z}_{k-1}$ of $\bar{Z}_{k-1}$ under $Q$.
\end{DScondition}

This condition states that
conditionally on $A\in \mathcal{S}'_k$,  
$Z_k$ is independent of $A$
given $\bar{Z}_{k-1}$.
Equivalently, it states that 
every conditional distribution $Z_k \mid \bar{Z}_{k-1}$ ($k \in [K]$) under $Q$ is equal to that in the source populations with $a \in \mathcal{S}'_k$.
One important special case is when data from the target population is also observed and data from source populations are used to improve efficiency, as stated in the following condition.

\begin{DScondition2}{DScondition: general independence}[Sequential conditionals] \label{DScondition: independence}
Condition~\ref{DScondition: general independence} holds with $0 \in \mathcal{A}$ and $0 \in \mathcal{S}'_k$ for every $k$. 
\end{DScondition2}

The condition $0 \in \mathcal{S}'_k$ is purely a matter of notations since $0$ is the index for the target population $Q$. The dominance of the distribution of $\bar{Z}_{k-1}$ under $Q$ by the source population in Condition~\ref{DScondition: general independence}(i) is automatically satisfied since $0 \in \mathcal{S}'_k$. When working under this stronger condition, we use $\mathcal{S}_k$ to denote $\mathcal{S}'_k \setminus \{0\}$ for short.
We show an example of this condition in Figure~\ref{fig: illustrate}.
In particular, we allow for cases where no source population exists to supplement learning some conditional distributions, that is, $\mathcal{S}_k$ may be empty for some $k$.
We also allow irrelevant variables in some source populations to be missing; for example, in Figure~\ref{fig: illustrate}, $(Z_3,Z_4,Z_5)$ in the source population $A=2$ may be missing as these variables are not assumed to be informative about the target population.

According to the well-known review by \cite{moreno2012unifying}, the following four dataset shift conditions are among the most widely considered when one source population is available, so that $\mathcal{A}=\{0,1\}$. These conditions are all special cases of Condition~\ref{DScondition: independence}.

\begin{figure}
    \centering
    \includegraphics[scale=0.6]{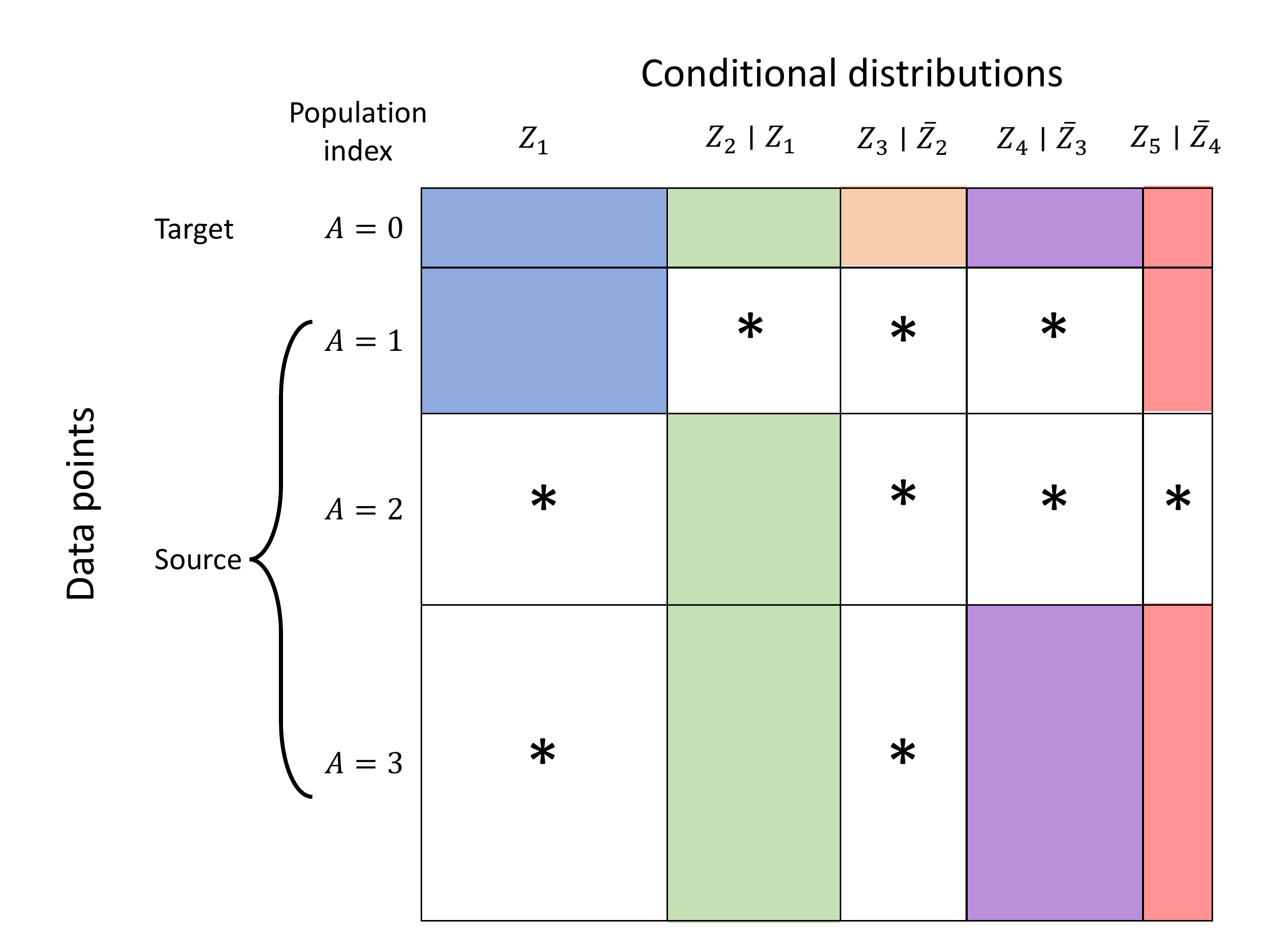}
    \caption{Illustration of Condition~\ref{DScondition: independence} with $K=5$. In each column, cells sharing the same color represent the same conditional distribution, while cells with asterisks represent conditional distributions that may arbitrarily differ from the target population ($A=0$). In this example, $\mathcal{S}_1=\{1\}$, $\mathcal{S}_2=\{2,3\}$, $\mathcal{S}_3=\emptyset$, $\mathcal{S}_4=\{3\}$, and $\mathcal{S}_5=\{1,3\}$.}
    \label{fig: illustrate}
\end{figure}

\begin{DScondition}[Concept shift in the features] \label{DScondition: X con shift}
    $X \independent A$.
\end{DScondition}
\begin{DScondition}[Concept shift in the labels] \label{DScondition: Y con shift}
    $Y \independent A$.
\end{DScondition}
\begin{DScondition}[Full-data covariate shift] \label{DScondition: cov shift}
    $Y \independent A \mid X$.
\end{DScondition}
\begin{DScondition}[Full-data label shift] \label{DScondition: label shift}
    $X \independent A \mid Y$.
\end{DScondition}

Condition~\ref{DScondition: independence} reduces to \ref{DScondition: X con shift}---concept shift in the features---by setting $K=2$, $\mathcal{A}=\{0,1\}$, $\mathcal{S}_1=\{1\}$, $\mathcal{S}_2 = \emptyset$, $Z_1=X$, and $Z_2=Y$.
Indeed, Condition~\ref{DScondition: independence} for $k=1$ states that $(Z_1|A=1) =_d (Z_1|A=0)$, or equivalently that 
 $(X|A=1) =_d (X|A=0)$, which means that $X \independent A$.
Since $\mathcal{S}_2 = \emptyset$,
Condition~\ref{DScondition: independence} for $k=2$ does not impose additional constraints.
Similarly, 
Condition~\ref{DScondition: independence}  reduces to \ref{DScondition: Y con shift}---concept shift in the labels---with the above choices but switching 
$Z_1=Y$ and $Z_2=X$. 

Condition~\ref{DScondition: independence} reduces to \ref{DScondition: cov shift}---full-data covariate shift---by setting $K=2$, $\mathcal{A}=\{0,1\}$, $\mathcal{S}_1 = \emptyset$, $\mathcal{S}_2=\{1\}$, $Z_1=X$, and $Z_2=Y$.
Indeed, since $\mathcal{S}_1 = \emptyset$, Condition~\ref{DScondition: independence} for $k=1$ does not impose  constraints.
For $k=2$, Condition~\ref{DScondition: independence} states that $(Z_2|Z_1,A=1) =_d (Z_2|Z_1,A=0)$, or equivalently that 
$(Y|X,A=1) =_d (Y|X,A=0)$, which means that $Y \independent A \mid X$.
Similarly, 
it reduces to \ref{DScondition: label shift}---full-data label shift---with the above choices but switching 
$Z_1=Y$ and $Z_2=X$. 
We refer to Conditions~\ref{DScondition: cov shift} and \ref{DScondition: label shift} as \emph{full-data} covariate and label shift, respectively, to emphasize that we have full observations $(X,Y,A)$ from the target population. 
For brevity, we refer to them as covariate or label shift when no confusion shall arise.
Condition~\ref{DScondition: independence} also includes more sophisticated dataset shift conditions and we provide a few examples in Supplement~\ref{sec: ex}.

Compared to Condition~\ref{DScondition: independence}, the more general condition \ref{DScondition: general independence} may also be applicable to cases without observing data from the target population, for example, covariate shift with unlabeled target population data and labeled source population data.

In our problem, a plug-in estimation approach could be to pool all data sets that share the same conditional distributions for each $k$, namely $A \in \mathcal{S}'_k$, and estimating these distributions nonparametrically. 
Then, the risk can be estimated by integrating the loss over the estimated distribution. However, as it is well known, this approach can suffer from a large bias or may limit the choice of distribution or density estimators, and typically requires a delicate case-by-case analysis to establish its accuracy
\citep[e.g.,][etc.]{McGrath2022}.

We now describe and review results on efficiency, which characterize the smallest possible asymptotic variance of a sequence of regular estimators under the dataset shift condition \ref{DScondition: general independence}. 
This will form the basis of our proposed estimator in the next section.
We first introduce a few definitions.
For Condition~\ref{DScondition: general independence}, let $\lambda^{k-1}_*$ denote the Radon-Nikodym derivative of the distribution of $\bar{Z}_{k-1}$ under $Q$ relative to that of $\bar{Z}_{k-1} \mid A \in \mathcal{S}'_k$ under $P_*$.
For Condition~\ref{DScondition: independence}, define the conditional probability of each population
$$\Pi^{k,a}_* : \bar{z}_{k} \mapsto P_*(A=a \mid \bar{Z}_{k}=\bar{z}_{k}) \ \  \text{for } k \in [0:K-1],\, a \in \mathcal{A},$$
and let $\pi^{a}_*:=\Pi^{0,a}_*= P_*(A=a)$ denote  the marginal probabilities of all populations ($a \in \mathcal{A}$);
thus, $\pi^0_*=\rho_*$ for $\rho_*$ from Section \ref{sec: setup}.
    Define the conditional odds of relevant source populations versus the target population:
    $$\theta^{k-1}_* := \frac{\sum_{a \in \mathcal{S}_k} \Pi^{k-1,a}_*}{\Pi^{k-1,0}_*} \ \  \text{for } k \in [K].$$
Under the stronger condition \ref{DScondition: independence}, it follows from Bayes' Theorem that
\begin{equation}
    \lambda^{k-1}_*=\frac{\sum_{a \in \mathcal{S}'_k} \pi^{a}_*}{ \pi^{0}_* (1+\theta^{k-1}_*) }. 
    \label{eq: independence density ratio}
\end{equation}
    For both Conditions~\ref{DScondition: general independence} and \ref{DScondition: independence}, we also define conditional means of the loss starting with $\ell^K_* := \ell$ and letting recursively
    \begin{align}
    & \ell^k_*: \bar{z}_k \mapsto \expect_{P_*}[\ell^{k+1}_*(\bar{Z}_{k+1}) \mid \bar{Z}_k=\bar{z}_k, A \in \mathcal{S}'_{k+1}] \ \  \text{for } \bar{z}_k \in \mathcal{Z}_k, \ k \in [K-1]. \label{genell}
    \end{align}
    We allow $\ell^k_*$ to take any value outside $\mathcal{Z}_k$, for example, when the support of $\bar{Z}_k \mid A \in \mathcal{S}'_{k+1}$ is larger than the support $\mathcal{Z}_k$ of  $\bar{Z}_{k} \mid A=0$. 
    Under Condition~\ref{DScondition: general independence}, $\ell^k_*(\bar{z}_k) = \expect_{Q}[\ell(Z) \mid \bar{Z}_k=\bar{z}_k]$ for $\bar{z}_k \in \mathcal{Z}_k$.
    We discuss the consequences of the non-unique definition of $\ell^k_*$ without Condition~\ref{DScondition: general independence} in more detail in Section~\ref{sec: independence bias test}.
    Let $\boldsymbol \ell_* := (\ell^k_*)_{k=1}^{K-1}$, $\boldsymbol \lambda_* := (\lambda^k_*)_{k=1}^{K-1}$, $\boldsymbol \theta_* := (\theta^k_*)_{k=1}^{K-1}$, and $\boldsymbol \pi_* := (\pi^{a}_*)_{a \in \mathcal{A}}$ be collections of true nuisances.
    For any given collections $\boldsymbol \ell = (\ell^k)_{k=1}^{K-1}$, $\boldsymbol \lambda := (\lambda^k)_{k=1}^{K-1}$, $\boldsymbol \theta= (\theta^k)_{k=1}^{K-1}$ and $\boldsymbol \pi := (\pi^{a})_{a \in \mathcal{A}}$ of nuisances, a scalar $r$ and $\ell^K := \ell$,
    define the pseudo-losses $\widetilde{\mathcal{T}}(\boldsymbol \ell,\boldsymbol \lambda,\boldsymbol \pi)$ and
    $\mathcal{T}(\boldsymbol \ell, \boldsymbol \theta,\boldsymbol \pi): \mathcal{O}\to \R$
    based on these nuisances, so that for $o=(z,a)$,\footnote{If $\theta^{k-1}(\bar{z}_{k-1})=\infty$, we set $1/(1 + \theta^{k-1}(\bar{z}_{k-1}))$ to be zero. When $\theta^{k-1}$ equals the truth $\theta^{k-1}_*$, this case can happen for $\bar{z}_{k-1}$ outside the support of $\bar{Z}_{k-1} \mid A=0$ but inside the support of $\bar{Z}_{k-1} \mid A \in \mathcal{S}_k$.}
    \begin{align}
        \widetilde{\mathcal{T}}(\boldsymbol \ell,\boldsymbol \lambda,\boldsymbol \pi)(o) &= \sum_{k=2}^K \frac{\ind(a \in \mathcal{S}'_k)} {\sum_{b \in \mathcal{S}'_k} \pi^{b}} \lambda^{k-1}(\bar{z}_{k-1}) [\ell^k(\bar{z}_k) - \ell^{k-1}(\bar{z}_{k-1})] + \frac{\ind(a \in \mathcal{S}'_1)}{\sum_{b \in \mathcal{S}'_1} \pi^{b}} \ell^1(z_1), \label{ttilde} \\
        \mathcal{T}(\boldsymbol \ell, \boldsymbol \theta,\boldsymbol \pi)(o)
         &=\sum_{k=2}^K
         \frac{\ind(a \in \mathcal{S}'_k)}
         {\pi^{0}(1 + \theta^{k-1}(\bar{z}_{k-1}))} 
         \left\{ \ell^k(\bar{z}_k) - \ell^{k-1}(\bar{z}_{k-1}) \right\} + \frac{\ind(a \in \mathcal{S}'_1)}{\pi^{0} (1+\theta^0)} \ell^1(z_1). \label{eq: DR transform}
    \end{align}
    The motivation for this transformation is similar to that for the unbiased transformation from \citet{Rotnitzky2006,Rubin2007} and the pseudo-outcome from \citet{Kennedy2020}. 
    Further, given any scalar $r$, with $\theta^0 := \sum_{a \in \mathcal{S}_1} \pi^{a}/\pi^{0}$, define
    \begin{align}
        D_\generalindependence(\boldsymbol \ell,\boldsymbol \lambda,\boldsymbol \pi,r) : o = (z,a) &\mapsto \widetilde{\mathcal{T}}(\boldsymbol \ell,\boldsymbol \lambda,\boldsymbol \pi)(o) - \frac{\ind(a \in \mathcal{S}'_1)}{\sum_{b \in \mathcal{S}'_1} \pi^{b}} r, \label{eq: general independence EIF} \\
        D_\independence(\boldsymbol \ell,\boldsymbol \theta,\boldsymbol \pi,r) : o = (z,a) &\mapsto \mathcal{T}(\boldsymbol \ell, \boldsymbol \theta,\boldsymbol \pi)(o) - \frac{\ind(a \in \mathcal{S}'_1)}{\pi^{0} (1+\theta^0)} r. \label{eq: independence EIF}
    \end{align}

A key result we will use is that the efficient influence function for estimating $r_*$ 
(i) equals $D_\generalindependence(\boldsymbol \ell_*,\boldsymbol \lambda_*,\boldsymbol \pi_*,r_*)$
under Condition~\ref{DScondition: general independence} 
and when $\lambda^{k}_*$ are uniformly bounded away from zero and infinity as a function of $\bar{z}_k \in \mathcal{Z}_k$ for all $k \in [0:K-1]$, and (ii) specializes to $D_\independence(\boldsymbol \ell_*,\boldsymbol \theta_*,\boldsymbol \pi_*,r_*)$
under Condition~\ref{DScondition: independence} 
and when $\theta^{k}_*$ are bounded functions for all $k \in [0:K-1]$.
This follows by Theorem~2 and Corollary~1 in \citet{Li2023}.
Consequently, the smallest possible asymptotic variance of a sequence of RAL estimators (scaled by $n^{1/2}$) is 
\begin{equation}
    \sigma_{*,\generalindependence}^2 := \expect_{P_*}[D_\generalindependence(\boldsymbol \ell_*,\boldsymbol \lambda_*,\boldsymbol \pi_*,r_*)(O)^2] \label{eq: general independence variance}
\end{equation}
under Condition~\ref{DScondition: general independence} and specializes to $\sigma_{*,\independence}^2 := \expect_{P_*}[D_\independence(\boldsymbol \ell_*,\boldsymbol \theta_*,\boldsymbol \pi_*,r_*)(O)^2]$ under Condition~\ref{DScondition: independence}.
Despite the possible non-unique definition of conditional mean loss $\boldsymbol \ell_*$, both $D_\generalindependence(\boldsymbol \ell_*,\boldsymbol \lambda_*,\boldsymbol \pi_*,r_*)$ and $D_\independence(\boldsymbol \ell_*,\boldsymbol \theta_*,\boldsymbol \pi_*,r_*)$ are uniquely defined under Condition~\ref{DScondition: general independence} and \ref{DScondition: independence}, respectively.
Here, we have used the odds parametrization rather than the density ratio or Radon-Nikodym derivative parametrization from \citet{Li2023} for Condition~\ref{DScondition: independence} because the former is often more convenient for estimation.

\subsection{Cross-fit risk estimator} \label{sec: independence estimator}

We next present our proposed estimator along with the motivation.
All estimators will implicitly depend on the sample size $n$, but we will sometimes omit this dependence from notations for conciseness.
We take as given 
a flexible regression method $\mathcal{K}$ estimating conditional means and a flexible classifier $\mathcal{C}$ estimating conditional odds---both taking outcome, covariates, and an index set for data points being used as inputs in order.
For example, $\mathcal{K}$ and $\mathcal{C}$ may be random forests, neural networks, gradient boosting, or an ensemble learner.
We also take as given
a flexible density ratio estimator $\mathcal{W}$, taking an index set for data points being used as input, 
which may be transformed from a classifier $\mathcal{C}$ by Bayes' Theorem or based on kernel density estimators.

We take Condition~\ref{DScondition: independence} as an example to illustrate ideas behind our proposed estimator.
One approach to constructing an efficient estimator of $r_*$ is to solve the estimating equation $\sum_{i=1}^n D_{\independence}(\widehat{\boldsymbol \ell},\widehat{\boldsymbol \theta},\widehat{\boldsymbol \pi},r)(O_i)=0$ for $r$, 
where $\widehat{\boldsymbol \ell}=\widehat{\boldsymbol \ell}_n$, $\widehat{\boldsymbol \theta}=\widehat{\boldsymbol \theta}_n$
and $\widehat{\boldsymbol \pi}=\widehat{\boldsymbol \pi}_n$ are estimators of nuisances $\boldsymbol \ell_*$, $\boldsymbol \theta_*$ and $\boldsymbol \pi_*$, respectively, and use the solution as the estimator. See Section~7.1, Part III in \cite{Bolthausen2002} for a more thorough introduction to achieving efficiency by solving an estimating equation. 
We further use sample splitting \citep{Hajek1962,bickel1982adaptive,Schick1986,Chernozhukov2018debiasedML} to allow for more flexible estimators, leading to our proposed estimator in Alg.~\ref{alg: independence estimator}. 
Splitting the sample into a fixed number $V$ of folds $I_v$ ($v\in [V]$) leads to the estimating equation $\smash{\sum_{i \in I_v} D_{\independence}(\widehat{\boldsymbol \ell}_{v},\widehat{\boldsymbol \theta}_{v},\widehat{\boldsymbol \pi}_{v},r)(O_i)=0}$ averaging over data in each every fold $v$,
with preliminary estimators $(\widehat{\boldsymbol \ell}_{v},\widehat{\boldsymbol \theta}_{v},\widehat{\boldsymbol \pi}_{v})$ using  data outside of the fold.
The solution is given in \eqref{eq: independence foldwise estimator}.
The corresponding estimator for more general condition \ref{DScondition: general independence} is similar and described in Alg.~\ref{alg: general independence estimator} in the Supplemental Material.

\begin{algorithm2}{alg: general independence estimator}
\caption{Cross-fit estimator of $r_*  = \expect_{P_*}[\ell(Z) \mid A=0]$ under Condition~\ref{DScondition: independence}} \label{alg: independence estimator}
\begin{algorithmic}[1]
\Require{Data $\{O_i=(Z_i,A_i)\}_{i=1}^n$, relevant source population sets $\mathcal{S}_k'$ ($k \in [K]$), number $V$ of folds, classifier $\mathcal{C}$, regression estimator $\mathcal{K}$}
\State Randomly split data into $V$ folds of approximately equal sizes. Denote the index set of data points in fold $v$ by $I_v$, and the index set of data points with $A \in \mathcal{S}_k'$ by $J_k$ for $k \in [K]$. 
\label{step: split data}
\For{$v \in [V]$}
    \State For all $k \in [K-1]$, estimate $\theta^k_*$
    using data out of fold $v$, by classifying $A=0$ against $A \in \mathcal{S}_{k+1}$ via the classifier $\mathcal{C}$ with covariates $\bar{Z}_{k}$ in the subsample with $A \in \mathcal{S}'_{k+1}$; that is, set
    $\hat{\theta}^k_{v} := \mathcal{C}(\ind(A \neq 0),\bar{Z}_{k},([n] \setminus I_v) \cap J_{k+1} )$
    and $\widehat{\boldsymbol \theta}_{v} := (\hat{\theta}^k_{v})_{k=1}^{K-1}$.
    \State Set $\hat{\pi}^{a}_{v} := |I_v|^{-1} \sum_{i \in I_v} \ind(A_i = a)$ for all $a \in \mathcal{A}$, $\widehat{\boldsymbol \pi}_{v} := (\hat{\pi}^{a}_{v})_{a \in \mathcal{A}}$, 
    $\hat{\theta}^0_{v} := \sum_{a \in \mathcal{S}_1} \hat{\pi}^{a}_{v}/\hat{\pi}^{0}_{v}$, and $\hat{\ell}^K_{v}$ to be $\ell$.
    \For{$k=K-1,\ldots,1$}
        \State Estimate $\ell^k_*$ using data out of fold $v$ by regressing $\hat{\ell}^{k+1}_{v}(\bar{Z}_{k+1})$ on $\bar{Z}_k$ in the subsample with $A \in \mathcal{S}'_{k+1}$; 
        that is, set
        $\hat{\ell}^k_{v} := \mathcal{K}(\hat{\ell}^{k+1}_{v}(\bar{Z}_{k+1}),\bar{Z}_k,([n] \setminus I_v) \cap J_{k+1} )$. 
        \label{step: sequential regression}
    \EndFor
    \State Set $\widehat{\boldsymbol \ell}_{v} := (\hat{\ell}^k_{v})_{k=1}^{K-1}$.
    \State Compute an estimator of $r_*$ for fold $v$:
    \begin{equation}
        \hat{r}_{v} := \frac{1}{|I_v|} \sum_{i \in I_v} \mathcal{T}(\widehat{\boldsymbol \ell}_{v}, \widehat{\boldsymbol \theta}_{v},\widehat{\boldsymbol \pi}_{v})(O_i) \label{eq: independence foldwise estimator}
    \end{equation}
\EndFor
\State Compute the cross-fit estimator combining estimators $\hat{r}_{v}$ from all folds: $\hat{r} := \frac{1}{n} \sum_{v=1}^V |I_v| \hat{r}_{v}$.
\end{algorithmic}
\end{algorithm2}

\begin{remark}[Estimation of marginal probabilities $\boldsymbol \pi_*$]
It is viable to replace the in-fold estimator $\widehat{\boldsymbol \pi}_{v}$ of $\pi_*$ with an out-of-fold estimator in Algorithms~\ref{alg: general independence estimator} and \ref{alg: independence estimator}.
These two approaches have the same theoretical properties that we will show next, and similar empirical performance.
\end{remark}

We next present sufficient conditions 
for the asymptotic efficiency and multiple robustness of the estimator $\hat{r}$.
In the following analyses without assuming Condition~\ref{DScondition: general independence}, we assume that nuisance estimators $\widehat{\boldsymbol \ell}_{v}$
can be evaluated at any point in the space $\mathcal{Z}$ containing the observation, even if that point is outside the support under $P_*$.
For illustration, we assume that, for each $k \in [2:K]$, $\| \hat{\ell}^k_v - \ell^k_\infty \|_{L^2(\nu_k)} \overset{p}{\to} 0$ for some function $\ell^k_\infty$, where $\nu_k$ denotes the distribution of $\bar{Z}_k \mid A \in \mathcal{S}'_k$ under $P^0$.
Define the \emph{oracle estimator} $h^{k-1}_{v}$ of 
$\ell^{k-1}_*$
based on $\hat{\ell}^k_{v}$, evaluated under the true distribution $P_*$, as\footnote{In all expectations involving nuisance estimators such as $(\widehat{\boldsymbol \ell}_{v}, \widehat{\boldsymbol \theta}_{v}, \widehat{\boldsymbol \pi}_{v})$, these estimators are treated as fixed and the expectation integrates over the randomness in a data point $O=(Z,A)$. For example, $\expect_{P_*}[\hat{\ell}^1_v(Z_1) \mid A=0] = \int \hat{\ell}^1_v \intd \mu$ where $\mu$ is the distribution of $Z_1 \mid A=0$ under $P_*$, and this expectation is itself random due to the randomness in $\hat{\ell}^1_v$.}
$$h^{k-1}_{v}: \bar{z}_{k-1} \mapsto \expect_{P_*}[\hat{\ell}^k_{v}(\bar{Z}_k) \mid \bar{Z}_{k-1}=\bar{z}_{k-1}, A \in \mathcal{S}'_k],$$
and the product bias term
$B_{k,v}$ as
\begin{align}
    & \frac{\sum_{a \in \mathcal{S}'_k} \pi^{a}_*}{\sum_{a \in \mathcal{S}'_k} \hat{\pi}^{a}_{v}} \expect_{P_*} \Bigg[ \left\{ \hat{\lambda}^{k-1}_{v}(\bar{Z}_{k-1}) - \lambda^{k-1}_*(\bar{Z}_{k-1}) \right\} \label{bk}\\
    &\qquad\qquad\times \left\{ h^{k-1}_{v}(\bar{Z}_{k-1}) - \hat{\ell}^{k-1}_{v}(\bar{Z}_{k-1}) \right\} \mid A \in \mathcal{S}'_k \Bigg] \nonumber \\
    &\quad+ \left\{ \frac{\sum_{a \in \mathcal{S}'_k} \pi^{a}_*}{\sum_{a \in \mathcal{S}'_k} \hat{\pi}^{a}_{v}} - \frac{\sum_{a \in \mathcal{S}'_1} \pi^a_*}{\sum_{a \in \mathcal{S}'_1} \hat{\pi}^a_v} \right\} \expect_Q \left[ h^{k-1}_{v}(\bar{Z}_{k-1}) - \hat{\ell}^{k-1}_{v}(\bar{Z}_{k-1}) \right].
    \nonumber
\end{align}
for Condition~\ref{DScondition: general independence} and Alg.~\ref{alg: general independence estimator}, which reduces to
\begin{align}
    & \frac{\sum_{a \in \mathcal{S}'_k} \pi^{a}_*}{\sum_{a \in \mathcal{S}'_k} \hat{\pi}^{a}_{v}} \expect_{P_*} \Bigg[ \left\{ \frac{\sum_{a \in \mathcal{S}'_k} \hat{\pi}^{a}_{v}}{\hat{\pi}^{0}_{v} (1+\hat{\theta}^{k-1}_{v}(\bar{Z}_{k-1}))} - \frac{\sum_{a \in \mathcal{S}'_k} \pi^{a}_*}{\pi^{0}_* (1+\theta^{k-1}_{*}(\bar{Z}_{k-1}))} \right\} \label{bkf}\\
    &\hspace{0.7in}\times \left\{ h^{k-1}_v(\bar{Z}_{k-1}) - \hat{\ell}^{k-1}_{v}(\bar{Z}_{k-1}) \right\} \mid A \in \mathcal{S}'_k \Bigg] \nonumber\\
    &\quad+ \left( \frac{\sum_{a \in \mathcal{S}'_k} \pi^{a}_*}{\sum_{a \in \mathcal{S}'_k} \hat{\pi}^{a}_{v}} - \frac{\sum_{a \in \mathcal{S}'_1} \pi^a_*}{\sum_{a \in \mathcal{S}'_1} \hat{\pi}^a_v} \right) \expect_{P_*} \left[ h^{k-1}_{v}(\bar{Z}_{k-1}) - \hat{\ell}^{k-1}_{v}(\bar{Z}_{k-1}) \mid A=0 \right].\nonumber
\end{align}
for Condition~\ref{DScondition: independence} and Alg.~\ref{alg: independence estimator} when $\hat{\lambda}^{k-1}_{v}$ is transformed from $\hat{\theta^{k-1}_{v}}$ and $\hat{\pi}_v$ as in \eqref{eq: independence density ratio}.

\begin{STcondition} \label{STcondition: independence eff}
    For every fold $v \in [V]$,
    \begin{compactenum}
        \item the following term is $\smallo_p(n^{-1/2})$ for Condition~\ref{DScondition: general independence} and Alg.~\ref{alg: general independence estimator}, or for Condition~\ref{DScondition: independence} and Alg.~\ref{alg: independence estimator}:
        \begin{equation}
        \sum_{k=2}^K B_{k,v}; \label{eq: independence remainder}
        \end{equation}

        \item the following term is $\smallo_p(1)$ for Condition~\ref{DScondition: general independence} and Alg.~\ref{alg: general independence estimator}, or for Condition~\ref{DScondition: independence} and Alg.~\ref{alg: independence estimator}, respectively:
        \begin{align}
            & \left\| \left( \sum_{a \in \mathcal{S}'_1} \hat\pi^{a}_{v} \right) \widetilde{\mathcal{T}}(\widehat{\boldsymbol \ell}_{v},\widehat{\boldsymbol \lambda}_{v},\widehat{\boldsymbol \pi}_{v}) - \left( \sum_{a \in \mathcal{S}'_1} \pi^{a}_* \right) \widetilde{\mathcal{T}}(\boldsymbol \ell_*,\boldsymbol \lambda_*,\boldsymbol \pi_*) \right\|_{L^2(P_*)} \label{eq: general independence EIF distance} \\
            \text{or} \quad & \left\| \left( \sum_{a \in \mathcal{S}'_1} \hat{\pi}^{a}_{v} \right) \mathcal{T}(\widehat{\boldsymbol \ell}_{v},\widehat{\boldsymbol \theta}_{v},\widehat{\boldsymbol \pi}_{v}) - \left( \sum_{a \in \mathcal{S}'_1} \pi^{a}_* \right) \mathcal{T}(\boldsymbol \ell_*,\boldsymbol \theta_*,\boldsymbol \pi_*) \right\|_{L^2(P_*)}. \label{eq: independence EIF distance}
        \end{align}
    \end{compactenum}
\end{STcondition}

In each part of the condition, the requirement for the more general condition \ref{DScondition: general independence} reduces to that for the more restrictive condition \ref{DScondition: independence} with $\hat{\lambda}^{k-1}_{v} = \sum_{a \in \mathcal{S}'_k} \hat{\pi}^a_v/\{\hat{\pi}^0_{v} (1+\hat{\theta}^{k-1}_{v})\}$.
To illustrate Condition~\ref{STcondition: independence eff}, define the \emph{limiting oracle estimator} $u^{k-1}$ of 
$\ell^{k-1}_*$
based on $\ell^k_\infty$ as
$$u^{k-1}: \bar{z}_{k-1} \mapsto \expect_{P_*}[\ell^k_\infty(\bar{Z}_k) \mid \bar{Z}_{k-1}=\bar{z}_{k-1}, A \in \mathcal{S}'_k].$$
By the definitions of $h^{k-1}_{v}$ and $u^{k-1}$, we have that
\begin{equation}
    h^{k-1}_{v}(\bar{Z}_{k-1}) - \hat{\ell}^{k-1}_{v}(\bar{Z}_{k-1}) = \expect_{P_*} \left[ \hat{\ell}^k_v(\bar{Z}_k) - \ell^k_\infty(\bar{Z}_k) \mid \bar{Z}_{k-1}, A \in \mathcal{S}'_k \right] + u^{k-1}(\bar{Z}_{k-1}) - \hat{\ell}^{k-1}_v(\bar{Z}_{k-1}). \label{eq: h-l in limit}
\end{equation}
Thus,
Condition~\ref{STcondition: independence eff} would hold if the nuisance estimator $(\widehat{\boldsymbol \ell}_{v},\widehat{\boldsymbol \theta}_{v})$ converge to the truth $(\boldsymbol \ell_*,\boldsymbol \theta_*)$ sufficiently fast. 
Under Condition~\ref{DScondition: general independence} or \ref{DScondition: independence}, part~2 is a consistency condition that is often mild; we discuss the other case in Section~\ref{sec: independence bias test}.
We next focus on part~1 and consider Condition~\ref{DScondition: independence} and Alg.~\ref{alg: independence estimator} first.
The term in \eqref{eq: independence remainder} is a drift term characterizing the bias of the estimated pseudo-loss $\mathcal{T}(\widehat{\boldsymbol \ell}_{v}, \widehat{\boldsymbol \theta}_{v},\widehat{\boldsymbol \pi}_{v})$ due to estimating nuisance functions. Conditions requiring such terms to be $\smallo_p(n^{-1/2})$ are prevalent in the literature on inference under nonparametric or semiparametric models and are often necessary to achieve efficiency \citep[see, e.g.,][]{Newey1994,Chen2015,Chernozhukov2017,VanderLaan2018}.
\citet{Balakrishnan2023} suggest that such $\smallo_p(n^{-1/2})$ conditions might be necessary without additional assumptions such as smoothness or sparsity on $\boldsymbol \ell_*$ or $\boldsymbol \theta_*$.
Since $\hat{\pi}^{a}_{v}$ is root-$n$ consistent for $\pi^{a}_*$, the second term in $B_k$ is $\smallo_p(n^{-1/2})$ under the mild consistency condition that all $\hat{\ell}^{k-1}_v$ are consistent for $\ell^{k-1}_*$ ($k \in [2:K]$).

By Jensen's inequality and the Cauchy-Schwarz inequality, we have that,
for each $k \in [2:K]$, the first term in $B_k$ is $\smallo_p(n^{-1/2})$ if both $\hat{\ell}^{k-1}_{v} - h^{k-1}_{v}$ and $1/(1+\hat{\theta}^{k-1}_{v}) - 1/(1+\theta^{k-1}_*)$ converge to zero in probability at rates faster than $n^{-1/4}$.
We illustrate this rate requirement for  $\hat{\ell}^{k-1}_{v} - h^{k-1}_{v}$. With $\hat{\ell}^{k-1}_{v}$ estimated by highly adaptive lasso with squared error loss, if $h^{k-1}_{v}$ has finite variation norm and $\hat{\ell}^k_{v}$ is bounded, then $\hat{\ell}^{k-1}_{v} - h^{k-1}_{v}$ diminishes at a rate faster than $n^{-1/4}$ \citep{Benkeser2016,VanderLaan2017}.
We formally present more interpretable sufficient conditions for part~1 of Condition~\ref{STcondition: independence eff} along with some examples of regression methods in Supplement~\ref{sec: sufficient STcondition: independence eff}.
We also allow one difference to converge slower as long as the other converges fast enough to compensate.
In principle, it is also possible to empirically check whether the magnitude of the term in \eqref{eq: independence remainder} is sufficiently small under certain conditions by using methods proposed by \citet{Liu2020,Liu2023falsifyDR}.
We do not pursue this direction in this paper as it is beyond the scope.
For Condition~\ref{DScondition: general independence} and Alg.~\ref{alg: general independence estimator}, sufficient conditions for part~1 of Condition~\ref{STcondition: independence eff} depend on how the Radon-Nikodym derivatives $\boldsymbol{\lambda}_*$ are estimated. If the estimators $\widehat{\boldsymbol{\lambda}}_{v}$ are based on kernel density estimators, part~1 would require strong smoothness conditions; on the other hand, if it is based on a classifier as in \eqref{eq: independence density ratio}, which is applicable for the special case of covariate shift without labeled target population data, the discussion above still applies.

\begin{STcondition} \label{STcondition: independence DR}
    For each fold $v \in [V]$, Condition~\ref{STcondition: independence eff} holds with the $\smallo_p(n^{-1/2})$ in part~1 replaced by $\smallo_p(1)$ and the $\smallo_p(1)$ in part~2 replaced by $\bigO_p(1)$.
\end{STcondition}

Condition~\ref{STcondition: independence DR} is much weaker than Condition~\ref{STcondition: independence eff}.
We illustrate this for Condition~\ref{DScondition: independence} and Alg.~\ref{alg: independence estimator}.
By \eqref{eq: h-l in limit} and the assumption that $\hat{\ell}^k_v$ converges to $\ell^k_\infty$ in probability, Condition~\ref{STcondition: independence DR} holds if, for each $k \in [2:K]$, either $1/(1+\hat{\theta}^{k-1}_{v})$ is consistent for $1/(1+\theta^{k-1}_*)$ or $\hat{\ell}^{k-1}_{v}$ is consistent for $u^{k-1}$.
Thus, for each fold $v$, there are $2^{K-1}$ possible ways for some of nuisance function estimators $\hat{\theta}^{k-1}_{v}$ and $\hat{\ell}^{k-1}_{v}$ ($k \in [2:K]$) to be inconsistent while Condition~\ref{STcondition: independence DR} still holds.
The same multiple allowance of inconsistent estimation applies to Condition~\ref{DScondition: general independence} and Alg.~\ref{alg: general independence estimator}.

\begin{remark} \label{rmk: l diverges}
    Conditions~\ref{STcondition: independence eff} and \ref{STcondition: independence DR} can hold even if the limit $\ell^k_\infty$ of the nuisance function estimator $\hat{\ell}^k_v$ does not exist. This case could happen if the support of $\bar{Z}_k \mid A \in \mathcal{S}_k$ is larger than $\mathcal{Z}_k$. Our results allow for such cases; we introduced the limit $\ell^k_\infty$ to illustrate Conditions~\ref{STcondition: independence eff} and \ref{STcondition: independence DR}.
\end{remark}

These two conditions are used in the next result on $\hat{r}$.

\begin{theorem} \label{thm: independence eff and DR}
    With nuisance estimators $\widehat{\boldsymbol \ell}_{v}$, $\widehat{\boldsymbol \lambda}_{v}$, $\widehat{\boldsymbol \theta}_{v}$ and $\widehat{\boldsymbol \pi}_{v}$ 
        in Algorithms~\ref{alg: general independence estimator} and \ref{alg: independence estimator}, define\footnote{The denominator $\sum_{a \in \mathcal{S}'_1} \hat\pi^{a}_{v}$ is nonzero with probability tending to one exponentially.}
        \begin{equation}
            \Delta_{v} := 
        \frac{\sum_{a \in \mathcal{S}'_1} \pi^{a}_{*}}{\sum_{a \in \mathcal{S}'_1} \hat\pi^{a}_{v}} \sum_{k=1}^K \expect_{P_*} \left[ h^{k-1}_{v}(\bar{Z}_{k-1}) - \hat{\ell}^k_{v}(\bar{Z}_k) \mid A=0 \right] \label{eq: Delta v}
        \end{equation}
        for every fold $v \in [V]$ and
        $\Delta := {n}^{-1} \sum_{v=1}^V |I_v| \Delta_{v}$.
        The following finite-sample expansion of $\hat{r}$ holds:
        \begin{align}
            & \hat{r} - \Delta -r_* \nonumber \\
            &= \sum_{v \in [V]} \frac{|I_v|}{n \sum_{a \in \mathcal{S}'_1} \hat{\pi}^a_v} (P^{n,v}-P_*) \left\{ \left( \sum_{a \in \mathcal{S}'_1} \hat\pi^{a}_{v} \right) \widetilde{\mathcal{T}}(\widehat{\boldsymbol \ell}_{v}, \widehat{\boldsymbol \lambda}_{v}, \widehat{\boldsymbol \pi}_{v}) - \left( \sum_{a \in \mathcal{S}'_1} \pi^{a}_* \right) \widetilde{\mathcal{T}}(\boldsymbol \ell_*,\boldsymbol \lambda_*,\boldsymbol \pi_*) \right\} \nonumber \\
            &\quad+ \sum_{v \in [V]}  \frac{|I_v| \sum_{a \in \mathcal{S}'_1} \hat{\pi}^a_v}{n} \sum_{k=2}^K B_{k,v} + \sum_{v \in [V]} \frac{|I_v| \sum_{a \in \mathcal{S}'_1} \pi^{a}_*}{n \sum_{a \in \mathcal{S}'_1} \pi^{a}_v} (P^{n,v}-P_*) D_\generalindependence(\boldsymbol \ell_*,\boldsymbol \lambda_*,\boldsymbol \pi_*,r_*). \label{eq: r hat expansion}
        \end{align}
        
        Moreover, if for all $n,k,v$,
        \begin{compactenum}
            \item $\Prob( \| \hat{\lambda}^{k-1}_v - \lambda^{k-1}_* \|_{L^2(P_*)} > a_{n,k,v}) \leq c_{n,k,v}$ and $\Prob( \| \hat{\ell}^{k-1}_v - h^{k-1}_v \|_{L^2(P_*)} > b_{n,k,v}) \leq d_{n,k,v}$ for some positive numbers $a_{n,k,v}$, $b_{n,k,v}$, $c_{n,k,v}$ and $d_{n,k,v}$, 
            \item  $\hat{\lambda}^{k-1}_v$ are bounded for all $k$ and $v$, and \item  $\expect_{P_*} |D_\generalindependence(\boldsymbol \ell_*,\boldsymbol \lambda_*,\boldsymbol \pi_*,r_*)(O)|^3 < \infty$,
        \end{compactenum}
        then for any $\epsilon>0$, there exist 
        quantities $\const_1,\const_2>0$ that may only depend on $\epsilon$, $P_*$ and the bound on $\hat{\lambda}^{k-1}_v$ only such that, for any $t > \const_1 \sum_{v \in [V]} \sum_{k=2}^K (a_{n,k,v} b_{n,k,v} + a_{n,k,v} + b_{n,k,v} + n^{-1})$, the following finite-sample confidence guarantee holds:
        \begin{align}
            \Prob(|\hat{r}-\Delta-r_*| > t) &\leq 2 \Phi \left( -\sqrt{n} \frac{ t - \const_1 \sum_{v \in [V]} \sum_{k=2}^K (a_{n,k,v} b_{n,k,v} + n^{-1/2} (a_{n,k,v} + b_{n,k,v}) + n^{-1}) }{\sigma_{*,\generalindependence}} \right) \nonumber \\
            &\quad+ \frac{\const_2}{\sqrt{n}} + \sum_{v \in [V]} \sum_{k=2}^K (c_{n,k,v} + d_{n,k,v}) + \epsilon \label{eq: r hat finite sample confidence}
        \end{align}
        where $\sigma_{*,\generalindependence}$ is defined in \eqref{eq: general independence variance} and $\Phi$ denotes the cumulative distribution function of the standard normal distribution.
    \begin{compactenum}
        \item\textnormal{Efficiency:} Under Condition~\ref{STcondition: independence eff}, with
        $\hat{r}$ in Line~9 of Alg.~\ref{alg: general independence estimator}, and $D_\generalindependence$ in \eqref{eq: general independence EIF},
        \begin{equation}
            \hat{r} - \Delta = r_* + \frac{1}{n} \sum_{i=1}^n D_\generalindependence(\boldsymbol \ell_*,\boldsymbol \lambda_*,\boldsymbol \pi_*,r_*)(O_i) + \smallo_p(n^{-1/2}). \label{eq: general independence eff}
        \end{equation}
        For $\hat{r}$ in Line~9 of Alg.~\ref{alg: independence estimator}, with $D_\independence$ in \eqref{eq: independence EIF}, this specializes to
        \begin{equation}
            \hat{r} - \Delta = r_* + \frac{1}{n} \sum_{i=1}^n D_\independence(\boldsymbol \ell_*,\boldsymbol \theta_*,\boldsymbol \pi_*,r_*)(O_i) + \smallo_p(n^{-1/2}). \label{eq: independence eff}
        \end{equation}

        \item \textnormal{Multiply robust consistency:} 
        Under Condition~\ref{STcondition: independence DR}, $\hat{r} - \Delta \overset{p}{\to} r_*$ as $n \to \infty$. %
    \end{compactenum}
    Additionally under Condition~\ref{DScondition: general independence}, $\Delta=0$ and thus $\hat{r}$ is RAL and efficient under Condition~\ref{STcondition: independence eff} and is consistent for $r_*$ under Condition~\ref{STcondition: independence DR}.
\end{theorem}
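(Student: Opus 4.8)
The plan is to establish the exact finite-sample identity~\eqref{eq: r hat expansion} first and then read off every conclusion from it: the finite-sample bound~\eqref{eq: r hat finite sample confidence}, the asymptotic-linearity expansions~\eqref{eq: general independence eff}--\eqref{eq: independence eff}, the multiply robust consistency, and $\Delta=0$ under the dataset shift condition. I would run the whole argument for the general-condition objects $\widetilde{\mathcal{T}}$, $\boldsymbol\lambda_*$, $D_\generalindependence$, $B_{k,v}$ of~\eqref{ttilde}, \eqref{eq: general independence EIF}, and~\eqref{bk}; the statements for Alg.~\ref{alg: independence estimator} then follow verbatim by inserting the density-ratio identity~\eqref{eq: independence density ratio} for both the estimated and the true nuisances, which turns $\widetilde{\mathcal{T}}$ into $\mathcal{T}$, $D_\generalindependence$ into $D_\independence$, and~\eqref{bk} into~\eqref{bkf}.

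To prove~\eqref{eq: r hat expansion} I would fix a fold $v$ and condition on the out-of-fold sample, so $(\widehat{\boldsymbol\ell}_v,\widehat{\boldsymbol\lambda}_v,\widehat{\boldsymbol\pi}_v)$ are deterministic. Writing $\hat r_v=P^{n,v}\widetilde{\mathcal{T}}(\widehat{\boldsymbol\ell}_v,\ldots)$ and setting $\hat c_v:=\sum_{a\in\mathcal{S}'_1}\hat\pi^a_v=P^{n,v}\ind(A\in\mathcal{S}'_1)$, $c_*:=\sum_{a\in\mathcal{S}'_1}\pi^a_*$, I would decompose $\hat c_v\hat r_v$ as $(P^{n,v}-P_*)[\hat c_v\widetilde{\mathcal{T}}(\widehat{\boldsymbol\ell}_v,\ldots)-c_*\widetilde{\mathcal{T}}(\boldsymbol\ell_*,\ldots)]+c_*(P^{n,v}-P_*)\widetilde{\mathcal{T}}(\boldsymbol\ell_*,\ldots)+\hat c_v\,P_*\widetilde{\mathcal{T}}(\widehat{\boldsymbol\ell}_v,\ldots)$ and use~\eqref{eq: general independence EIF} together with $(P^{n,v}-P_*)\ind(A\in\mathcal{S}'_1)=\hat c_v-c_*$ to rewrite the middle term. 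The heart is evaluating the deterministic bias $P_*\widetilde{\mathcal{T}}(\widehat{\boldsymbol\ell}_v,\ldots)$ in two moves: (i) an \emph{oracle replacement} --- in the $k$-th summand of~\eqref{ttilde} the coefficient of $\hat\ell^k_v(\bar Z_k)$ factors as $\ind(A\in\mathcal{S}'_k)$ times a function of $\bar Z_{k-1}$, so conditioning on $(\bar Z_{k-1},A)$ and taking the tower over $\{A\in\mathcal{S}'_k\}$ replaces $\hat\ell^k_v(\bar Z_k)$ by its conditional mean $h^{k-1}_v(\bar Z_{k-1})$, exactly the target of the sequential regression in Line~\ref{step: sequential regression}; and (ii) a \emph{weight linearization and telescoping} --- after writing $\hat\lambda^{k-1}_v/\hat c_{k,v}$ (with $\hat c_{k,v}:=\sum_{a\in\mathcal{S}'_k}\hat\pi^a_v$) as $(c_{k,*}/\hat c_{k,v})$ times the true weight $\lambda^{k-1}_*/c_{k,*}$ (with $c_{k,*}:=\sum_{a\in\mathcal{S}'_k}\pi^a_*$) plus the residual $(\hat\lambda^{k-1}_v-\lambda^{k-1}_*)/\hat c_{k,v}$, the residual yields the first term of $B_{k,v}$, the true-weight part collapses via the defining identity $\expect_{P_*}[\ind(A\in\mathcal{S}'_k)\frac{\lambda^{k-1}_*(\bar Z_{k-1})}{c_{k,*}}g(\bar Z_{k-1})]=\expect_Q[g(\bar Z_{k-1})]$ of the Radon--Nikodym derivative, the mismatch $c_{k,*}/\hat c_{k,v}$ versus $c_*/\hat c_v$ peels off the second term of $B_{k,v}$, and the remaining $\frac{c_*}{\hat c_v}\sum_k\expect_Q[h^{k-1}_v-\hat\ell^{k-1}_v]$ telescopes (using $h^0_v=\expect_{P_*}[\hat\ell^1_v(Z_1)\mid A\in\mathcal{S}'_1]$, $\hat\ell^K_v=\ell$, $\expect_Q[\ell]=r_*$) into $\frac{c_*}{\hat c_v}(r_*+\sum_{k=1}^K\expect_Q[h^{k-1}_v-\hat\ell^k_v])=\frac{c_*}{\hat c_v}r_*+\Delta_v$ with $\Delta_v$ as in~\eqref{eq: Delta v}. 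Recombining over folds gives~\eqref{eq: r hat expansion}. Under Condition~\ref{DScondition: general independence} the tower property gives $\expect_Q[h^{k-1}_v(\bar Z_{k-1})]=\expect_Q[\hat\ell^k_v(\bar Z_k)]$ for each $k\in[K]$, so every summand of~\eqref{eq: Delta v} vanishes and $\Delta=0$.

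The remaining conclusions come from bounding the three groups in~\eqref{eq: r hat expansion}. The empirical-process group: its integrand is built from out-of-fold nuisances, so conditioning on that sample makes $(P^{n,v}-P_*)$ of it a centered average of $|I_v|$ i.i.d.\ terms of conditional variance at most $|I_v|^{-1}$ times the squared norm in~\eqref{eq: general independence EIF distance}; the conditional Chebyshev inequality then turns the $\smallo_p(1)$ norm in part~2 of Condition~\ref{STcondition: independence eff} into $\smallo_p(n^{-1/2})$ and the $\bigO_p(1)$ norm of Condition~\ref{STcondition: independence DR} into $\smallo_p(1)$. The bias group is an $\bigO_p(1)$ multiple of $\sum_k B_{k,v}$, hence $\smallo_p(n^{-1/2})$ under part~1 of Condition~\ref{STcondition: independence eff} and $\smallo_p(1)$ under Condition~\ref{STcondition: independence DR}. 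The oracle group equals $\frac1n\sum_{i=1}^n D_\generalindependence(\boldsymbol\ell_*,\boldsymbol\lambda_*,\boldsymbol\pi_*,r_*)(O_i)+\smallo_p(n^{-1/2})$, using that the $\widehat{\boldsymbol\pi}_v$-to-$\boldsymbol\pi_*$ deviations are $\bigO_p(n^{-1/2})$ and, under Condition~\ref{DScondition: general independence}, that $\expect_{P_*}D_\generalindependence(\boldsymbol\ell_*,\ldots)=0$; this proves~\eqref{eq: general independence eff}, \eqref{eq: independence eff} follows by~\eqref{eq: independence density ratio}, and since $D_\generalindependence(\boldsymbol\ell_*,\boldsymbol\lambda_*,\boldsymbol\pi_*,r_*)$ is the efficient influence function by Theorem~2 and Corollary~1 of \citet{Li2023}, with $\Delta=0$, $\hat r$ is RAL and efficient (and consistent under Condition~\ref{STcondition: independence DR}). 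For~\eqref{eq: r hat finite sample confidence} I would work on the event where the rate bounds of assumption~1 hold (probability $\ge 1-\sum_{k,v}(c_{n,k,v}+d_{n,k,v})$), intersected with a high-probability event controlling $\|\widehat{\boldsymbol\pi}_v-\boldsymbol\pi_*\|$ and the ratios $c_{k,*}/\hat c_{k,v}$: Cauchy--Schwarz on~\eqref{bk} (using $\|\cdot\|_{L^2(\cdot\mid A\in\mathcal{S}'_k)}\le c_{k,*}^{-1/2}\|\cdot\|_{L^2(P_*)}$ and boundedness of $\hat\lambda^{k-1}_v$) bounds $|\sum_k B_{k,v}|$ by a constant times $\sum_k(a_{n,k,v}b_{n,k,v}+n^{-1/2}(a_{n,k,v}+b_{n,k,v})+n^{-1})$; the recursion $\|\hat\ell^{k-1}_v-\ell^{k-1}_*\|_{L^2(P_*)}\le b_{n,k,v}+\|\hat\ell^k_v-\ell^k_*\|_{L^2(P_*)}$ (conditional Jensen, $\hat\ell^K_v=\ell$) bounds the norm in~\eqref{eq: general independence EIF distance} by $\sum_k(a_{n,k,v}+b_{n,k,v})$, so the conditional Chebyshev bound makes the empirical-process group $\bigO_p(n^{-1/2}\sum_k(a+b))$; and the Berry--Esseen theorem controls the oracle average with remainder $\const_2/\sqrt n$ under assumption~3. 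Absorbing the event complements into the additive $\epsilon$ and the universal constants into $\const_1,\const_2$ then yields~\eqref{eq: r hat finite sample confidence}.

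The main obstacle is Step~1: carrying out the oracle replacement and the telescoping \emph{without} imposing Condition~\ref{DScondition: general independence}, so that $\boldsymbol\ell_*$ is only determined up to the ambiguity flagged after~\eqref{genell}, the oracle $h^{k-1}_v$ genuinely differs from a $Q$-conditional mean of $\hat\ell^k_v$, and the partially non-overlapping supports must be tracked (hence $\ell^k_*$ and $h^{k-1}_v$ are evaluated outside $\mathcal{Z}_k$). Keeping the several distinct normalizers $\sum_{a\in\mathcal{S}'_k}\pi^a_*$, $\sum_{a\in\mathcal{S}'_1}\pi^a_*$, and their $\widehat{\boldsymbol\pi}_v$ analogues aligned so that exactly $\Delta_v$ of~\eqref{eq: Delta v} and the product bias $B_{k,v}$ of~\eqref{bk} emerge, rather than a messier remainder, is the delicate bookkeeping the authors allude to when they note that a more careful analysis than the standard $Z$-estimator argument is needed; it is also what makes the specification-test analysis of Section~\ref{sec: independence bias test} possible.
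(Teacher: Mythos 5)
Your overall architecture is the same as the paper's: a Z-estimator expansion into an empirical-process term, a deterministic bias term, and an oracle term; evaluation of the bias via the oracle replacement $\hat\ell^k_v\mapsto h^{k-1}_v$ followed by weight linearization and telescoping, which is exactly the content of the paper's Lemma~\ref{lemma: general independence k-th bias} and produces $B_{k,v}$ and $\Delta_v$ in the same way; the same vanishing of $\Delta$ under Condition~\ref{DScondition: general independence} via the tower property; and the same Berry--Esseen/event-decomposition route to \eqref{eq: r hat finite sample confidence}. The bookkeeping of the normalizers that you flag as the main obstacle is indeed where the paper's care lies, and your description of how $B_{k,v}$ and $\Delta_v$ emerge matches the paper's computation of ``Term~2.''

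The one place your argument as written does not go through is the control of the empirical-process term by conditional Chebyshev. You assert that conditioning on the out-of-fold sample makes the integrand $\bigl(\sum_{a}\hat\pi^a_v\bigr)\widetilde{\mathcal{T}}(\widehat{\boldsymbol\ell}_v,\widehat{\boldsymbol\lambda}_v,\widehat{\boldsymbol\pi}_v)-\bigl(\sum_a\pi^a_*\bigr)\widetilde{\mathcal{T}}(\boldsymbol\ell_*,\boldsymbol\lambda_*,\boldsymbol\pi_*)$ a fixed function, so that $(P^{n,v}-P_*)$ applied to it is a centered i.i.d.\ average. But in Algorithms~\ref{alg: general independence estimator} and \ref{alg: independence estimator} the marginal probabilities $\widehat{\boldsymbol\pi}_v$ are computed \emph{in-fold}, so the integrand still depends on the data in fold $v$ and the conditional-Chebyshev step is not licensed. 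The paper resolves this by observing that, given the out-of-fold nuisances, the integrand ranges (with probability tending to one) over a fixed VC-subgraph, hence Donsker, class indexed by $\boldsymbol\pi$ in a shrinking neighborhood of $\boldsymbol\pi_*$, and then invokes asymptotic equicontinuity together with the maximal-inequality bound from the proof of Theorem~2.5.2 of \citet{vandervaart1996}. Your argument is repairable without this machinery---either by switching to the out-of-fold estimator of $\boldsymbol\pi_*$ (which the paper's remark on estimating $\boldsymbol\pi_*$ explicitly permits and which has the same asymptotics), or by first freezing $\widehat{\boldsymbol\pi}_v$ at $\boldsymbol\pi_*$ and separately bounding the resulting difference using the $\bigO_p(n^{-1/2})$ accuracy of $\widehat{\boldsymbol\pi}_v$ and the smooth dependence of $\widetilde{\mathcal{T}}$ on the normalizers---but one of these additional steps is needed before the conditional variance calculation applies.
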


We have dropped the dependence of $\Delta_{v}$ and $\Delta$ on the sample size $n$, the nuisance estimators $(\widehat{\boldsymbol \ell}_{v}, \widehat{\boldsymbol \lambda}_{v},\widehat{\boldsymbol \theta}_{v},\widehat{\boldsymbol \pi}_{v})$ and the true distribution $P_*$ from the notation for conciseness.
To illustrate the finite-sample confidence guarantee, suppose that $\| \hat{\lambda}^{k-1}_v - \lambda^{k-1}_* \|_{L^2(P_*)} = \bigO_p(a_n)$ and $\| \hat{\ell}^{k-1}_v - h^{k-1}_v \|_{L^2(P_*)} = \bigO_p(b_n)$. Such convergence rates have been established for many flexible regression or classification methods.
        For example, for Condition~\ref{DScondition: independence}, with $d$ denoting the dimension of $Z_{K-1}$ and highly adaptive lasso used to obtain $\widehat{\boldsymbol{\ell}}_v$ and $\widehat{\boldsymbol{\theta}}_v$, one has $a_n=b_n=n^{-1/4-1/\{8(d+1)\}}$ \citep{Benkeser2016} if the nuisance functions have bounded variation norm. 
        See Supplement~\ref{sec: sufficient STcondition: independence eff} for more examples. Then, taking $c_n=d_n=\epsilon/\{2V(K-1)\}$, the finite-sample guarantee becomes
$$\Prob(|\hat{r}-\Delta-r_*| > t) \leq 2 \Phi \left( -\sqrt{n} \frac{ t - \const_1 (a_n b_n + n^{-1/2} (a_n + b_n) + n^{-1}) }{\sigma_{*,\generalindependence}} \right) + \frac{\const_2}{\sqrt{n}} + 2 \epsilon$$
for a different absolute constant $\const_1$. 
Thus, when the highly adaptive lasso is used,
the above bound equals
$2 \Phi ( -[\sqrt{n} t-\const_1 n^{-1/\{4(d+1)\}}]/\sigma_{*,\generalindependence} ) + \const_2/\sqrt{n} + 2 \epsilon$.
This leads to a non-trivial probability bound for any $t>\const_1 n^{-1/2-1/\{4(d+1)\}}$ such as $t \propto n^{-1/2}$.
Under Conditions~\ref{DScondition: independence} and \ref{STcondition: independence eff}, 
statistical inference about $r_*$ can be performed based on $\hat{r}$ and a consistent estimator of its influence function $D_\independence(\boldsymbol \ell_*,\boldsymbol \theta_*,\boldsymbol \pi_*,r_*)$; 
here, a consistent estimator of the asymptotic variance of $\hat{r}$ is
$\frac{1}{n} \sum_{v \in [V]} \sum_{i \in I_v} D_\independence(\widehat{\boldsymbol \ell}_v,\widehat{\boldsymbol \theta}_v,\widehat{\boldsymbol \pi}_v,\hat{r}_v)(O_i)^2$.
Inference about $r_*$ under the more general condition~\ref{DScondition: general independence} can be conducted similarly.
The results in Theorem~\ref{thm: independence eff and DR} under Condition~\ref{DScondition: general independence} or \ref{DScondition: independence} can be shown using standard approaches to analyzing Z-estimators \citep[see, e.g., Section~3.3 in][]{vandervaart1996}. 
However, to study the behavior of our estimator $\hat{r}$ without Condition~\ref{DScondition: general independence}, we need to carefully study the expansion of the mean of the estimating function $D_\generalindependence$ to identify the bias term $\Delta$ due to failure of Condition~\ref{DScondition: general independence}.
The proof of Theorem~\ref{thm: independence eff and DR} can be found in Supplement~\ref{sec: independence proof}.

We also have the following immediate corollary of Theorem~\ref{thm: independence eff and DR} for model comparison and model selection. Let $\ell^{(1)}$ and $\ell^{(2)}$ be two given losses. For example, for each $j \in \{1,2\}$, $\ell^{(j)}$ may be the squared error loss $z=(x,y) \mapsto (y-f^{(j)}(x))^2$ for a given predictor $f^{(j)}$. With superscript $(1)$ and $(2)$ denoting quantities or functions corresponding to these two predictors, the contrast of their risk $r_*^{(1)} - r_*^{(2)}$ informs the difference between the two models' performance. It is natural to select the model with the smaller estimated risk.
\begin{corollary} \label{corollary: model comparison}
    Under Condition~\ref{DScondition: general independence}, $\hat{r}^{(1)}-\hat{r}^{(2)}$ equals
    $$r_*^{(1)} - r_*^{(2)} + \frac{1}{n} \sum_{i=1}^n \{ D_\generalindependence(\boldsymbol \ell_*^{(1)},\boldsymbol \lambda_*,\boldsymbol \pi_*,r_*^{(1)})(O_i) - D_\generalindependence(\boldsymbol \ell_*^{(2)},\boldsymbol \lambda_*,\boldsymbol \pi_*,r_*^{(2)})(O_i) \} + \smallo_p(n^{-1/2})$$ if Condition~\ref{STcondition: independence eff} holds for both losses, and $\hat{r}^{(1)}-\hat{r}^{(2)} \overset{p}{\to} r_*^{(1)} - r_*^{(2)}$ if Condition~\ref{STcondition: independence DR} holds for both losses.
    Moreover, if $r_*^{(1)} - r_*^{(2)} = -C/\sqrt{n}$ for a constant $C > 0$ and Condition~\ref{STcondition: independence eff} holds for both losses, then the probability for the estimated sign of $r_*^{(1)} - r_*^{(2)}$ to be correct $\Prob(\hat{r}^{(1)} < \hat{r}^{(2)}) \to \Phi(C/\chi)$ (as $n \to \infty$) where $\chi := (\expect_P [\{ D_\generalindependence(\boldsymbol \ell_*^{(1)},\boldsymbol \lambda_*,\boldsymbol \pi_*,r_*^{(1)})(O) - D_\generalindependence(\boldsymbol \ell_*^{(2)},\boldsymbol \lambda_*,\boldsymbol \pi_*,r_*^{(2)})(O) \}^2])^{1/2}$ is the asymptotic standard deviation of the estimator $\hat{r}^{(1)} - \hat{r}^{(2)}$.
\end{corollary}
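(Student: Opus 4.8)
The plan is to read off both displayed claims from Theorem~\ref{thm: independence eff and DR} applied separately to the two losses $\ell^{(1)}$ and $\ell^{(2)}$, using that the nuisances $\boldsymbol\lambda_*$ and $\boldsymbol\pi_*$ (equivalently $\boldsymbol\theta_*$) depend only on $P_*$ and the dataset shift structure, not on the loss. Since Condition~\ref{DScondition: general independence} holds, the final sentence of Theorem~\ref{thm: independence eff and DR} gives $\Delta=0$, so for each $j\in\{1,2\}$ the efficiency part yields $\hat{r}^{(j)} = r_*^{(j)} + n^{-1}\sum_{i=1}^n D_\generalindependence(\boldsymbol\ell_*^{(j)},\boldsymbol\lambda_*,\boldsymbol\pi_*,r_*^{(j)})(O_i) + \smallo_p(n^{-1/2})$ under Condition~\ref{STcondition: independence eff}. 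First I would subtract these two expansions and collect the two $\smallo_p(n^{-1/2})$ remainders; because $\boldsymbol\lambda_*$ and $\boldsymbol\pi_*$ are common to both expansions, this yields exactly the stated asymptotically linear representation of $\hat{r}^{(1)}-\hat{r}^{(2)}$, whose influence function is the difference of the two individual influence functions. The multiply robust consistency claim is even more immediate: the consistency part of Theorem~\ref{thm: independence eff and DR} gives $\hat{r}^{(j)}\overset{p}{\to}r_*^{(j)}$ for each $j$ under Condition~\ref{STcondition: independence DR}, and the difference of two sequences converging in probability converges to the difference of the limits.

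For the sign-recovery statement I would work along the local sequence $r_*^{(1)}-r_*^{(2)}=-C/\sqrt{n}$. Multiplying the asymptotically linear expansion from the first step by $\sqrt{n}$ gives $\sqrt{n}(\hat{r}^{(1)}-\hat{r}^{(2)}) = -C + n^{-1/2}\sum_{i=1}^n \psi(O_i) + \smallo_p(1)$, where $\psi := D_\generalindependence(\boldsymbol\ell_*^{(1)},\boldsymbol\lambda_*,\boldsymbol\pi_*,r_*^{(1)}) - D_\generalindependence(\boldsymbol\ell_*^{(2)},\boldsymbol\lambda_*,\boldsymbol\pi_*,r_*^{(2)})$ has mean zero under $P_*$ and variance $\chi^2$. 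Applying the central limit theorem to $n^{-1/2}\sum_i \psi(O_i)$ and Slutsky's theorem gives $\sqrt{n}(\hat{r}^{(1)}-\hat{r}^{(2)})\overset{d}{\to}\mathrm{N}(-C,\chi^2)$. Since $\{\hat{r}^{(1)}<\hat{r}^{(2)}\}=\{\sqrt{n}(\hat{r}^{(1)}-\hat{r}^{(2)})<0\}$ and the limiting law puts no mass at $0$, convergence in distribution gives $\Prob(\hat{r}^{(1)}<\hat{r}^{(2)})\to\Prob(\mathrm{N}(-C,\chi^2)<0)=\Phi(C/\chi)$, as claimed.

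The only genuinely delicate point is that the local alternative $r_*^{(1)}-r_*^{(2)}=-C/\sqrt{n}$ forces the two losses (e.g.\ the two predictors $f^{(1)},f^{(2)}$), and hence $\psi$ and its variance $\chi^2$, to vary with $n$, so both the expansion of Theorem~\ref{thm: independence eff and DR} and the central limit theorem must be invoked along a sequence of models rather than at a fixed one. I would handle this by assuming, consistently with positing Condition~\ref{STcondition: independence eff} along the sequence, that the $\smallo_p(n^{-1/2})$ remainders hold uniformly along the sequence, that $\chi^2$ converges (or is held constant), and that a triangular-array Lindeberg condition holds for $\psi$; for a fixed pair of losses these reduce to the ordinary i.i.d.\ statements. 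This bookkeeping, rather than any new idea, is the main obstacle; everything else is arithmetic on the already-established expansion and elementary limit theory.
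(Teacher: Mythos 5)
Your proposal is correct and follows essentially the same route as the paper, which simply states that the corollary "follows immediately from Theorem~\ref{thm: independence eff and DR}": subtract the two asymptotically linear expansions (with $\Delta=0$ under Condition~\ref{DScondition: general independence}), then apply the CLT and Slutsky along the local sequence for the sign-recovery claim. Your observation that the local alternative $r_*^{(1)}-r_*^{(2)}=-C/\sqrt{n}$ technically makes the losses, hence $\psi$ and $\chi^2$, depend on $n$ and so requires a triangular-array/uniformity argument is a legitimate point of care that the paper's one-line proof does not address.
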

The nuisances $\boldsymbol \lambda_*$ and $\boldsymbol \pi_*$ are invariant with respect to the loss and so need not be estimated twice for the two losses.
Under Condition~\ref{STcondition: independence eff}, inference about the risk contrast $r_*^{(1)} - r_*^{(2)}$ can be conducted based on $\hat{r}^{(1)}-\hat{r}^{(2)}$ and a consistent estimator of its asymptotic variance, $\frac{1}{n} \sum_{v \in [V]} \sum_{i \in I_v} \{ D_\generalindependence(\boldsymbol \ell_*^{(1)},\boldsymbol \lambda_*,\boldsymbol \pi_*,r_*^{(1)})(O_i) - D_\generalindependence(\boldsymbol \ell_*^{(2)},\boldsymbol \lambda_*,\boldsymbol \pi_*,r_*^{(2)})(O_i) \}^2$. 
In the above example of model comparison, rejection of the null hypothesis $r_*^{(1)} - r_*^{(2)}=0$ indicates a significant difference between the two predictors' performance in the target population; otherwise, the two predictors might perform similarly.

In the example of selecting the model with a smaller risk based on the estimated sign of risk difference $r_*^{(1)} - r_*^{(2)}$, consider the case where the true risk difference is small even in relatively large samples, namely the asymptotic regime where $r_*^{(1)} - r_*^{(2)} = -C/\sqrt{n}$ for a constant $C$ as in Corollary~\ref{corollary: model comparison}. Without loss of generality, let $C>0$. If the risk difference is estimated with the nonparametric estimator, then the asymptotic probability of estimating the correct sign is asymptotically $\Phi(C/\chi_\nonparametric)$ where $\chi_\nonparametric := (\expect_P[\{ D_\nonparametric(\rho,r_*^{(1)})(O)-D_\nonparametric(\rho,r_*^{(2)})(O) \} ^2])^{1/2}$ is the asymptotic standard deviation of the nonparametric estimator. Because our proposed estimator is asymptotically efficient, $\chi_\nonparametric^2 \geq \chi^2$ and our proposed estimator results in a greater probability of selecting the better-performing model than using the nonparametric estimator.
Similarly, consider $J+1$ risks $r_*^{(1)},\ldots,r_*^{(J+1)}$ such that $r_*^{(1)} - r_*^{(j)} = -C_{j-1}/\sqrt{n}$ ($C_{j-1}>0$, $j=2,\ldots,J+1$). Suppose that the asymptotic covariance matrices of $(\hat{r}^{(1)}-\hat{r}^{(2)},\ldots,\hat{r}^{(1)}-\hat{r}^{(J+1)})$ and $(\hat{r}_\nonparametric^{(1)}-\hat{r}_\nonparametric^{(2)},\ldots,\hat{r}_\nonparametric^{(1)}-\hat{r}_\nonparametric^{(J+1)})$ are $\Sigma$ and $\Sigma_\nonparametric$, respectively. Then the asymptotic probabilities of estimating the correct smallest risk based on these estimators are $\Prob(\mathrm{N}_J(0,\Sigma) < (C_1,\ldots,C_J))$ and $\Prob(\mathrm{N}_J(0,\Sigma_\nonparametric) < (C_1,\ldots,C_J))$, respectively, where $\mathrm{N}_J$ denotes the $J$-dimensional normal distribution, and the events in these probabilities mean that a $J$-dimensional normal random vector is less than the vector $(C_1,\ldots,C_J)$ entrywise; the former asymptotic probability is greater than or equal to the latter because $\Sigma \preceq \Sigma_\nonparametric$.

\subsection{Discussion on estimation of conditional mean loss function}

In contrast to our approach 
in
Alg.~\ref{alg: independence estimator}, a
\emph{direct regression method} for 
estimating nuisance functions $\boldsymbol \ell_*$ is to regress $\ell(Z)$ on covariates in the subsample with $A=0$, since $\ell^k_*(\bar{z}_k) = \expect_{Q}[\ell(Z) \mid \bar{Z}_k=\bar{z}_k]$ for $\bar{z}_k \in \mathcal{Z}_k$ under Condition~\ref{DScondition: general independence}.
Direct regression aims to estimate nuisance functions $\ell^k_*$ rather than $h^k_*$, and can also achieve efficiency under Condition~\ref{DScondition: independence} when all nuisance functions are estimated consistently at sufficient rates to satisfy Condition~\ref{STcondition: independence eff}.

However, our sequential regression approach is advantageous in achieving multiple robustness under less stringent conditions on nuisance function estimators.
We illustrate this advantage for Condition~\ref{DScondition: independence} and Alg.~\ref{alg: independence estimator} as an example.
Note that $B_k$ from \eqref{bkf} appearing in the sum \eqref{eq: independence remainder} involves the difference between the nuisance estimator $\hat{\ell}^{k-1}_v$ and the oracle estimator $h^{k-1}_v$, which depends on the nuisance estimator $\hat{\ell}^k_v$ in the previous step. 
Each nuisance function estimator $\hat{\ell}^k_v$ (except those with indices $k=K$ and $k=2$) appears in both $B_k$ and $B_{k-1}$ in \eqref{eq: independence remainder}.

In the direct regression method, we might wish to achieve $2^{K-1}$-robustness in the sense that the final risk estimator is consistent for $r_*$ if, for each $k \in [K-1]$, either $\hat{\ell}^k_v$ or $\hat{\theta}^k_v$ is consistent. 
Consider a fixed index $j \in [K-1]$.
If all nuisance estimators in $\widehat{\boldsymbol \ell}_v$ except $\hat{\ell}^j_v$ are consistent, then neither of $h^j_v-\hat{\ell}^j_v$ and $h^{j-1}_v-\hat{\ell}^{j-1}_v$ would converge to zero.
If $\hat{\theta}^j_v$ is consistent but $\hat{\theta}^{j-1}_v$ is not, then, in general, \eqref{eq: independence remainder} would not converge to zero, and thus the risk estimator is inconsistent.
In other words, the above approach might not achieve the desired $2^{K-1}$-robustness property.

In contrast, our sequential regression approach directly aims to estimate the oracle regression functions $h^k_v$ and the estimator $\hat{\ell}^k_v$ inherits the potential bias in $\hat{\ell}^{k+1}_v$. 
For example, if all differences $h^k_v-\hat{\ell}^k_v$ except $h^j_v-\hat{\ell}^j_v$ converge to zero, in order to make \eqref{eq: independence remainder} small, it is indeed sufficient for $1/(1+\hat{\theta}^j_v)$ to be consistent for $1/(1+\theta^j_*)$. 
It seems that such a multiple robustness property could only hold for the  direct regression method under stringent or even implausible conditions on the nuisance function estimators $\hat{\ell}^k_v$.

\subsection{Consequences of the failure of Condition~\ref{DScondition: general independence}} \label{sec: independence bias test}

In this section, we focus on Condition~\ref{DScondition: independence} and Alg.~\ref{alg: independence estimator} for illustration. All results apply to the more general condition \ref{DScondition: general independence} unless otherwise stated.
We first show the intriguing fact that, 
when Condition~\ref{DScondition: general independence} fails, the conditional mean loss $\ell^k_*$ might not be uniquely defined even in $\mathcal{Z}_k$. 
The reason is that the supports of $\bar{Z}_k \mid A=a$
may be potentially mismatched
 across $Q$ and $a \in \mathcal{S}'_k$ for $k \in [K]$, 
 and moreover that for $j>k$, $\ell^j_*$ may not be uniquely defined. 
 
Consider the following simple example with $K=3$.
Suppose that the support of $\bar{Z}_2 \mid A \in \mathcal{S}'_3$ is one point $\{(0,0)\}$, while the supports of $\bar{Z}_{2} \mid A \in \mathcal{S}'_{2}$ and $\bar{Z}_2 \mid A \in \mathcal{S}'_{1}$ are both two points $\{(0,0),(0,1)\}$.
Suppose that $\mathcal{Z}_2=\{(0,0)\}$ and therefore $\mathcal{Z}_1=\{0\}$.
Then, $\ell^{2}_*$ is non-uniquely defined at $(0,1)$.
This non-unique definition is allowed for under Condition~\ref{DScondition: independence}.
However, since $\ell^{1}_*(\bar{Z}_{1}) = \expect_{P_*}[\ell^{2}_*(\bar{Z}_{2}) \mid \bar{Z}_{1},A \in \mathcal{S}'_{2}]$, when Condition~\ref{DScondition: independence} fails, the value of $\ell^{1}_*$ at $0 \in \mathcal{Z}_{1}$ depends on the value of $\ell^{2}_*$ at $(0,1)$, which is not uniquely defined.
In other words, $\ell^{1}_*$ is non-uniquely defined even in the support $\mathcal{Z}_{1}$ of $\bar{Z}_{1} \mid A=0$.

We note that this dependence of $\ell^{1}_*$ on $\ell^{2}_*(\bar{z}_{2})$ for $\bar{z}_{2} = (0,1) \notin \mathcal{Z}_2$ is excluded under Condition~\ref{DScondition: independence}: $(0,1)$ cannot be in the support of $\bar{Z}_2 \mid A \in \mathcal{S}'_2$ by the assumption that $\bar{Z}_2 \mid Z_1=z_1,A=0$ and $\bar{Z}_2 \mid Z_1=z_1,A \in \mathcal{S}'_2$ are identically distributed for $z_1 \in \mathcal{Z}_1$.
This non-unique definition might also be reflected in the corresponding nuisance estimator $\widehat{\boldsymbol \ell}_v$: $\hat{\ell}^k_v$ might be (unintentionally) extrapolated to outside the support of $\bar{Z}_k \mid A \in \mathcal{S}'_k$ in order to obtain the estimator $\hat{\ell}^{k-1}_{v}$ in Line~\ref{step: sequential regression}, Alg.~\ref{alg: independence estimator}.
This support issue might go undetected in the estimation procedure.
The oracle estimator $h^{k-1}_{v}$ would also depend on how $\hat{\ell}^k_{v}$ is extrapolated.

Nevertheless, without Condition~\ref{DScondition: general independence}, our results in Section~\ref{sec: independence estimator} remain valid as long as Condition~\ref{STcondition: independence eff} or \ref{STcondition: independence DR} holds for one version of the collection of true conditional mean loss functions $\boldsymbol \ell_*$.
For example, part~2 of Condition~\ref{STcondition: independence eff} would require the consistency of $\widehat{\boldsymbol \ell}_v$ for some version of $\boldsymbol \ell_*$, and $D_\independence(\boldsymbol \ell_*,\boldsymbol \theta_*,\boldsymbol \pi_*,r_*)$ in \eqref{eq: independence eff} would depend on the particular adopted version of $\boldsymbol \ell_*$. The appropriate choice of $\boldsymbol \ell_*$ often depends on the asymptotic behavior of the nuisance estimator $\widehat{\boldsymbol \ell}_v$, which might heavily depend on the particular choice of the regression technique used in the sequential regression (Line~\ref{step: sequential regression}, Alg.~\ref{alg: independence estimator}).

The choice of regression techniques can further affect the bias term $\Delta$ when Condition~\ref{DScondition: independence} fails.
Because of the potential extrapolation when evaluating and estimating $\widehat{\boldsymbol \ell}_v$, the bias term $\Delta$ can have drastically different behavior for different estimators $\widehat{\boldsymbol \ell}_v$, even if these estimators are all consistent for some $\boldsymbol \ell_*$ when restricted to $\mathcal{Z}_k$.
Consequently, $\Delta$ might not have a probabilistic limit, and so the estimator $\hat{r}$ can diverge.

We illustrate the behavior of $\Delta$ in the following example of concept shift in the features (\ref{DScondition: X con shift}), a special case of Condition~\ref{DScondition: independence}.
Under the setup of this condition,
$$\Delta_{v} \approx \expect_{P_*}[\expect_{P_*}[\ell(X,Y) \mid X,A=0]] - r_* + \expect_{P_*}[\hat{\ell}^1_{v}(X)] - \expect_{P_*}[\hat{\ell}^1_{v}(X) \mid A=0]$$
where we have dropped the estimation error of order $\bigO_p(n^{-1/2})$ in estimating $\pi^{a}_{*}$ with $\hat{\pi}^{a}_{v}$ in this approximation.
If Condition~\ref{DScondition: X con shift} in fact does not hold and the difference $\mathcal{B}$ between the support of $X \mid A=1$ and that of $X \mid A=0$ is nonempty, the asymptotic behavior of the third term $\expect_{P_*}[\hat{\ell}^1_{v}(X)]$ would depend on how the estimator $\hat{\ell}^1_{v}$ behaves asymptotically in $\mathcal{B}$, even if this estimator is known to be consistent for $x \mapsto \expect_{P_*}[\ell(X,Y) \mid X=x,A=0]$ when restricted to the support of $X \mid A=0$.
If $\hat{\ell}^1_{v}$ diverges in the region $\mathcal{B}$ as $n \to \infty$, our estimator $\hat{r}_{v}$ can diverge.
This phenomenon is fundamental and cannot be resolved by, for example, using an assumption lean approach \citep{Vansteelandt2022} because it mirrors the ill-defined nuisance functions $\ell^k_*$ at $\bar{z}_k\notin \mathcal{Z}_k$ ($k \in [K-1]$).

In practice, mismatched supports and extrapolation of the estimator $\widehat{\boldsymbol \ell}_v$ caused by failure of Condition~\ref{DScondition: general independence} might be detected from extreme values or even numerical errors when evaluating $\hat{\ell}^k_v$ at sample points, but such detection is not guaranteed.
When target population data is observed, the above analysis of our estimator $\hat{r}$---which leverages the dataset shift condition \ref{DScondition: independence} to gain efficiency---
motivates the following result that leads to a test of whether $\hat{r}$ is consistent for $r_*$.

\begin{corollary}[Testing root-$n$ consistency of $\hat{r}$] \label{corollary: test}
    Under Conditions~\ref{DScondition: independence} and \ref{STcondition: independence eff}, $\sqrt{n} (\hat{r} - \hat{r}_{\nonparametric})$ is asymptotically distributed as a normal distribution with mean zero and variance 
    $$\expect_{P_*}[\{D_\independence(\boldsymbol \ell_*,\boldsymbol \theta_*,\boldsymbol \pi_*,r_*)(O)-D_\nonparametric(\Pi_*,r_*)(O)\}^2] = \sigma_{*,\nonparametric}^2 - \sigma_{*,\independence}^2,$$
    as $n \to \infty$.
\end{corollary}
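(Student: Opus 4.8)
The plan is to leverage Theorem~\ref{thm: independence eff and DR} and the known asymptotic linearity of the nonparametric estimator, then subtract the two influence-function expansions. First I would invoke the efficiency part of Theorem~\ref{thm: independence eff and DR}: under Conditions~\ref{DScondition: independence} and \ref{STcondition: independence eff}, and since Condition~\ref{DScondition: independence} is a special case of Condition~\ref{DScondition: general independence} so that $\Delta=0$, we have the expansion $\hat{r} = r_* + n^{-1}\sum_{i=1}^n D_\independence(\boldsymbol \ell_*,\boldsymbol \theta_*,\boldsymbol \pi_*,r_*)(O_i) + \smallo_p(n^{-1/2})$. Next I would recall from Section~\ref{sec: setup} that $\hat{r}_\nonparametric = r_* + n^{-1}\sum_{i=1}^n D_\nonparametric(\rho_*,r_*)(O_i) + \smallo_p(n^{-1/2})$. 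Here there is a small notational point to reconcile: the corollary writes $D_\nonparametric(\Pi_*,r_*)$ whereas \eqref{dnp} defines $D_\nonparametric(\rho,r)$ in terms of the scalar $\rho$; these agree upon identifying $\rho_* = \pi^0_* = P_*(A=0)$, and I would note this at the outset.

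Subtracting the two expansions gives $\sqrt{n}(\hat{r} - \hat{r}_\nonparametric) = n^{-1/2}\sum_{i=1}^n \{D_\independence(\boldsymbol \ell_*,\boldsymbol \theta_*,\boldsymbol \pi_*,r_*)(O_i) - D_\nonparametric(\rho_*,r_*)(O_i)\} + \smallo_p(1)$. Since both influence functions lie in $L^2_0(P_*)$, so does their difference, and the classical CLT yields asymptotic normality with mean zero and variance $\expect_{P_*}[\{D_\independence(\boldsymbol \ell_*,\boldsymbol \theta_*,\boldsymbol \pi_*,r_*)(O) - D_\nonparametric(\rho_*,r_*)(O)\}^2]$, which establishes the first displayed equality in the corollary. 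The remaining task is the identity $\expect_{P_*}[\{D_\independence - D_\nonparametric\}^2] = \sigma_{*,\nonparametric}^2 - \sigma_{*,\independence}^2$, i.e. $\|D_\nonparametric - D_\independence\|_{L^2(P_*)}^2 = \|D_\nonparametric\|_{L^2(P_*)}^2 - \|D_\independence\|_{L^2(P_*)}^2$. Expanding the left side, this is equivalent to the orthogonality relation $\expect_{P_*}[D_\independence(\boldsymbol \ell_*,\boldsymbol \theta_*,\boldsymbol \pi_*,r_*)(O)\,\{D_\nonparametric(\rho_*,r_*)(O) - D_\independence(\boldsymbol \ell_*,\boldsymbol \theta_*,\boldsymbol \pi_*,r_*)(O)\}] = 0$, which is exactly a Pythagorean identity: the efficient influence function $D_\independence$ under the semiparametric model induced by Condition~\ref{DScondition: independence} is the projection of the nonparametric influence function $D_\nonparametric$ onto the model tangent space, so the residual $D_\nonparametric - D_\independence$ is orthogonal to $D_\independence$ (equivalently, $D_\independence \perp$ the orthogonal complement of the tangent space, which contains $D_\nonparametric - D_\independence$). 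This is a standard consequence of semiparametric efficiency theory \citep{Bickel1993,vanderVaart1998} once one knows $D_\independence$ is the efficient influence function, which was recorded in the excerpt via Theorem~2 and Corollary~1 of \citet{Li2023}.

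I would carry the steps in that order: (1) identify $\rho_* = \pi^0_*$ and write down the two linear expansions; (2) subtract and apply the CLT to get asymptotic normality with variance $\|D_\independence - D_\nonparametric\|^2_{L^2(P_*)}$; (3) prove the variance identity by establishing the orthogonality $\expect_{P_*}[D_\independence (D_\nonparametric - D_\independence)] = 0$, appealing to the fact that $D_\independence$ is the efficient influence function and hence the projection of $D_\nonparametric$ onto the tangent space of the model cut out by Condition~\ref{DScondition: independence}. The main obstacle is step (3): one must either cite the general projection characterization of efficient influence functions under nested models (the nonparametric model containing the Condition~\ref{DScondition: independence} model) or verify the orthogonality by a direct computation using the explicit forms \eqref{eq: independence EIF} and \eqref{dnp}. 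The cleaner route is the projection argument; the direct computation is feasible but tedious because it requires expanding $\mathcal{T}(\boldsymbol \ell_*,\boldsymbol \theta_*,\boldsymbol \pi_*)$ into its telescoping sum over $k$ and using the tower property together with the defining conditional-mean recursion \eqref{genell} and the odds identities to collapse the cross terms. I would present the projection argument as the primary proof and relegate the direct verification to a remark if needed.
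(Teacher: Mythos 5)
Your proposal is correct and follows essentially the same route as the paper: the paper's proof likewise combines the efficiency expansion of $\hat{r}$ from Theorem~\ref{thm: independence eff and DR} with the asymptotic linearity of $\hat{r}_{\nonparametric}$ and the orthogonality of $D_\independence(\boldsymbol \ell_*,\boldsymbol \theta_*,\boldsymbol \pi_*,r_*)-D_\nonparametric(\rho_*,r_*)$ to $D_\independence(\boldsymbol \ell_*,\boldsymbol \theta_*,\boldsymbol \pi_*,r_*)$, which is exactly your Pythagorean/projection step. Your elaboration of why that orthogonality holds (the efficient influence function is the projection of the nonparametric influence function onto the tangent space) is the intended justification.
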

This corollary is implied by Theorem~\ref{thm: independence eff and DR} and the orthogonality between $D_\independence(\boldsymbol \ell_*,\boldsymbol \theta_*,\boldsymbol \pi_*,r_*)$ -$D_\nonparametric(\Pi_*,r_*)$ and $D_\independence(\boldsymbol \ell_*,\boldsymbol \theta_*,\boldsymbol \pi_*,r_*)$ under Condition~\ref{DScondition: independence}.
This result might not hold for the more general condition \ref{DScondition: general independence} due to potential lack of a nonparametric estimator $\hat{r}_\nonparametric$.
A specification test \citep{Hausman1978} of the dataset shift condition \ref{DScondition: independence} can be constructed based on the two estimators $\hat{r}_{\nonparametric}$ and $\hat{r}$ along with their respective standard errors $\mathrm{SE}_1$ and $\mathrm{SE}_2$. 
Under Condition~\ref{STcondition: independence eff} and the null hypothesis Condition~\ref{DScondition: independence}, the test statistic\footnote{It is viable to use a variant statistic with the denominator replaced by an asymptotic variance estimator based on the influence function $D_\independence(\boldsymbol \ell_*,\boldsymbol \theta_*,\boldsymbol \pi_*,r_*)-D_\nonparametric(\Pi_*,r_*)$.}
$$\frac{\hat{r} - \hat{r}_{\nonparametric}}{(\mathrm{SE}_1^2 - \mathrm{SE}_2^2)^{1/2}}$$
is approximately distributed as $\mathrm{N}(0,1)$ in large samples; in contrast, if Condition~\ref{DScondition: independence} does not hold, $\hat{r}$ is generally inconsistent for $r_*$ and thus the test statistic diverges as $n \to \infty$.

The aforementioned test of Condition~\ref{DScondition: independence} may be underpowered because only one loss function $\ell$ is considered.
For example, it is possible to construct a scenario where Condition~\ref{DScondition: independence} fails, while the loss function $\ell$ and the nuisance function estimators $\widehat{\boldsymbol \ell}_v$ are chosen such that $\Delta = \smallo_p(n^{-1/2})$.
In this case, the asymptotic power of the aforementioned test is no greater than the asymptotic type I error rate.
A somewhat contrived construction is to set $\widehat{\boldsymbol \ell}_v$ as one version of the true conditional mean loss $\boldsymbol \ell_*$ and choose $P_*$ such that  $\expect_{P_*}[\ell^k_*(\bar{Z}_k) \mid \bar{Z}_{k-1},A=0] = \ell^{k-1}_*(\bar{Z}_{k-1})$ holds.
This implies that $\Delta=0$, while Condition~\ref{DScondition: independence} can fail, for example, due to the heteroskedasticity of the residuals $\ell^k_*(\bar{Z}_k) - \ell^{k-1}_*(\bar{Z}_{k-1})$.
Such phenomena have also been found in specification tests for generalized method of moments \citep{Newey1985}.
More powerful tests of conditional independence that do not suffer from the above phenomenon have been proposed in other settings \citep[see, e.g.][etc.]{Doran2014,Hu2020,Shah2020,Zhang2011condindtest}.

However, since $\hat{r}$ might still be root-$n$ consistent for $r_*$ even if Condition~\ref{DScondition: independence} does not hold,
the above test should be interpreted as a test of the null hypothesis that $\hat{r}$ is root-$n$ consistent for $r_*$, a weaker null hypothesis than conditional independence (Condition~\ref{DScondition: independence}). Nevertheless, this weaker null hypothesis is meaningful when the risk $r_*$ is the estimand of interest.

\begin{remark}
    It is possible to adopt our proposed estimator when the relevant source populations $\mathcal{S}_k$ in Condition~\ref{DScondition: general independence} or \ref{DScondition: independence} are not known \textit{a priori} but can be selected based on data.
    In this case, the user can split the data into two folds, use fold 1 to select $\mathcal{S}_k$, and finally compute our estimator on fold 2.
    We leave a more thorough study of such approaches to future work.
\end{remark}

For the following Sections~\ref{sec: X con shift} and \ref{sec: cov shift}, we focus on risk estimation under one of the four popular dataset shift conditions \ref{DScondition: X con shift}--\ref{DScondition: label shift}. 
The special structures of these conditions will be further exploited in the estimation procedure, leading to additional simplifications, more flexibility in estimation, and potentially more robustness.
For example, for Conditions~\ref{DScondition: X con shift} and \ref{DScondition: Y con shift}, $K=2$ and $\mathcal{S}_2$ is known to be empty, and we will show that estimators with better robustness properties than part~2 of Theorem~\ref{thm: independence eff and DR} can be constructed; for Conditions~\ref{DScondition: cov shift} and \ref{DScondition: label shift}, $K=2$ and $\mathcal{S}_1$ is known to be empty, and the estimation procedure described in Alg.~\ref{alg: independence estimator} can be simplified.

Conditions~\ref{DScondition: X con shift} and \ref{DScondition: Y con shift} are identical up to switching the roles of $X$ and $Y$; the same holds for the other two conditions \ref{DScondition: cov shift} and \ref{DScondition: label shift}. 
Therefore, we study Conditions~\ref{DScondition: X con shift} and \ref{DScondition: cov shift} in the main body and present results for Conditions~\ref{DScondition: Y con shift} and \ref{DScondition: label shift} in Supplement~\ref{sec: label shift and Y con shift}.

\subsection{More general dataset shift condition} \label{section: weakly aligned}

\citet{Li2023weak} considered a condition 
    more general than \ref{DScondition: general independence}. 
    In addition to Condition~\ref{DScondition: general independence}, 
    for each $k\in[K]$,
    their setting also allows for a population index subset $\mathcal{V}_k \subset \mathcal{A}$ such that, for each $a \in \mathcal{V}_k$, and all $z_k, \bar{z}_{k-1}$,
    $$\intd P_{Z_k \mid \bar{Z}_{k-1},A=a} / \intd Q_{Z_k \mid \bar{Z}_{k-1}} (z_k \mid \bar{z}_{k-1}) \propto w_{k,a}(\bar{z}_k; \beta_{k,a})$$ 
    for some \emph{unknown} finite-dimensional parameter $\beta_{k,a}$ and a known tilting function $w_{k,a}$.
    This allows for dataset shift up to unknown finite-dimensional parameters.
    For instance, this includes the following cases:

    \begin{compactenum} 
        \item {\bf Example 1. Covariate shift up to exponential tilting.} We can consider a generalization of Condition~\ref{DScondition: cov shift}, where the density of $Y \mid X=x,A=1$ is proportional to the density of $Y \mid X=x,A=0$ multiplied by $w(x,y;\beta)=\exp([x,y]^\top \beta)$, for all $x,y$.
        \item {\bf Example 2. Covariate shift with truncation.}  Another generalization of Condition~\ref{DScondition: cov shift} allows $Y$ in the source data to be truncated \citep{Jewell1985,Bickel1993,Bhattacharya2007}; for example, $Y$ is only observed when it is above an unknown threshold $\beta$. In this case, the density of $Y \mid X=x,A=1$ is proportional to the density of $Y \mid X=x,A=0$ multiplied by $w(x,y;\beta)=\ind(y \geq \beta)$, for all $x,y$.
        \item {\bf Example 3. Covariate shift with clipping.} Suppose that $Y$ is integer-valued, such as a count variable. Condition~\ref{DScondition: cov shift} can be generalized to allow $Y$ in the source data to be clipped at a threshold $B$; that is, if the true outcome is above $B$, the observed $Y$ equals $B$. In this case, the density (with respect to the counting measure) of $Y \mid X=x,A=1$ is proportional to the density of $Y \mid X=x,A=0$ multiplied by $w(x,y;\beta)=\ind(y<B) + \ind(y=B) \exp(\beta)$ for an unknown normalizing parameter $\beta$,  for all $x,y$.
    \end{compactenum}

    Concept shift and label shift can be extended similarly; and these constructions also clearly apply to more the general sequential conditionals setting. 
    For this more general condition with unknown finite-dimensional parameters, \citet{Li2023weak} derived influence functions that have smaller variances those not using the source data from $\mathcal{V}_k$.

Let $D^*$ be such an influence function with a reduced variance. Similarly to the cross-fitting strategy in Alg.~\ref{alg: independence estimator}, for each fold $v\in[V]$, let  $\hat{D}_v$ denote an estimator of $D_*$ based on data out of fold $v$, which involves an estimator $\hat{\ell}^1_v$ of $\ell^1_*$. 
    With data split into $V$ folds and $P^{n,v}$ denoting the empirical distribution of data in fold $v$ ($v \in [V]$), 
    consider a cross-fit one-step estimator 
$$n^{-1} \sum_{v \in [V]} |I_v| \{ P^{n,v}_{Z_1 \mid A \in \mathcal{S}'_1} \hat{\ell}^1_v + P^{n,v} \hat{D}_v \}.$$

    If all nuisance functions are estimated well in the sense that $\| \hat{D}_v - D_* \|_{L^2(P_0)} = \smallo_p(1)$ and $\mathscr{R}_v := P^{n,v}_{Z_1 \mid A \in \mathcal{S}'_1} \hat{\ell}^1_v - r_* + P_* \hat{D}_v = \smallo_p(n^{-1/2})$ for all $v \in [V]$, then the one-step estimator $P^{n,v}_{Z_1 \mid A \in \mathcal{S}'_1} \hat{\ell}^1_v + P^{n,v} \hat{D}_v$ for fold $v$ equals
$$r_* + P^{n,v} D_* + (P^{n,v} - P_*) (\hat{D}_v - D_*) + \mathscr{R}_v = r_* + P^{n,v} D_* + \smallo_p(n^{-1/2}).$$ 
With $P^n$ denoting the empirical distribution of the entire data, the cross-fit one-step estimator equals $r_* + P^n D_* + \smallo_p(n^{-1/2})$, that is, it is asymptotically normal with 
    a \emph{reduced asymptotic variance compared to the nonparametric estimator}. 
    We leave studying multiple robustness for future work.

\section{Concept shift in the features} \label{sec: X con shift}

\subsection{Efficiency bound} \label{sec: X con shift EIF}

We first present the efficient influence function for the risk $r_*$ under concept shift in the features, where $X \independent A$ (\ref{DScondition: X con shift}).
To do so, define $\mathcal{E}_*: x \mapsto \expect_{P_*}[\ell(X,Y) \mid X=x,A=0]$, the conditional risk function in the target population. Recall that $\rho_*$ denotes $P_*(A=0)$. For scalars $\rho \in (0,1)$, $r \in \real$, and a function $\mathcal{E}:\mX\to \R$, 
define
\beq\label{eq: X con shift EIF}
D_\xconshift(\rho,\mathcal{E},r): o=(x,y,a) \mapsto \frac{1-a}{\rho} \{ \ell(x,y) - \mathcal{E}(x) \} + \mathcal{E}(x) - r.
\eeq
We next present the efficient influence function, which is implied by the efficiency bound under Condition~\ref{DScondition: independence} from \eqref{eq: independence EIF}, along with the efficiency gain of an efficient estimator.

\begin{corollary} \label{corollary: X con shift EIF}
    Under Condition~\ref{DScondition: X con shift}, the efficient influence function for the risk $r_*$ from \eqref{eq: def risk} is $D_\xconshift(\rho_*,\mathcal{E}_*,r_*)$ with 
    $D_\xconshift$
    from \eqref{eq: X con shift EIF}. 
    Thus, the smallest possible normalized limiting variance of a sequence of RAL estimators is $\sigma_{*,\xconshift}^2 := \expect_{P_*} [D_\xconshift(\rho_*,\mathcal{E}_*,r_*)(O)^2]$.
\end{corollary}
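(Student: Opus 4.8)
The plan is to derive this corollary by specializing the general efficiency bound already recorded in Section~\ref{sec: independence}. As explained there, Condition~\ref{DScondition: X con shift} is precisely Condition~\ref{DScondition: independence} with $K=2$, $\mathcal{A}=\{0,1\}$, $Z_1=X$, $Z_2=Y$, $\mathcal{S}_1=\{1\}$ and $\mathcal{S}_2=\emptyset$, so that $\mathcal{S}'_1=\{0,1\}=\mathcal{A}$ and $\mathcal{S}'_2=\{0\}$. The efficient influence function for $r_*$ under Condition~\ref{DScondition: independence} equals $D_\independence(\boldsymbol\ell_*,\boldsymbol\theta_*,\boldsymbol\pi_*,r_*)$ from \eqref{eq: independence EIF} whenever the conditional odds $\theta^k_*$ ($k\in[0:K-1]$) are bounded functions, by Theorem~2 and Corollary~1 of \citet{Li2023}. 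So it suffices to evaluate that expression in this particular instance and match it to $D_\xconshift(\rho_*,\mathcal{E}_*,r_*)$ from \eqref{eq: X con shift EIF}; the variance statement then follows because $\sigma_{*,\xconshift}^2$ is literally $\sigma_{*,\independence}^2$ here.

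First I would identify the nuisances. We have $\ell^2_*=\ell$, and by the recursion \eqref{genell} with $\mathcal{S}'_2=\{0\}$, $\ell^1_*(x)=\expect_{P_*}[\ell(X,Y)\mid X=x,A=0]=\mathcal{E}_*(x)$; since we are assuming Condition~\ref{DScondition: X con shift}, the support of $X\mid A=0$ coincides with that of $X\mid A\in\mathcal{S}'_2$, so $\mathcal{E}_*$ is the unique version of $\ell^1_*$ on $\mathcal{Z}_1$ and no extrapolation subtlety arises. For the odds, $\theta^0_*=\sum_{a\in\mathcal{S}_1}\pi^a_*/\pi^0_*=\pi^1_*/\rho_*=(1-\rho_*)/\rho_*$, a finite constant because $\rho_*\in(0,1)$, while $\theta^1_*=(\sum_{a\in\mathcal{S}_2}\Pi^{1,a}_*)/\Pi^{1,0}_*\equiv 0$ since $\mathcal{S}_2=\emptyset$; both are bounded, so the cited EIF characterization applies.

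Next I would substitute these into \eqref{eq: DR transform}--\eqref{eq: independence EIF}. In the sum over $k$ only $k=2$ survives, contributing $\ind(a=0)\{\ell(x,y)-\mathcal{E}_*(x)\}/\{\rho_*(1+\theta^1_*)\}=\ind(a=0)\{\ell(x,y)-\mathcal{E}_*(x)\}/\rho_*$ as $1+\theta^1_*=1$; the $k=1$ term is $\ind(a\in\mathcal{A})\,\mathcal{E}_*(x)/\{\rho_*(1+\theta^0_*)\}=\mathcal{E}_*(x)$ because $\rho_*(1+\theta^0_*)=1$; and the centering term is $\ind(a\in\mathcal{A})\,r_*/\{\rho_*(1+\theta^0_*)\}=r_*$. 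Using $\ind(a=0)=1-a$ for the binary index gives $D_\independence(\boldsymbol\ell_*,\boldsymbol\theta_*,\boldsymbol\pi_*,r_*)(o)=\tfrac{1-a}{\rho_*}\{\ell(x,y)-\mathcal{E}_*(x)\}+\mathcal{E}_*(x)-r_*=D_\xconshift(\rho_*,\mathcal{E}_*,r_*)(o)$, as claimed. A brief check that $D_\xconshift(\rho_*,\mathcal{E}_*,r_*)\in L^2_0(P_*)$ --- mean zero using $X\independent A$, finite second moment from $\Var_{P_*}(\ell(X,Y)\mid A=0)<\infty$ --- confirms it is a genuine influence function, and the stated variance bound is then immediate.

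I do not expect a substantial obstacle: the argument is bookkeeping that collapses an already-proved general formula. The only points that genuinely need care are (a) verifying the boundedness hypothesis on the $\theta^k_*$ that invokes Theorem~2 and Corollary~1 of \citet{Li2023} --- immediate, since $\theta^1_*\equiv 0$ and $\theta^0_*$ is a constant --- and (b) noting that $\mathcal{S}_2=\emptyset$ makes the $k=2$ distributional restriction vacuous while simultaneously forcing $\theta^1_*\equiv 0$, which is what reduces the general expression to the two simple terms. As an independent sanity check one could instead derive $D_\xconshift$ from first principles by computing the tangent space of the model $\{X\independent A\}$ and projecting $D_\nonparametric(\rho_*,r_*)$ onto it, but the specialization route above is shorter and is the one indicated by the text.
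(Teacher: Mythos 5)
Your proposal is correct and follows essentially the same route as the paper: the paper's proof of Corollary~\ref{corollary: X con shift EIF} likewise specializes the efficiency bound from \eqref{eq: independence EIF} (via Theorem~2 and Corollary~1 of \citet{Li2023}) using exactly the identifications $\ell^1_*=\mathcal{E}_*$, $\pi^0_*=\rho_*$, $\theta^0_*=(1-\rho_*)/\rho_*$, and then plugs into $D_\independence$. Your write-up is somewhat more explicit about verifying the boundedness of the $\theta^k_*$ and the term-by-term substitution, but the argument is the same.
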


\begin{corollary} \label{corollary: X con shift eff gain}
Under conditions of Corollary~\ref{corollary: X con shift EIF}, the relative efficiency gain from using an efficient estimator is
$$1-\frac{\sigma_{*,\xconshift}^2}{\sigma_{*,\nonparametric}^2} = \frac{ (1-\rho_*) \expect_{P_*} \left[ (\mathcal{E}_*(X) - r_*)^2 \right] }{ \expect_{P_*} \left[ \expect_{P_*} \left[ \{ \ell(X,Y) - \mathcal{E}_*(X) \}^2 \mid A=0,X \right] \right] + \expect_{P_*} \left[ \{ \mathcal{E}_*(X) - r_* \}^2 \right] }.$$
\end{corollary}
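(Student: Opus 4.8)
The plan is to evaluate both variances in closed form by expanding the squares of the relevant influence functions, exploiting orthogonality, and then simplify the ratio. Write $a_1 := \expect_{P_*}[\expect_{P_*}[\{\ell(X,Y)-\mathcal{E}_*(X)\}^2 \mid X, A=0]]$ and $a_2 := \expect_{P_*}[\{\mathcal{E}_*(X)-r_*\}^2]$, so the claimed identity is exactly $1-\sigma_{*,\xconshift}^2/\sigma_{*,\nonparametric}^2 = (1-\rho_*)a_2/(a_1+a_2)$. The only slightly delicate ingredient is that, under Condition~\ref{DScondition: X con shift}, $X \independent A$, so that $\expect_{P_*}[g(X) \mid A=0] = \expect_{P_*}[g(X)]$ for every integrable $g$; this is what lets us drop the conditioning on $A=0$ in the outer expectations so that the intermediate expressions match the stated form.

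First I would compute $\sigma_{*,\nonparametric}^2 = \expect_{P_*}[D_\nonparametric(\rho_*,r_*)(O)^2]$ using \eqref{dnp}. Since $\ind(A=0)^2 = \ind(A=0)$, this equals $\rho_*^{-1}\expect_{P_*}[\{\ell(X,Y)-r_*\}^2 \mid A=0]$. Adding and subtracting $\mathcal{E}_*(X)$ and using $\expect_{P_*}[\ell(X,Y)-\mathcal{E}_*(X) \mid X, A=0] = 0$ to kill the cross term, the conditional second moment splits as $\expect_{P_*}[\{\ell(X,Y)-\mathcal{E}_*(X)\}^2 \mid A=0] + \expect_{P_*}[\{\mathcal{E}_*(X)-r_*\}^2 \mid A=0]$; applying the law of iterated expectations to the first summand and then $X \independent A$ to both yields $\sigma_{*,\nonparametric}^2 = \rho_*^{-1}(a_1 + a_2)$.

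Next I would compute $\sigma_{*,\xconshift}^2 = \expect_{P_*}[D_\xconshift(\rho_*,\mathcal{E}_*,r_*)(O)^2]$ from Corollary~\ref{corollary: X con shift EIF}, writing $D_\xconshift(\rho_*,\mathcal{E}_*,r_*) = U + W$ with $U := \rho_*^{-1}\ind(A=0)\{\ell(X,Y)-\mathcal{E}_*(X)\}$ (using $1-A = \ind(A=0)$ when $\mathcal{A}=\{0,1\}$) and $W := \mathcal{E}_*(X)-r_*$. The cross term $\expect_{P_*}[UW]$ vanishes because $\expect_{P_*}[\ell(X,Y)-\mathcal{E}_*(X) \mid X, A=0] = 0$; $\expect_{P_*}[U^2] = \rho_*^{-1}\expect_{P_*}[\{\ell(X,Y)-\mathcal{E}_*(X)\}^2 \mid A=0] = \rho_*^{-1}a_1$ by the same iterated-expectation-then-independence argument as above; and $\expect_{P_*}[W^2] = a_2$ directly. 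Hence $\sigma_{*,\xconshift}^2 = \rho_*^{-1}a_1 + a_2$.

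Finally, substituting, $1 - \sigma_{*,\xconshift}^2/\sigma_{*,\nonparametric}^2 = 1 - (a_1 + \rho_* a_2)/(a_1+a_2) = (1-\rho_*)a_2/(a_1+a_2)$, which is the claimed expression. There is no real obstacle here; the proof is essentially bookkeeping, and the only point requiring care is invoking $X \independent A$ at the right places so the conditional moments in the computation become the unconditional moments in the statement. One could also read $\sigma_{*,\xconshift}^2$ directly off the Condition~\ref{DScondition: independence} bound in \eqref{eq: independence EIF} with $K=2$, $\mathcal{S}_1=\{1\}$, $\mathcal{S}_2=\emptyset$, $Z_1=X$, $Z_2=Y$, which would shorten the second step but needs the same independence bookkeeping to reconcile with the displayed formula.
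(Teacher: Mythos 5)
Your proof is correct. It takes a slightly different route from the paper's: you compute $\sigma_{*,\xconshift}^2$ and $\sigma_{*,\nonparametric}^2$ separately in closed form ($\rho_*^{-1}a_1 + a_2$ and $\rho_*^{-1}(a_1+a_2)$ respectively) and then form the ratio, whereas the paper first invokes the orthogonality of $D_\xconshift(\rho_*,\mathcal{E}_*,r_*)$ to the residual $D_\nonparametric(\rho_*,r_*) - D_\xconshift(\rho_*,\mathcal{E}_*,r_*) \in \tangent_\xconshift^\perp$ to write the gain as $P_*\{D_\nonparametric - D_\xconshift\}^2 / P_* D_\nonparametric^2$, and then only needs to compute the difference $o \mapsto \rho_*^{-1}(1-a-\rho_*)(\mathcal{E}_*(x)-r_*)$ and its second moment $\rho_*^{-1}(1-\rho_*)a_2$. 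The paper's Pythagorean shortcut avoids expanding $\sigma_{*,\xconshift}^2$ at all but relies on the projection interpretation of the efficient influence function; your version is self-contained and makes the intermediate quantity $\sigma_{*,\xconshift}^2 = \rho_*^{-1}a_1 + a_2$ explicit, which is arguably more informative. Both hinge on the same two facts you correctly identify: $\expect_{P_*}[\ell(X,Y)-\mathcal{E}_*(X)\mid X, A=0]=0$ to kill cross terms, and $X \independent A$ to replace conditional-on-$A=0$ outer expectations by unconditional ones so the result matches the displayed formula.
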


Since 
$\mathcal{E}_*(X)=\expect_{P_*}[\ell(X,Y) \mid X,A=0]$, 
conditioning on $A=0$ throughout, recall the tower rule decomposition of the variance of loss $\ell(X,Y)$:
$$\expect_{P_*}[\{\ell(X,Y)-r_*\}^2 ] = \underbrace{\expect_{P_*} \left[ \expect_{P_*} \left[ \{ \ell(X,Y) - \mathcal{E}_*(X) \}^2 \mid X\right] \right]}_{\text{variability not just due to $X$}} + \underbrace{\expect_{P_*} [\{ \mathcal{E}_*(X) - r_* \}^2]}_{\text{variability due to $X$ alone}}.$$
By Corollary~\ref{corollary: X con shift eff gain}, more relative efficiency gain is achieved for estimating the true risk $r_*$ 
at a data-generating distribution $P_*$ with the following properties:
\begin{compactenum}
    \item The proportion $\rho_*$ of target population data is small.
    \item In the target population, the proportion of variance of $\ell(X,Y)$ due to $X$ alone is large compared to that not just due to $X$ but rather also due to $Y$.
\end{compactenum}

\begin{remark} \label{rmk: X con shift eff gain}
    We illustrate the second property in an example with squared error loss $\ell(x,y)=(y-f(x))^2$ for a given predictor $f$. We consider the target population and condition on $A=0$ throughout.
    Let $\mu_*: x \mapsto \expect_{P_*}[Y \mid X=x]$ be the oracle predictor and suppose that $Y=\mu_*(X)+\epsilon$ for independent noise $\epsilon \independent X$. 
    In this case, the variance of $\ell(X, Y)$ not due to $X$ is determined by the random noise $\epsilon$, while that due to $X$ is determined by the bias $f-\mu_*$. Therefore, the proportion of variance of $\ell(X,Y)$ not due to $X$ would be large if the given predictor $f$ is far from the oracle predictor $\mu_*$ heterogeneously. 
    In a related paper, \cite{azriel2021semi} showed that, for linear regression under semi-supervised learning, namely concept shift in the features, 
    there is efficiency gain only if the linear model is misspecified. 
    Our observation is an extension to more general risk estimation problems.
\end{remark}

\subsection{Cross-fit risk estimator} \label{sec: X con shift estimator}

In this section, we present our proposed estimator of the risk $r_*$, along with its theoretical properties. This estimator is described in Alg.~\ref{alg: X con shift estimator}.
This algorithm is the special case of Alg.~\ref{alg: independence estimator} with simplifications implied by Condition~\ref{DScondition: X con shift}.

\begin{algorithm}
\caption{Cross-fit estimator of risk $r_*$ under Condition~\ref{DScondition: X con shift}, concept shift in the features} \label{alg: X con shift estimator}
\begin{algorithmic}[1]
\Require{Data $\{O_i=(Z_i,A_i)\}_{i=1}^n$, number $V$ of folds, regression estimation method for $\mathcal{E}_*$}
\State Randomly split all data (from both populations) into $V$ folds. Let $I_v$ be the indices of data points in fold $v$.
\IFor{$v \in [V]$}
    Estimate $\mathcal{E}_*$ by $\hat{\mathcal{E}}^{-v}$ using data out of fold $v$.
\EndIFor
\For{$v \in [V]$} \Comment{(Obtain an estimating-equation-based estimator for fold $v$)}
    \State With $\hat{\rho}^v := |I_v|^{-1} \sum_{i \in I_v} \ind(A_i=0)$, set
    \begin{equation}\label{eq: X con shift foldwise estimator}
        \hat{r}_{\xconshift}^v := \frac{1}{|I_v|}\sum_{i \in I_v} \left\{ \frac{\ind(A_i=0)}{\hat{\rho}^v} [\ell(X_i,Y_i) - \hat{\mathcal{E}}^{-v}(X_i)] + \hat{\mathcal{E}}^{-v}(X_i) \right\}.
    \end{equation}
\EndFor
\State Obtain the cross-fit estimator: $\hat{r}_{\xconshift} := \frac{1}{n} \sum_{v=1}^V |I_v| \hat{r}_{\xconshift}^v$.
\end{algorithmic}
\end{algorithm}

We next present the efficiency of the estimator $\hat{r}_{\xconshift}$, along with its fully robust asymptotic linearity: $\hat{r}_{\xconshift}$ is asymptotically linear even if the nuisance function $\mathcal{E}_*$ is estimated inconsistently. This robustness property is stronger and more desirable than that stated in part~2 of Theorem~\ref{thm: independence eff and DR}, a multiply robust consistency.
Moreover, the efficiency of $\hat{r}_{\xconshift}$ only relies on the consistency of the nuisance estimator $\hat{\mathcal{E}}^{-v}$ with no requirement on its convergence rate. This condition is also weaker than Condition~\ref{STcondition: independence eff}, which is required by part~1 of Theorem~\ref{thm: independence eff and DR}.
The proof of this result can be found in Supplement~\ref{sec: X con shift proof}.

\begin{theorem}[Efficiency and fully robust asymptotic linearity of $\hat{r}_{\xconshift}$] \label{thm: X con shift efficiency robust}
    Suppose that there exists a function $\mathcal{E}_\infty \in L^2(P_*)$ such that $\max_{v \in [V]} \| \hat{\mathcal{E}}^{-v} - \mathcal{E}_\infty \|_{L^2(P_*)} = \smallo_p(1)$.
    Under Condition~\ref{DScondition: X con shift}, the sequence of estimators $\hat{r}_{\xconshift}$ in Line 5 of Algorithm \ref{alg: X con shift estimator} is RAL: with $r_*$ from \eqref{eq: def risk} and $D_\xconshift$ from \eqref{eq: X con shift EIF},
    \begin{align}
        & \hat{r}_{\xconshift} = r_* + \frac{1}{n} \sum_{i=1}^n \left\{ D_{\xconshift}(\rho_*,\mathcal{E}_\infty,r_*)(O_i) + \frac{\expect_{P_*} \left[ \mathcal{E}_\infty(X) \right] - r_*}{\rho_*} (1 - A_i - \rho_*) \right\} + \mathcal{B}, \label{eq: X con shift robust AL} \\
        \text{where } & \mathcal{B} := \sum_{v \in [V]} \frac{|I_v|}{n} \Bigg\{ \frac{\hat{\rho}^v-\rho_*}{\hat{\rho}^v} P_* (\hat{\mathcal{E}}^{-v} - \mathcal{E}_\infty) \nonumber \\
        &\qquad\qquad+ (P^{n,v}-P_*) \left\{ D_\xconshift(\hat{\rho}^{-v},\hat{\mathcal{E}}^{-v},\hat{r}_{\xconshift}^v) - D_\xconshift(\rho_*,\mathcal{E}_\infty,r_*) \right\} \Bigg\} = \smallo_p(n^{-1/2}). \nonumber
    \end{align}
    Moreover, if $\mathcal{E}_*$ is estimated consistently, namely $\mathcal{E}_\infty=\mathcal{E}_*$, then $\hat{r}_{\xconshift}$ is efficient:
    \begin{equation}
        \hat{r}_{\xconshift} = r_* + \frac{1}{n} \sum_{i=1}^n D_{\xconshift}(\rho_*,\mathcal{E}_*,r_*)(O_i) + \smallo_p(n^{-1/2}). \label{eq: X con shift eff}
    \end{equation}
\end{theorem}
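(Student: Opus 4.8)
The plan is to treat $\hat r_{\xconshift}$ as the cross-fit estimating-equation estimator based on the map $D_\xconshift$ of \eqref{eq: X con shift EIF}, which by Corollary~\ref{corollary: X con shift EIF} equals the efficient influence function for $r_*$ under Condition~\ref{DScondition: X con shift} when evaluated at the true nuisances and $r=r_*$. Since $\hat r_{\xconshift}^v = P^{n,v}\bigl[\tfrac{\ind(A=0)}{\hat\rho^v}(\ell-\hat{\mathcal{E}}^{-v}) + \hat{\mathcal{E}}^{-v}\bigr] = P^{n,v} D_\xconshift(\hat\rho^v,\hat{\mathcal{E}}^{-v},0)$, for each fold $v$ I would add and subtract $D_\xconshift(\rho_*,\mathcal{E}_\infty,r_*)$, writing $\hat r_{\xconshift}^v - r_* = P^{n,v} D_\xconshift(\rho_*,\mathcal{E}_\infty,r_*) + (P^{n,v}-P_*)\{D_\xconshift(\hat\rho^v,\hat{\mathcal{E}}^{-v},0) - D_\xconshift(\rho_*,\mathcal{E}_\infty,r_*)\} + \bigl(P_*\{D_\xconshift(\hat\rho^v,\hat{\mathcal{E}}^{-v},0) - D_\xconshift(\rho_*,\mathcal{E}_\infty,r_*)\} - r_*\bigr)$, then aggregate over folds via $\hat r_{\xconshift} = n^{-1}\sum_v |I_v|\,\hat r_{\xconshift}^v$, which turns the leading pieces into sample averages over all $n$ observations and collects the rest into $\mathcal{B}$.

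The heart of the argument is the deterministic ``drift'' term. Using $\ind(A=0) = 1-A$ (one source population) and the identity $\hat\rho^v = P^{n,v}\ind(A=0)$, a short computation reduces $P_*\{D_\xconshift(\hat\rho^v,\hat{\mathcal{E}}^{-v},0) - D_\xconshift(\rho_*,\mathcal{E}_\infty,r_*)\} - r_*$ to the sum of (a) $P_*\bigl[(\hat{\mathcal{E}}^{-v}-\mathcal{E}_\infty)(X)\bigl(1 - \tfrac{\ind(A=0)}{\hat\rho^v}\bigr)\bigr]$ and (b) a term involving only $\mathcal{E}_\infty$ and $\hat\rho^v$. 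For (a), Condition~\ref{DScondition: X con shift} gives $\ind(A=0)\independent X$ with $\expect_{P_*}[\ind(A=0)]=\rho_*$, so this expectation factorizes into $P_*(\hat{\mathcal{E}}^{-v}-\mathcal{E}_\infty)\cdot\tfrac{\hat\rho^v-\rho_*}{\hat\rho^v}$, a product of a $\smallo_p(1)$ factor and a $\bigO_p(n^{-1/2})$ factor, hence $\smallo_p(n^{-1/2})$ and absorbed into $\mathcal{B}$; crucially \emph{no} product-of-errors bias survives, which is why only consistency of $\hat{\mathcal{E}}^{-v}$---with no rate---is needed. For (b), expanding $\rho_*/\hat\rho^v = 1 + (\rho_*-\hat\rho^v)/\hat\rho^v$ and using $\expect_{P_*}[\ind(A=0)(\ell-\mathcal{E}_\infty(X))] = \rho_*(r_* - \expect_{P_*}[\mathcal{E}_\infty(X)])$ isolates exactly $P^{n,v}\bigl[\tfrac{\expect_{P_*}[\mathcal{E}_\infty(X)] - r_*}{\rho_*}(1 - A - \rho_*)\bigr]$ up to $\bigO_p(n^{-1})$, the correction term in \eqref{eq: X con shift robust AL}. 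The nuisance-centered empirical process term is $\smallo_p(n^{-1/2})$ by the standard cross-fitting lemma (conditional Chebyshev, using that $\hat{\mathcal{E}}^{-v}$ is independent of the fold-$v$ data, $\|\hat{\mathcal{E}}^{-v}-\mathcal{E}_\infty\|_{L^2(P_*)} = \smallo_p(1)$, $\ell,\mathcal{E}_\infty\in L^2(P_*)$, and $\rho_* > 0$). Combining gives \eqref{eq: X con shift robust AL} with $\mathcal{B}=\smallo_p(n^{-1/2})$.

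For regularity I would show the influence function $\tilde D := D_\xconshift(\rho_*,\mathcal{E}_\infty,r_*) + \tfrac{\expect_{P_*}[\mathcal{E}_\infty(X)] - r_*}{\rho_*}(1-A-\rho_*)$ is a gradient of $R(\cdot)$ at $P_*$. With $g := \mathcal{E}_\infty - \mathcal{E}_*$ and using $\expect_{P_*}[\mathcal{E}_*(X)] = r_*$ (iterated expectations and $X\independent A$), one computes $\tilde D - D_\xconshift(\rho_*,\mathcal{E}_*,r_*) = (g(X) - \expect_{P_*}[g(X)])\bigl(1 - \tfrac{\ind(A=0)}{\rho_*}\bigr)$. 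Since $\ind(A=0)\independent X$ under Condition~\ref{DScondition: X con shift}, this difference is $\sigma(X,A)$-measurable with conditional mean zero given $X$ and given $A$; hence it is orthogonal to each summand of the tangent space $\mathcal{T}_X \oplus \mathcal{T}_A \oplus \mathcal{T}_{Y\mid X,A}$ of the model $\{P : X\independent A\}$ at $P_*$. As $D_\xconshift(\rho_*,\mathcal{E}_*,r_*)$ is the efficient influence function (Corollary~\ref{corollary: X con shift EIF}), $\tilde D$ is also a gradient, and an asymptotically linear estimator with a gradient influence function is regular; together with \eqref{eq: X con shift robust AL} this gives the RAL claim. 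When $\mathcal{E}_\infty = \mathcal{E}_*$, $\expect_{P_*}[\mathcal{E}_*(X)] = r_*$ kills the correction term, leaving \eqref{eq: X con shift eff}, which attains the bound of Corollary~\ref{corollary: X con shift EIF}.

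The main obstacle I anticipate is the bookkeeping around the \emph{in-fold} estimate $\hat\rho^v$: it is not independent of the fold-$v$ sample, so the empirical process step requires replacing it by $\rho_*$ (or an out-of-fold analogue) at the cost of cross-products that must be shown to be $\bigO_p(n^{-1})$, and extracting the $(1-A-\rho_*)$ correction exactly---rather than up to an uncontrolled $\bigO_p(n^{-1/2})$ slack---demands tracking these lower-order terms precisely. The conceptual crux, namely that Condition~\ref{DScondition: X con shift} makes the would-be nuisance bias vanish \emph{identically} rather than merely to second order, is what powers the full robustness and is best surfaced by performing the $P_*$-integration before any Taylor expansion in $\hat{\mathcal{E}}^{-v}$.
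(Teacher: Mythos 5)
Your proposal is correct and follows essentially the same route as the paper's proof: the estimating-equation expansion, the observation that $X\independent A$ makes the drift term factor as $\tfrac{\hat\rho^v-\rho_*}{\hat\rho^v}P_*(\hat{\mathcal{E}}^{-v}-\mathcal{E}_*)$ so that no product-of-errors bias survives, the extraction of the $(1-A-\rho_*)$ correction from the $\hat\rho^v$ fluctuation, and regularity via showing the influence function differs from $D_\xconshift(\rho_*,\mathcal{E}_*,r_*)$ by the element $(g(X)-P_*g)\bigl(1-\tfrac{1-A}{\rho_*}\bigr)$ of $\tangent_\xconshift^\perp$. The only (immaterial) difference is that the paper controls the nuisance-centered empirical process via a VC-subgraph class indexed by $\rho$ in a shrinking neighborhood of $\rho_*$ with an envelope of $L^2$-norm $\delta+\|\hat{\mathcal{E}}^{-v}-\mathcal{E}_\infty\|_{L^2(P_*)}$, whereas you factor out the scalar $\hat\rho^v$ discrepancy and apply conditional Chebyshev; both handle the in-fold $\hat\rho^v$ issue you correctly flag.
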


\begin{remark}[Estimation of $\rho_*$]
It is possible to replace the in-fold estimator $\hat{\rho}^v$ of $\rho_*$ with an out-of-fold estimator in Alg.~\ref{alg: X con shift estimator}.
Unlike Alg.~\ref{alg: independence estimator}, this would lead to a different influence function when the nuisance function $\mathcal{E}_*$ is estimated inconsistently in Theorem~\ref{thm: X con shift efficiency robust}. 
In this case, the influence function of $\hat{r}_{\xconshift}$ from Theorem~\ref{thm: X con shift efficiency robust} cannot be used to construct asymptotically valid confidence intervals if $\mathcal{E}_*$ is estimated inconsistently.
\end{remark}

\begin{remark}[A semiparametric perspective on prediction-powered inference]
Our proposed estimator is distantly related to the work of \citet{Angelopoulos2023}. 
\citet{Angelopoulos2023} studied the estimation of and inferences about a risk minimizer with the aid of an arbitrary predictor under concept shift (Condition~\ref{DScondition: X con shift}). Their proposed risk estimator is essentially a special variant of Alg.~\ref{alg: X con shift estimator} without cross-fitting and with a fixed given estimator of $\mathcal{E}_*$.
Theorem~\ref{thm: X con shift efficiency robust} provides another perspective on why their proposed method is valid for an \emph{arbitrary} given nuisance estimator of the true conditional mean risk $\mathcal{E}_*$ 
and improves efficiency when the given estimator is close to the truth $\mathcal{E}_*$.
\end{remark}

\subsection{Simulation} \label{sec: X con shift sim}

We illustrate Theorem~\ref{thm: X con shift efficiency robust} and Corollary~\ref{corollary: X con shift eff gain} in a simulation study. We consider the application of estimating the mean squared error (MSE) of a given predictor $f$ that predicts the outcome $Y$ given input covariate $X$; that is, we take $\ell(x,y)=(y-f(x))^2$. We consider five scenarios:
\begin{compactenum}[(A)]
    \item The predictor $f$ is identical to the oracle predictor and the additive noise is homoskedastic. According to Corollary~\ref{corollary: X con shift eff gain} and Remark~\ref{rmk: X con shift eff gain}, there should be no efficiency gain from using our proposed estimator $\hat{r}_{\xconshift}$ compared to the nonparametric estimator $\hat{r}_{\nonparametric}$. \label{cov shift scenario: large gain}
    \item The predictor $f$ is a good linear approximation to the oracle predictor. This scenario may occur if the given predictor is fairly close to the truth. \label{cov shift scenario: medium gain}
    \item The predictor $f$ substantially differs from the oracle predictor. This scenario may occur if the given predictor has poor predictive power, possibly because of inaccurate tuning or using inappropriate domain knowledge in the training process. \label{cov shift scenario: small gain}
    \item The predictor $f$ substantially differs from the oracle predictor and the outcome $Y$ is deterministic given $X$. According to Corollary~\ref{corollary: X con shift eff gain}, there is a large efficiency gain from using our proposed estimator $\hat{r}_{\xconshift}$ compared to the nonparametric estimator $\hat{r}_{\nonparametric}$. \label{cov shift scenario: no gain}
    \item Condition~\ref{DScondition: X con shift} does not hold. \label{cov shift scenario: invalid}
\end{compactenum}
Scenarios~\ref{cov shift scenario: large gain} and \ref{cov shift scenario: no gain} are extreme cases designed for sanity checks, 
while Scenarios~\ref{cov shift scenario: medium gain} and \ref{cov shift scenario: small gain} are intermediate and more realistic. Scenario~\ref{cov shift scenario: invalid} is a relatively realistic case where the assumed dataset shift condition fails and is designed to check the robustness against assuming the wrong dataset shift condition.

More specifically, the data is generated as follows. 
We first generate the covariate $X=(X_1,X_2,X_3)$ from a trivariate normal distribution with mean zero and identity covariance matrix. For Scenarios~\ref{cov shift scenario: large gain}--\ref{cov shift scenario: no gain}, where Condition~\ref{DScondition: X con shift} holds, we generate the population indicator $A$ from $\mathrm{Bernoulli}(0.9)$ independent of $X$. In other words, $\rho_*=10\%$ of data points are from the target population and the other 90\% of data points are from the source population. The label in the source population, namely with $A=1$, is treated as missing as it is not assumed to contain any information about the target population. The label $Y$ in the target population, namely with $A=0$, is generated depending on the scenario as follows:
(A) $Y \mid X=x,A=0 \sim \mathrm{N}(\mu_*(x),5^2)$;
(B) \& (C): $Y \mid X=x,A=0 \sim \mathrm{N}(\mu_*(x),1)$; 
(D): $Y=\mu_*(X)$,
where
\begin{equation}
    \mu_*(x) = x_1 + x_2 + x_3 + 0.4 x_1 x_3 - 0.5 x_2 x_3 + \sin(x_1 + x_3). \label{eq: sim mu0}
\end{equation}
We set different predictors $f$ for these scenarios: 
\begin{compactenum}[(A)]
    \item $f$ is the truth $\mu_*$;
    \item $f$ is a linear function close to the best linear approximation to $\mu_*$ in $L^2(P_*)$-sense: $f(x)= 1.4 x_1 + x_2 + 1.4 x_3$;
    \item $f$ substantially differs from $\mu_*$: $f(x) = -1 - 3 x_1 + 0.5 x_3$;
    \item $f$ substantially differs from $\mu_*$: $f(x) = x_1$.
\end{compactenum}
For Scenario~\ref{cov shift scenario: invalid}, where Condition~\ref{DScondition: X con shift} does not hold, we include dependence of $A$ on $X$ by generating $A$ as
$$A \mid X=x \sim \mathrm{Bernoulli}(\expit\{ \cos(x_1 + x_2 x_3) + 2 x_1^2 x_2^2 + 3 |x_1 x_3| + |x_2| (0.5 - x_3) \}).$$
The resulting proportion $\rho_*$ of target population data is around 10\%, similar to the other scenarios. The outcome $Y$ is generated in the same way as Scenarios~\ref{cov shift scenario: medium gain} and \ref{cov shift scenario: small gain}. We set the fixed predictor $f$ to be the same as in Scenario~\ref{cov shift scenario: medium gain}.

We consider the following three estimators:
\texttt{np}: the nonparametric estimator $\hat{r}_{\nonparametric}$ in \eqref{eq: np estimator};
\texttt{Xconshift}: our estimator  $\hat{r}_{\xconshift}$ from Line 5 of Algorithm \ref{alg: X con shift estimator} with a consistent estimator of $\mathcal{E}_*$;
\texttt{Xconshift,mis.E}: $\hat{r}_{\xconshift}$ with an inconsistent estimator of $\mathcal{E}_*$.
To estimate nuisance functions consistently, 
we use Super Learner \citep{VanderLaan2007} whose library consists of generalized linear model, generalized additive model \citep{Hastie1990}, generalized linear model with lasso penalty \citep{hastie1995penalized,tibshirani1996regression}, and gradient boosting \citep{Mason1999,Friedman2001,Friedman2002,Chen2016} with various combinations of tuning parameters. This library contains highly flexible machine learning methods and is likely to yield consistent estimators of the nuisance function.
Super Learner is an ensemble learner that performs almost as well as the best learner in the library.
To estimate nuisance functions inconsistently,  we take the estimator as a fixed function that differs from the truth for the two extreme scenarios (\ref{cov shift scenario: large gain} and \ref{cov shift scenario: no gain}) for a sanity check, and drop gradient boosting from the above library for the other two relatively realistic scenarios (\ref{cov shift scenario: medium gain} and \ref{cov shift scenario: small gain}). 
Since neither of generalized linear model (with or without lasso penalty) and generalized additive model is capable of capturing interactions, dropping gradient boosting would yield inconsistent estimators of the nuisance function $\mathcal{E}_*$.
We consider sample sizes $n \in \{500, 1000, 2000\}$ and run 200 Monte Carlo experiments for each combination of the sample size and the scenario.

\begin{figure}
    \centering
    \includegraphics[scale=0.8]{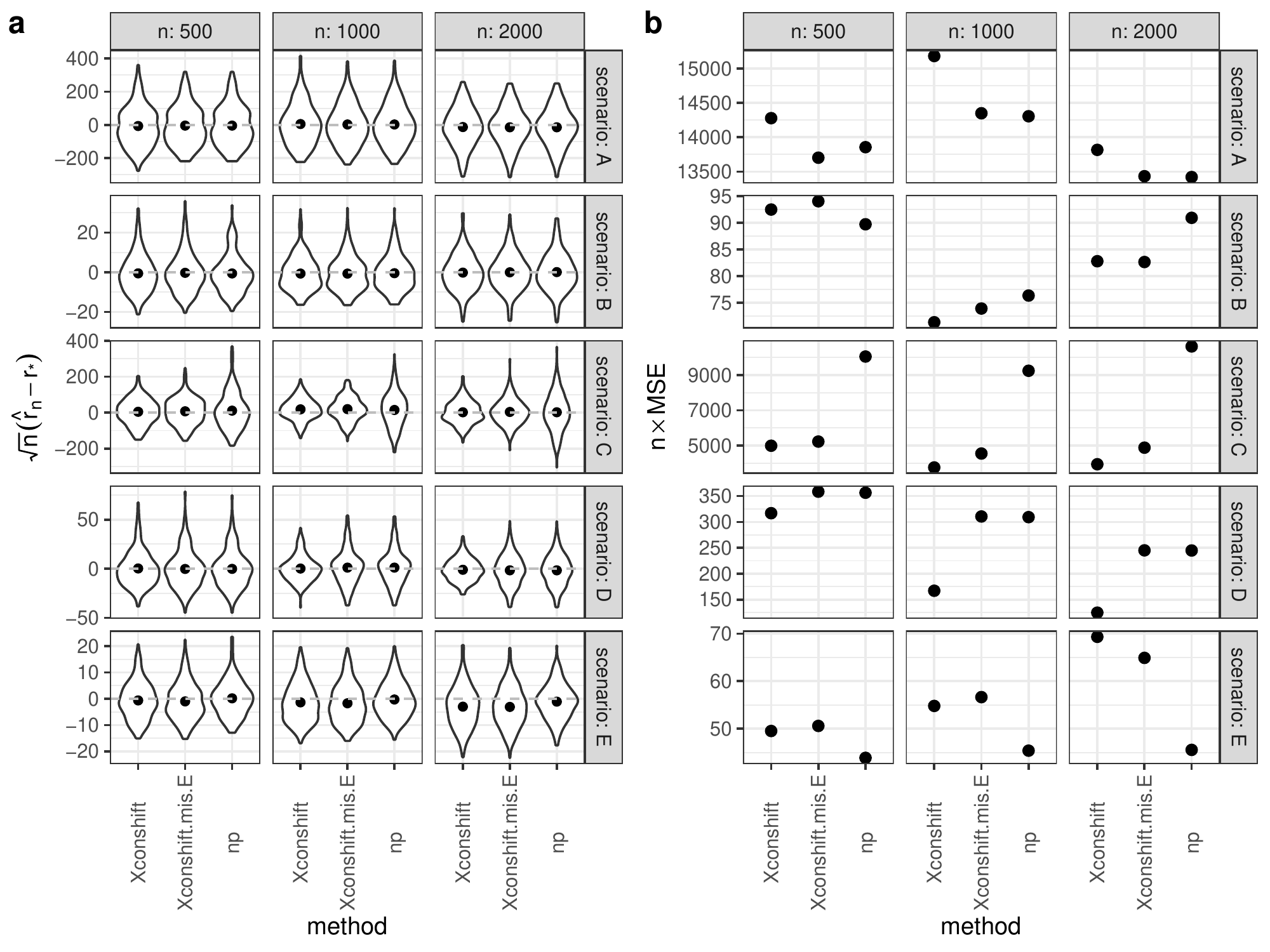}
    \caption{(a) Sampling distribution of the scaled difference between MSE estimators and the true MSE in the four scenarios under concept shift in the features. The point stands for the empirical average in Monte Carlo simulations. (b) Monte Carlo estimate of the scaled mean squared error of the estimators.}
    \label{fig: X con shift}
\end{figure}

Figure~\ref{fig: X con shift} presents the sampling distribution of the scaled difference between the three estimators of the MSE and the true MSE.
When Condition~\ref{DScondition: X con shift} holds, all three estimators appear close to normal and centered around the truth, demonstrating Theorem~\ref{thm: X con shift efficiency robust}. The variance of \texttt{Xconshift} is much smaller than that of \texttt{np} in both Scenarios~\ref{cov shift scenario: small gain} and \ref{cov shift scenario: no gain} for sample sizes 1000 and 2000, indicating a large efficiency gain; in the other two scenarios, the variance of these two estimators is comparable. These results are consistent with Corollary~\ref{corollary: X con shift eff gain} and Remark~\ref{rmk: X con shift eff gain}.
When Condition~\ref{DScondition: X con shift} does not hold (Scenario~\ref{cov shift scenario: invalid}), our proposed estimator appears consistent, indicating that $\hat{r}_{\xconshift}$ might be robust against moderate failures of the concept shift condition \ref{DScondition: X con shift}.

\section{Full-data covariate shift} \label{sec: cov shift}

\subsection{Efficiency bound} \label{sec: cov shift EIF}

Let $g_*: x \mapsto P_*(A=0 \mid X=x)$ be the propensity score function for the target population and $\mathcal{L}_*: x \mapsto \expect_{P_*}[\ell(X,Y) \mid X=x]$ be the conditional risk function. Under Condition~\ref{DScondition: cov shift}, $\mathcal{L}_*(x) = \expect_{P_*}[\ell(X,Y) \mid X=x,A=0] = \expect_{P_*}[\ell(X,Y) \mid X=x,A=1]$. 
For scalars $\rho \in (0,1)$, $r \in \real$
and functions $g, \mathcal{L}: \mathcal{X} \to \real$, define
\begin{equation}\label{dcov}
    D_\covshift(\rho,g,\mathcal{L},r): o=(x,y,a) \mapsto \frac{g(x)}{\rho} \{ \ell(x,y) - \mathcal{L}(x)\}
    + \frac{1-a}{\rho} \{ \mathcal{L}(x)- r \}.
\end{equation}

We have the following efficient influence function---implied by the efficiency bound under Condition~\ref{DScondition: independence} from \eqref{eq: independence EIF}---as well as the associated relative efficiency gain.

\begin{corollary} \label{corollary: cov shift EIF}
Under Condition~\ref{DScondition: cov shift}, if $g_*$ is bounded away from zero for almost every $x$ in the support of $X \mid A=0$, the efficient influence function for the risk $r_*$ is $D_\covshift(\rho_*, g_*, \mathcal{L}_*,r_*)$. 
Thus, the smallest possible limiting normalized asymptotic variance for a sequence of RAL estimators is $\sigma_{*,\covshift}^2 
:= \expect_{P_*} [D_\covshift$ $(\rho_*,g_*$, $\mathcal{L}_*,r_*)(O)^2]$.
\end{corollary}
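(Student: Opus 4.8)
The plan is to obtain Corollary~\ref{corollary: cov shift EIF} by directly specializing the efficiency bound already recorded for the general sequential conditionals condition. As noted in the text preceding Condition~\ref{DScondition: X con shift}, Condition~\ref{DScondition: cov shift} is the instance of Condition~\ref{DScondition: independence} with $K=2$, $\mathcal{A}=\{0,1\}$, $\mathcal{S}_1=\emptyset$, $\mathcal{S}_2=\{1\}$, $Z_1=X$, $Z_2=Y$, so that $\mathcal{S}'_1=\{0\}$ and $\mathcal{S}'_2=\mathcal{A}$. First I would identify the nuisances appearing in \eqref{eq: DR transform}--\eqref{eq: independence EIF} for this instance: $\pi^0_*=\rho_*$; $\theta^0_*=\sum_{a\in\mathcal{S}_1}\pi^a_*/\pi^0_*=0$; $\ell^2_*=\ell$; $\ell^1_*(x)=\expect_{P_*}[\ell(X,Y)\mid X=x,\,A\in\mathcal{A}]=\mathcal{L}_*(x)$, which on $\mathcal{Z}_1$ also equals $\expect_{P_*}[\ell(X,Y)\mid X=x,A=0]$ by $Y\independent A\mid X$; and $\theta^1_*(x)=\Pi^{1,1}_*(x)/\Pi^{1,0}_*(x)=(1-g_*(x))/g_*(x)$, so that $1/(1+\theta^1_*(x))=g_*(x)$.

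Next I would substitute these into \eqref{eq: DR transform} and \eqref{eq: independence EIF}. Because $\mathcal{S}'_2=\mathcal{A}$, the indicator $\ind(a\in\mathcal{S}'_2)$ is identically $1$, so the single $k=2$ summand of $\mathcal{T}$ becomes $\rho_*^{-1}g_*(x)\{\ell(x,y)-\mathcal{L}_*(x)\}$; because $\mathcal{S}'_1=\{0\}$ and $a\in\{0,1\}$, $\ind(a\in\mathcal{S}'_1)=1-a$, and with $\theta^0_*=0$ the last summand of $\mathcal{T}$ is $\rho_*^{-1}(1-a)\mathcal{L}_*(x)$ while the term subtracted in \eqref{eq: independence EIF} is $\rho_*^{-1}(1-a)r_*$. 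Collecting terms gives
\[
D_\independence(\boldsymbol\ell_*,\boldsymbol\theta_*,\boldsymbol\pi_*,r_*)(o)=\frac{g_*(x)}{\rho_*}\bigl\{\ell(x,y)-\mathcal{L}_*(x)\bigr\}+\frac{1-a}{\rho_*}\bigl\{\mathcal{L}_*(x)-r_*\bigr\},
\]
which is precisely $D_\covshift(\rho_*,g_*,\mathcal{L}_*,r_*)$ from \eqref{dcov}.

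It then remains to invoke the efficiency statement recalled in Section~\ref{sec: independence} (Theorem~2 and Corollary~1 of \citet{Li2023}): under Condition~\ref{DScondition: independence}, $D_\independence(\boldsymbol\ell_*,\boldsymbol\theta_*,\boldsymbol\pi_*,r_*)$ is the efficient influence function for $r_*$ whenever each $\theta^k_*$ is bounded for $k\in[0:K-1]$. Here $\theta^0_*\equiv0$ is trivially bounded, and $\theta^1_*=(1-g_*)/g_*$ is bounded on $\mathcal{Z}_1$, the support of $X\mid A=0$, exactly because $g_*$ is assumed bounded away from zero there; this is the only place the corollary's hypothesis enters (on any source-only region where $g_*=0$, the convention $1/(1+\infty)=0$ from the footnote near \eqref{eq: DR transform} applies and such a region contributes nothing). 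Consequently $D_\covshift(\rho_*,g_*,\mathcal{L}_*,r_*)$ is the efficient influence function, and by the standard semiparametric fact recalled at the end of Section~\ref{sec: setup} its second moment $\sigma^2_{*,\covshift}=\expect_{P_*}[D_\covshift(\rho_*,g_*,\mathcal{L}_*,r_*)(O)^2]$ is the smallest possible normalized limiting variance over sequences of RAL estimators; finiteness of this second moment follows from the standing assumption $\Var_{P_*}(\ell(X,Y)\mid A=0)<\infty$ together with $0<g_*\le1$.

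I expect the only genuine obstacle to be bookkeeping: keeping straight which indicator $\ind(a\in\mathcal{S}'_k)$ collapses to $1$ versus to $1-a$, and verifying that $\ell^1_*$ is unambiguously $\mathcal{L}_*$ on $\mathcal{Z}_1$ even though the supports of $X\mid A=0$ and $X\mid A=1$ may only partially overlap, so that the general non-uniqueness caveat around \eqref{genell} does not bite. Once the dictionary between the two parametrizations is fixed, the corollary is an immediate consequence of the general bound.
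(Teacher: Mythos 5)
Your proposal is correct and follows essentially the same route as the paper's proof: the paper also obtains this corollary by specializing the efficiency bound under Condition~\ref{DScondition: independence} with $K=2$, $\mathcal{A}=\{0,1\}$, $\mathcal{S}_1=\emptyset$, $\mathcal{S}_2=\{1\}$, identifying $\ell^1_*=\mathcal{L}_*$, $\pi^0_*=\rho_*$, $\theta^1_*=(1-g_*)/g_*$, and plugging into $D_\independence$ from \eqref{eq: independence EIF}. Your added bookkeeping (which indicators collapse to $1$ versus $1-a$, and the explicit check that boundedness of $\theta^1_*$ is exactly where the hypothesis on $g_*$ enters) is consistent with, and slightly more detailed than, the paper's two-line verification.
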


\begin{corollary} \label{corollary: cov shift eff gain}
Under conditions of Corollary~\ref{corollary: cov shift EIF}, the relative efficiency gain from using an efficient estimator is
$$1-\frac{\sigma_{*,\covshift}^2}{\sigma_{*,\nonparametric}^2} = \frac{ \expect \left[ g_*(X) (1-g_*(X)) \expect_{P_*} \left[ \{ \ell(X,Y) - \mathcal{L}_*(X) \}^2 \mid X \right] \right] }{ \expect_{P_*} \left[ g_*(X) \expect_{P_*} [\{ \ell(X,Y) - \mathcal{L}_*(X) \}^2 \mid X] \right] + \expect_{P_*} \left[ g_*(X) \{ \mathcal{L}_*(X) - r_* \}^2 \right] }.$$
\end{corollary}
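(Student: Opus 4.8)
The plan is to evaluate both $\sigma_{*,\covshift}^2$ and $\sigma_{*,\nonparametric}^2$ in closed form directly from the definitions of the corresponding influence functions, and then take the ratio. All expectations below are under $P_*$. I will repeatedly use the restriction $Y \independent A \mid X$ from Condition~\ref{DScondition: cov shift}, which gives $\mathcal{L}_*(x) = \expect_{P_*}[\ell(X,Y) \mid X=x, A=0]$ and, for any integrable $\phi$, $\expect_{P_*}[\phi(X,Y) \mid X, A=0] = \expect_{P_*}[\phi(X,Y) \mid X]$; finiteness of all quantities involved follows from the standing finite-variance assumption together with $g_*$ bounded away from zero (as in Corollary~\ref{corollary: cov shift EIF}).

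First I would expand $\sigma_{*,\covshift}^2 = \expect_{P_*}[D_\covshift(\rho_*, g_*, \mathcal{L}_*, r_*)(O)^2]$ from \eqref{dcov}. Writing $D_\covshift(\rho_*, g_*, \mathcal{L}_*, r_*) = T_1 + T_2$ with $T_1 := \rho_*^{-1} g_*(X)\{\ell(X,Y) - \mathcal{L}_*(X)\}$ and $T_2 := \rho_*^{-1}(1-A)\{\mathcal{L}_*(X) - r_*\}$, conditioning on $(X,A)$ shows that the cross term $\expect_{P_*}[T_1 T_2]$ vanishes because $\expect_{P_*}[\ell(X,Y) - \mathcal{L}_*(X) \mid X, A=0] = 0$; that $\expect_{P_*}[T_1^2] = \rho_*^{-2}\expect_{P_*}[g_*(X)^2 \expect_{P_*}[\{\ell(X,Y) - \mathcal{L}_*(X)\}^2 \mid X]]$; and that $\expect_{P_*}[T_2^2] = \rho_*^{-2}\expect_{P_*}[g_*(X)\{\mathcal{L}_*(X) - r_*\}^2]$, using $\expect_{P_*}[(1-A)^2 \mid X] = \expect_{P_*}[1-A \mid X] = g_*(X)$ for $A \in \{0,1\}$. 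This yields a closed form for $\sigma_{*,\covshift}^2$.

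Next I would treat $\sigma_{*,\nonparametric}^2 = \rho_*^{-2}\expect_{P_*}[\ind(A=0)\{\ell(X,Y) - r_*\}^2]$ from \eqref{dnp}. Conditioning on $(X, A=0)$ gives $\rho_*^{-2}\expect_{P_*}[g_*(X)\, \expect_{P_*}[\{\ell(X,Y) - r_*\}^2 \mid X, A=0]]$, and the bias--variance decomposition within the conditional law of $\ell(X,Y)$ given $(X, A=0)$, namely $\expect_{P_*}[\{\ell - r_*\}^2 \mid X, A=0] = \expect_{P_*}[\{\ell - \mathcal{L}_*(X)\}^2 \mid X, A=0] + \{\mathcal{L}_*(X) - r_*\}^2$, followed by dropping the conditioning on $A=0$ in the conditional second moment of $\ell$, yields $\sigma_{*,\nonparametric}^2 = \rho_*^{-2}\bigl(\expect_{P_*}[g_*(X)\expect_{P_*}[\{\ell - \mathcal{L}_*\}^2 \mid X]] + \expect_{P_*}[g_*(X)\{\mathcal{L}_* - r_*\}^2]\bigr)$, which is exactly $\rho_*^{-2}$ times the denominator appearing in the stated formula.

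Finally, subtracting the two closed forms: the terms involving $\{\mathcal{L}_* - r_*\}^2$ cancel and the noise terms differ only in the factor $g_*$ versus $g_*^2$, so $\sigma_{*,\nonparametric}^2 - \sigma_{*,\covshift}^2 = \rho_*^{-2}\expect_{P_*}[g_*(X)(1 - g_*(X))\expect_{P_*}[\{\ell(X,Y) - \mathcal{L}_*(X)\}^2 \mid X]]$; dividing by $\sigma_{*,\nonparametric}^2$ cancels the $\rho_*^{-2}$ factors and produces the claimed identity. The argument is entirely elementary; the only point requiring care is the systematic use of $Y \independent A \mid X$ to equate conditional moments of $\ell(X,Y)$ given $X$ with those given $(X, A=0)$ and to see that the relevant cross terms vanish, so I do not anticipate any real obstacle.
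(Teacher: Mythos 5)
Your computation is correct: both closed forms check out, the cross term in $\sigma_{*,\covshift}^2$ vanishes as you argue, and the subtraction gives exactly the stated numerator. The paper does not print a separate proof for this corollary, but its proof of the analogous Corollary~\ref{corollary: X con shift eff gain} proceeds differently in organization: it first observes that $D_\nonparametric(\rho_*,r_*)-D_\covshift(\rho_*,g_*,\mathcal{L}_*,r_*) = \rho_*^{-1}\{(1-a)-g_*(x)\}\{\ell(x,y)-\mathcal{L}_*(x)\}$ lies in $\tangent_\covshift^\perp$ (Lemma~\ref{lemma: cov shift tangent and orthogonal complement}) and is therefore orthogonal to the efficient influence function, so that by Pythagoras $1-\sigma_{*,\covshift}^2/\sigma_{*,\nonparametric}^2 = P_*\{D_\nonparametric-D_\covshift\}^2/P_* D_\nonparametric^2$, and then only the single difference term needs to be evaluated (its conditional second moment given $(X,Y)$ is $g_*(X)(1-g_*(X))$ by $A\independent Y\mid X$). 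Your route computes both variances in full and lets the $\{\mathcal{L}_*-r_*\}^2$ terms cancel by hand; this is marginally longer but entirely self-contained and does not require invoking the tangent-space geometry, whereas the paper's route makes the source of the gain transparent (it is exactly the squared norm of the component of $D_\nonparametric$ orthogonal to the tangent space). Both are valid; no gap.
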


By Corollary~\ref{corollary: cov shift eff gain}, more efficiency gain is achieved for $P_*$ when the followings hold:
\begin{compactenum}
    \item The propensity score $g_*$ is close to zero. 
    If the covariate distribution in the source population covers that in the target population, then the proportion of source data is large.
    \item The proportion of variance of $\ell(X,Y)$ not due to $X$ is large compared to that due to $X$.
\end{compactenum}
The second property is the opposite of the implication of Corollary~\ref{corollary: X con shift eff gain} under concept shift in the features. Therefore, in the illustrating example in Remark~\ref{rmk: X con shift eff gain}, the efficiency gain under covariate shift would be large if the given predictor $f$ is close to the oracle predictor $\mu_*$.

\subsection{Cross-fit risk estimator} \label{sec: cov shift estimator}

We propose to use a cross-fit estimator based on estimating equations, as described in Alg.~\ref{alg: cov shift estimator}.
This algorithm is the special case of Alg.~\ref{alg: independence estimator} with simplifications implied by Condition~\ref{DScondition: cov shift}.

\begin{algorithm} 
\caption{Cross-fit estimator of risk $r_*$ under full-data covariate shift condition \ref{DScondition: cov shift}} \label{alg: cov shift estimator}
\begin{algorithmic}[1]
\Require{Data $\{O_i=(Z_i,A_i)\}_{i=1}^n$, number $V$ of folds, classifier to estimate $g_*$, regression estimation method for $\mathcal{L}_*$}
\State Randomly split all data (from both populations) into $V$ folds. Let $I_v$ be the indices of for fold $v$.
\IFor{$v \in [V]$}
    Estimate $(g_*,\mathcal{L}_*)$ by $(\hat{g}^{-v},\hat{\mathcal{L}}^{-v})$ using data out of fold $v$.
\EndIFor
\For{$v \in [V]$} \Comment{(Obtain an estimating-equation-based estimator for fold $v$)}
    \State With $\hat{\rho}^v := \frac{1}{|I_v|} \sum_{i \in I_v} \ind(A_i=0)$, set
    \begin{equation}\label{eq: cov shift foldwise estimator}
        \hat{r}_{\covshift}^v := \frac{1}{\hat{\rho}^v |I_v|}\sum_{i \in I_v} \{ \hat{g}^{-v}(X_i) [\ell(X_i,Y_i) - \hat{\mathcal{L}}^{-v}(X_i)] + \ind(A_i=0) \hat{\mathcal{L}}^{-v}(X_i) \}.
    \end{equation}
\EndFor
\State Obtain the cross-fit estimator:
$\hat{r}_{\covshift} := \frac{1}{n} \sum_{v=1}^V |I_v| \hat{r}_{\covshift}^v$.
\end{algorithmic}
\end{algorithm}

Compared to our proposed estimator $\hat{r}_{\xconshift}$ for concept shift, 
the estimator $\hat{r}_{\covshift}$ involves two nuisance functions rather than one. 
As we will show next, in contrast to $\hat{r}_{\xconshift}$, the efficiency of $\hat{r}_{\covshift}$ is based on sufficiently fast convergence rates---rather than consistency alone---of the nuisance function estimators, similarly to Theorem~\ref{thm: independence eff and DR}.

\begin{STcondition}[Sufficient rate of convergence for nuisance estimators] \label{STcondition: cov shift remainder}
It holds that
\begin{align*}
    & \max_{v \in [V]} \left| \int (\hat{g}^{-v} - g_*) (\hat{\mathcal{L}}^{-v} - \mathcal{L}_*) \intd P_* \right| = \smallo_p(n^{-1/2}), \\
    & \max_{v \in [V]} \left\| \hat{g}^{-v} - g_* \right\|_{L^2(P_*)} = \smallo_p(1), \qquad \max_{v \in [V]} \left\| \hat{\mathcal{L}}^{-v} - \mathcal{L}_* \right\|_{L^2(P_*)} = \smallo_p(1).
\end{align*}
\end{STcondition}
This condition implies Condition~\ref{STcondition: independence eff} under Condition~\ref{DScondition: cov shift}; 
but is perhaps a bit more interpretable.
This leads to the efficiency of $\hat{r}_{\covshift}$, whose proof has the same spirit as part~1 of Theorem~\ref{thm: independence eff and DR}.

\begin{corollary}[Efficiency of $\hat{r}_{\covshift}$] \label{corollary: cov shift efficiency}
    Under Condition~\ref{DScondition: cov shift}, with $r_*$ from \eqref{eq: def risk} and $D_{\covshift}$ from \eqref{dcov}, the estimator $\hat{r}_{\covshift}$ in Line 5 of Alg.~\ref{alg: cov shift estimator} has the following finite-sample expansion:
    \begin{align*}
        &\hat{r}_{\covshift} - r_* = \sum_{v \in [V]} \frac{|I_v| \rho_*}{n \hat{\rho}^v} (P^{n,v}-P_*)D_{\covshift}(\rho_*,g_*,\mathcal{L}_*,r_*) \\
        &\quad+ \sum_{v \in [V]} \frac{|I_v|}{n \hat{\rho}^v} (P^{n,v} - P_*) \left\{ \hat{g}^{-v} (\ell - \hat{\mathcal{L}}^{-v}) - g_* (\ell - \mathcal{L}_*) \right\} - \sum_{v \in [V]} \frac{|I_v|}{n} P_*(\hat{g}^{-v} - g_*) (\hat{\mathcal{L}}^{-v} - \mathcal{L}_*)
    \end{align*}
    Additionally under Condition~\ref{STcondition: cov shift remainder}, $\hat{r}_{\covshift}$ is regular and efficient:
    $$\hat{r}_{\covshift} = r_* + \frac{1}{n} \sum_{i=1}^n D_{\covshift}(\rho_*,g_*,\mathcal{L}_*,r_*)(O_i) + \smallo_p(n^{-1/2}).$$
    Therefore, with $\sigma_{*,\covshift}^2$ from \eqref{corollary: cov shift EIF}, $\sqrt{n} (\hat{r}_{\covshift} - r_*) \overset{d}{\to} \mathrm{N}(0, \sigma_{*,\covshift}^2)$.
\end{corollary}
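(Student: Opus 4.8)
\textbf{Proof plan for Corollary~\ref{corollary: cov shift efficiency}.}
The plan is to verify that the cross-fit estimator $\hat{r}_{\covshift}$ from Alg.~\ref{alg: cov shift estimator} is exactly the specialization of the general cross-fit estimator $\hat{r}$ from Alg.~\ref{alg: independence estimator} under Condition~\ref{DScondition: cov shift} (with $K=2$, $\mathcal{S}_1=\emptyset$, $\mathcal{S}_2=\{1\}$, $Z_1=X$, $Z_2=Y$), so that Theorem~\ref{thm: independence eff and DR} can be invoked, and then to translate the resulting finite-sample expansion \eqref{eq: r hat expansion} into the simplified form claimed here. First I would write out $\mathcal{T}(\widehat{\boldsymbol \ell}_v,\widehat{\boldsymbol \theta}_v,\widehat{\boldsymbol \pi}_v)$ from \eqref{eq: DR transform} in this special case. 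With only one nontrivial component $\ell^1_*=\mathcal{L}_*$, one has $\mathcal{S}'_1=\{0\}$ so $\theta^0=0$, and the $k=2$ term involves $1/(\pi^0_*(1+\theta^1_*(x)))$. Using Bayes' theorem, $1/(1+\theta^1_*(x)) = \Pi^{1,0}_*(x) = g_*(x)$, so $\mathcal{T}$ reduces to $x\mapsto \rho_*^{-1}\{g_*(x)[\ell(x,y)-\mathcal{L}_*(x)] + \ind(a=0)\mathcal{L}_*(x)\}$, which is exactly the summand in \eqref{eq: cov shift foldwise estimator} with estimated nuisances; correspondingly $D_\independence(\boldsymbol \ell_*,\boldsymbol \theta_*,\boldsymbol \pi_*,r_*) = D_\covshift(\rho_*,g_*,\mathcal{L}_*,r_*)$. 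The conditional-risk identity $\mathcal{L}_*(x)=\expect_{P_*}[\ell(X,Y)\mid X=x,A=0]=\expect_{P_*}[\ell(X,Y)\mid X=x,A=1]$ under Condition~\ref{DScondition: cov shift} identifies $h^1_v$, the oracle regression, with $\expect_{P_*}[\hat{\mathcal{L}}^{-v}(X)\mid X] = \hat{\mathcal{L}}^{-v}$ evaluated pointwise is not quite it — rather $h^1_v(x) = \expect_{P_*}[\ell(X,Y)\mid X=x, A=1]$ does not depend on $\hat{\mathcal{L}}^{-v}$ since $\ell^2_*=\ell$ is fixed; so $h^1_v = \mathcal{L}_*$, and the product bias term $B_{2,v}$ from \eqref{bkf} collapses to $\frac{\rho_*}{\hat\rho^v}\expect_{P_*}[\{g_*(X) - \hat{g}^{-v}(X) \cdot \rho_*/\hat\rho^v\}\{\mathcal{L}_*(X) - \hat{\mathcal{L}}^{-v}(X)\}\mid A=1]$ (up to the $\pi$-ratio bookkeeping), which after absorbing the $\bigO_p(n^{-1/2})$ error in $\hat\rho^v$ is $-P_*(\hat{g}^{-v}-g_*)(\hat{\mathcal{L}}^{-v}-\mathcal{L}_*)$ plus a negligible remainder; likewise the $\Delta_v$ term vanishes because Condition~\ref{DScondition: cov shift} holds (the final sentence of Theorem~\ref{thm: independence eff and DR}).

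Rather than grinding through the bookkeeping of \eqref{eq: r hat expansion}, I expect it is cleaner to derive the finite-sample expansion directly. Write $\hat{r}_{\covshift}^v = (\hat\rho^v)^{-1} P^{n,v}[\hat{g}^{-v}(\ell-\hat{\mathcal{L}}^{-v}) + \ind(A=0)\hat{\mathcal{L}}^{-v}]$. Add and subtract $(\hat\rho^v)^{-1}P^{n,v}[g_*(\ell-\mathcal{L}_*)+\ind(A=0)\mathcal{L}_*] = (\hat\rho^v)^{-1}P^{n,v}[\rho_* D_\covshift(\rho_*,g_*,\mathcal{L}_*,r_*) + \ind(A=0)r_*]$. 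The last piece contributes $r_*$ after multiplying by $(\hat\rho^v)^{-1}P^{n,v}\ind(A=0)=1$. Grouping the $D_\covshift$ contribution gives the first displayed term. The remaining difference, $(\hat\rho^v)^{-1}P^{n,v}\{\hat{g}^{-v}(\ell-\hat{\mathcal{L}}^{-v}) - g_*(\ell-\mathcal{L}_*)\} + (\hat\rho^v)^{-1}P^{n,v}\ind(A=0)(\hat{\mathcal{L}}^{-v}-\mathcal{L}_*)$, is split into its centered part $(P^{n,v}-P_*)$ and its mean $P_*$ part; summing the two $P_*$-expectations and using $\expect_{P_*}[\ind(A=0)(\hat{\mathcal{L}}^{-v}-\mathcal{L}_*)] = \expect_{P_*}[g_*(X)(\hat{\mathcal{L}}^{-v}(X)-\mathcal{L}_*(X))]$ (tower rule with $g_*(x)=P_*(A=0\mid X=x)$) together with $\expect_{P_*}[\hat{g}^{-v}(\ell-\hat{\mathcal{L}}^{-v}) - g_*(\ell-\mathcal{L}_*)] = \expect_{P_*}[(\hat{g}^{-v}-g_*)(\mathcal{L}_*-\hat{\mathcal{L}}^{-v})] + \expect_{P_*}[g_*(\mathcal{L}_*-\hat{\mathcal{L}}^{-v})]$ (here I use $\expect_{P_*}[\ell\mid X,A=1]=\mathcal{L}_*$ and condition appropriately; note the mismatch between conditioning on $A=1$ versus marginal $X$ is what produces the $g_*$-weight), the two $\expect_{P_*}[g_*(\mathcal{L}_*-\hat{\mathcal{L}}^{-v})]$ terms cancel, leaving exactly $-P_*(\hat{g}^{-v}-g_*)(\hat{\mathcal{L}}^{-v}-\mathcal{L}_*)$ as in the claimed expansion. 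The factor $\rho_*/\hat\rho^v$ on the $D_\covshift$ term arises because that term is extracted before dividing by $\hat\rho^v$; tracking the $(\hat\rho^v)^{-1}$ carefully on each piece is the one place to be attentive.

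For the second assertion, under Condition~\ref{STcondition: cov shift remainder} the third term is $\smallo_p(n^{-1/2})$ by assumption. The second term is $\smallo_p(n^{-1/2})$ by a standard cross-fitting argument: conditionally on the out-of-fold data, $(\hat{g}^{-v},\hat{\mathcal{L}}^{-v})$ are fixed, $(P^{n,v}-P_*)$ applied to $\hat{g}^{-v}(\ell-\hat{\mathcal{L}}^{-v}) - g_*(\ell-\mathcal{L}_*)$ has conditional mean zero and conditional variance bounded by $|I_v|^{-1}\|\hat{g}^{-v}(\ell-\hat{\mathcal{L}}^{-v}) - g_*(\ell-\mathcal{L}_*)\|_{L^2(P_*)}^2$, which is $\smallo_p(|I_v|^{-1})$ since the $L^2$ norm tends to zero by the two consistency conditions and boundedness of $g_*$; Chebyshev then gives $\smallo_p(n^{-1/2})$. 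Since $\hat\rho^v \overset{p}{\to}\rho_*$, the prefactors $\rho_*/\hat\rho^v$ and $1/\hat\rho^v$ converge to $1$ and $1/\rho_*$ respectively, and $\sum_v |I_v|/n \to 1$, so collecting terms yields $\hat{r}_{\covshift} = r_* + n^{-1}\sum_{i=1}^n D_\covshift(\rho_*,g_*,\mathcal{L}_*,r_*)(O_i) + \smallo_p(n^{-1/2})$. Regularity follows because this is the efficient influence function (Corollary~\ref{corollary: cov shift EIF}), and the CLT gives $\sqrt{n}(\hat{r}_{\covshift}-r_*)\overset{d}{\to}\mathrm{N}(0,\sigma_{*,\covshift}^2)$. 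The main obstacle I anticipate is purely bookkeeping: correctly matching the $\pi$-ratio and $\hat\rho^v$ factors so that the stray $\expect_{P_*}[g_*(\mathcal{L}_*-\hat{\mathcal{L}}^{-v})]$ terms cancel exactly and the surviving bias is precisely the product term $P_*(\hat{g}^{-v}-g_*)(\hat{\mathcal{L}}^{-v}-\mathcal{L}_*)$; alternatively, one can sidestep this entirely by citing Theorem~\ref{thm: independence eff and DR} after verifying the specialization, which I would present as the primary route with the direct computation as confirmation.
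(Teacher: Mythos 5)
Your primary route---verifying that Alg.~\ref{alg: cov shift estimator} is the $K=2$, $\mathcal{S}_1=\emptyset$, $\mathcal{S}_2=\{1\}$ specialization of Alg.~\ref{alg: independence estimator} (with $1/(1+\theta^1_*)=g_*$, $\ell^1_*=h^1_v=\mathcal{L}_*$ since $\hat{\ell}^2_v=\ell$ is fixed, and $\Delta=0$ under Condition~\ref{DScondition: cov shift}) and then invoking Theorem~\ref{thm: independence eff and DR}---is exactly the paper's proof of this corollary. Your supplementary direct derivation is also sound: the mean/centered split, the tower-rule cancellation of the two $\expect_{P_*}[g_*(\mathcal{L}_*-\hat{\mathcal{L}}^{-v})]$ terms leaving precisely the product bias $-P_*(\hat{g}^{-v}-g_*)(\hat{\mathcal{L}}^{-v}-\mathcal{L}_*)$, and the cross-fitting/Chebyshev control of the empirical-process term all check out.
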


Another difference between $\hat{r}_{\covshift}$ and $\hat{r}_{\xconshift}$ is that $\hat{r}_{\covshift}$ is doubly robust consistent, as implied by part~2 of Theorem~\ref{thm: independence eff and DR}, but not fully robust asymptotically linear.
The doubly robust consistency of $\hat{r}_{\covshift}$ relies on the following condition that corresponds to Condition~\ref{STcondition: independence DR} for sequential conditionals and is weaker than Condition~\ref{STcondition: cov shift remainder}.
The comparison between these conditions is similar to that between Conditions~\ref{STcondition: independence eff} and \ref{STcondition: independence DR}.

\begin{STcondition}[Consistent estimation of one nuisance function] \label{STcondition: cov shift DR consistent}
It holds that
\begin{align*}
    & \max_{v \in [V]} \left| \int (\hat{g}^{-v} - g_*) (\hat{\mathcal{L}}^{-v} - \mathcal{L}_*) \intd P_* \right| = \smallo_p(1), \\
    & \max_{v \in [V]} \left\| \hat{g}^{-v} - g_* \right\|_{L^2(P_*)} = \bigO_p(1), \qquad \max_{v \in [V]} \left\| \hat{\mathcal{L}}^{-v} - \mathcal{L}_* \right\|_{L^2(P_*)} = \bigO_p(1).
\end{align*}

\end{STcondition}

We obtain the doubly robust consistency for $\hat{r}_{\covshift}$, a corollary of part~2 of Theorem~\ref{thm: independence eff and DR}.

\begin{corollary}[Double robustness of $\hat{r}_{\covshift}$] \label{corollary: cov shift DR consistent}
    Under Conditions~\ref{DScondition: cov shift} and \ref{STcondition: cov shift remainder},
    with $r_*$ from \eqref{eq: def risk},
    the sequence of estimators $\hat{r}_{\covshift}$ from Line 5 of Alg.~\ref{alg: cov shift estimator} is consistent for $r_*$.
\end{corollary}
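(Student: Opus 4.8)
The statement is meant to follow from part~2 of Theorem~\ref{thm: independence eff and DR} by specialization, so the plan is to exhibit full-data covariate shift as an instance of Condition~\ref{DScondition: independence} and read off the nuisances. Concretely, Condition~\ref{DScondition: cov shift} is Condition~\ref{DScondition: independence} with $K=2$, $\mathcal{A}=\{0,1\}$, $Z_1=X$, $Z_2=Y$, $\mathcal{S}_1=\emptyset$ and $\mathcal{S}_2=\{1\}$, so that $\mathcal{S}'_1=\{0\}$ and $\mathcal{S}'_2=\mathcal{A}$. First I would record how the general nuisances collapse: $\theta^0_*=0$; $\theta^1_*=(1-g_*)/g_*$, so $1/(1+\theta^1_*)=g_*$; $\ell^2_*=\ell$; and, using Condition~\ref{DScondition: cov shift}, $\ell^1_*(x)=\expect_{P_*}[\ell(X,Y)\mid X=x,\,A\in\mathcal{S}'_2]=\mathcal{L}_*(x)$. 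Substituting these (with $\hat\theta^1_v$ reparametrized through $\hat g^{-v}$ via $1/(1+\hat\theta^1_v)=\hat g^{-v}$, and $\hat\ell^1_v=\hat{\mathcal{L}}^{-v}$) into \eqref{eq: DR transform}--\eqref{eq: independence EIF} shows that $D_\independence(\boldsymbol\ell_*,\boldsymbol\theta_*,\boldsymbol\pi_*,r_*)$ reduces to $D_\covshift(\rho_*,g_*,\mathcal{L}_*,r_*)$, that the sequential-regression step of Alg.~\ref{alg: independence estimator}---which here regresses the fixed function $\ell$ on $X$ in the pooled sample $A\in\mathcal{A}$---is exactly the estimation of $\mathcal{L}_*$, and that $\hat r$ from Alg.~\ref{alg: independence estimator} coincides with $\hat r_\covshift$ from Alg.~\ref{alg: cov shift estimator}; this identification is routine bookkeeping.

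Next I would translate the hypotheses into part~2 of Theorem~\ref{thm: independence eff and DR}, i.e.\ Condition~\ref{STcondition: independence DR}. Since the loss is fixed, $\ell^2_\infty=\ell^2_*=\ell$, so both $h^1_v$ and the limiting oracle regression $u^1$ equal $x\mapsto\expect_{P_*}[\ell(X,Y)\mid X=x,\,A\in\mathcal{S}'_2]$, which is $\mathcal{L}_*$ by Condition~\ref{DScondition: cov shift}; in particular the two notions of ``consistent estimation of $\ell^1$'' collapse to consistency for $\mathcal{L}_*$. With $K=2$ the product-bias sum in \eqref{eq: independence remainder} is the single term $B_{2,v}$, and substituting the collapsed nuisances into \eqref{bkf} and expanding $\hat g^{-v}/\hat\rho^v-g_*/\rho_*$ around $\rho_*$ shows that $B_{2,v}=-(\hat\rho^v)^{-1}\int(\hat g^{-v}-g_*)(\hat{\mathcal{L}}^{-v}-\mathcal{L}_*)\intd P_*+\bigO_p(n^{-1/2})$; since $\hat\rho^v\overset{p}{\to}\rho_*>0$, the hypothesis that this cross term is $\smallo_p(1)$ (Condition~\ref{STcondition: cov shift DR consistent}; note that the Condition~\ref{STcondition: cov shift remainder} cited in the corollary statement is strictly stronger, so the conclusion holds a fortiori) gives $B_{2,v}=\smallo_p(1)$, which is part~1 of Condition~\ref{STcondition: independence DR}. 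Part~2 of Condition~\ref{STcondition: independence DR}, the $\bigO_p(1)$ bound on $\| \hat\rho^v\mathcal{T}(\widehat{\boldsymbol\ell}_v,\widehat{\boldsymbol\theta}_v,\widehat{\boldsymbol\pi}_v)-\rho_*\mathcal{T}(\boldsymbol\ell_*,\boldsymbol\theta_*,\boldsymbol\pi_*) \|_{L^2(P_*)}$, follows from $0\le g_*\le 1$, the standing assumption that $\ell$ has finite second moment (so $\mathcal{L}_*\in L^2(P_*)$), and the $\bigO_p(1)$ control of $\|\hat{\mathcal{L}}^{-v}-\mathcal{L}_*\|_{L^2(P_*)}$ in the hypothesis, which together bound $\|\hat g^{-v}(\ell-\hat{\mathcal{L}}^{-v})\|_{L^2(P_*)}$ and $\|\ind(a=0)\hat{\mathcal{L}}^{-v}\|_{L^2(P_*)}$. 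Part~2 of Theorem~\ref{thm: independence eff and DR} then yields $\hat r_\covshift-\Delta\overset{p}{\to}r_*$, and since Condition~\ref{DScondition: cov shift} is a special case of Condition~\ref{DScondition: general independence} the same theorem gives $\Delta=0$, hence $\hat r_\covshift\overset{p}{\to}r_*$.

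I expect the only slightly delicate point to be verifying cleanly that the single product-bias term $B_{2,v}$ genuinely reduces to a bounded multiple of the cross term $\int(\hat g^{-v}-g_*)(\hat{\mathcal{L}}^{-v}-\mathcal{L}_*)\intd P_*$ of Condition~\ref{STcondition: cov shift DR consistent}, after the reductions $\hat\pi^a_v\to\pi^a_*$ and after noting that conditioning on $A\in\mathcal{S}'_2=\mathcal{A}$ is vacuous so the change of measure between $X\mid A\in\mathcal{S}'_2$ and $X$ is trivial here; everything else is routine. As an equivalent, more self-contained alternative that avoids invoking Theorem~\ref{thm: independence eff and DR}, one can instead start from the finite-sample expansion of $\hat r_\covshift$ already displayed in Corollary~\ref{corollary: cov shift efficiency}: its leading term $\sum_v\frac{|I_v|\rho_*}{n\hat\rho^v}(P^{n,v}-P_*)D_\covshift(\rho_*,g_*,\mathcal{L}_*,r_*)$ is $\bigO_p(n^{-1/2})$; the product-bias term $-\sum_v\frac{|I_v|}{n}P_*(\hat g^{-v}-g_*)(\hat{\mathcal{L}}^{-v}-\mathcal{L}_*)$ is $\smallo_p(1)$ by hypothesis; and the cross-fit empirical-process term $\sum_v\frac{|I_v|}{n\hat\rho^v}(P^{n,v}-P_*)\{\hat g^{-v}(\ell-\hat{\mathcal{L}}^{-v})-g_*(\ell-\mathcal{L}_*)\}$ is $\smallo_p(1)$ by a conditional Chebyshev bound given the out-of-fold sample, using that the $L^2(P_*)$ norm of the integrand is $\bigO_p(1)$ (as above) and $|I_v|\to\infty$. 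Summing the three contributions gives $\hat r_\covshift-r_*=\smallo_p(1)$.
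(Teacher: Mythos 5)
Your proposal is correct and follows essentially the same route as the paper, whose entire proof is the one-line remark that the corollary is implied by part~2 of Theorem~\ref{thm: independence eff and DR}; you have simply filled in the specialization ($K=2$, $\mathcal{S}'_1=\{0\}$, $\mathcal{S}'_2=\mathcal{A}$, $1/(1+\theta^1_*)=g_*$, $h^1_v=u^1=\mathcal{L}_*$) and the verification of Condition~\ref{STcondition: independence DR} explicitly. You are also right that the corollary as stated cites the stronger Condition~\ref{STcondition: cov shift remainder} where the surrounding text indicates the weaker Condition~\ref{STcondition: cov shift DR consistent} is what is actually needed, so the conclusion holds a fortiori either way.
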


The differences between the asymptotic properties of $\hat{r}_{\xconshift}$ and $\hat{r}_{\covshift}$ are due to the differences between the dataset shift conditions.
Covariate shift \ref{DScondition: cov shift} is independence conditional on covariates, while concept shift \ref{DScondition: X con shift} is marginal independence. 
One aspect in which they differ is that conditional independence is often more difficult to test than marginal independence \citep{Shah2020}.

When covariate shift \ref{DScondition: cov shift} holds, compared to the nonparametric estimator $\hat{r}_{\nonparametric}$, our proposed estimator $\hat{r}_{\covshift}$ has advantages and limitations. 
In terms of efficiency, $\hat{r}_{\covshift}$ may achieve efficiency gains when both nuisance functions are estimated consistently. In terms of robustness, $\hat{r}_{\nonparametric}$ does not require estimating any nuisance function and is therefore fully robust asymptotically linear; in contrast, $\hat{r}_{\covshift}$ is only doubly robust consistent but not fully robust. 
A natural question is whether there exists a regular estimator that is fully robust asymptotically linear and also attains the efficiency bound under reasonable conditions, similarly to $\hat{r}_{\xconshift}$ under concept shift. 
Unfortunately, as the following result shows, such estimators do not exist under the common parametrizations 
$(P_X,P_{A \mid X},P_{Y \mid X})$ and $(P_A,P_{X \mid A},P_{Y \mid X})$ of the distribution $P$.
\begin{lemma} \label{lemma: cov shift no eff and robust}
    Suppose that 
    the covariate shift condition \ref{DScondition: cov shift} holds, but no further assumptions on $P_*$ are made.
        Under the parametrization $(P_X,P_{A \mid X},P_{Y \mid X})$ of a distribution $P$, suppose that for all $P_*$,
         $\IF(P_{*,X},$ $P_{A \mid X},P_{Y \mid X},r_*)$
         is an influence function for estimating $r_*$ at $P_*$, for arbitrary $(P_{A \mid X},P_{Y \mid X})$. Then, we have that
        $$\IF(P_{*,X},P_{A \mid X},P_{Y \mid X},r_*) = D_\nonparametric(\rho_*,r_*).$$
        A similar result holds under the parametrization $(P_A,P_{X \mid A},P_{Y \mid X})$ of $P$.
\end{lemma}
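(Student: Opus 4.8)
The plan is to combine the standard description of the set of all influence functions of a target parameter in a semiparametric model with a perturbation argument across a rich subfamily of distributions, which rules out every candidate except $D_\nonparametric$.

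First I would pin down the tangent space and its orthocomplement. Under either parametrization, the covariate shift model $\mathcal{M} = \{P : Y \independent A \mid X\}$ (recall $\mathcal{A} = \{0,1\}$) is nonparametric in each factor, so its tangent space at a regular $P_*$ is the orthogonal direct sum
\[
\mathcal{T}(P_*) = \{a(X) : \expect_{P_*}[a(X)] = 0\} \oplus \{c(X,A) : \expect_{P_*}[c \mid X] = 0\} \oplus \{b(X,Y) : \expect_{P_*}[b \mid X] = 0\},
\]
where orthogonality of the three pieces uses $Y \independent A \mid X$. A short computation with the residual map $f \mapsto f - \expect_{P_*}[f \mid X, A] - \expect_{P_*}[f \mid X, Y] + \expect_{P_*}[f \mid X]$ shows that the orthocomplement of $\mathcal{T}(P_*)$ in $L^2_0(P_*)$ is
\[
\mathcal{R}(P_*) := \{ f \in L^2_0(P_*) : \expect_{P_*}[f \mid X, A] = 0 \text{ and } \expect_{P_*}[f \mid X, Y] = 0 \},
\]
which depends only on $\mathcal{M}$, not on the chosen parametrization. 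By Corollary~\ref{corollary: cov shift EIF} and standard semiparametric efficiency theory, the set of influence functions for $r_*$ at $P_*$ equals $D_\covshift(\rho_*, g_*, \mathcal{L}_*, r_*) + \mathcal{R}(P_*)$. Moreover a direct check — using $\expect_{P_*}[\ell(X,Y) \mid X, A] = \mathcal{L}_*(X)$ and $\expect_{P_*}[\ind(A=0) \mid X, Y] = g_*(X)$ under covariate shift — gives
\[
D_\nonparametric(\rho_*, r_*) - D_\covshift(\rho_*, g_*, \mathcal{L}_*, r_*) = \frac{\ind(A=0) - g_*(X)}{\rho_*}\{\ell(X,Y) - \mathcal{L}_*(X)\} \in \mathcal{R}(P_*),
\]
so the set of influence functions at $P_*$ is also $D_\nonparametric(\rho_*, r_*) + \mathcal{R}(P_*)$.

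Next I would fix a triple $(\mu, \rho, r)$ with $\mu$ a non-degenerate distribution on $\mathcal{X}$, and let $\mathcal{P} := \{P \in \mathcal{M} : P_X = \mu,\ P(A = 0) = \rho,\ R(P) = r,\ g \text{ bounded away from } 0\}$, so that Corollary~\ref{corollary: cov shift EIF} applies to every $P \in \mathcal{P}$. Fix arbitrary conditionals $(Q_{A \mid X}, Q_{Y \mid X})$ and put $\phi := \IF(\mu, Q_{A \mid X}, Q_{Y \mid X}, r)$. For every $P_* \in \mathcal{P}$ we have $P_{*,X} = \mu$ and $r_* = R(P_*) = r$, so the hypothesis of the lemma — that $\IF(P_{*,X}, \cdot, \cdot, r_*)$ is an influence function at $P_*$ for arbitrary conditionals — shows that $\phi$ is an influence function for $r_*$ at each $P_* \in \mathcal{P}$. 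By the previous paragraph, $\phi - D_\nonparametric(\rho, r) \in \mathcal{R}(P_*)$ for every $P_* \in \mathcal{P}$; that is, $\phi - D_\nonparametric(\rho, r)$ lies in $\bigcap_{P_* \in \mathcal{P}} \mathcal{R}(P_*)$.

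The main work — and the step I expect to be the main obstacle — is to show $\bigcap_{P_* \in \mathcal{P}} \mathcal{R}(P_*) = \{0\}$ by exhibiting sufficiently varied members of $\mathcal{P}$. Since $\mathcal{A} = \{0,1\}$, for $P \in \mathcal{M}$ a function $f$ lies in $\mathcal{R}(P)$ iff, for $P_X$-a.e.\ $x$, one has $f(x, y, 1) = -\tfrac{g(x)}{1-g(x)} f(x, y, 0)$ for $P_{Y \mid X = x}$-a.e.\ $y$ and $\expect_P[f(x, Y, 0) \mid X = x] = 0$. Because $\mu$ is non-degenerate, one can construct $P_1, P_2 \in \mathcal{P}$ — perturbing the propensity score on $\mathcal{X}$ while compensating through $P_{Y \mid X}$ so as to keep $P(A = 0) = \rho$ and $R(\cdot) = r$ — whose propensity scores satisfy $g_1(x) \ne g_2(x)$ for $\mu$-a.e.\ $x$ and whose label conditionals $P_{i, Y \mid X = x}$ have full support; then the displayed relation for $i = 1, 2$ forces $f(x, \cdot, 0) \equiv 0$, hence $f(x, \cdot, 1) \equiv 0$, i.e.\ $f = 0$. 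Consequently $\phi = D_\nonparametric(\rho, r)$, and since $(Q_{A \mid X}, Q_{Y \mid X})$ and $(\mu, \rho, r)$ were arbitrary, $\IF(P_{*,X}, P_{A \mid X}, P_{Y \mid X}, r_*) = D_\nonparametric(\rho_*, r_*)$ for all $P_*$ in the model. For the parametrization $(P_A, P_{X \mid A}, P_{Y \mid X})$ the argument is identical, now fixing $\rho$ and $r$ and varying $(P_{X \mid A}, P_{Y \mid X})$; the orthocomplement $\mathcal{R}(P_*)$ is the same set since it is parametrization-free. The delicate point throughout is thus the existence of a family $\mathcal{P}$ rich enough that no nonzero $f$ can satisfy the $\mathcal{R}(P)$-constraints for all $P \in \mathcal{P}$ simultaneously — i.e., constructing covariate-shift distributions with everywhere-distinct propensities and overlapping full-support label conditionals subject to $P_X = \mu$, $P(A = 0) = \rho$, and $R(P) = r$.
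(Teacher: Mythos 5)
Your proposal is correct and follows essentially the same route as the paper: both write every influence function as $D_\nonparametric(\rho_*,r_*)$ plus an element of $\tangent_\covshift^\perp$, and then exploit that the candidate $\IF(P_{*,X},P_{A\mid X},P_{Y\mid X},r_*)$ cannot depend on the true propensity score $g_*$ (respectively, the likelihood ratio $w_*$ under the second parametrization), so that varying the true distribution within the covariate-shift model forces the orthocomplement component to vanish. The only difference is presentational: the paper invokes ``variational independence of $g_*$'' in one line using the explicit form $(1-a-g_*(x))\,b(x,y)/g_*(x)$ of the orthocomplement, whereas you carry out the same perturbation explicitly by exhibiting two model distributions in $\mathcal{P}$ with everywhere-distinct propensities and matched $(P_X,\rho,r)$ --- a more careful rendering of the step the paper compresses.
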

Therefore, if a regular estimator of $r_*$ is fully robust asymptotically linear
under either of the above parametrizations, that is, its influence function satisfies the assumptions for $\IF$ in either case of Lemma~\ref{lemma: cov shift no eff and robust}, then its influence function must be $D_\nonparametric(\rho_*,r_*)$ and cannot attain the efficiency bound. Since full-data covariate shift (\ref{DScondition: cov shift}) under the second parametrization in the above lemma is a special case of Condition~\ref{DScondition: independence} under the common parametrization $(P_A,P_{Z_1 \mid A},\ldots,P_{Z_K \mid \bar{Z}_{K-1},A})$ of distributions $P$ \citep{Li2023}, we also conclude that there is generally no regular and fully robust asymptotically linear estimator of $r_*$ that can attain the efficiency bound for Condition~\ref{DScondition: independence} under this parametrization.

We ran a simulation similar to that in Section~\ref{sec: X con shift sim}. The results, which are presented in Supplement~\ref{sec: cov shift sim}, demonstrate our theoretical results about the efficiency gains and the asymptotic behavior of our proposed estimator $\hat{r}_\covshift$.

\section{Discussion} \label{sec: discussion}

We have developed a general framework for risk estimation under data set shift. It will be interesting to understand how our methods can be used to solve statistical problems such as model training. Due to space limitation, we defer a more detailed discussion to the Supplemental Material.

\section*{Acknowledgements}
This work was supported in part by the NSF DMS 2046874 (CAREER) award, NIH grants R01AI27271, R01CA222147, R01AG065276, R01GM139926, and Analytics at Wharton.
The Africa Health Research Institute’s Demographic Surveillance Information System and Population Intervention Programme is funded by the Wellcome Trust (201433/Z/16/Z) and the South Africa Population Research Infrastructure Network (funded by the South African Department of Science and Technology and hosted by the South African Medical Research Council).

\bibliographystyle{abbrvnat}
\bibliography{ref}

\clearpage

\setcounter{page}{1}
\setcounter{section}{0}
\renewcommand{\thesection}{S\arabic{section}}%
\setcounter{table}{0}
\renewcommand{\thetable}{S\arabic{table}}%
\setcounter{figure}{0}
\renewcommand{\thefigure}{S\arabic{figure}}%
\setcounter{equation}{0}
\renewcommand{\theequation}{S\arabic{equation}}%
\setcounter{DScondition}{0}
\renewcommand{\theDScondition}{DS.S\arabic{DScondition}}%
\setcounter{STcondition}{0}
\renewcommand{\theSTcondition}{ST.S\arabic{STcondition}}%
\setcounter{lemma}{0}
\renewcommand{\thelemma}{S\arabic{lemma}}%
\setcounter{theorem}{0}
\renewcommand{\thetheorem}{S\arabic{theorem}}%
\setcounter{corollary}{0}
\renewcommand{\thecorollary}{S\arabic{corollary}}%
\setcounter{algorithm}{0}
\setcounter{example}{0}
\renewcommand{\theexample}{S\arabic{example}}%

\begin{center}
    \LARGE Supplement to \\``\ourtitle''
\end{center}

\noindent\textbf{Additional notations:} For any two quantities $a$ and $b$, we use $a \lesssim b$ to denote $a \leq C b$ for an absolute constant $C>0$. For any function $f$ and any distribution $P$, 
we may use $P f$ or $P f(X)$ to denote $\int f(x) P(\intd x)$. 
We sometimes use $\tangent_\sharp$ to denote $L_0^2(P_{*,\sharp})$ for a generic set of random variables $\sharp$. 
For generic sets of random variables $\sharp$ and $\natural$, we use $\tangent_{\sharp \mid \natural}$ to denote the function space $\{f \in L^2_0(P_*): \expect_{P_*}[ f(\sharp,\natural) \mid \natural] = 0\}$, and $\tangent_{\sharp \mid A=a}$ to denote the function space $\{f \in L^2_0(P_*): \expect_{P_*}[ f(\sharp) \mid A=a] = 0\}$. We equip the linear space $L^2(P_*)$ with the covariance inner product $\langle f, g \rangle := P_* fg = \int f g \ \intd P_*$. For any linear subspace $\mathcal{S}$ of $L_0^2(P_*)$, we use $\mathcal{S}^\perp$ to denote the orthogonal complement of $\mathcal{S}$, namely $\{f \in L_0^2(P_*): P_* fg = 0 \text{ for all } g \in \mathcal{S}\}$. We use $\closure$ to denote the $L^2(P_*)$-closure of a set. We use $\oplus$ to denote the direct sum of orthogonal linear subspaces. We may use $\ind_k$ to denote the function $a \mapsto \ind(a=k)$ for an index $k$ and $\ind_\mathcal{S}$ to denote the function $a \mapsto \ind(a \in \mathcal{S})$ for an index set $\mathcal{S}$.
For a function $f$ of a transformation $g(O)$ of a data point $O$, we also write $f(g(O))$ as $f(O)$ for convenience, which should cause no confusion; for example, we may write $\ind_0(O)$ to denote $\ind(A=0)$.
For any function $g$ and a function class $\funclass$, we use $g \funclass$ to denote the function class $\{fg: f \in \funclass\}$.

\begin{algorithm}

\caption{Cross-fit estimator of $r_* = \expect_Q[\ell(Z)]$ under Condition~\ref{DScondition: general independence}} \label{alg: general independence estimator}
\begin{algorithmic}[1]
\Require{Data $\{O_i=(Z_i,A_i)\}_{i=1}^n$, nonempty relevant source population sets $\mathcal{S}_k'$ ($k \in [K]$), number $V$ of folds, Radon-Nikodym derivative estimator $\mathcal{W}$, regression estimator $\mathcal{K}$}
\State Split data as in Line~\ref{step: split data}, Alg.~\ref{alg: independence estimator}.
\For{$v \in [V]$}
    \State For all $k=1,\ldots,K-1$, estimate $\lambda^k_*$ by $\hat{\lambda}^k_{v}$ using data out of fold $v$; that is, set $\hat{\lambda}^k_{v} := \mathcal{W}([n] \setminus I_v)$. Set $\widehat{\boldsymbol \lambda}_{v} := (\hat{\lambda}^k_{v})_{k=1}^{K-1}$.
    \State Set $\hat{\pi}^{a}_{v} := |I_v|^{-1} \sum_{i \in I_v} \ind(A_i = a)$ for all $a \in \mathcal{A}$, $\widehat{\boldsymbol \pi}_{v} := (\hat{\pi}^{a}_{v})_{a \in \mathcal{A}}$, and $\hat{\ell}^K_v$ to be $\ell$.
    \For{$k=K-1,\ldots,1$}
        \State Estimate $\ell^k_*$ by $\hat{\ell}^k_{v}$
        as in Line~\ref{step: sequential regression}, Alg.~\ref{alg: independence estimator}.
    \EndFor
    \State Set $\widehat{\boldsymbol \ell}_{v} := (\hat{\ell}^k_{v})_{k=1}^{K-1}$.
    \State Compute the following estimator for fold $v$:
    \begin{equation}
        \hat{r}_{v} := \frac{1}{|I_v|} \sum_{i \in I_v} \widetilde{\mathcal{T}}(\widehat{\boldsymbol \ell}_{v},\widehat{\boldsymbol \lambda}_{v},\widehat{\boldsymbol \pi}_{v})(O_i). \label{eq: general independence foldwise estimator}
    \end{equation}
\EndFor
\State Compute the cross-fit estimator combining estimators $\hat{r}_{v}$ from all folds:
\begin{equation}
    \hat{r} := \frac{1}{n} \sum_{v=1}^V |I_v| \hat{r}_{v}. \label{eq: general independence estimator}
\end{equation}
\end{algorithmic}

\end{algorithm}

\renewcommand{\thealgorithm}{S\arabic{algorithm}}%

\section{Further examples of sequential conditionals} \label{sec: ex}

We provide here some further examples of models satisfying the sequential conditionals condition (Condition~\ref{DScondition: independence}).

\begin{example}[Improving diagnosis with texture source data] \label{ex: lung imaging}
    \citetsupp{Christodoulidis2017} study diagnosing  interstitial lung diseases (ILD) $Y$ based on computed tomography (CT) scans $X$ by training a classifier $f: x \mapsto f(x) \in \mathcal{Y}$.
    One key task in the diagnosis is to classify the texture $W$ revealed in the CT scan image $X$. 
    Because of the high cost of collecting and human-labeling CT scans, \citetsupp{Christodoulidis2017}  leverage external texture source datasets ($A=1$) containing $(X,W)$ to train the predictive model. 
    These source datasets were not built for diagnosing ILD and thus $Y$ is missing.
    
    It may be reasonable to assume 
    that the distribution of textures given the image is the same across the sample of CT scans and the source datasets, namely that 
    $W|X=x,A=0$ and $W|X=x,A=1$ are identically distributed for $x$ in the common support of $X \mid A=0$ and $X \mid A=1$. This assumption corresponds to
    Condition~\ref{DScondition: independence} where $K=3$, $\mathcal{A}=\{0,1\}$, $Z_1=X$, $Z_2=W$, $Z_3=Y$, $\mathcal{S}_2=\{1\}$ and $\mathcal{S}_1=\mathcal{S}_3=\emptyset$.
\end{example}

\begin{example}[Partial covariate shift] \label{ex}
Consider a full-data covariate shift setting 
with data $(X,Y)$
where 
the features $X = (X_1,X_2)$ consist of two components. Suppose that $X_1$ is known to be identically distributed in the two environments $A=0$ and $A=1$. 
For instance, an investigator might be interested in image classification where the environment only partially shifts. Such a partial shift may occur when the distribution of the weather---represented by extracted features $X_2$--- changes, but the distribution of the objects---represented by other extracted features $X_1$---in the images is unchanged \citepsupp[e.g.,][etc.]{robey2020model}.
This assumption corresponds to Condition~\ref{DScondition: independence} with $K=3$, $Z_1=X_1$, $Z_2=X_2$, $Z_3=Y$, $\mathcal{A}=\{0,1\}$, $\mathcal{S}_1=\{1\}$, $\mathcal{S}_2=\emptyset$, and $\mathcal{S}_3=\{1\}$.
\end{example}

\begin{example}[Covariate \& concept shift] \label{ex: cov con shift}
    Suppose that $Z=(X,Y)$ and
    in addition to a fully observed dataset from the target population ($A=0$), two source datasets are available: one dataset ($A=1$) contains \emph{unlabeled data}---that is, $Y$ is missing---from the target population; the other ($A=2$) contains \emph{fully observed data from a relevant source population} satisfying covariate shift.
    For example, an investigator might wish to train a prediction model in a new target population with little data. The investigator might start collecting unlabeled data from the target population, label some target population data, and further leverage a large existing source data set that was assembled for a similar prediction task.
    Condition~\ref{DScondition: independence} reduces to this setup with $K=2$, $\mathcal{A}=\{0,1,2\}$, $Z_1=X$, $Z_2=Y$, $\mathcal{S}_1=\{1\}$ and $\mathcal{S}_2=\{2\}$.
\end{example}

\begin{example}[Multiphase sampling] \label{ex: expensive variables}
    Suppose that we have variables $Z_K$ that are expensive to measure (e.g., personal interview responses, thorough diagnostic evaluation, or manual field measurements) and relatively inexpensive variables $\bar{Z}_{K-1}$ (e.g., mail survey responses, screening instruments, or sensor measurements).
    The investigator might draw a random sample from the target population measuring the inexpensive variables, but only measure $Z_K$ for a random subsample. 
    We label the subsample of full datapoints with $A=0$, 
    and label subsamples of incomplete observations with other indices in $\mathcal{A}$.
    
    It might be reasonable to assume Condition~\ref{DScondition: independence}, 
    stating that the distributions of $Z_K|\bar{Z}_{K-1},$ $A=0$ and $Z_K|\bar{Z}_{K-1},A=a$ are identical for $a\in \mathcal{S}_k$ for some index set $\mathcal{S}_k$.
    This setup can arise from a two-phase sampling design \citepsupp[e.g.,][etc.]{Hansen1946,Shrout1989} for $K=2$, or multiphase sampling \citepsupp[e.g.][etc.]{Srinath1971,Tuominen2006} for larger $K$.
\end{example}

\section{Results for concept shift in the labels and full-data label shift} \label{sec: label shift and Y con shift}

In this section, we present estimators and their theoretical properties for concept shift in the labels (\ref{DScondition: Y con shift}) and full-data label shift (\ref{DScondition: label shift}). Because of their similarity to Conditions~\ref{DScondition: X con shift} and \ref{DScondition: cov shift}, we abbreviate the presentation and omit the proofs of theoretical results.
All theoretical results are numbered in parallel to those in the main text with a superscript $\dagger$.

\subsection{Concept shift in the labels} \label{sec: Y con shift}

Define $\mathcal{E}_{*,Y}: y \mapsto \expect_{P_*}[\ell(X,Y) \mid Y=y,A=0]$. For scalars $\rho \in (0,1)$, $r \in \real$, and a function $\mathcal{Y}: \mathcal{Y} \to \real$, define
$$D_\yconshift(\rho,\mathcal{E}_Y,r): o=(x,y,a) \mapsto \frac{1-a}{\rho} \{ \ell(x,y) - \mathcal{E}_Y(y) \} + \mathcal{E}_Y(y) - r.$$

\begin{corollary2}{corollary: X con shift EIF} \label{thm: Y con shift EIF}
    Under Condition~\ref{DScondition: Y con shift}, the efficient influence function for the risk $r_*$ is $D_\yconshift(\rho_*,$ $\mathcal{E}_{*,Y},r_*)$. 
    Thus, the smallest possible normalized limiting variance for a sequence of RAL estimators is $\sigma_{*,\yconshift}^2 := \expect_{P_*} [D_\yconshift(\rho_*,\mathcal{E}_{*,Y},r_*)(O)^2]$.
\end{corollary2}

\begin{corollary2}{corollary: X con shift eff gain} \label{corollary: Y con shift eff gain}
Under Condition~\ref{DScondition: Y con shift}, the relative efficiency gain from using an efficient estimator is
$$1-\frac{\sigma_{*,\yconshift}^2}{\sigma_{*,\nonparametric}^2} = \frac{ (1-\rho_*) \expect_{P_*} \left[ (\mathcal{E}_{*,Y}(y) - r_*)^2 \right] }{ \expect_{P_*} \left[ \expect_{P_*} \left[ \{ \ell(X,Y) - \mathcal{E}_{*,Y}(Y) \}^2 \mid A=0,X \right] \right] + \expect_{P_*} \left[ \{ \mathcal{E}_{*,Y}(Y) - r_* \}^2 \right] }.$$
\end{corollary2}

Our proposed estimator is described in Alg.~\ref{alg: Y con shift estimator}.

\begin{algorithm2}{alg: X con shift estimator}
\caption{Cross-fit estimator of risk $r_*$ under Condition~\ref{DScondition: Y con shift}, concept shift in the labels} \label{alg: Y con shift estimator}
\begin{algorithmic}[1]
\Require{Data $\{O_i=(Z_i,A_i)\}_{i=1}^n$, number $V$ of folds, Radon-Nikodym derivative estimator $\mathcal{W}$, regression estimation method for $\mathcal{E}_{*,Y}$}
\State Randomly split all data (from both populations) into $V$ folds. Let $I_v$ be the indices of data points in fold $v$.
\IFor{$v \in [V]$}
    Estimate $\mathcal{E}_{*,Y}$ by $\hat{\mathcal{E}}_{Y}^{-v}$ using data out of fold $v$.
\EndIFor
\For{$v \in [V]$} \Comment{(Obtain an estimating-equation-based estimator for fold $v$)}
    \State With $\hat{\rho}^v := \frac{1}{|I_v|} \sum_{i \in I_v} \ind(A_i=0)$, set
    \begin{equation}\label{eq: Y con shift foldwise estimator}
        \hat{r}_{\yconshift}^v := \frac{1}{|I_v|}\sum_{i \in I_v} \left\{ \frac{\ind(A_i=0)}{\hat{\rho}^v} [\ell(X_i,Y_i) - \hat{\mathcal{E}}_{Y}^{-v}(Y_i)] + \hat{\mathcal{E}}_{Y}^{-v}(X_i) \right\}.
    \end{equation}
\EndFor
\State Obtain the cross-fit estimator:
    \begin{equation}\label{eq: Y con shift estimator}
    \hat{r}_{\yconshift} := \frac{1}{n} \sum_{v=1}^V |I_v| \hat{r}_{\yconshift}^v.
    \end{equation}
\end{algorithmic}
\end{algorithm2}

\begin{theorem2}{thm: X con shift efficiency robust}[Efficiency and fully robust asymptotic linearity of $\hat{r}_{\yconshift}$] \label{thm: Y con shift efficiency robust}
    Suppose that there exists a function $\mathcal{E}_{\infty,Y}:\mY\to \R$ such that
    $$\max_{v \in [V]} \left\| \hat{\mathcal{E}}_{Y}^{-v} - \mathcal{E}_{\infty,Y} \right\|_{L^2(P_*)} = \smallo_p(1).$$
    Under Condition~\ref{DScondition: Y con shift}, the estimator $\hat{r}_{\yconshift}$ in \eqref{eq: Y con shift estimator} is RAL:
    $$\hat{r}_{\yconshift} = r_* + \frac{1}{n} \sum_{i=1}^n \left\{ D_{\yconshift}(\rho_*,\mathcal{E}_{\infty,Y},r_*) + \frac{\expect_{P_*} \left[ \mathcal{E}_{\infty,Y}(Y) \right] - r_*}{\rho_*} (1 - A_i - \rho_*) \right\} + \smallo_p(n^{-1/2}).$$
    Moreover, if $\mathcal{E}_{*,Y}$ is estimated consistently, namely $\mathcal{E}_{\infty,Y}=\mathcal{E}_{*,Y}$, then $\hat{r}_{\yconshift}$ is asymptotically efficient:
    $$\hat{r}_{\yconshift} = r_* + \frac{1}{n} \sum_{i=1}^n D_{\yconshift}(\rho_*,\mathcal{E}_{*,Y},r_*)(O_i) + \smallo_p(n^{-1/2}).$$
\end{theorem2}

\subsection{Full-data label shift} \label{sec: label shift}

Define $g_{*,Y}: y \mapsto P_*(A=0 \mid Y=y)$ and $\mathcal{L}_{*,Y}: y \mapsto \expect_{P_*}[\ell(X,Y) \mid Y=y]$. 
For scalars $\rho \in (0,1)$, $r \in \real$
and functions $g_Y, \mathcal{L}_Y: \mathcal{Y} \to \real$,
define
$$D_\labshift(\rho,g_Y,\mathcal{L}_Y,r): o=(x,y,a) \mapsto \frac{g_Y(y)}{\rho} \{ \ell(x,y) - \mathcal{L}_Y(y)\} + \frac{1-a}{\rho} \{ \mathcal{L}_Y(y)- r \}.$$

\begin{corollary2}{corollary: cov shift EIF} \label{thm: label shift EIF}
Under Condition~\ref{DScondition: label shift}, if $g_{*,Y}$ is bounded away from zero for almost every $x$ in the support of $Y \mid A=0$, the efficient influence function for the risk $r_*$ is $D_\labshift(\rho_*, g_{*,Y}, \mathcal{L}_{*,Y},r_*)$. 
Thus, the smallest possible limiting normalized variance for a sequence of RAL estimators is $\sigma_{*,\labshift}^2 := \expect_{P_*} [D_\labshift(\rho_*,g_{*,Y},\mathcal{L}_{*,Y},r_*)(O)^2]$.
\end{corollary2}

\begin{corollary2}{corollary: cov shift eff gain} \label{corollary: label shift eff gain}
Under Condition~\ref{DScondition: label shift}, the relative efficiency gain from using an efficient estimator is
$$1-\frac{\sigma_{*,\labshift}^2}{\sigma_{*,\nonparametric}^2} = \frac{ \expect_{P_*} \left[ g_{*,Y}(Y) (1-g_{*,Y}(Y)) \expect_{P_*} \left[ \{ \ell(X,Y) - \mathcal{L}_{*,Y}(Y) \}^2 \mid Y \right] \right] }{ \expect_{P_*} \left[ g_{*,Y}(Y) \expect_{P_*} [\{ \ell(X,Y) - \mathcal{L}_{*,Y}(Y) \}^2 \mid Y] \right] + \expect_{P_*} \left[ g_{*,Y}(Y) \{ \mathcal{L}_{*,Y}(Y) - r_* \}^2 \right] }.$$
\end{corollary2}

Our proposed estimator is described in Alg.~\ref{alg: label shift estimator}.

\begin{algorithm2}{alg: cov shift estimator}
\caption{Cross-fit estimator of risk $r_*$ under full-data label shift condition \ref{DScondition: label shift}} \label{alg: label shift estimator}
\begin{algorithmic}[1]
\Require{Data $\{O_i=(Z_i,A_i)\}_{i=1}^n$, number $V$ of folds, classifier to estimate $g_{*,Y}$, regression estimation method for $\mathcal{L}_{*,Y}$}
\State Randomly split all data (from both populations) into $V$ folds. Let $I_v$ be the indices of data points in fold $v$.
\IFor{$v \in [V]$}
    Estimate $(g_{*,Y},\mathcal{L}_{*,Y})$ by $(\hat{g}_{Y}^{-v},\hat{\mathcal{L}}_{Y}^{-v})$ using data out of fold $v$.
\EndIFor
\For{$v \in [V]$} \Comment{(Obtain an estimating-equation-based estimator for fold $v$)}
    \State With $\hat{\rho}^v := \frac{1}{|I_v|} \sum_{i \in I_v} \ind(A_i=0)$, set
    \begin{equation}\label{eq: label shift foldwise estimator}
        \hat{r}_{\labshift}^v := \frac{1}{\hat{\rho}^v |I_v|}\sum_{i \in I_v} \{ (\hat{g}_{Y}^{-v}(Y_i)) [\ell(X_i,Y_i) - \hat{\mathcal{L}}_{Y}^{-v}(Y_i)] + (1-A_i) \hat{\mathcal{L}}_{Y}^{-v}(Y_i) \}.
    \end{equation}
\EndFor
\State Obtain the cross-fit estimator:
    \begin{equation}\label{eq: label shift estimator}
    \hat{r}_{\labshift} := \frac{1}{n} \sum_{v=1}^V |I_v| \hat{r}_{\labshift}^v.
    \end{equation}
\end{algorithmic}
\end{algorithm2}

\begin{STcondition2}{STcondition: cov shift remainder}[Sufficient rate of convergence for nuisance estimators] \label{STcondition: label shift remainder}
It holds that
\begin{align*}
    & \max_{v \in [V]} \left| \int (\hat{g}_{Y}^{-v}(x) - g_{*,Y}(x)) (\hat{\mathcal{L}}_{Y}^{-v}(x) - \mathcal{L}_{*,Y}(x)) P_{*,X} (\intd x) \right| = \smallo_p(n^{-1/2}), \\
    & \max_{v \in [V]} \left\| \hat{g}_{Y}^{-v} - g_{*,Y} \right\|_{L^2(P_*)} = \smallo_p(1), \qquad \max_{v \in [V]} \left\| \hat{\mathcal{L}}_{Y}^{-v} - \mathcal{L}_{*,Y} \right\|_{L^2(P_*)} = \smallo_p(1).
\end{align*}
\end{STcondition2}

\begin{corollary2}{corollary: cov shift efficiency}[Efficiency of $\hat{r}_{\labshift}$] \label{thm: label shift efficiency}
    Under Conditions~\ref{DScondition: label shift} and \ref{STcondition: label shift remainder}, the estimator $\hat{r}_{\labshift}$ in \eqref{eq: label shift estimator} is asymptotically efficient:
    $$\hat{r}_{\labshift} = r_* + \frac{1}{n} \sum_{i=1}^n D_{\labshift}(\rho_*,g_{*,Y},\mathcal{L}_{*,Y},r_*)(O_i) + \smallo_p(n^{-1/2}).$$
    Therefore, $\sqrt{n} (\hat{r}_{\labshift} - r_*) \overset{d}{\to} \mathrm{N}(0, \expect_{P_*}[D_{\labshift}(\rho_*,g_{*,Y},\mathcal{L}_{*,Y},r_*)(O)^2])$.
\end{corollary2}

\begin{STcondition2}{STcondition: cov shift DR consistent}[Consistent estimation of one nuisance function] \label{STcondition: label shift DR consistent}
It holds that
\begin{align*}
    & \max_{v \in [V]} \left| \int (\hat{g}_Y^{-v} - g_{*,Y}) (\hat{\mathcal{L}}_Y^{-v} - \mathcal{L}_{*,Y}) \intd P_* \right| = \smallo_p(1), \\
    & \max_{v \in [V]} \left\| \hat{g}_Y^{-v} - g_{*,Y} \right\|_{L^2(P_*)} = \bigO_p(1), \qquad \max_{v \in [V]} \left\| \hat{\mathcal{L}}_Y^{-v} - \mathcal{L}_{*,Y} \right\|_{L^2(P_*)} = \bigO_p(1).
\end{align*}
\end{STcondition2}

\begin{corollary2}{corollary: cov shift DR consistent}[Doubly robust consistency of $\hat{r}_{\labshift}$] \label{thm: label shift DR consistent}
    Under Conditions~\ref{DScondition: label shift} and \ref{STcondition: label shift remainder}, $\hat{r}_{\labshift} \overset{p}{\to} r_*$.
\end{corollary2}

\begin{lemma2}{lemma: cov shift no eff and robust} \label{lemma: label shift no eff and robust}
    Suppose that 
    the label shift condition \ref{DScondition: label shift} holds, but no further assumption on $P_*$ is made.
    \begin{compactenum}
        \item Under the parametrization $(P_Y,P_{A \mid Y},P_{X \mid Y})$ of a distribution $P$, suppose that $\IF (P_{*,Y},$ $ P_{*,A \mid Y},P_{*,X \mid Y},r_*)$ is an influence function for estimating $r_*$ at $P_*$, and so is $\IF(P_{*,Y},$ $P_{A \mid Y},P_{X \mid Y},r_*)$, for arbitrary $(P_{A \mid Y},P_{X \mid Y})$. Then, we have that
        $$\IF(P_{*,Y},P_{A \mid Y},P_{X \mid Y},r_*) = D_\nonparametric(\rho_*,r_*).$$
        \item Under the parametrization $(P_A,P_{Y \mid A},P_{X \mid Y})$ of a distribution $P$, suppose that $\IF(P_{*,A},$ $P_{*,Y \mid A},P_{*,X \mid Y},r_*)$ is an influence function for estimating $r_*$ at $P_*$, and so is $\IF$ $(P_{*,A},$ $P_{Y \mid A},P_{X \mid Y},r_*)$ for arbitrary $(P_{Y \mid A}$, $P_{X \mid Y})$. 
        Then, we have that
        $$\IF(P_{*,A},P_{Y \mid A},P_{X \mid Y},r_*) = D_\nonparametric(\rho_*,r_*).$$
    \end{compactenum}
\end{lemma2}

\section{Discussion of Part~1 of Condition~\ref{STcondition: independence eff}} \label{sec: sufficient STcondition: independence eff}

Let $\nu_{k-1}$ denote the distribution of $\bar{Z}_{k-1} \mid A \in \mathcal{S}'_k$ under $P_*$.
For every $k \in [2:K]$, every fold $v \in [V]$, and any function $g: z_k \mapsto g(z_k) \in \real$, 
we denote the oracle regression by $\mathcal{R}^{k-1}_v(g): z_{k-1} \mapsto \expect_{P_*}[g(Z_k) \mid Z_{k-1}=z_{k-1}, A \in \mathcal{S}'_k]$, and use $\hat{\mathcal{R}}^{k-1}_v(g)$ to denote the estimator of $\mathcal{R}^{k-1}_v(g)$ obtained using data out of fold $v$ and the same regression technique $\mathcal{K}$ used to obtain $\hat{\ell}^{k-1}_v$ as in Alg.~\ref{alg: independence estimator}.

A sufficient condition for part~1 of Condition~\ref{STcondition: independence eff} is the following: for every $k \in [2:K]$, there exists a function class $\mathcal{G}_k$ that contains $\hat{\ell}^{k}_v$ with probability tending to one as $n \to \infty$, and such that
$$\sup_{g \in \mathcal{G}_k} \left\| \hat{\mathcal{R}}^{k-1}_{v}(g) - \mathcal{R}^{k-1}_{v}(g) \right\|_{L^2(\nu_{k-1})} \left\| \frac{1}{1+\hat{\theta}^{k-1}_{v}} - \frac{1}{1+\theta^{k-1}_*} \right\|_{L^2(\nu_{k-1})} = \smallo_p(n^{-1/2})$$
and
$$\sup_{g \in \mathcal{G}_k} \left\| \hat{\mathcal{R}}^{k-1}_{v}(g) - \mathcal{R}^{k-1}_{v}(g) \right\|_{L^2(\nu_{k-1})} = \smallo_p(1).$$
This sufficient condition is satisfied, for example, if both $\sup_{g \in \mathcal{G}_k} \| \hat{\mathcal{R}}^{k-1}_{v}(g) - \mathcal{R}^{k-1}_{v}(g) \|_{L^2(\nu_{k-1})}$ and $\| 1/(1+\hat{\theta}^{k-1}_{v}) - 1/(1+\theta^{k-1}_*) \|_{L^2(\nu_{k-1})}$ are $\smallo_p(n^{-1/4})$.
Our conditions \ref{STcondition: cov shift remainder} and \ref{STcondition: label shift remainder} for the estimators for Conditions~\ref{DScondition: cov shift} and \ref{DScondition: label shift} respectively, are similar to the above sufficient condition.
Thus, this section also applies to these conditions.

Such $\smallo_p(n^{-1/4})$ convergence rates are achievable for many nonparametric regression or machine learning methods under ordinary conditions. We list a few examples below when the range $\mathcal{Z}$ of $Z$ is a Euclidean space. 
For other methods, once their $L^2$-convergence rates are established, the above sufficient condition can often be verified immediately.
In the following examples, it is useful to note that $1/(1+\theta^{k-1}_*)$ is the conditional probability function $\bar{z}_{k-1} \mapsto P_*(A=0 \mid \bar{Z}_{k-1}=\bar{z}_{k-1},A \in \mathcal{S}'_k)$.
\begin{compactitem}
    \item Series estimators \citepsupp{Chen2007}: Suppose that $\ell^{k-1}_*$ and $1/(1+\theta^{k-1}_*)$ are estimated using tensor product polynomial series, trigonometric series, or polynomial splines of sufficiently high degrees with approximately equally spaced knots. 
    Suppose that the support of $Z_{k-1} \mid A \in \mathcal{S}'_k$ is a Cartesian product of compact intervals. Let $\Lambda^{k-1}_p$ denote the $p$-smooth H\"older class on this support. Let $\kappa$ denote the degree of the polynomial series or the trigonometric series, or the number of knots in the spline series. 
    
    If 
    (i) the truth $1/(1+\theta^{k-1}_*)$ lies in $\Lambda^{k-1}_p$, 
    (ii) the class of oracle regression functions $\{\mathcal{R}^{k-1}_v(g): g \in \mathcal{G}_k\}$ is contained in $\Lambda^{k-1}_p$, 
    (iii) $g(\bar{Z}_k) - \mathcal{R}^{k-1}_v(g)(\bar{Z}_{k-1})$ has bounded variance uniformly over $g \in \mathcal{G}_k$, 
    (iv) the density of $\bar{Z}_{k-1} \mid A \in \mathcal{S}'_k$ is bounded away from zero and infinity, and 
    (v) $p > \dim(Z_{k-1})$ for polynomial series, $p > \dim(Z_{k-1})/2$ for trigonometric series, or $p > \dim(Z_{k-1})/2$ and the order of the spline at least $\lceil p \rceil$ (the smallest integer greater than or equal to $p$) for polynomial splines, then the $\smallo_p(n^{-1/4})$ rate is achieved with $\kappa$ growing at rate $n^{1/(2p+d)}$ as the sample size $n \to \infty$ \citep[see e.g., Proposition 3.6 in][]{Chen2007}.

    \item Highly adaptive lasso \citepsupp{Benkeser2016,VanderLaan2017}: Suppose that $\ell^{k-1}_*$ and $1/(1+\theta^{k-1}_*)$ are estimated with highly adaptive lasso using the squared-error loss or the logistic loss. If $\mathcal{G}_k$ is chosen such that the class of oracle regression functions $\{\mathcal{R}^{k-1}_v(g): g \in \mathcal{G}_k\}$ have bounded sectional variation norm and $\log(\theta^{k-1}_*)$ also has bounded sectional variation norm, then, under some additional technical conditions, the $\smallo_p(n^{-1/4})$ rate is achieved. These conditions allow for discontinuity and are often mild even for large $\dim(\bar{Z}_{k-1})$.

    \item Neural networks: Regression using a single hidden layer feedforward neural network with possibly non-sigmoid activation functions also achieves the $\smallo_p(n^{-1/4})$ rate under mild conditions \citepsupp{Chen1999}. Similarly to the highly adaptive lasso, the smoothness condition required does not depend on $\dim(\bar{Z}_{k-1})$.
    Further, \cite{bauer2019deep} showed that certain deep networks can achieve such rates under smooth generalized hierarchical interaction models, where the rate depends only on the 
    maximal number of linear combinations of variables 
    in the first layer and 
    maximal number of interacting variables 
    in each layer
    in the function class; see also \cite{kohler2022estimation, kohler2021rate,kohler2022analysis}.

    \item Boosting: \citetsupp{Luo2016} showed that $L_2$-boosting can achieve the $\smallo_p(n^{-1/4})$ rate for sparse linear models. Boosting has also been shown to perform well in some settings \citep[e.g.,][]{Blanchard2004,Buhlmann2003,Buhlmann2007}.

    \item Regression trees and random forests: A class of regression trees and random forest studied in \citetsupp{Wager2015} can achieve the $\smallo_p(n^{-1/4})$ rate under certain conditions.

    \item Ensemble learning with Super Learner \citepsupp{VanderLaan2007}: Suppose that $\ell^{k-1}_*$ and $1/(1+\theta^{k-1}_*)$ are estimated using Super Learner with a fixed or slowly growing number of candidate learners in the library. If one candidate learner achieves the $\smallo_p(n^{-1/4})$ rate, then the ensemble learner also achieves the $\smallo_p(n^{-1/4})$ rate under mild conditions.
\end{compactitem}

In our empirical simulation studies, we have found that Super Learner containing gradient boosting, in particular XGBoost \citepsupp{Chen2016}, with various tuning parameters as candidate learners appear to yield sufficient convergence rates within a reasonable computational time.
We have thus used this approach in our simulations and data analyses.

\section{Additional simulations} \label{sec: more sim}

\subsection{Simulations for risk estimation under full-data covariate shift} \label{sec: cov shift sim}

We run a simulation study similar to that from Section~\ref{sec: X con shift sim} to examine Corollaries~\ref{corollary: cov shift eff gain}--\ref{corollary: cov shift DR consistent}. 
Due to the similarity between simulations, 
we mainly emphasize the differences. 
Focusing on estimating the MSE, we consider five scenarios that are similar to those in Section~\ref{sec: X con shift sim}.
We first generate $X$ from the same trivariate normal distribution as in Section~\ref{sec: X con shift sim} and then generate $A$ using the same procedure as in Scenario~\ref{cov shift scenario: invalid} in Section~\ref{sec: X con shift sim}. 
Therefore, the proportion $\rho_*$ of the target population data is roughly 10\%.
For Scenarios~\ref{cov shift scenario: large gain}--\ref{cov shift scenario: no gain}, where the dataset shift condition \ref{DScondition: cov shift} holds, we generate the outcome $Y$ independently from $A$ given $X$ using the same procedure as the procedure for the target population in Section~\ref{sec: X con shift sim}. For Scenario~\ref{cov shift scenario: invalid}, where Condition~\ref{DScondition: cov shift} does not hold, we include dependence of $Y$ on $A$ given $X$ and generate $Y$ as $Y \mid X=x,A=a \sim \mathrm{Bernoulli}(\tilde{\mu}_*(x,a))$ where
$$\tilde{\mu}_*(x,a) = \begin{cases}
    \mu_*(x) & \text{if } a=0 \\
    1 + 0.5 x_1 - 1.5 x_2 + x_1 x_3 - 1.3 x_2 x_3^2 & \text{if } a=1
\end{cases}$$
for $\mu_*$ in \eqref{eq: sim mu0}.
The given fixed predictors $f$ are identical to those in Section~\ref{sec: X con shift sim}.

In terms of efficiency gains from using our proposed estimator $\hat{r}_{\covshift}$, 
we expect efficiency gains to decrease from scenarios \ref{cov shift scenario: large gain} to \ref{cov shift scenario: no gain}, 
due to the difference between Corollaries~\ref{corollary: cov shift eff gain} and \ref{corollary: X con shift eff gain}.

We investigate the performance of the following four sequences of estimators:
\begin{compactitem}
    \item \texttt{np}: the nonparametric estimator $\hat{r}_{\nonparametric}$ in \eqref{eq: np estimator};
    \item \texttt{covshift}: proposed estimator $\hat{r}_{\covshift}$ in Line~5 of Alg.~\ref{alg: cov shift estimator} with flexible and consistent estimators of nuisance functions $g_*$ and $\mathcal{L}_*$;
    \item \texttt{covshift.mis.L}: $\hat{r}_{\covshift}$ with an inconsistent estimator of $\mathcal{L}_*$;
    \item \texttt{covshift.mis.g}: $\hat{r}_{\covshift}$ with an inconsistent estimator of $g_*$.
\end{compactitem}
Nuisance estimators are constructed as in Section~\ref{sec: X con shift sim}.

\begin{figure}
    \centering
    \includegraphics[scale=0.8]{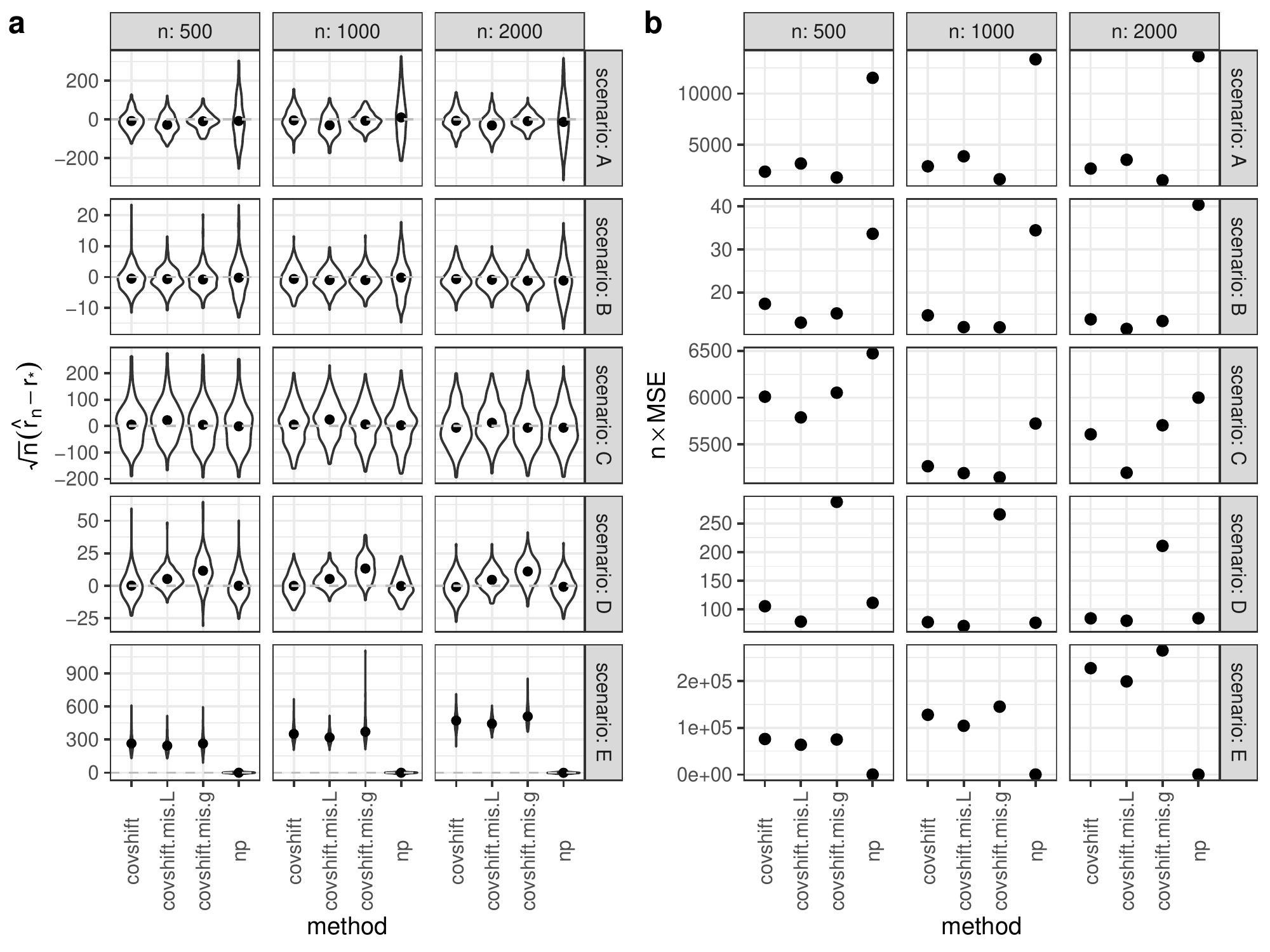}
    \caption{(a) Sampling distribution of the scaled difference between the estimators of MSE and the true MSE in the four scenarios under covariate shift. The point stands for the empirical average in Monte Carlo simulations. (b) Monte Carlo estimate of the scaled mean squared error of the estimators.}
    \label{fig: covshift}
\end{figure}

Figure~\ref{fig: covshift} presents the sampling distribution of scaled difference between the four estimators of MSE and the true MSE in the five scenarios. 
When the dataset shift condition \ref{DScondition: cov shift} holds (Scenarios~\ref{cov shift scenario: large gain}--\ref{cov shift scenario: no gain}), \texttt{np} and \texttt{covshift} both appear close to normal and centered around the truth; in contrast, \texttt{covshift.mis.L} and \texttt{covshift.mis.g} appear more biased in Scenario~\ref{cov shift scenario: no gain} where the nuisance function estimators are substantially far from the truth, but they appear to be consistent in all scenarios. These two estimators, though generally not root-$n$ consistent and asymptotically normal, appear similar to \texttt{covshift} in Scenarios~\ref{cov shift scenario: large gain}--\ref{cov shift scenario: small gain}. Thus, the asymptotic linearity of our proposed estimator might be robust against mild inconsistent estimation of one nuisance function. The variance of \texttt{covshift} is much smaller than that of \texttt{np} in both Scenarios~\ref{cov shift scenario: large gain} and \ref{cov shift scenario: medium gain}, indicating a large efficiency gain; in Scenarios~\ref{cov shift scenario: small gain} and \ref{cov shift scenario: no gain}, the variance of these two estimators is comparable. These results are consistent with Corollary~\ref{corollary: cov shift eff gain}. When Condition~\ref{DScondition: cov shift} does not hold (Scenario~\ref{cov shift scenario: invalid}), our proposed estimator is substantially biased, indicating that $\hat{r}_{\covshift}$ is not robust against failure of covariate shift condition \ref{DScondition: cov shift}.

\subsection{Simulations for model comparison under dataset shift} \label{sec: compare model sim}

Using notations for Corollary~\ref{corollary: model comparison}, in the simulations in this subsection, the estimand is the risk difference $r_*^{(1)} - r_*^{(2)}$ between two predictors $f^{(1)}$ and $f^{(2)}$.

\subsubsection{Concept shift in the features} \label{sec: X con shift compare}

The data is generated as in Scenarios~\ref{cov shift scenario: medium gain} and \ref{cov shift scenario: small gain} in Section~\ref{sec: X con shift sim}. The two predictors are from Scenarios~\ref{cov shift scenario: medium gain} ($f^{(1)}$) and \ref{cov shift scenario: small gain} ($f^{(2)}$), respectively. 
All other simulation settings are identical to Section~\ref{sec: X con shift sim}.
Simulation results are shown in Figure~\ref{fig: compare X con shift} and are similar to those in Section~\ref{sec: X con shift sim}. Our proposed estimator appears to be roo-$n$-consistent and asymptotically normal, and to have a smaller mean squared error than the nonparametric estimator.

\begin{figure}
    \centering
    \includegraphics[scale=0.8]{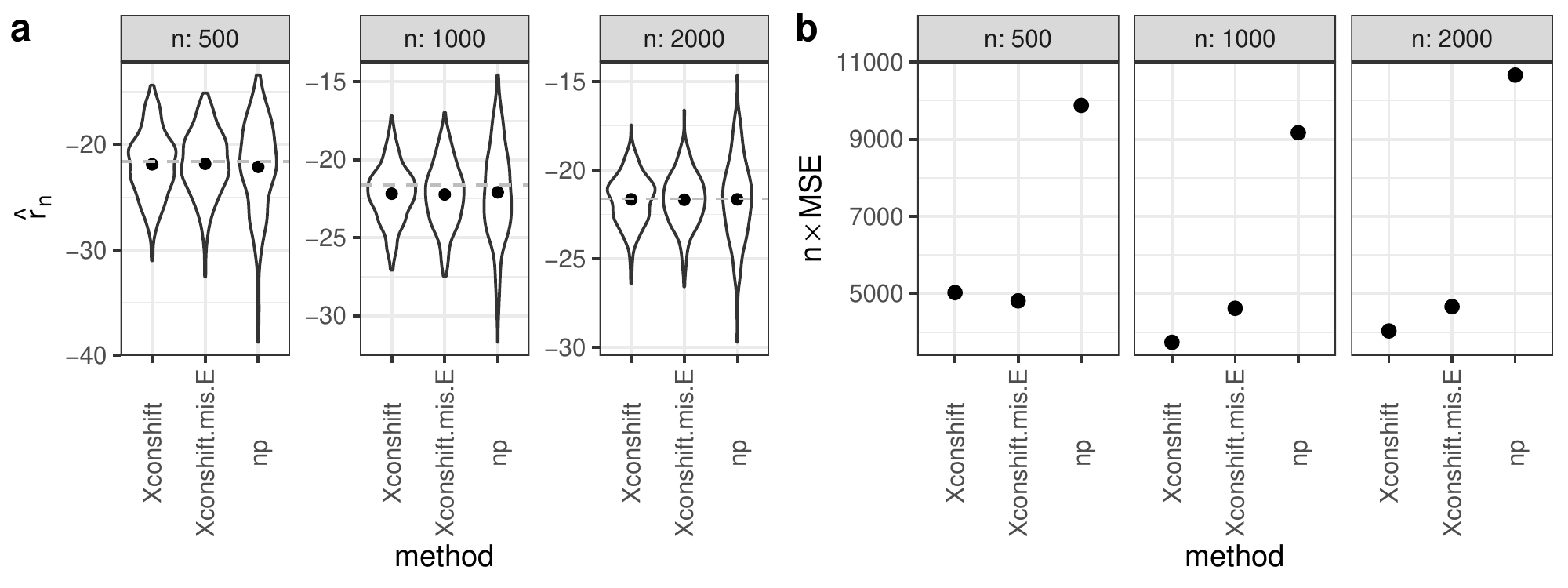}
    \caption{Plots similar to Figure~\ref{fig: X con shift} for estimating the risk difference between two predictors under concept shift.}
    \label{fig: compare X con shift}
\end{figure}

\subsubsection{Full-data covariate shift} \label{sec: cov shift compare model}

The data is generated as in Scenarios~\ref{cov shift scenario: medium gain} and \ref{cov shift scenario: small gain} in Section~\ref{sec: cov shift sim}. The two predictors are from Scenarios~\ref{cov shift scenario: medium gain} ($f^{(1)}$) and \ref{cov shift scenario: small gain} ($f^{(2)}$), respectively.
All other simulation settings are identical to Section~\ref{sec: cov shift sim}.
Simulation results are shown in Figure~\ref{fig: compare cov shift} and are similar to those in Section~\ref{sec: cov shift}.
When both nuisance function estimators are consistent, our proposed estimator appears asymptotically normal with a smaller mean squared error than the nonparametric estimator.
When one nuisance function estimator is inconsistent, our proposed estimator appears consistent and may still have a smaller mean squared error than the nonparametric estimator.

\begin{figure}
    \centering
    \includegraphics[scale=0.8]{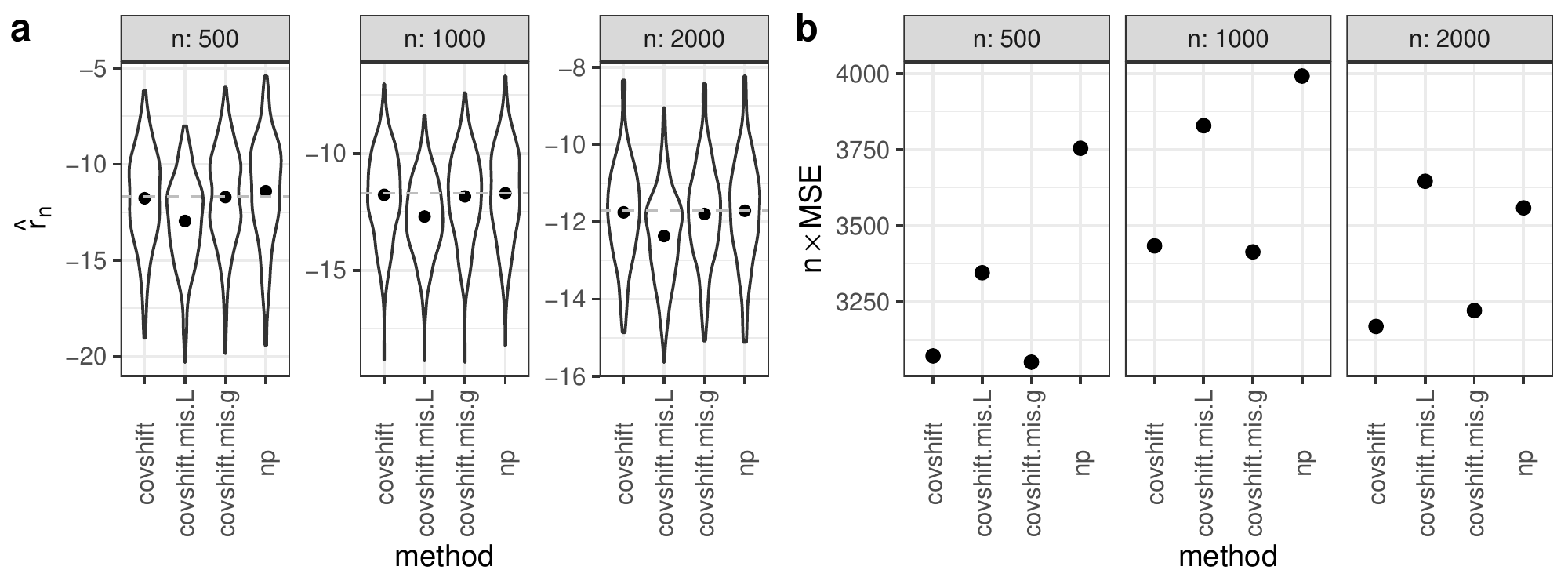}
    \caption{Plots similar to Figure~\ref{fig: covshift} for estimating the risk difference between two predictors under full-data covariate shift.}
    \label{fig: compare cov shift}
\end{figure}

\subsection{Simulations for logistic regression under dataset shift} \label{sec: GLM sim}

In the simulations in this subsection, the estimand is
$$\beta_* = \argmin_{\beta} \expect_{P_*}[- Y \log \{ \expit([1,X]^\top \beta) \} - (1-Y) \log \{ 1- \expit([1,X]^\top \beta) \} \mid A=0],$$
which is well-defined even if the logistic-linear model is misspecified.
Equivalently, $\beta_*$ is the solution in $\beta$ to
$$\expect_{P_*}[(Y-\expit([1,X]^\top \beta)) [1,X]^\top \mid A=0] = 0.$$

\subsubsection{Concept shift in the features} \label{sec: X con shift GLM sim}

\noindent\textbf{Data generating process:} Covariate $X$ and population index $A$ are generated as in Scenarios~\ref{cov shift scenario: large gain}--\ref{cov shift scenario: no gain} in Section~\ref{sec: X con shift sim}. We generate a binary label $Y$ from
$$Y \mid X=(X_1,X_2,X_3) \sim \text{Bern}(\expit\{-3+X_1^2+X_2^2+X_3^2+0.4 X_1 X_3-0.5 X_2 X_3+\sin(X_1+X_3)\}).$$

\noindent\textbf{Methods compared}: All estimators solve an empirical estimating equation as described in Remark~\ref{rmk: risk interpretation} and are computed using the Newton-Raphson method.
\begin{compactitem}
    \item \texttt{np}: risk minimizer based on the nonparametric risk estimator $\hat{r}_\nonparametric$, that is, logistic regression using only target population data;
    \item \texttt{Xconshift\textunderscore SL}: risk minimizer based on $\hat{r}_\xconshift$ with $\hat{\mathcal{E}}^{-v}$ obtained from Super Learner as in Section~\ref{sec: X con shift sim};
    \item \texttt{Xconshift\textunderscore np}: risk minimizer based on $\hat{r}_\xconshift$ with $\hat{\mathcal{E}}^{-v}$ obtained from logistic regression using only target population data, similarly to \texttt{np};
    \item \texttt{Xconshift\textunderscore updating}: risk minimizer based on $\hat{r}_\xconshift$ with $\hat{\mathcal{E}}^{-v}$ based on the logistic regression estimator in the previous Newton-Raphson iteration.
\end{compactitem}
\texttt{Xconshift\textunderscore np} and \texttt{Xconshift\textunderscore updating} are simple sensible alternatives to \texttt{Xconshift\textunderscore SL} with less computational cost.
Although $\hat{r}_\xconshift$, as a function of $\beta$, converges to a convex function, this estimated function itself might not be convex. So there can be convergence issues with the latter three methods.
We run 200 experiments for each sample size $n=500, 1000, 2000$.

\noindent\textbf{Results}: Fig.~\ref{fig: GLM x con shift} presents the sampling distribution. Experiments with convergence issues (75 runs for \texttt{Xconshift\textunderscore np} and one run for \texttt{Xconshift\textunderscore updating}) or a magnitude greater than 5 (9 runs for \texttt{Xconshift\textunderscore np}) are excluded from Fig.~\ref{fig: GLM x con shift}. Similar to the findings in Section~\ref{sec: X con shift sim}, \texttt{Xconshift\textunderscore SL} appears consistent and approximately normally distributed with a smaller asymptotic variance than \texttt{np}. The alternatives \texttt{Xconshift\textunderscore np} and \texttt{Xconshift\textunderscore updating} of \texttt{Xconshift\textunderscore SL} do not appear superior compared to \texttt{np}. Estimating $\mathcal{E}_*$ well appears important to achieve efficiency gain for estimating equation estimators for concept shift in the features.

\begin{figure}
    \centering
    \includegraphics[scale=0.8]{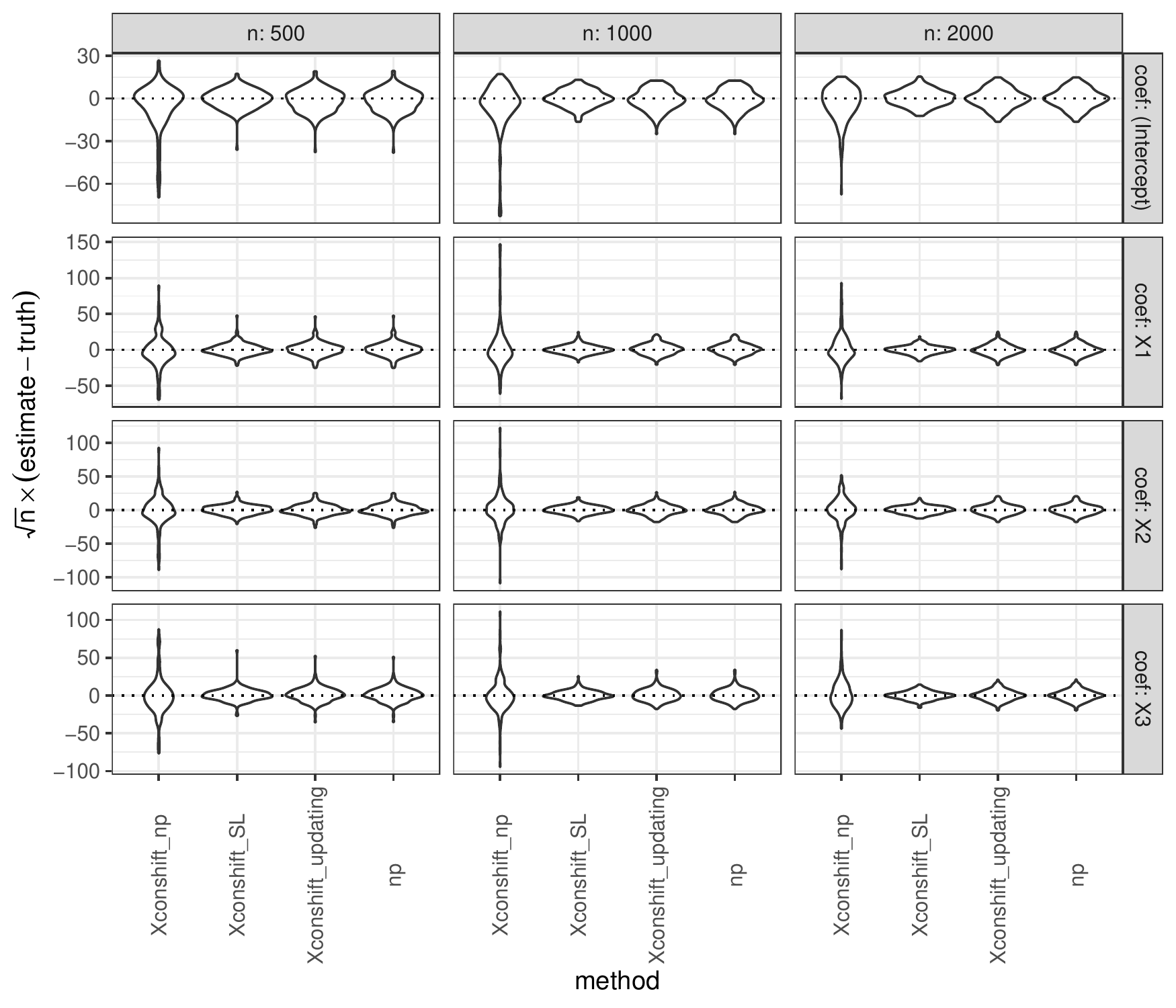}
    \caption{Sampling distribution of the scaled difference between the estimated coefficient and the truth for each of the four coefficients in $\beta_*$, under concept shift in the features.}
    \label{fig: GLM x con shift}
\end{figure}

\subsubsection{Full-data covariate shift} \label{sec: cov shift GLM sim}

\noindent\textbf{Data generating process:} Covariate $X$ and population index $A$ are generated as in Section~\ref{sec: cov shift sim}. We generate a binary label $Y$ as in Section~\ref{sec: X con shift GLM sim}.

\noindent\textbf{Methods compared}: Similar to those in Section~\ref{sec: X con shift GLM sim}:
\begin{compactitem}
    \item \texttt{np}: same as in Section~\ref{sec: X con shift GLM sim};
    \item \textbf{naive}: logistic regression using both source and target data (without weighting);
    \item \texttt{covshift\textunderscore SL}: risk minimizer based on $\hat{r}_\covshift$ with nuisance functions estimated via Super Learner as in Section~\ref{sec: X con shift sim};
    \item \texttt{covshift\textunderscore naive}: same as \texttt{covshift\textunderscore SL} except that $\hat{\mathcal{L}}^{-v}$ is obtained via logistic regression similarly to \texttt{naive};
    \item \texttt{covshift\textunderscore updating}: same as \texttt{covshift\textunderscore SL} except that $\hat{\mathcal{L}}^{-v}$ is based on the logistic regression estimator in the previous Newton-Raphson iteration.
\end{compactitem}
\texttt{naive} is a na\"ive approach to incorporate source data under covariate shift.
\texttt{covshift\textunderscore naive} and \texttt{covshift\textunderscore updating} are simple sensible alternatives to \texttt{covshift\textunderscore SL} with less computational cost.
Convergence issues might also occur with the latter three methods, similarly to Section~\ref{sec: X con shift GLM sim}.
We still run 200 experiments for each sample size $n=500, 1000, 2000$.

\noindent\textbf{Results}: Fig.~\ref{fig: GLM cov shift} presents the sampling distribution. Experiments with convergence issues (7 runs for \texttt{Xconshift\textunderscore updating}) are excluded from Fig.~\ref{fig: GLM x con shift}. Similar to the findings in Section~\ref{sec: cov shift sim}, \texttt{covshift\textunderscore SL} is approximately normal with a smaller variance than \texttt{np}. Due to the misspecification of the logistic model, \texttt{naive} is substantially biased. Although the alternatives \texttt{covshift\textunderscore naive} and \texttt{covshift\textunderscore updating} of \texttt{covshift\textunderscore SL} do not appear asymptotically unbiased, they still appear consistent.

\begin{figure}
    \centering
    \includegraphics[scale=0.8]{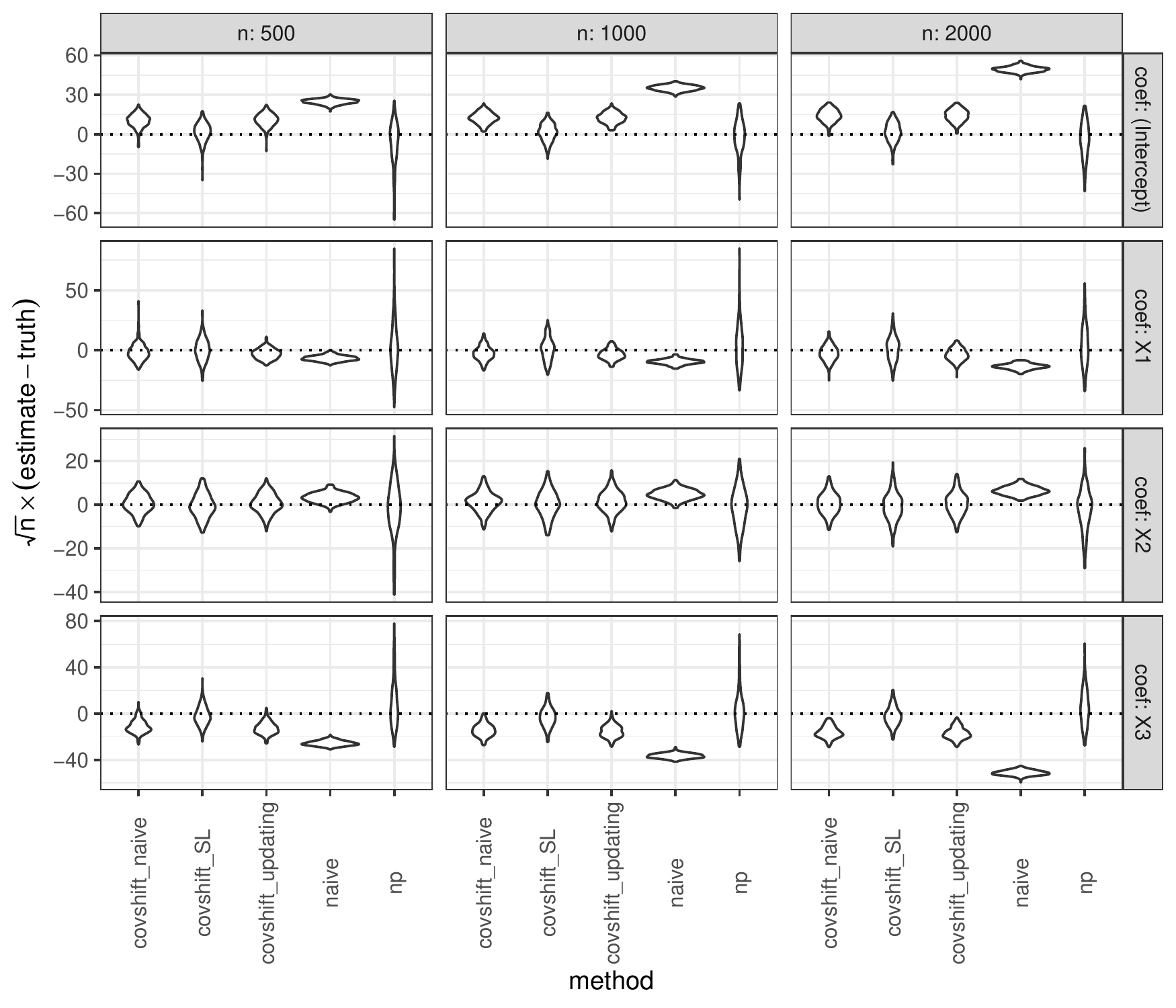}
    \caption{Sampling distribution of the scaled difference between the estimated coefficient and the truth for each of the four coefficients in $\beta_*$, under covariate shift.}
    \label{fig: GLM cov shift}
\end{figure}

\section{Analysis of HIV risk prediction data in South Africa} \label{sec: data analysis}

\subsection{Analysis under each of Conditions~\ref{DScondition: X con shift}--\ref{DScondition: label shift}} \label{sec: data analysis all}

We illustrate our methods by evaluating the performance of HIV prediction models using a dataset from a South African cohort study \citepsupp{Tanser2013}. The study was a large population-based prospective cohort study in KwaZulu-Natal, South Africa. In this study, 16,667 individuals that were HIV-uninfected at baseline were followed up from 2004 to 2011.

In this analysis, we used HIV seroconversion as the outcome $Y$ and used the following covariates: number partners in the past 12 months, current marital status, wealth quintile, age and sex, community antiretroviral therapy (ART) coverage, and community HIV relevance. All covariates are binned and discrete. We set individuals from peri-urban communities with community ART coverage below 15\% as the target population and those from urban and rural communities as the source population. There are 1,418 individuals from the target population and 12,385 from the source population. We used the last observed value for time-varying covariates, and missing data was treated as a separate category \citepsupp{Groenwold2012}.

We randomly selected half of the source population data as the held-out dataset to train an HIV risk prediction model $f$. We then used the other half of the source population data and randomly selected 50 individuals from the target population to estimate the inaccuracy of $f$ in the target population (recall Example~\ref{ex: regression}).
In other words, we had a large source population data and a small target population data.
We applied the nonparametric estimator $\hat{r}_{\nonparametric}$ and our proposed efficient estimator under each of the four most common dataset shift conditions \ref{DScondition: X con shift}--\ref{DScondition: label shift} connecting these two populations.
We used the rest of the target population data as a validation dataset to obtain a good estimate of the risk and evaluate the performance of risk estimators.

In all regressions, to obtain potentially improved nuisance regression function estimators and make sufficient conditions for our estimators' efficiency more plausible, we further added single hidden layer feedforward neural networks with various numbers of hidden nodes into the Super Learner library that we used for the simulations.
With an appropriate number of hidden nodes, such neural networks have been shown to achieve an $\smallo_p(n^{-1/4})$ convergence rate under mild smoothness conditions \citepsupp{Chen1999}.

Since we do not expect the individuals from the two populations to be random draws from a common population, the concept shift assumption (either \ref{DScondition: X con shift} or \ref{DScondition: Y con shift}) might fail and their corresponding efficient estimators might be biased. 
Since covariates are likely to affect the outcome causally and the other causal direction is implausible due to time ordering, Assumption~\ref{DScondition: label shift} might also fail. The covariates may have reasonable predictive power, and thus Assumption~\ref{DScondition: cov shift} may approximately hold.
In the analysis, we also tested whether the efficient estimator under each dataset shift condition is consistent for the risk of interest using the specification test implied by Corollary~\ref{corollary: test}.

\begin{table}
    \centering
    \caption{Risk estimates under at most one dataset shift condition on the source population. The risk estimate from the validation dataset is 0.24 (95\% confidence interval: 0.22--0.26). All confidence intervals (CIs) are Wald intervals.}
    \label{tab: data analysis result}
    \begin{tabular}{l|r|r|r|r}
        Dataset Shift Condition & Estimate & S.E. & 95\% CI & P-value \\
        \hline\hline
        None & 0.24 & 0.060 & (0.12, 0.36) & --- \\
        Concept shift in the features \ref{DScondition: X con shift} & 0.26 & 0.057 & (0.15, 0.38) & 0.29 \\
        Concept shift in the labels \ref{DScondition: Y con shift} & 0.10 & 0.010 & (0.08, 0.12) & 0.02 \\
        Full-data covariate shift \ref{DScondition: cov shift} & 0.19 & 0.026 & (0.14, 0.25) & 0.41 \\
        Full-data label shift \ref{DScondition: label shift} & 0.23 & 0.059 & (0.11, 0.34) & 0.42
    \end{tabular}
\end{table}

The risk estimates and the p-values for testing the four dataset shift conditions are presented in Table~\ref{tab: data analysis result}. As expected, our proposed estimator under concept shift in the labels (\ref{DScondition: Y con shift}) is biased and this dataset shift condition can be rejected at level 0.05, indicating evidence that concept shift in the labels does not hold. 
Concept shift in the features (\ref{DScondition: X con shift}) is not rejected, and our proposed estimator under this condition leads to a similar conclusion to the nonparametric estimator with a confidence interval that is around 5\% shorter.
The label shift assumption (\ref{DScondition: label shift}) is not rejected either, and our proposed estimator under this condition leads to an almost identical conclusion compared to the nonparametric estimator. 
Covariate shift (\ref{DScondition: cov shift}) is also not rejected, and our proposed efficient estimator has a much smaller standard error, leading to a confidence interval that is more than 50\% shorter.
We emphasize that our test may be underpowered, so failure to reject these dataset shift conditions should not be interpreted as strong evidence that these conditions may be plausible.
Since covariate shift is the most plausible assumption among the four \textit{a priori}---although, in this particular case, all point estimates appear consistent except for concept shift in the labels, for which we find evidence of bias---we conclude that using our proposed efficient estimator under a conceivable dataset shift condition may lead to substantial efficiency gains and thus more informative estimates of risk, namely with substantially smaller variance.

\subsection{Semi-synthetic data analysis} \label{sec: data analysis2}

For each of Conditions~\ref{DScondition: X con shift}--\ref{DScondition: label shift}, we also conducted a semi-synthetic data analysis. Using the original data $D$ analyzed in Section~\ref{sec: data analysis}, we generated a semi-synthetic dataset satisfying the dataset shift condition under consideration. We then ran a similar analysis as in Section~\ref{sec: data analysis}. We next describe the semi-synthetic data-generating procedures in more detail and report the analysis results. From our analyses, we conclude that, under a plausible dataset shift condition, our proposed efficient risk estimator that fully exploits the dataset shift condition may lead to substantial efficiency gains and shorter CIs.

\subsubsection{Concept shift in the features (\ref{DScondition: X con shift})} \label{sec: data analysis2 X conshift}

To generate a semi-synthetic data satisfying Condition~\ref{DScondition: X con shift}, we first randomly generated the synthetic population indicator $A$ following the empirical distribution observed in the original data $D$ and then set the outcome $Y$ to be missing for individuals from the synthetic source population ($A=1$). We left the covariate $X$ in both synthetic populations and the outcome $Y$ in the synthetic target population unchanged. In the analysis, we used 50 individuals from the synthetic target population to train a prediction model $f$. We then used all individuals from the synthetic source population and 50 individuals from the synthetic target population to estimate the risk of $f$. We used the rest of the individuals from the synthetic target population as the validation dataset to evaluate the performance of risk estimators.

The risk estimate from the validation dataset is 0.10 with a 95\% confidence interval (CI) of 0.09--0.12. The estimate under no dataset shift condition is 0.06 (S.E.: 0.034, 95\% CI: 0--0.13); the estimate from our proposed estimator is 0.06 (S.E.: 0.033, 95\% CI: 0--0.12). Here, the CIs are truncated to fall in the unit interval $[0,1]$.
The test of Condition~\ref{DScondition: X con shift} yields a p-value $>0.99$, correctly indicating that Condition~\ref{DScondition: X con shift} may be plausible in this semi-synthetic dataset.
The CI based on our proposed estimator is almost identical to that based on the nonparametric estimator.
The small efficiency gain might be attributed to the limited sample size to learn the conditional mean risk function $\mathcal{E}_*$, which is a function of the moderately high-dimensional covariate $X$ and thus difficult to estimate accurately; additional information about the distribution of $X$ cannot improve estimation of $\mathcal{E}_*$.

\subsubsection{Concept shift in the labels (\ref{DScondition: Y con shift})} \label{sec: data analysis2 Y conshift}

Our data-generating procedure and analysis are similar to those in Section~\ref{sec: data analysis2 X conshift}. The only difference is that we set covariate $X$, rather than outcome $Y$, to be missing in the synthetic source population.

The risk estimate from the validation dataset is 0.10 (95\% CI: 0.09--0.12). The nonparametric estimate is 0.12 (S.E.: 0.046, 95\% CI: 0.03--0.21), while the estimate from our proposed estimator is 0.14 (S.E.: 0.035, 95\% CI: 0.07--0.21). The p-value of from the test of Condition~\ref{DScondition: Y con shift} is 0.53. The CI based on our proposed estimator is more than 20\% shorter than that based on the nonparametric estimator.

\subsubsection{Full-data covariate shift (\ref{DScondition: cov shift})} \label{sec: data analysis2 cov shift}

To generate a semi-synthetic dataset satisfying Condition~\ref{DScondition: cov shift}, we first randomly split the original dataset $D$ into two halves $D_1$ and $D_2$. We used $D_1$ to obtain estimates $\hat{g}$ and $\hat{\mathcal{L}}$ of nuisance functions $g_*: x \mapsto P_*(A=0 \mid X=x)$ and $\mathcal{L}_*: x \mapsto \expect_{P_*}[Y \mid X=x] = P_*(Y=1 \mid X=x)$, respectively. With the observed covariates $X$ in $D_2$, we generated synthetic population indicator $A$ and synthetic outcome $Y$ independently given $X$ based on $\hat{g}$ and $\hat{\mathcal{L}}$, respectively. We used this semi-synthetic dataset in the subsequent analysis. We used half of the individuals from the synthetic source population to train a prediction model $f$. We then used the other half of the individuals from the synthetic source population and 50 individuals from the synthetic target population to estimate the risk of $f$. We used the rest of the individuals from the synthetic target population as the validation dataset.

In the validation dataset, the risk is estimated to be 0.22 (95\% CI: 0.19--0.26). The nonparametric risk estimator yields a point estimate of 0.20 with S.E. 0.057 and 95\% CI 0.09--0.31; our proposed estimator yields a point estimate of 0.23 with S.E. 0.030 and 95\% CI 0.17--0.29. Condition~\ref{DScondition: cov shift} is not rejected (p-value=0.51). The CI based on our proposed estimator is about 50\% shorter than that based on the nonparametric estimator.

\subsubsection{Full-data label shift (\ref{DScondition: label shift})} \label{sec: data analysis2 label shift}

The data-generating procedure is similar to that in Section~\ref{sec: data analysis2 cov shift}. A major difference is that, when generating synthetic covariate $X$ and synthetic population indicator $A$ in $D_2$, we sample $X$ and $A$ independently given $Y$ from the empirical distribution of $X \mid Y$ and $A \mid Y$ observed in $D_1$. We could use the empirical distribution because $Y$ is binary.
Super Learner uses cross-validation to select the weight to ensemble learners in the library, and cross-validation can be problematic with ties in the data \citepsupp{Bickel1997}.
To alleviate this issue, for each covariate of each individual in the semi-synthetic data, we independently replaced the observed value with a uniform draw from the range of this covariate with 10\% probability.
Such random corruption for discrete variables has been extensively studied \citep[e.g.,][etc.]{Ma2018,Angluin1988,Aslam1996}.

In the validation dataset, the risk is estimated to be 0.27 (95\% CI: 0.24--0.31). The estimates from the nonparametric estimator and our proposed estimator are 0.20 (S.E.: 0.057, 95\% CI: 0.09--0.31) and 0.21 (S.E.: 0.056, 95\% CI: 0.11--0.32), respectively. Condition~\ref{DScondition: label shift} is not rejected (p-value=0.07). The length of the CI based on our proposed estimator is almost identical to than that based on the nonparametric estimator. The similar performance of these two estimators in this analysis may be explained by the fact that, under label shift (\ref{DScondition: label shift}), a main contribution from the source population is additional data to estimate the conditional mean risk function $\mathcal{L}_{*,Y}$. Since $Y$ is binary, $\mathcal{L}_{*,Y}$ can be estimated fairly accurately with the target population data alone, and the improvement in the risk estimator due to the improved estimation of $\mathcal{L}_{*,Y}$ is limited.

\subsubsection{Model comparison under covariate shift} \label{sec: data analysis compare}

We demonstrate model comparison described in Corollary~\ref{corollary: model comparison} using our proposed methods in the HIV risk prediction data.
We focus on covariate shift because our analysis and prior knowledge suggest that this dataset shift condition may be plausible while all other three may not.
The first prediction model $f^{(1)}$ is the same as in Section~\ref{sec: data analysis all}. The second prediction model $f^{(2)}$ is logistic regression fitted using the same training data, all from the source population, as the first prediction model.
All other settings are identical to Section~\ref{sec: data analysis all}.
We compare our proposed risk difference estimator for covariate shift and the nonparametric estimator.

In the held-out validation dataset, the risk difference $r_*^{(1)}-r_*^{(2)}$ is estimated to be 0.0022 (95\% CI: -0.0016, 0.0060).
The predictions from the two models coincide on all 50 target population data points in the data used to estimate the risk difference.
Therefore, the nonparametric estimate is zero with a standard error of zero. Inference based on this nonparametric estimate is likely to be anti-conservative, because the two models are similar but not identical, and the size of the target population data is too small to distinguish them.
In contrast, the estimate based on our proposed method is -0.0049 (95\% CI: -0.0123, 0.0025). Our proposed method leads to a consistent conclusion with the validation dataset and outputs an appropriate non-degenerate CI, indicating that the inaccuracy of the two prediction models is very similar but not necessarily exactly identical.

\section{Other dataset shift conditions} \label{sec: other DS conditions}

In addition to the four dataset shift conditions \ref{DScondition: X con shift}--\ref{DScondition: label shift}, 
we study a few other classes of conditions from prior work 
that are model-agnostic and lead to semiparametric models. 
For all these conditions, data $Z$ consists of covariate/feature $X$ and outcome/label $Y$; only one source population is present, and so the population indicator $A$ is binary.
In each following subsection, we first present a condition and then characterize semiparametric efficiency bounds under this condition.
Theorem~7.2 in Part~III of \citetsupp{Bolthausen2002} ensures that one-step estimators based on these efficiency bounds achieve efficiency under conditions on the nuisance estimators. We leave more thorough study of these conditions for future work.

\subsection{Posterior drift conditions}

\cite{Scott2019} introduced two similar conditions for a binary label $Y \in \{0,1\}$. The more general one is the following \emph{covariate shift posterior drift}.
\begin{DScondition}[Covariate Shift Posterior Drift] \label{DScondition: cov shift posterior drift}
    There exists an unknown differentiable function $\tilde{\phi}_*$ such that for all $x$,
    \begin{align}
        & P_*(Y=1 \mid X=x,A=1) = \tilde{\phi}_*\left( P_*(Y=1 \mid X=x,A=0) \right), \label{eq: cov shift posterior drift} \\
        & \left. \frac{\intd \tilde{\phi}_*(\theta)}{\intd \theta} \right|_{\theta = P_*(Y=1 \mid X=x,A=0)} \neq 0 \nonumber
    \end{align}
\end{DScondition}

As usual, in the above notations, the subscript $*$ stands for the true data-generating distribution $P_*$.
We have slightly modified the original condition in \cite{Scott2019}; see Supplement~\ref{sec: other DS conditions2} for more details. 
Condition~\ref{DScondition: cov shift posterior drift} models labeling errors in the training data ($A=1$). 
In general, in the source population data where objects are manually labeled, labeling errors may occur, 
and these errors are often related to the true label in such a way that, given any object with a high probability of having the true label being one, humans are also likely to label it as one. 
As shown in Section~3 of \cite{Scott2019}, Condition~\ref{DScondition: cov shift posterior drift} is implied by label shift (\ref{DScondition: label shift}).

We have the following results on the efficient influence function for estimating $r_*$ under Condition~\ref{DScondition: cov shift posterior drift}, the proof of which can be found in Supplement~\ref{sec: posterior drift proof}.
\begin{theorem} \label{thm: posterior drift EIF}
    Recall $\rho_* := P_*(A=0)$ and $\mathcal{E}_*: x \mapsto \expect_{P_*}[\ell(X,Y) \mid X=x,A=0]$. For any functions $f_1,f_2 \in L^2(P_{*,X})$, define
    \begin{align*}
        \theta_*(x) &:= \logit P_*(A=1 \mid X=x), \\
        F(f_1,f_2) &: o \mapsto a \left\{ y-\expit \circ \phi_* \circ \theta_*(x)  \right\} f_1(x) + (1-a) \left\{ y - \expit \circ \theta_*(x)  \right\} f_2(x), \\
        v(x;a) &:= \Var_{P_*}(Y \mid A=a,X=x) P_*(A=a \mid X = x), \\
        \kappa(x) &:= \frac{v(x;0) \expect[v(X;1) \mid \theta_*(X) = \theta_*(x)]}{v(x;1) \{ \phi' \circ \theta_*(x) \}^2} \frac{\ell(x,1)-\ell(x,0)}{\rho_*}, \\
        \mu(x) &:= \expect[v(X;1) \mid \theta_*(X) = \theta_*(x)] \left\{ 1 + \frac{v(x;0)}{v(x;1) \{ \phi' \circ \theta_*(x) \}^2} \right\}.
    \end{align*}
    Let $\mathscr{A}$ be the linear integral operator such that, for any function $g \in L^2(P_{*,X})$,
    $$\mathscr{A} g (x) = \expect_{P_*} \left[ g(X) \frac{v(X;1)}{\mu(X)} \mid \theta_*(X) = \theta_*(x) \right].$$
    Suppose that the following Fredholm integral equation of the second kind \citep[see, e.g.,][]{kress2014linear} for unknown $\zeta: \mathcal{X} \to \real$ has a unique solution $\zeta_*$:
    \begin{equation}
        \mathscr{A} \zeta + \kappa = \zeta. \label{eq: F.I.E 2nd}
    \end{equation}
    Let $g_1: x \mapsto - \mathscr{A} \zeta_*(x) \cdot \phi_*' \circ \theta_*(x)/\mathscr{A}\mu(x) + \phi_*' \circ \theta_*(x) \cdot \zeta_*/\mu$ and $g_2 := \zeta_*/\mu$.
    Under Condition~\ref{DScondition: cov shift posterior drift}, the efficient influence function for estimating $r_*$ is $D_\covshiftpostdrift: o \mapsto (1-a) \left\{ \mathcal{E}_*(x) - r_* \right\} / \rho_* + F(g_1,g_2)(o)$.
\end{theorem}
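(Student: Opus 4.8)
The plan is to derive the efficient influence function (EIF) $D_\covshiftpostdrift$ for estimating $r_* = \expect_{P_*}[\ell(X,Y)\mid A=0]$ under Condition~\ref{DScondition: cov shift posterior drift} by the standard semiparametric recipe: identify the tangent space, compute the orthogonal projection of any candidate gradient onto it, and verify the pathwise derivative identity. I would first parametrize the model conveniently. Condition~\ref{DScondition: cov shift posterior drift} imposes a smooth one-dimensional link between the two regression functions $P_*(Y=1\mid X,A=1)$ and $P_*(Y=1\mid X,A=0)$, while leaving $P_{*,X}$, $P_{*,A\mid X}$ and the target-population conditional of $Y$ fully nonparametric. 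Writing everything in terms of $\theta_*(x)=\logit P_*(A=1\mid X=x)$, the target regression $\expit\circ\theta_*(\cdot)$-analogue, and the tilting function $\phi_*$ (the composition used in the statement), the likelihood factorizes into a marginal $X$ piece, an $A\mid X$ piece, and two Bernoulli pieces for $Y\mid X,A=a$ that are coupled only through $\phi_*$. I would then write out the score operator, splitting scores into: (i) scores for $P_{*,X}$, spanning $\tangent_X = L^2_0(P_{*,X})$; (ii) scores for the label law in the source and target that are unconstrained \emph{except} for the posterior-drift coupling; and (iii) scores for $A\mid X$. The coupling in (ii) is what makes the tangent space a proper subspace and forces the integral equation.

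The core computation is to find, for the fixed functionals involved, the efficient gradient as the projection of a crude ``naive'' gradient onto the tangent space. A natural initial gradient is $o\mapsto (1-a)\{\ell(x,y)-\mathcal{E}_*(x)\}/\rho_* + \mathcal{E}_*(x)-r_*$, which is the nonparametric EIF ignoring the source data; its projection onto $\tangent_X^\perp$ within the model's tangent space yields the correction term. Because the label-law scores are constrained so that the source score at $x$ and the target score at $x$ are tied via $\phi_*'\circ\theta_*(x)$, the projection must solve a weighted best-approximation problem \emph{along level sets of $\theta_*$}. This is exactly where the conditional expectations $\expect[\,\cdot\mid\theta_*(X)=\theta_*(x)]$ and the variance weights $v(x;a)=\Var_{P_*}(Y\mid A=a,X=x)P_*(A=a\mid X=x)$ enter: minimizing the $L^2(P_*)$ residual over the constrained score directions is a quadratic problem whose normal equations, after eliminating the source component in favor of the target component, reduce to the Fredholm equation of the second kind $\mathscr{A}\zeta+\kappa=\zeta$ for the scalar profile $\zeta$. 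The definitions of $\kappa$ and $\mu$ are precisely the ``load'' and the ``Gram weight'' of this quadratic program; $g_1,g_2$ then package the minimizer back into the two Bernoulli-residual directions, giving $F(g_1,g_2)$ as the constrained correction and $D_\covshiftpostdrift$ as the sum.

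The steps in order would be: (1) write the semiparametric model in the $(P_X,\theta_*,\text{target }Y\text{-law},\phi_*)$ parametrization and compute all score functions, carefully encoding the posterior-drift constraint as a linear relation on the $Y$-scores; (2) characterize the tangent space $\tangent$ as an orthogonal sum $\tangent_X\oplus\tangent_{A\mid X}\oplus\tangent_Y^{\mathrm{constr}}$, where the last summand is described by the level-set coupling; (3) compute the pathwise derivative of $r_*$ and exhibit one gradient; (4) project that gradient onto $\tangent$ — the hard part — reducing the projection onto $\tangent_Y^{\mathrm{constr}}$ to the Fredholm equation \eqref{eq: F.I.E 2nd} via the stated weights, assuming (as hypothesized) that a unique solution $\zeta_*$ exists; (5) assemble $g_1,g_2$ and verify that the resulting $D_\covshiftpostdrift$ is indeed in $\tangent$ and satisfies the gradient identity $\expect_{P_*}[D_\covshiftpostdrift\cdot S]=\frac{d}{dt}r_{P_t}\big|_{t=0}$ for every score $S\in\tangent$, which pins it down as \emph{the} EIF. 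I expect step (4) to be the main obstacle: translating the functional-analytic projection onto the constrained $Y$-score subspace into the concrete integral operator $\mathscr{A}$ requires carefully tracking how the two Bernoulli residual spaces, weighted by $v(x;0)$ and $v(x;1)$ and glued along $\theta_*$-level sets, combine — in particular showing that the optimal source direction is a $\phi_*'$-rescaling of the target direction plus a level-set-averaged correction, which is what produces the $-\mathscr{A}\zeta_*\cdot\phi_*'/\mathscr{A}\mu$ term in $g_1$. Handling the level-set conditioning rigorously (measurability, the $0/0$ convention when $v(x;1)=0$, and well-posedness of $\mathscr{A}$ as a bounded operator on $L^2(P_{*,X})$) is the delicate bookkeeping that the full proof must carry out.
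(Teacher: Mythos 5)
Your plan follows essentially the same route as the paper's proof: write the likelihood with the posterior-drift coupling encoded through $\phi_*\circ\theta_*$, characterize the tangent space as $\tangent_X\oplus\tangent_{A\mid X}$ plus a constrained $Y\mid X,A$ subspace spanned by scores of the form $F(\tilde\lambda_1\circ\theta_*+\phi_*'\circ\theta_*\cdot\tilde\lambda_2,\tilde\lambda_2)$, compute the orthogonal complement, and obtain the EIF by projecting the nonparametric gradient; the orthogonality conditions eliminate the source-direction profile in favor of the target one and collapse, exactly as you anticipate, into the Fredholm equation $\mathscr{A}\zeta+\kappa=\zeta$ with the $v(\cdot;a)$-weights and $\theta_*$-level-set conditioning. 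One correction: your proposed ``natural initial gradient'' $(1-a)\{\ell(x,y)-\mathcal{E}_*(x)\}/\rho_*+\mathcal{E}_*(x)-r_*$ is the concept-shift EIF and is not a gradient under Condition~\ref{DScondition: cov shift posterior drift} (it is not even mean-zero when $A\not\independent X$); the correct starting point is $D_\nonparametric(\rho_*,r_*)$, whose projection onto $\tangent_{A,X}$ is $(1-a)\{\mathcal{E}_*(x)-r_*\}/\rho_*$, matching the first term of $D_\covshiftpostdrift$ in the theorem — executing your step (4) on that gradient recovers the paper's argument verbatim.
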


The existence and uniqueness of the solution to \eqref{eq: F.I.E 2nd} can be shown using, for example, Theorem~2.14 or Corollary~3.5 of \cite{kress2014linear}.

The other dataset shift condition in \cite{Scott2019}, \emph{posterior drift}, is similar to Condition~\ref{DScondition: cov shift posterior drift}. We present the corresponding results in Supplement~\ref{sec: other DS conditions2}.

\subsection{Location-scale shift conditions}

\cite{Zhang2013} introduced two dataset shift conditions for vector-valued covariates $X \in \real^d$. The more general condition is \emph{location-scale generalized target shift}.

\begin{DScondition}[Location-Scale Generalized Target Shift] \label{DScondition: ls generalized target shift}
    There exists an unknown function $W_*: \mathcal{Y} \to \real^{d \times d}$ whose values are invertible matrices, 
    and an unknown function $b_*: \mathcal{Y} \to \real^d$ such that, 
    for all $y\in \mY$,
    $$W_*(Y) X + b_*(Y)\mid Y=y, A=1 \quad \text{and} \quad X \mid Y=y, A=0$$
    are identically distributed.
\end{DScondition}

Condition~\ref{DScondition: ls generalized target shift} imposes a specific form on the relationship between the two distributions $X \mid Y,A=1$ and $X \mid Y,A=0$, a location-scale shift, and is thus more general than label shift (\ref{DScondition: label shift}).
We have also slightly modified the original condition in \cite{Zhang2013}; see Supplement~\ref{sec: other DS conditions2}.

We next present the results on the efficient influence function for estimating $r_*$ under the above conditions, the proof of which can be found in Supplement~\ref{sec: location-scale shift proof}.
For any differentiable function $f: \real^d \to \real$, let $\partial_i$ denote  partial differentiation with respect to the $i$-th variable, and let $f'$ or $\partial f(x)/\partial x$ denote the gradient of $f$. 
Let $v_i$ denote the $i$-the entry of a vector $v$, $I_{d\times d}$ be the $\real^{d \times d}$-identity matrix, 
and $1_d$ be the $\real^d$-vector of ones.

\begin{theorem} \label{thm: ls shift EIF}
    Suppose that the distribution of $X \mid A=a,Y=y$ has a differentiable Lebesgue density supported on $\real^d$ for all $a \in \{0,1\}$ and $y \in \mathcal{Y}$. Recall $\rho_* := P_*(A=0)$, $g_{*,Y}: y \mapsto P_*(A=0 \mid Y=y)$ and $\mathcal{E}_{*,Y}: y \mapsto \expect_{P_*}[\ell(X,Y) \mid Y=y,A=0]$.  
    Let $\tau_*(x \mid y)$ be the Lebesgue density of $X \mid A=0,Y=y$ under $P_*$, and define\footnote{When taking derivatives of $\tau_*$ or $\log \tau_*$, we treat $\tau_*$ as a function of $x$ with $y$ fixed; for example, $(\log \tau_*)'$ denotes the function $(x \mid y) \mapsto \partial_\sharp \log \tau_*(\sharp \mid y) |_{\sharp=x} \in \real^d$.}
    \begin{align*}
        \psi_y(x) &:= W_*(y) x + b_*(y), \qquad
        \Phi(x,y) := \left\{ \ell(x,y) - \mathcal{E}_{*,Y}(y) \right\}/\rho_*, \\
        \Gamma(x,y) &:= (x-b_{*}(y)) \cdot \partial_1 \log \tau_*(x \mid y),\quad \Lambda(x,y) :=  I_{d\times d} + (x - b_*(y)) \{ (\log \tau_*)'(x \mid y)\}^\top,\\
        a(y) &:= \begin{pmatrix}
            \expect_{P_*} \left[ \Phi(X,Y) \left\{ 1 +\Gamma(X,Y)\right\} \mid A=0,Y=y \right] \\
            \expect \left[ \Phi(X,Y) \cdot \partial_1 \log \tau_*(X \mid Y) \mid A=0,Y=y \right]
        \end{pmatrix}, \\
        M_{11}(y) &:= \expect_{P_*} \Big[ \tr \left( W_*(Y)^{-1} \Lambda(X,Y) \right) \cdot
        \left\{ 1 +\Gamma(X,Y)\right\} \mid A=0,Y=y \Big], \\
        M_{12}(y) &:= \expect_{P_*} \Big[ \{ (\log \tau_*)'(X \mid Y) \}^\top 1_d \cdot  \left\{ 1 + 
        \Gamma(X,Y) (X_1-b_{*,1}(Y)) \right\} \mid A=0,Y=y \Big], \\
        M_{21}(y) &:= \expect_{P_*} \Big[ \tr \left( W_*(Y)^{-1} \Lambda(X,Y) \right) \cdot
        \partial_1 \log \tau_*(X \mid Y) \mid A=0,Y=y \Big], \\
        M_{22}(y) &:= \expect_{P_*} \left[ \{ (\log \tau_*)'(X \mid Y) \}^\top 1_d \cdot \partial_1 \log \tau_*(X \mid Y) \mid A=0,Y=y \right], \\
        M(y) &:= \begin{pmatrix}
            M_{11}(y) & M_{12}(y) \\
            M_{21}(y) & M_{22}(y)
        \end{pmatrix}, \qquad
        c(x,y) := \begin{pmatrix}
        \tr \left( W_*(y)^{-1} \Lambda(x,y) \right) \\
        \{ (\log \tau_*)'(x \mid y) \}^\top 1_d
    \end{pmatrix}, \\
    N(o) &:= 
    (1-a) \left\{ (g_{*,Y}(y)) \Phi(x,y) - (1-g_{*,Y}(y)) c(x,y)^\top M(y)^{-1} a(y) \right\} \\
            &\quad+ a g_{*,Y}(y) \left\{ \Phi(\psi_y(x),y) + c(\psi_y(x), y) M(y)^{-1} a(y) \right\}.
    \end{align*}
 Under Condition~\ref{DScondition: ls generalized target shift}, the efficient influence function for estimating $r_*$ is 
 $D_\LSgentargetshift: o \mapsto (1-a) \left\{ \mathcal{E}_{*,Y}(y) - r_* \right\}/\rho_* +N(o)$.
\end{theorem}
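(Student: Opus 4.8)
\textbf{Proof proposal for Theorem~\ref{thm: ls shift EIF}.}

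The plan is to compute the efficient influence function by the standard projection approach of semiparametric theory: first describe the tangent space $\tangent$ induced by Condition~\ref{DScondition: ls generalized target shift}, then write down a ``naive'' gradient (influence function of some regular estimator in a larger model), and finally project it onto $\tangent$, which amounts to subtracting its projection onto $\tangent^\perp$. The first reduction is to condition on $Y$: since the constraint is entirely on the conditional laws $X\mid Y,A$, with $(Y,A)$ and the two conditionals $X\mid Y,A=0$, $X\mid Y,A=1$ linked only through the invertible location-scale map $\psi_y(x)=W_*(y)x+b_*(y)$, the model factorizes and the tangent space decomposes as an orthogonal sum $\tangent_{Y} \oplus \tangent_{A\mid Y} \oplus \tangent_{X\mid Y,A}$, where only the last block carries the shift restriction. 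Concretely, a score $s$ in the $X\mid Y,A$-block must satisfy, for $P_*$-a.e.\ $y$, a linking identity: the score of $X\mid Y=y,A=1$ evaluated at $\psi_y(x)$ is determined by the score of $X\mid Y=y,A=0$ at $x$ up to contributions from perturbing $W_*(y)$ and $b_*(y)$, which are finite-dimensional-per-$y$ nuisance directions. Writing these out using the change-of-variables Jacobian produces exactly the operators $\Lambda$, $\Gamma$, and the vector $c(x,y)$ appearing in the statement.

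Next I would identify $\tangent^\perp$, the orthogonal complement within $L^2_0(P_*)$. Because the restriction only couples the two $A$-conditionals given $Y$, the orthocomplement lives inside the $X$-fluctuation directions given $(Y,A)$; a function $h$ is in $\tangent^\perp$ iff, conditionally on $Y=y$, a pairing of $h(\cdot,y,0)$ against a generic mean-zero direction $u(x)$ on $X\mid Y=y,A=0$ is cancelled by the corresponding pairing on the $A=1$ side after the change of variables, \emph{modulo} the finite-dimensional directions generated by $c(x,y)$. This is what forces $h$ to be expressible, on each $Y$-slice, through the $2\times 1$ vector $c$ contracted with a $y$-dependent coefficient; integrating the orthogonality constraints against these directions yields the $2\times2$ linear system with matrix $M(y)$ (whose entries $M_{11},\dots,M_{22}$ are precisely the conditional second moments of the $\Lambda$- and $(\log\tau_*)'$-type statistics) and right-hand side $a(y)$ built from the target functional. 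Solving $M(y)\beta(y)=a(y)$ gives the coefficients, and projecting the naive gradient onto $\tangent^\perp$ and subtracting produces the correction term $N(o)$; adding back the ``outer'' piece $(1-a)\{\mathcal{E}_{*,Y}(y)-r_*\}/\rho_*$, which is the influence function for the $Y$-marginal part of $r_*=\expect_Q[\ell]$, gives $D_\LSgentargetshift$.

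The main obstacle I expect is correctly handling the change of variables in the $X\mid Y,A=1$ block and keeping the bookkeeping of the Jacobian-derivative terms consistent. Perturbing the law of $X\mid Y=y,A=1$ through both a score on the base density and through the nuisance parameters $W_*(y),b_*(y)$ means the relevant score contains a term $\tr(W_*(y)^{-1}\Lambda(x,y))$ coming from differentiating $\log|\det W_*(y)|$ together with the density's log-derivative, plus the $(x-b_*(y))(\log\tau_*)'$ term; getting the signs and the transpose structure right so that the resulting $M(y)$ is symmetric positive semidefinite (hence invertible under the stated regularity, via Cauchy--Schwarz non-degeneracy) is the delicate part. A secondary technical point is justifying that the projection is well-defined pointwise in $y$ and that the assembled $N$ indeed lies in $L^2_0(P_*)$ and is orthogonal to all of $\tangent^\perp$; this uses the assumed differentiability of the densities on all of $\real^d$ and standard dominated-convergence arguments to differentiate under the integral. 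Once the tangent-space geometry is pinned down, verifying that $D_\LSgentargetshift$ is a gradient (i.e.\ satisfies the pathwise-derivative identity for $r_*$ along every submodel) and lies in $\tangent$ is a routine check.
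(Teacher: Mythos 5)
Your proposal follows essentially the same route as the paper's proof: decompose the tangent space as $\tangent_{Y}\oplus\tangent_{A\mid Y}\oplus\tangent_{X\mid Y,A}$ with the location-scale restriction confined to the last block, characterize the orthocomplement via the change of variables $z=\psi_y(x)$ (which is where the $\tr(W_*(y)^{-1}\Lambda(x,y))$ and $(\log\tau_*)'$ terms arise), and obtain the correction $N$ by reducing the per-$y$ nuisance directions in $(W_*(y),b_*(y))$ to the two scalar summaries collected in $c(x,y)$ and solving the resulting $2\times 2$ system $M(y)\beta(y)=-a(y)$, exactly as the paper does. One caution: the paper neither establishes nor uses symmetry or positive semidefiniteness of $M(y)$ (as written $M_{12}$ and $M_{21}$ are structurally different expressions), so invertibility cannot be argued via a Gram-matrix/Cauchy--Schwarz argument as you suggest; it is effectively a regularity assumption.
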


\citetsupp{Zhang2013} introduced another similar dataset shift condition, \emph{location-scale conditional shift}. We present the corresponding results in Supplement~\ref{sec: other DS conditions2}.

\subsection{Invariant-density-ratio-type conditions}

\cite{Tasche2017} introduced a dataset shift condition termed \textit{invariant density ratio} for a binary label $Y \in \{0,1\}$. Motivated by this condition, we consider the following weaker and more general version.
\begin{DScondition}[Invariant density ratio shape] \label{DScondition: invariant density ratio shape}
    The following equality between two Radon-Nikodym derivatives holds:
    $$\frac{\intd P_{*,X \mid Y=0,A=0}}{\intd P_{*,X \mid Y=1,A=0}} = \const_* \frac{\intd P_{*,X \mid Y=0,A=1}}{\intd P_{*,X \mid Y=1,A=1}}$$
    for an unknown constant
    $$\const_* = \left\{ \int \frac{\intd P_{*,X \mid Y=0,A=1}}{\intd P_{*,X \mid Y=1,A=1}}(x) P_{*,X \mid Y=1,A=0}(\intd x) \right\}^{-1} > 0$$
\end{DScondition}
The constant $\const_*$ is a normalizing constant to ensure that $P_{*,X \mid Y=0,A=a}$ is a probability measure for each $a \in \{0,1\}$.
The above condition specializes to the original invariant density ratio condition in \cite{Tasche2017} when $\const_*=1$.
See Condition~\ref{DScondition: invariant density ratio} in the Supplemental material for a formal statement.
Both these conditions are implied by, and thus weaker than, label shift (\ref{DScondition: label shift}). Invariant density ratio also coincides with covariate shift (\ref{DScondition: cov shift}) under equality between the outcome prevalence in the two populations, namely $\Prob_{P_*}(Y=1 \mid A=1) = \Prob_{P_*}(Y=1 \mid A=0)$ \citepsupp{Tasche2017}.

We have the following result on the efficient influence function for estimating $r_*$ under Condition~\ref{DScondition: invariant density ratio shape}.

\begin{theorem} \label{thm: invariant density ratio shape}
    Recall $\rho_* := P_*(A=0)$ and $\mathcal{E}_{*,Y}: y \mapsto \expect_{P_*}[\ell(X,Y) \mid Y=y,A=0]$. Let $\eta_*(x \mid y,a)$ denote the density of $P_{*,X \mid Y=y,A=a}$ (with respect to a dominating measure $\mu$) evaluated at $x$, $\iota_*: x \mapsto \eta_*(x \mid 1,0) \eta_*(x \mid 0,1) / \eta_*(x \mid 1,1)$ denote the unnormalized density of $P_{*,X \mid Y=0,A=0}$, and $w_*(y,a) := \Prob_{P_*}(Y=y,A=a)$.
    Define
    \begin{align*}
        \mathcal{I}(x) &:= \frac{\ell(x,0) - \mathcal{E}_{*,Y}(0) + \expect_{P_*}[\ell(X,1) \mid Y=0,A=0] - \ell(x,1)}{\rho_*}, \\
        \mathcal{J}(x) &:= \frac{1}{w_*(0,0) \const_* \iota(x)} + \frac{1}{w_*(1,0) \eta_*(x \mid 1,0)} + \frac{1}{w_*(0,1) \eta_*(x \mid 0,1)} + \frac{1}{w_*(1,1) \eta_*(x \mid 1,1)}, \\
        \mathcal{D} &:= \frac{ \expect_{P_*} \left[ \left( 1- \frac{1}{\mathcal{J}(X) w_*(0,0) \const_* \iota_*(X)} \right) \mathcal{I}(X) \mid Y=0,A=0 \right] }{ \expect_{P_*} \left[ \frac{1}{\mathcal{J}(X) w_*(0,0) \const_* \iota_*(X)} \mid Y=0,A=0 \right] }, \\
        \mathscr{F} &: x \mapsto \frac{\mathcal{I}(x) + \mathcal{D}}{\mathcal{J}(x)}, \\
        H_{1,1} &: x \mapsto - \frac{\mathscr{F}(x)}{w_*(1,1) \eta_*(x \mid 1,1)}, \\
        H_{0,1} &: x \mapsto \frac{\mathscr{F}(x)}{w_*(0,1) \eta_*(x \mid 0,1)}, \\
        H_{1,0} &: x \mapsto \frac{\ell(x,1)-\mathcal{E}_{*,Y}(1)}{\rho*} + \frac{\mathscr{F}(x)}{w_*(1,0) \eta_*(x \mid 1,0)}, \\
        H_{0,0} &: x \mapsto H_{1,0}(x) + H_{0,1}(x) - H_{1,1}(x) - \expect_{P_*}[ H_{1,0}(X) + H_{0,1}(X) - H_{1,1}(X) \mid A=0, Y=0].
    \end{align*}
    Under Condition~\ref{DScondition: invariant density ratio shape}, the efficient influence function for estimating $r_*$ is $D_\IDRS: o \mapsto (1-a) \left\{ \mathcal{E}_{*,Y}(y) - r_* \right\}/\rho_* + \sum_{(y',a') \in \{0,1\}^2} \ind(y=y',a=a') H_{y,a}(x)$.
\end{theorem}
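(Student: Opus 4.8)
The plan is a direct semiparametric efficiency computation: parametrize the model compatibly with Condition~\ref{DScondition: invariant density ratio shape}, compute the tangent space, exhibit a convenient gradient, and project it onto the tangent space.

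First I would parametrize the observed-data law by the four cell probabilities $w_*(y,a)=\Prob_{P_*}(Y=y,A=a)$, ranging over the $3$-dimensional simplex, together with the conditional densities $\eta_*(\cdot\mid y,a)$ of $X\mid Y=y,A=a$ (all with respect to $\mu$). Condition~\ref{DScondition: invariant density ratio shape} leaves three of these densities unconstrained—say $\eta_*(\cdot\mid 1,0)$, $\eta_*(\cdot\mid 0,1)$, $\eta_*(\cdot\mid 1,1)$—and forces the fourth to be the normalization of $\iota_*(x)=\eta_*(x\mid 1,0)\eta_*(x\mid 0,1)/\eta_*(x\mid 1,1)$, i.e.\ $\eta_*(x\mid 0,0)=\const_*\iota_*(x)$. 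Differentiating a one-parameter submodel then shows that the tangent space decomposes orthogonally as $\mathcal{T}=\mathcal{T}_1\oplus\mathcal{T}_2$, where $\mathcal{T}_1$ is the three-dimensional space of mean-zero functions of $(Y,A)$ and $\mathcal{T}_2$ consists of scores $\sum_{(y,a)}\ind(Y=y,A=a)s_{y,a}(X)$ with $\expect_{P_*}[s_{y,a}(X)\mid Y=y,A=a]=0$, with $s_{1,0},s_{0,1},s_{1,1}$ free and $s_{0,0}=s_{1,0}+s_{0,1}-s_{1,1}$ recentered to have conditional mean zero—this last identity being precisely the score induced on $\eta_*(\cdot\mid 0,0)$ by the multiplicative constraint.

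Next I would take the nonparametric influence function $D_\nonparametric(\rho_*,r_*)$ as a gradient: since our model is a submodel of the nonparametric model, the pathwise-derivative identity for $r_*$ satisfied by $D_\nonparametric$ on all nonparametric submodels holds a fortiori on $\mathcal{T}$, so $D_\nonparametric$ is a valid gradient and the efficient influence function is its projection onto $\overline{\mathcal{T}}$. Writing $D_\nonparametric(O)=(1-A)\{\mathcal{E}_{*,Y}(Y)-r_*\}/\rho_*+(1-A)\{\ell(X,Y)-\mathcal{E}_{*,Y}(Y)\}/\rho_*$, the first summand already lies in $\mathcal{T}_1\subseteq\overline{\mathcal{T}}$ and is untouched—producing the term $(1-a)\{\mathcal{E}_{*,Y}(y)-r_*\}/\rho_*$ of $D_\IDRS$—while the second is orthogonal to $\mathcal{T}_1$, so the task reduces to projecting it onto $\overline{\mathcal{T}_2}$. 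I would posit the projection has the cellwise form $\sum_{(y,a)}\ind(Y=y,A=a)H_{y,a}(X)$ with $\expect_{P_*}[H_{y,a}\mid Y=y,A=a]=0$ and impose orthogonality of the residual to each of $s_{1,0},s_{0,1},s_{1,1}$. Because varying $s_{1,0}$ (resp.\ $s_{0,1}$, $s_{1,1}$) also moves $s_{0,0}$ with coefficient $+1$ (resp.\ $+1$, $-1$), each condition couples the residual's component in the corresponding cell with its component in cell $(0,0)$, weighted by $\eta_*(\cdot\mid 0,0)=\const_*\iota_*$, giving a density-weighted relation between $H_{1,0}$ (resp.\ $H_{0,1},H_{1,1}$) and $H_{0,0}$ up to an additive constant. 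Substituting these into $H_{0,0}=H_{1,0}+H_{0,1}-H_{1,1}$ (recentered) collapses to a \emph{pointwise} identity in which $H_{0,0}$ multiplies the sum of reciprocal densities $\mathcal{J}$, which is therefore solved in closed form; the surviving additive and recentering constants are then pinned down by the four conditional-mean-zero requirements, and the particular combination that remains is the scalar $\mathcal{D}$. Unwinding produces $\mathscr{F}=(\mathcal{I}+\mathcal{D})/\mathcal{J}$ and the stated $H_{1,1},H_{0,1},H_{1,0}$, with $H_{0,0}$ the recentered combination.

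Finally I would close the argument by verifying directly that the resulting $D_\IDRS$ lies in $\overline{\mathcal{T}}$—each $H_{y,a}$ integrates to zero against $\eta_*(\cdot\mid y,a)$ and the four components obey the $\mathcal{T}_2$-constraint—and that $D_\IDRS-D_\nonparametric(\rho_*,r_*)\perp\mathcal{T}$, so $D_\IDRS$ is a gradient; by the standard characterization, a gradient lying in $\overline{\mathcal{T}}$ is the efficient influence function. The main obstacle is the projection onto $\overline{\mathcal{T}_2}$: the multiplicative constraint prevents $\mathcal{T}_2$ from splitting over the four $(Y,A)$-cells, so one must carry the cross-cell couplings and the normalizing constant $\const_*$ through the entire computation and keep precise track of the recentering constants—one of which materializes as $\mathcal{D}$—while also checking that $\mathcal{J}$ is finite and bounded away from zero (ensured by the stated positivity of the conditional densities) so that $\mathscr{F}$ is well defined.
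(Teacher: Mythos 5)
Your proposal is correct and follows essentially the same route as the paper: parametrize by the cell probabilities and the three free conditional densities with $\eta_*(\cdot\mid 0,0)=\const_*\iota_*$, derive the tangent space with the cross-cell coupling $s_{0,0}=s_{1,0}+s_{0,1}-s_{1,1}$ (recentered), and project $D_\nonparametric(\rho_*,r_*)$ onto it, reducing to an integral equation whose solution involves $\mathcal{J}$ and the recentering scalar $\mathcal{D}$. The paper phrases the projection step via an explicit characterization of the orthogonal complement as a one-parameter family indexed by $f_{0,0}$ rather than via orthogonality to each free score, but these are equivalent and lead to the same closed form.
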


\section{Additional results on other dataset shift conditions} \label{sec: other DS conditions2}

In this section, we present additional results for dataset shift conditions other than sequential conditionals and its special cases (Conditions~\ref{DScondition: independence}--\ref{DScondition: label shift}).
As in Supplement~\ref{sec: label shift and Y con shift}, we use a superscript $\dagger$ in the labels to represent results parallel to those in Section~\ref{sec: other DS conditions} in the main text.

\subsection{Posterior drift conditions}

\begin{DScondition2}{DScondition: cov shift posterior drift}[Posterior Drift] \label{DScondition: posterior drift}
    Condition~\ref{DScondition: cov shift posterior drift} holds and $X \independent A$.
\end{DScondition2}

\citepsupp{Scott2019} further assumed $\tilde{\phi}_*$ (and thus $\phi_*$) to be strictly increasing in the original conditions.
We have dropped the monotonicity requirement on $\tilde{\phi}_*$ because this condition does not change the geometry of the induced semiparametric model. Nevertheless, the monotonicity condition may be useful when constructing an efficient estimator of the risk $r_*$ if the nuisance function $\tilde{\phi}_*$ needs to be estimated. We have additionally assumed differentiability of $\tilde{\phi}_*$ (and $\phi_*$) to ensure sufficient smoothness of the density and the risk functional $P \mapsto R(P) = \expect_P[\ell(X,Y) \mid A=0]$.

\cite{cai2021transfer} 
used a different condition for posterior drift,
namely that the supports of $P_{*,X \mid A=0}$ and $P_{*,X \mid A=1}$ are the same. This latter condition in \cite{cai2021transfer} does not change the geometry of the model space and therefore we do not study it here.

\begin{theorem2}{thm: posterior drift EIF} \label{thm: posterior drift EIF2}
    With the notations in Theorem~\ref{thm: posterior drift EIF}, under Condition~\ref{DScondition: posterior drift}, the efficient influence function for estimating $r_*$ is $D_\postdrift: o \mapsto \mathcal{E}_*(x) - r_* + F(g_1,g_2)(o)$.
\end{theorem2}

\subsection{Location-scale shift conditions}

\begin{DScondition2}{DScondition: ls generalized target shift}[Location-Scale Conditional Shift] \label{DScondition: ls conditional shift}
    Condition~\ref{DScondition: ls generalized target shift} holds and $Y \independent A$.
\end{DScondition2}

\citetsupp{Zhang2013} further assumed that $W_*(y)$ is diagonal, but not necessarily invertible. 
We have relaxed the requirement that $W_*(y)$ is diagonal, but restricted to invertible matrices for analytic tractability.

\begin{theorem2}{thm: ls shift EIF} \label{thm: ls shift EIF2}
    Suppose that the distribution of $X \mid A=a,Y=y$ has a differentiable Lebesgue density supported on $\real^d$ for all $a \in \{0,1\}$ and $y \in \mathcal{Y}$.
    With the notations in Theorem~\ref{thm: ls shift EIF}, under Condition~\ref{DScondition: ls conditional shift},
    the efficient influence function for estimating $r_*$ is $D_\LScondshift: o \mapsto \mathcal{E}_{*,Y}(y) - r_* +N(o)$.
\end{theorem2}

\subsection{Invariant density ratio condition}

\cite{Tasche2017} introduced the following dataset shift condition for a binary label $Y \in \{0,1\}$.

As mentioned in Section~\ref{sec: other DS conditions}, the following invariant density ratio condition, introduced by \cite{Tasche2017}, is a special case of Condition~\ref{DScondition: invariant density ratio shape} with $\const_*=1$.
\begin{DScondition2}{DScondition: invariant density ratio shape}[Invariant density ratio] \label{DScondition: invariant density ratio}
    The following equality between two Radon-Nikodym derivatives holds:
    \begin{equation}
        \frac{\intd P_{*,X \mid Y=0,A=1}}{\intd P_{*,X \mid Y=1,A=1}} = \frac{\intd P_{*,X \mid Y=0,A=0}}{\intd P_{*,X \mid Y=1,A=0}}. \label{eq: invariant density ratio}
    \end{equation}
\end{DScondition2}
The efficiency bound under this condition is for future research. A major challenge is constructing regular parametric submodels satisfying this condition.

We next connect this condition to the collapsibility of odds ratios.
By Bayes' Theorem, Condition~\ref{DScondition: invariant density ratio} is equivalent to
\begin{align*}
    &\frac{P_*(Y=0 \mid X=x,A=1)}{P_*(Y=1 \mid X=x,A=1)} \frac{P_*(Y=1 \mid A=1)}{P_*(Y=0 \mid A=1)} \\
    &= \frac{P_*(Y=0 \mid X=x,A=0)}{P_*(Y=1 \mid X=x,A=0)} \frac{P_*(Y=1 \mid A=0)}{P_*(Y=0 \mid A=0)}.
\end{align*}
With $\mathrm{Odds}(Y \mid X,A) := P_*(Y=1 \mid X,A)/P_*(Y=0 \mid X,A)$ and $\mathrm{Odds}(Y \mid A) := P_*(Y=1 \mid A)/P_*(Y=0 \mid A)$, this condition is further equivalent to
$$\frac{\mathrm{Odds}(Y \mid X=x,A=1)}{\mathrm{Odds}(Y \mid X=x,A=0)} = \frac{\mathrm{Odds}(Y \mid A=1)}{\mathrm{Odds}(Y \mid A=0)}.$$
In other words, a conditional odds ratio equals a marginal odds ratio; that is, the odds ratio is collapsible. 
More precisely, this collapsibility is called \textit{strict collapsibility} in \citetsupp{Ducharme1986} or \textit{simple collapsibility} in \citetsupp{Xie2008}.
Despite results suggesting that odds ratios are collapsible if and only if conditional independence holds (e.g., Theorem~1 in \cite{Ducharme1986} and Theorem~4 in \cite{Xie2008}), we remark that these results concern a stronger form of collapsibility.
The collapsibility equivalent to Condition~\ref{DScondition: invariant density ratio} is weaker and may hold without conditional independence.
\citetsupp{Whittemore1978} studied this problem in more depth for the special case of finitely discrete covariate $X$.

\section{Prediction set construction based on risk estimation} \label{sec: pred set}

Following Example~\ref{ex: prediction sets}, in this supplement, we briefly review a few coverage guarantees for prediction sets and describe how our risk estimation methods can be helpful in constructing prediction sets. 
We consider using a dataset with the structure described in Section~\ref{sec: setup} and sample size $n$ to construct a prediction set function $\hat{C}: \mathcal{X} \to \mathscr{B}$, where $\mathscr{B} \subseteq 2^\mathcal{Y}$ is a Borel sigma algebra on $\mathcal{Y}$. We focus on coverage guarantees in the target population. We slightly abuse terminology and also call $\hat{C}$ a prediction set.

Most of the popular coverage guarantees fall into two categories: (i) marginal validity, and (ii) training-set conditional validity, also termed \emph{Probably Approximately Correct} (PAC). A discussion of the differences between these two categories can be found in the Supplementary Material of \cite{Qiu2022}. We describe these two categories in the following subsections. More details on this topic can be found in, for example, \cite{Angelopoulos2021,cp,Barber2022,Bian2022,Chernozhukov2018,lei2015conformal,Lei2021,Park2020,Park2022,Qiu2022,Tibshirani2019,shafer2008tutorial,vovk1999machine,vovk2005algorithmic,Vovk2013,Yang2022}.

\subsection{Marginal validity} \label{sec: marginal pred set}

A prediction set $\hat{C}$ is said to achieve marginal validity with coverage error at most $\alpha \in (0,1)$ if $P_*(Y \notin \hat{C}(X) \mid A=0) \leq \alpha$, where the triple $(X,Y,A)$ is a new draw from $P_*$ independent of the training data. 
This finite-sample guarantee can be achieved by applying conformal prediction methods to the target population data while ignoring the source population data \citep[see, e.g.,][etc.]{shafer2008tutorial,vovk1999machine,vovk2005algorithmic,Vovk2013}.

However, if the source population data is relevant under a dataset shift condition (e.g., one of Conditions~\ref{DScondition: X con shift}--\ref{DScondition: label shift}), it may be possible to prediction sets of smaller sizes with an asymptotic marginal coverage guarantee, namely $P_*(Y \notin \hat{C}(X) \mid A=0) \leq \alpha+\smallo(1)$. 
Such guarantees have been adopted when dataset shift is present \citep[see, e.g.,][]{Lei2021,Tibshirani2019,Yang2022}.

Suppose that a predictor $f$ trained on a held-out dataset and a scoring function $s: \mathcal{X} \times \mathcal{Y} \to \real$ are given. Without loss of generality, we assume that a higher score corresponds to more conformity. For example, when the outcome $Y$ is continuous, a common scoring function is $s(x,y) = -|y-f(x)|$; when the label $Y$ is discrete and $f(x)$ is a vector of estimated probabilities $P_*(Y=y \mid X=x)$, a common scoring function is $s(x,y)=f_y(x)$, the $y$-th entry of $f(x)$. For simplicity in our illustration, we assume that the distribution of $s(X,Y) \mid A=0$ is continuous.

Consider prediction sets of the form $C_\tau: x \mapsto \{ y \in \mathcal{Y}: s(x,y) \geq \tau \}$, where $\tau \in \real \cup \{\pm \infty\}$ is a threshold. Following the approach from \cite{Yang2022}, given an estimator $\hat{r}_{\tau}$ of the coverage error $r_{*,\tau} := \expect_{P_*}[\ind(Y \notin C_\tau(X)) \mid A=0]$ of $C_\tau$ that is consistent uniformly over $\tau$, we may find an appropriate threshold $\hat{\tau}$ by solving the estimating equation $\hat{r}_{\tau} = \alpha$ for $\tau$. If $\hat{r}_{\tau}$ is efficient, we expect that, with a high probability, $\hat{r}_{\tau}$ is close to $r_{*,\tau}$ and thus the selected threshold $\hat{\tau}$ is close to the true optimal threshold $\tau_*$, the $\alpha$-th quantile of $s(X,Y) \mid A=0$. In other words, the sampling distribution of $\hat{\tau}$ is expected to be more concentrated around $\tau_*$ compared to the threshold selected based on an inefficient estimator of $r_{*,\tau}$.

\subsection{Training-set conditional validity} \label{sec: PAC pred set}

A prediction set $\hat{C}$ is said to achieve training-set conditional validity, or to be $(\alpha,\delta)$-PAC, if $P_*( P_*( Y \notin \hat{C}(X) \mid A=0, \text{training data}) \leq \alpha ) \geq 1-\delta$, where the triple $(X,Y,A)$ is independent of the training data. This guarantee implies marginal validity with coverage error at most $\alpha+\delta$ by a union bound \citep[see, e.g., Lemma~S2 in][]{Qiu2022}. Similarly to marginal validity, this finite-sample guarantee can be achieved by applying split (or inductive) conformal prediction to the target population data while ignoring the source population data \citep[see, e.g.,][]{Park2020,Vovk2013}.

Using the source population data and our proposed efficient risk estimators, 
it is possible to construct prediction sets of smaller sizes with asymptotic training-set conditional validity.
\cite{Qiu2022} identified two types of asymptotic training-set conditional validity: (i) \emph{Asymptotically Probably Approximately Correct} (APAC), where $P_*( P_*( Y \notin \hat{C}(X) \mid A=0, \text{training data}) \leq \alpha ) \geq 1-\delta-\smallo(1)$, 
and (ii) \emph{Probably Asymptotically Approximately Correct} (PAAC), where $P_*( P_*( Y \notin \hat{C}(X) \mid A=0, \text{training data}) \leq \alpha+\smallo_p(1) ) \geq 1-\delta$. 
See \cite{Qiu2022} for a thorough discussion of the similarities and differences between these guarantees. 
In many cases, PAAC is similar to asymptotic marginal validity \citep[see, e.g.,][]{Bian2022,Yang2022}, and thus the approach we described in Section~\ref{sec: marginal pred set} can be adopted.

In contrast, the APAC guarantee requires techniques more similar to those used for obtaining PAC guarantees. 
We consider a setup similar to Section~\ref{sec: marginal pred set} and follow the approaches of \cite{Angelopoulos2021,Qiu2022} based on multiple testing. 
Let $\mathcal{T} \subseteq \real \cup \{\pm \infty\}$ be a given finite set of candidate thresholds, and $U_{\tau}$ be an upper confidence bound for $r_{*,\tau} = \expect_{P_*}[\ind(Y \notin C_\tau(X)) \mid A=0]$, whose coverage probability converges to $1-\delta$ uniformly over $\tau\in \mathcal{T}$. 
Then a threshold $\hat{\tau}$ can be selected as $\max\{\tau \in \mathcal{T}: U_{\tau'} < \alpha \text{ for all } \tau' \in \mathcal{T} \text{ such that } \tau' \leq \tau\}$, and $C_{\hat{\tau}}$ would be APAC. 
Such an upper confidence bound can be constructed based on asymptotic linear estimators $\hat{r}_{\tau}$ of $r_*=\expect_{P_*}[\ind(Y \notin C_\tau(X)) \mid A=0]$ via, for example, normal approximation or bootstrap. 
Efficient estimators would lead to small confidence upper bounds and thus a selected threshold $\hat{\tau}$ close to the true optimal threshold $\tau_*$. In other words, the sampling distribution of $\hat{\tau}$ is expected to be less conservative and lead to smaller prediction sets, compared to the threshold selected based on an inefficient estimator of $r_{*,\tau}$.

\section{Proofs} \label{sec: proof}

\subsection{Review of basic results in semiparametric efficiency theory}

The following lemma contains basic results in semiparametric efficiency theory and can be easily proved by the law of total expectation.

\begin{lemma} \label{lemma: orthogonality}
    Each collection of subspaces of $L^2(P_*)$ in the following list is mutually orthogonal:
    \begin{compactitem}
        \item $L^2(P_{*,X})$, $\tangent_{A \mid X}$, $\tangent_{Y \mid X,A}$;
        \item $L^2(P_{*,A})$, $\tangent_{X \mid A}$, $\tangent_{Y \mid X,A}$;
        \item $L^2(P_{*,X,A})$, $\tangent_{Y \mid X,A}$;
        \item $L^2(P_{*,A})$, $\tangent_{Z_k \mid \bar{Z}_{k-1},A}$ ($k \in [K]$).
    \end{compactitem}
    Consequently, $L^2_0(P_*) = \tangent_A \oplus \{ \oplus_{k=1}^K \tangent_{\bar{Z}_k \mid \bar{Z}_{k-1},A} \}$;
    for any function $f \in L^2_0(P_*)$, the projection of $f$ onto $\tangent_\sharp$ is $s \mapsto \expect_{P_*}[f(O) \mid \sharp=s]$, and the projection of $f \in L^2_0(P_*)$ onto $\tangent_{\sharp \mid \natural}$ is $(s,t) \mapsto \expect_{P_*}[f(O) \mid \sharp=s,\natural=t] - \expect_{P_*}[f(O) \mid \natural=t]$.
\end{lemma}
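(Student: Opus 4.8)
The plan is to reduce every assertion to two elementary facts about conditional expectation under $P_*$. Fact (a): if $\mathcal{G}_1 \subseteq \mathcal{G}_2$ are sub-$\sigma$-algebras, $g$ is $\mathcal{G}_1$-measurable and square-integrable, and $h \in L^2(P_*)$ satisfies $\expect_{P_*}[h \mid \mathcal{G}_2] = 0$, then $\langle g, h \rangle = \expect_{P_*}[g\,h] = \expect_{P_*}\!\big[g\,\expect_{P_*}[h \mid \mathcal{G}_2]\big] = 0$. Fact (b): a closed subspace $\mathcal{S} \subseteq L^2(P_*)$ contains the orthogonal projection of $f$ onto $\mathcal{S}$ precisely when there is some $h \in \mathcal{S}$ with $f - h \perp \mathcal{S}$. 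Note that fact (a) needs no mean-zero assumption on $g$, which is what lets us include the full spaces $L^2(P_{*,X})$, $L^2(P_{*,A})$, $L^2(P_{*,X,A})$ in the orthogonality lists.

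First I would prove the orthogonality of the subspaces in each of the four lists by a single template: in each list the subspaces are of the form $\mathcal{S}_j \subseteq L^2(\sigma(\mathcal{U}_j))$ with $\mathcal{S}_j \subseteq \{h : \expect_{P_*}[h \mid \sigma(\mathcal{G}_j)] = 0\}$, arranged so that both the $\mathcal{U}_j$'s and the $\mathcal{G}_j$'s are nested down the list, and then fact (a) gives pairwise orthogonality. For example, in the first list $L^2(P_{*,X})$ is $\sigma(X)$-measurable, $\tangent_{A \mid X}$ is $\sigma(X,A)$-measurable with conditional mean zero given $\sigma(X)$, and $\tangent_{Y \mid X,A}$ is $\sigma(X,Y,A)$-measurable with conditional mean zero given $\sigma(X,A)$; since $\sigma(X) \subseteq \sigma(X,A) \subseteq \sigma(X,Y,A)$, each pair is covered. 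The second and third lists are identical after relabeling, and the fourth uses the chain $\sigma(A) \subseteq \sigma(\bar{Z}_1,A) \subseteq \cdots \subseteq \sigma(\bar{Z}_K,A) = \sigma(O)$ together with the observation that $\tangent_{Z_k \mid \bar{Z}_{k-1},A}$ (equivalently $\tangent_{\bar{Z}_k \mid \bar{Z}_{k-1},A}$) consists of $\sigma(\bar{Z}_k,A)$-measurable functions with conditional mean zero given $\sigma(\bar{Z}_{k-1},A)$.

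Next I would establish the direct-sum decomposition. Mutual orthogonality of $\tangent_A$ and the $\tangent_{\bar{Z}_k \mid \bar{Z}_{k-1},A}$, $k \in [K]$, is exactly the fourth list (using $\tangent_A \subseteq L^2(P_{*,A})$). For spanning, I would invoke the telescoping identity, valid for every $f \in L^2_0(P_*)$, viewed as a function of $O = (Z,A)$, with the conventions $\bar{Z}_0 = \emptyset$ and $\bar{Z}_K = Z$:
\begin{equation*}
 f = \expect_{P_*}[f \mid A] + \sum_{k=1}^K \Big( \expect_{P_*}[f \mid \bar{Z}_k, A] - \expect_{P_*}[f \mid \bar{Z}_{k-1}, A] \Big).
\end{equation*}
The first term lies in $\tangent_A = L^2_0(P_{*,A})$ since $\expect_{P_*}\big[\expect_{P_*}[f \mid A]\big] = \expect_{P_*}[f] = 0$, and the $k$-th summand lies in $\tangent_{\bar{Z}_k \mid \bar{Z}_{k-1},A}$ since it is $\sigma(\bar{Z}_k,A)$-measurable and, by the tower rule, has conditional mean zero given $\sigma(\bar{Z}_{k-1},A)$. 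Finally I would verify the projection formulas via fact (b): for $\tangent_\sharp$, the candidate is $h : s \mapsto \expect_{P_*}[f \mid \sharp = s]$, which lies in $L^2_0(P_{*,\sharp}) = \tangent_\sharp$ and satisfies $\langle f - h, g\rangle = 0$ for all $g \in \tangent_\sharp$ by conditioning on $\sharp$; for $\tangent_{\sharp \mid \natural}$, the candidate is $h : (s,t) \mapsto \expect_{P_*}[f \mid \sharp = s, \natural = t] - \expect_{P_*}[f \mid \natural = t]$, which satisfies $\expect_{P_*}[h \mid \natural] = 0$ (hence $h \in \tangent_{\sharp \mid \natural}$), and writing $f - h = \big(f - \expect_{P_*}[f \mid \sharp,\natural]\big) + \expect_{P_*}[f \mid \natural]$ shows the first piece is orthogonal to every $\sigma(\sharp,\natural)$-measurable function and the second to every function with conditional mean zero given $\natural$, so $f - h \perp \tangent_{\sharp \mid \natural}$.

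The argument is entirely routine; there is no real obstacle. The only points that need a moment's care are the bookkeeping in the orthogonality step—checking that every listed pair genuinely fits the nested template of fact (a) after identifying the appropriate $\mathcal{U}$'s and $\mathcal{G}$'s—and confirming that $\tangent_\sharp$ and $\tangent_{\sharp \mid \natural}$ are closed (each is $L^2_0(P_*)$ intersected with the kernel of a bounded conditional-expectation-type map), so that orthogonal projections exist and are characterized by fact (b).
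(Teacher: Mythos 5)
Your proposal is correct and is essentially the argument the paper has in mind: the paper gives no written proof, stating only that the lemma "can be easily proved by the law of total expectation," and your fact (a) is precisely that tower-property computation, applied systematically via the nested $\sigma$-algebra chains, the telescoping identity for the direct sum, and the standard characterization of orthogonal projections. All the bookkeeping (measurability levels, mean-zero checks, closedness of the subspaces) is handled correctly, so nothing further is needed.
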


When we derive the efficient influence function for each dataset shift condition, we first characterize the tangent space at $P_*$, which is a subspace of the nonparametric tangent space $L^2_0(P_*)$. We also characterize the orthogonal complement of the tangent space. The efficient influence function is then the projection of the nonparametric influence function $D_\nonparametric(P_*,r_*)$ onto the tangent space. We also ignore the difference between ``almost everywhere'' and ``everywhere'' for convenience; the exceptional probability-zero set can always be excluded from our analysis.

\subsection{No dataset shift condition: nonparametric model} \label{sec: proof nonparametric}

The following lemma states the asymptotic efficiency of the nonparametric estimator $\hat{r}_{\nonparametric}$ under a nonparametric model.

\begin{lemma} \label{lemma: nonparametric efficiency}
Under a nonparametric model at $P_*$, the efficient influence function for estimating $r_*$ is $D_\nonparametric(\rho_*,r_*)$ from \eqref{dnp}.
The sequence of estimators $\hat{r}_{\nonparametric}$ is RAL with the influence function $D_\nonparametric(\rho_*,r_*)$, and thus is nonparametrically efficient. Consequently, $\sqrt{n} (\hat{r}_{\nonparametric} - r_*) \overset{d}{\to} \mathrm{N}(0,\sigma_{*,\nonparametric}^2)$ with $\sigma_{*,\nonparametric}^2 := \expect_{P_*}[D_\nonparametric(\rho_*,r_*)$$(O)^2]$.
\end{lemma}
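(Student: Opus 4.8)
\textbf{Proof plan for Lemma~\ref{lemma: nonparametric efficiency}.}

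The plan is to verify two things: (i) that $D_\nonparametric(\rho_*,r_*)$ is the efficient influence function for $r_*=\expect_{P_*}[\ell(Z)\mid A=0]$ under the nonparametric model, and (ii) that $\hat{r}_\nonparametric$ is regular and asymptotically linear with exactly this influence function. Since under a nonparametric model all RAL estimators share the same influence function, which coincides with the efficient influence function, establishing (ii) actually also yields (i); but it is cleaner to first compute a candidate influence function by a pathwise-derivative argument, then confirm $\hat{r}_\nonparametric$ attains it.

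For (i): I would take a regular one-dimensional parametric submodel $\{P_t\}$ through $P_*$ with score $g\in L^2_0(P_*)$, write $R(P_t)=\expect_{P_t}[\ell(Z)\ind(A=0)]/P_t(A=0)$, and differentiate at $t=0$ using the quotient rule. The numerator derivative is $\expect_{P_*}[\ell(Z)\ind(A=0)\,g(O)]$ and the denominator derivative is $\expect_{P_*}[\ind(A=0)\,g(O)]$, so
\begin{align*}
\left.\frac{d}{dt}R(P_t)\right|_{t=0}
&= \frac{1}{\rho_*}\expect_{P_*}[\ell(Z)\ind(A=0)\,g(O)] - \frac{r_*}{\rho_*}\expect_{P_*}[\ind(A=0)\,g(O)]\\
&= \expect_{P_*}\!\left[\frac{\ind(A=0)}{\rho_*}\{\ell(Z)-r_*\}\,g(O)\right] = \expect_{P_*}[D_\nonparametric(\rho_*,r_*)(O)\,g(O)].
\end{align*}
One checks $D_\nonparametric(\rho_*,r_*)\in L^2_0(P_*)$ (mean zero is immediate by iterated expectation; finite variance follows from the standing assumption $\Var_{P_*}(\ell(Z)\mid A=0)<\infty$ together with $\rho_*\in(0,1)$). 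Since the tangent space under the nonparametric model is all of $L^2_0(P_*)$ and $D_\nonparametric(\rho_*,r_*)$ already lies in this space, no projection is needed: it is the canonical gradient, i.e.\ the efficient influence function.

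For (ii): I would expand $\hat{r}_\nonparametric$ directly. Writing $\hat\rho := n^{-1}\sum_i\ind(A_i=0)$ and $\bar S := n^{-1}\sum_i \ind(A_i=0)\ell(Z_i)$, we have $\hat{r}_\nonparametric = \bar S/\hat\rho$ on the event $\hat\rho>0$, which holds with probability tending to one exponentially fast. By the law of large numbers $\hat\rho\overset{p}{\to}\rho_*>0$ and $\bar S\overset{p}{\to}\rho_* r_*$; a first-order Taylor (delta-method) expansion of $(\bar S,\hat\rho)\mapsto \bar S/\hat\rho$ around $(\rho_* r_*,\rho_*)$ gives
\[
\hat{r}_\nonparametric - r_* = \frac{1}{\rho_*}(\bar S - \rho_* r_*) - \frac{r_*}{\rho_*}(\hat\rho-\rho_*) + \smallo_p(n^{-1/2})
= \frac{1}{n}\sum_{i=1}^n D_\nonparametric(\rho_*,r_*)(O_i) + \smallo_p(n^{-1/2}),
\]
where the last equality just regroups terms, using $\bar S-\rho_* r_* = n^{-1}\sum_i\ind(A_i=0)\ell(Z_i) - \rho_* r_*$ and $\hat\rho-\rho_* = n^{-1}\sum_i(\ind(A_i=0)-\rho_*)$. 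Asymptotic linearity with $D_\nonparametric(\rho_*,r_*)\in L^2_0(P_*)$ gives $\sqrt{n}(\hat{r}_\nonparametric-r_*)\overset{d}{\to}\mathrm{N}(0,\sigma_{*,\nonparametric}^2)$ by the CLT. Regularity follows because $\hat{r}_\nonparametric$ is asymptotically linear with an influence function equal to the canonical gradient (equivalently, one can check the local asymptotic behavior along submodels directly, or invoke the standard fact that an asymptotically linear estimator whose influence function is a gradient is automatically regular). Combined with (i), this shows $\hat{r}_\nonparametric$ is nonparametrically efficient.

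The only mild technical point — not really an obstacle — is making the remainder term in the delta-method expansion rigorously $\smallo_p(n^{-1/2})$; this is routine since $1/\hat\rho$ is $\bigO_p(1)$ (indeed $\hat\rho$ is bounded away from $0$ with probability tending to one) and $(\bar S-\rho_* r_*,\hat\rho-\rho_*)=\bigO_p(n^{-1/2})$, so the quadratic remainder is $\bigO_p(n^{-1})$. Everything else is a direct application of standard semiparametric efficiency theory as reviewed in the references cited in the excerpt.
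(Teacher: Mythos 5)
Your proposal is correct and follows essentially the same route as the paper's proof: a pathwise-derivative computation identifying $D_\nonparametric(\rho_*,r_*)$ as the canonical gradient (the paper organizes this by decomposing the score into $\tangent_A$ and $\tangent_{Z\mid A}$ components of a product submodel, whereas you apply the quotient rule to $\expect_{P_t}[\ell(Z)\ind(A=0)]/P_t(A=0)$ directly, but the computation is the same), followed by the identical first-order Taylor expansion of the ratio estimator to establish asymptotic linearity. The only cosmetic difference is that the paper restricts to bounded scores and passes to the $L^2$-closure to justify interchanging differentiation and integration, a point you wave at but which is routine.
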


\begin{proof}
    We first prove that $D_\nonparametric(P_*,r_*)$ is the efficient influence function. Let $H$ be an arbitrary bounded function in $L^2_0(P_*)$. 
    The $L^2(P_*)$-closure of the collection of such functions is $L^2_0(P_*)$. 
    Consider the projections of $H$ onto $\tangent_A$ and $\tangent_{Z \mid A}$, respectively:
    $$H_A: a \mapsto \expect_{P_*}[H(O) \mid A=a], \quad H_{Z \mid A}: (z \mid a) \mapsto H(O) - H_A(a).$$
    Define a regular parametric submodel indexed by $\epsilon$ via
    $$\left\{P^\epsilon: \frac{\intd P^\epsilon}{\intd P_*}(o) = (1+ \epsilon H_A(a)) (1+ \epsilon H_{Z \mid A}(z \mid a)) \right\}$$
    for $\epsilon$ in a small neighborhood of zero such that $P^\epsilon$ has non-negative density with respect to $P_*$. 
    Clearly, $P^\epsilon=P_*$ at $\epsilon=0$.
    Further, by 
    \citet[Lemma~1.8 in Part III][]{Bolthausen2002}
    and direct calculation, 
    one can verify that and this submodel is a differentiable path at $\epsilon=0$ with score function being $H$ \citep[see, e.g., Definition~1.6 in Part III of][]{Bolthausen2002}.

    Recall that $R(P)$ denotes the risk in the target population associated with data-generating distribution $P$, namely $\expect_{P}[\ell(Z) \mid A=0]$. 
    To show that  $R(P)$ is differentiable at $P_*$ relative to the tangent set $L^2_0(P_*)$,
    with efficient influence function  $D_\nonparametric(\rho_*,r_*)$, it suffices to show that
    \begin{equation}
        \left. \frac{\intd R(P^\epsilon)}{\intd \epsilon} \right|_{\epsilon=0} = P_* H D_\nonparametric(\rho_*, r_*) \label{eq: np EIF def},
    \end{equation}
    since, for nonparametric models, the tangent space is $L^2_0(P_*)$ and the influence function is unique \citep[see, e.g., Example~1.12 in Part~III of][]{Bolthausen2002}. 
    
    We next derive \eqref{eq: np EIF def}.
    By definition,
    $$R(P) = \int \ell(z) P_{Z \mid A}(\intd z \mid 0).$$
    Thus, since integration and differentiation can be interchanged due to boundedness,
    $$\left. \frac{\intd R(P^\epsilon)}{\intd \epsilon} \right|_{\epsilon=0} = \int \ell(z) H_{Z \mid A}(z \mid 0) P_{*,Z \mid A}(\intd z \mid 0).$$
    Since $H_{Z \mid A} \in \tangent_{Z \mid A}$ has mean zero conditional on $A$, we may center $\ell(z)$ and show that the quantity on the right-hand side above equals
    $$\int (\ell(z) - r_*) H_{Z \mid A}(z \mid 0) P_{*,Z \mid A}(\intd z \mid 0).$$
    We can further use inverse-probability weighting to take into account the randomness in $A$ to show that the above quantity equals
    $$\iint \frac{\ind(a=0)}{\rho_*} (\ell(z) - r_*) H_{Z \mid A}(z \mid a) P_{*,Z \mid A}(\intd z \mid a) P_{*,A}(\intd a).$$
    Since $o \mapsto \ell(z) - r_*$ is orthogonal to $o \mapsto \frac{\ind(a=0)}{\rho_*} H_A(a)$, the above quantity further equals
    \begin{align*}
        & \iint \frac{\ind(a=0)}{\rho_*} (\ell(z) - r_*) \{ H_{Z \mid A}(z \mid a) + H_A(a) \} P_{*,Z \mid A}(\intd z \mid a) P_{*,A}(\intd a) \\
        &= \int \frac{\ind(a=0)}{\rho_*} (\ell(z) - r_*) H(o) P_*(\intd o) = P_* H D_\nonparametric(\rho_*,r_*).
    \end{align*}
    We have shown that $D_\nonparametric(\rho_*,r_*)$ is the efficient influence function.

    We next show that the influence function of $\hat{r}_{\nonparametric}$ is $D_\nonparametric(\rho_*,r_*)$. By the central limit theorem, 
    \begin{align*}
    \frac{1}{n} \sum_{i=1}^n \ind(A_i=0) \ell(Z_i) 
    &= \rho_* r_* + \frac{1}{n} \sum_{i=1}^n \{ \ind(A_i=0) \ell(Z_i) - \rho_* r_* \}
    = \rho_* r_* + \bigO_p\left(\frac{1}{n^{1/2}}\right)
    \end{align*} 
    and $\frac{1}{n} \sum_{i=1}^n \ind(A_i=0) = \rho_* + \frac{1}{n} \sum_{i=1}^n \{\ind(A_i=0) - \rho_*\} = \rho_* + \bigO_p(n^{-1/2})$.
    Hence, by a Taylor expansion of $x \mapsto 1/x$ around $\rho_*>0$, we have that
    \begin{align*}
        &\hat{r}_{\nonparametric} = \frac{\frac{1}{n} \sum_{i=1}^n \ind(A_i=0) \ell(Z_i)}{\frac{1}{n} \sum_{i=1}^n \ind(A_i=0)} 
        = \frac{\rho_* r_* + \frac{1}{n} \sum_{i=1}^n \{ \ind(A_i=0) \ell(Z_i) - \rho_* r_* \}}{ \rho_* + \frac{1}{n} \sum_{i=1}^n \{ \ind(A_i=0) - \rho_* \} } \\
        &= \left[ \rho_* r_* + \frac{1}{n} \sum_{i=1}^n \{ \ind(A_i=0) \ell(Z_i) - \rho_* r_* \} \right] \times \left[ \frac{1}{\rho_*} - \frac{1}{\rho_*^2} \frac{1}{n} \sum_{i=1}^n \{ \ind(A_i=0) - \rho_* \} + \smallo_p(n^{-1/2}) \right] \\
        &= r_* + \frac{1}{n} \sum_{i=1}^n D_\nonparametric(\rho_*,r_*)(O_i) + \smallo_p(n^{-1/2}).
    \end{align*}
    We have thus shown that $\hat{r}_{\nonparametric}$ is asymptotically linear with influence function $D_\nonparametric(\rho_*,r_*)$. Its regularity and asymptotic efficiency follow from Lemma~2.9 in Part~III of \cite{Bolthausen2002}.
\end{proof}

\subsection{General sequential conditionals} \label{sec: independence proof}

\begin{proof}[Proof of the efficiency bound under Condition~\ref{DScondition: general independence} or \ref{DScondition: independence}]
    This result
    is implied by Theorem~2 and Corollary~1 from \citetsupp{Li2023}.
    Specifically, our Condition~\ref{DScondition: independence} reduces to Conditions~1 and 2 from \citetsupp{Li2023} with the target population distribution---$\underline{Q}^0$ in their notation---being the distribution $Q$ under Condition~\ref{DScondition: general independence} and the distribution of $Z \mid A=0$ under $P_*$ under Condition~\ref{DScondition: independence}, respectively.
    Moreover, our $A$ is their $S$, and our $\mathcal{S}_k'$ is their $\mathcal{S}_k$.
    
    Condition~3 from \citetsupp{Li2023} is assumed for Condition~\ref{DScondition: general independence}. For Condition~\ref{DScondition: independence}, by Bayes' Theorem, the Radon-Nikodym derivative of the distribution of $\bar{Z}_{k-1} \mid A=0$ relative to $\bar{Z}_{k-1} \mid A \in \mathcal{S}'_{k}$---$\lambda_{k-1}$ in their notation---is shown in \eqref{eq: independence density ratio}, 
    and is uniformly bounded away from zero and infinity in the support $\mathcal{Z}_{k-1}$ of $\bar{Z}_{k-1} \mid A=0$---which is $\mathcal{Z}^\dagger_{j-1}$ in their notation.
    Hence, Condition~3 from \citetsupp{Li2023} is also satisfied.
    
    Moreover, when the observed data is only from the target population, since we impose no restriction on the distribution of $Z$ in the target population, the corresponding tangent space---$\mathcal{T}(Q_0, \mathcal{Q})$ in their notation---is $L^2(P_{*,Z \mid A=0})$; the influence function of $r_*$ is unique and equals $z \mapsto \ell(z) - r_*$. Thus, $r_j(\bar{z}_j)$ in Corollary~1 from \citetsupp{Li2023} equals the product of the Radon-Nikodym derivative in \eqref{eq: independence density ratio} and $\ell^j_*(\bar{z}_j) - \ell^{j-1}_*(\bar{z}_{k-1})$, which is zero for $\bar{z}_j$ outside the support $\mathcal{Z}_{k-1}$ of $\bar{Z}_{k-1} \mid A=0$.
    We plug the above results into Corollary~1 from \citetsupp{Li2023} to obtain the claimed efficiency bound under Condition~\ref{DScondition: independence}.
\end{proof}

We next present a lemma that will be useful in the proof of Theorem~\ref{thm: independence eff and DR}.
\begin{lemma} \label{lemma: general independence k-th bias}
    For every $k \in [2:K]$ and $v \in [V]$, 
    with $B_k$ from \eqref{bk},
    \begin{align}
        & \left( \sum_{a \in \mathcal{S}'_1} \hat{\pi}^a_v \right) \expect_{P_*} \left[ \frac{\ind(A \in \mathcal{S}'_k)} {\sum_{a \in \mathcal{S}'_k} \hat{\pi}^{a}_{v}} \hat{\lambda}^{k-1}_{v}(\bar{Z}_{k-1}) [\hat{\ell}^k_{v}(\bar{Z}_k) - \hat{\ell}^{k-1}_{v}(\bar{Z}_{k-1})] \right] \nonumber \\
        &= \left( \sum_{a \in \mathcal{S}'_1} \hat{\pi}^a_v \right) B_{k,v} + \left( \sum_{a \in \mathcal{S}'_1} \pi^a_* \right) \expect_Q \left[ \expect_{P_*} \left[ \hat{\ell}^k_{v}(\bar{Z}_k) \mid \bar{Z}_{k-1}, A \in \mathcal{S}'_k \right] - \hat{\ell}^{k-1}_{v}(\bar{Z}_{k-1}) \right]. \label{eq: general independence k-th bias}
    \end{align}
    Additionally, under Condition~\ref{DScondition: general independence}, we have the following identity for the second term on the right-hand side of \eqref{eq: general independence k-th bias}:
    $$\expect_Q \left[ \expect_{P_*} \left[ \hat{\ell}^k_{v}(\bar{Z}_k) \mid \bar{Z}_{k-1}, A \in \mathcal{S}'_k \right] - \hat{\ell}^{k-1}_{v}(\bar{Z}_{k-1}) \right] = \expect_Q[\hat{\ell}^k_{v}(\bar{Z}_k) - \hat{\ell}^{k-1}_{v}(\bar{Z}_{k-1})].$$
\end{lemma}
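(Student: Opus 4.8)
The plan is to start from the left-hand side of \eqref{eq: general independence k-th bias} and simplify the inner expectation in three stages until every surviving term either appears verbatim in $B_{k,v}$ (see \eqref{bk}) or is the claimed remainder. Throughout, the nuisance estimators $\hat{\lambda}^{k-1}_v,\hat{\ell}^k_v,\hat{\ell}^{k-1}_v$ are treated as fixed measurable functions of $\bar{Z}$, and $\sum_{a\in\mathcal{S}'_k}\hat\pi^a_v$, $\sum_{a\in\mathcal{S}'_1}\hat\pi^a_v$ are (out-of-fold) constants. First I would peel off the indicator: since $P_*(A\in\mathcal{S}'_k)=\sum_{a\in\mathcal{S}'_k}\pi^a_*$,
\[
\expect_{P_*}\!\left[\tfrac{\ind(A\in\mathcal{S}'_k)}{\sum_{a\in\mathcal{S}'_k}\hat\pi^a_v}\hat{\lambda}^{k-1}_v(\bar{Z}_{k-1})\{\hat{\ell}^k_v(\bar{Z}_k)-\hat{\ell}^{k-1}_v(\bar{Z}_{k-1})\}\right]
=\tfrac{\sum_{a\in\mathcal{S}'_k}\pi^a_*}{\sum_{a\in\mathcal{S}'_k}\hat\pi^a_v}\,\expect_{P_*}\!\left[\hat{\lambda}^{k-1}_v(\bar{Z}_{k-1})\{\hat{\ell}^k_v(\bar{Z}_k)-\hat{\ell}^{k-1}_v(\bar{Z}_{k-1})\}\mid A\in\mathcal{S}'_k\right].
\]
Then, conditioning further on $\bar{Z}_{k-1}$ and using the definition $h^{k-1}_v(\bar z_{k-1})=\expect_{P_*}[\hat{\ell}^k_v(\bar{Z}_k)\mid\bar{Z}_{k-1}=\bar z_{k-1},A\in\mathcal{S}'_k]$, the factor $\hat{\ell}^k_v(\bar{Z}_k)$ is replaced by $h^{k-1}_v(\bar{Z}_{k-1})$ inside the conditional expectation.

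Next I would write $\hat{\lambda}^{k-1}_v=(\hat{\lambda}^{k-1}_v-\lambda^{k-1}_*)+\lambda^{k-1}_*$. The $(\hat{\lambda}^{k-1}_v-\lambda^{k-1}_*)$ piece is precisely $\sum_{a\in\mathcal{S}'_k}\hat\pi^a_v/\sum_{a\in\mathcal{S}'_k}\hat\pi^a_v$—rather, it is exactly the first summand of $B_{k,v}$ up to the overall factor $\sum_{a\in\mathcal{S}'_1}\hat\pi^a_v$. For the $\lambda^{k-1}_*$ piece, the defining property of $\lambda^{k-1}_*$ as the Radon–Nikodym derivative of the law of $\bar{Z}_{k-1}$ under $Q$ relative to that of $\bar{Z}_{k-1}\mid A\in\mathcal{S}'_k$ under $P_*$—which is well defined by the dominance in Condition~\ref{DScondition: general independence}(i)—gives $\expect_{P_*}[\lambda^{k-1}_*(\bar{Z}_{k-1})g(\bar{Z}_{k-1})\mid A\in\mathcal{S}'_k]=\expect_Q[g(\bar{Z}_{k-1})]$ for integrable $g$, so this piece becomes $\tfrac{\sum_{a\in\mathcal{S}'_k}\pi^a_*}{\sum_{a\in\mathcal{S}'_k}\hat\pi^a_v}\expect_Q[h^{k-1}_v(\bar{Z}_{k-1})-\hat{\ell}^{k-1}_v(\bar{Z}_{k-1})]$. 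Multiplying through by $\sum_{a\in\mathcal{S}'_1}\hat\pi^a_v$ and writing $C:=\expect_Q[h^{k-1}_v(\bar{Z}_{k-1})-\hat{\ell}^{k-1}_v(\bar{Z}_{k-1})]$, it remains only to check the algebraic identity
\[
\Big(\textstyle\sum_{a\in\mathcal{S}'_1}\hat\pi^a_v\Big)\tfrac{\sum_{a\in\mathcal{S}'_k}\pi^a_*}{\sum_{a\in\mathcal{S}'_k}\hat\pi^a_v}\,C
=\Big(\textstyle\sum_{a\in\mathcal{S}'_1}\hat\pi^a_v\Big)\Big(\tfrac{\sum_{a\in\mathcal{S}'_k}\pi^a_*}{\sum_{a\in\mathcal{S}'_k}\hat\pi^a_v}-\tfrac{\sum_{a\in\mathcal{S}'_1}\pi^a_*}{\sum_{a\in\mathcal{S}'_1}\hat\pi^a_v}\Big)C+\Big(\textstyle\sum_{a\in\mathcal{S}'_1}\pi^a_*\Big)C,
\]
which is immediate because the $\big(\sum_{a\in\mathcal{S}'_1}\hat\pi^a_v\big)\cdot\big(\sum_{a\in\mathcal{S}'_1}\pi^a_*\big)/\big(\sum_{a\in\mathcal{S}'_1}\hat\pi^a_v\big)$ cross-terms cancel against $\big(\sum_{a\in\mathcal{S}'_1}\pi^a_*\big)C$. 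Collecting terms gives exactly the right-hand side of \eqref{eq: general independence k-th bias}, the second summand being $\big(\sum_{a\in\mathcal{S}'_1}\pi^a_*\big)\expect_Q[\expect_{P_*}[\hat{\ell}^k_v(\bar{Z}_k)\mid\bar{Z}_{k-1},A\in\mathcal{S}'_k]-\hat{\ell}^{k-1}_v(\bar{Z}_{k-1})]$ once $h^{k-1}_v$ is unfolded.

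For the additional identity under Condition~\ref{DScondition: general independence}: part~(ii) states that $Z_k\mid\bar{Z}_{k-1}=\bar z_{k-1},A=a$ equals in law $Z_k\mid\bar{Z}_{k-1}=\bar z_{k-1}$ under $Q$ for every $a\in\mathcal{S}'_k$ and $Q$-almost every $\bar z_{k-1}$; since $Z_k\mid\bar{Z}_{k-1}=\bar z_{k-1},A\in\mathcal{S}'_k$ is a mixture of these conditional laws over $a\in\mathcal{S}'_k$ with $P_*$-weights $P_*(A=a\mid\bar{Z}_{k-1}=\bar z_{k-1},A\in\mathcal{S}'_k)$, it too equals $Z_k\mid\bar{Z}_{k-1}=\bar z_{k-1}$ under $Q$ for $Q$-a.e.\ $\bar z_{k-1}$. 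Hence $\expect_{P_*}[\hat{\ell}^k_v(\bar{Z}_k)\mid\bar{Z}_{k-1}=\bar z_{k-1},A\in\mathcal{S}'_k]=\expect_Q[\hat{\ell}^k_v(\bar{Z}_k)\mid\bar{Z}_{k-1}=\bar z_{k-1}]$ for $Q$-a.e.\ $\bar z_{k-1}$, and taking expectation over $\bar{Z}_{k-1}\sim Q$ and applying the tower rule under $Q$ yields $\expect_Q[\expect_{P_*}[\hat{\ell}^k_v(\bar{Z}_k)\mid\bar{Z}_{k-1},A\in\mathcal{S}'_k]-\hat{\ell}^{k-1}_v(\bar{Z}_{k-1})]=\expect_Q[\hat{\ell}^k_v(\bar{Z}_k)-\hat{\ell}^{k-1}_v(\bar{Z}_{k-1})]$.

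The computation is essentially routine; the points needing care are: invoking the dominance in Condition~\ref{DScondition: general independence}(i) so that $\lambda^{k-1}_*$ genuinely acts as a change-of-measure density ($Q$-almost everywhere), keeping the $\sum_{a\in\mathcal{S}'_k}$ and $\sum_{a\in\mathcal{S}'_1}$ normalizing factors consistent between the true probabilities $\pi^a_*$ and their plug-in estimators $\hat\pi^a_v$ at every step, and applying the conditional-law identity only $Q$-a.e.\ (not everywhere). I expect the bookkeeping of these normalizing factors and the final rearrangement matching the precise form of $B_{k,v}$ to be the most error-prone—though not conceptually deep—part of the argument.
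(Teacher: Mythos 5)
Your proposal is correct and follows essentially the same route as the paper's proof: peel off the indicator using $P_*(A\in\mathcal{S}'_k)=\sum_{a\in\mathcal{S}'_k}\pi^a_*$, apply the law of iterated expectation to replace $\hat{\ell}^k_v(\bar{Z}_k)$ by $h^{k-1}_v(\bar{Z}_{k-1})$, add and subtract $\lambda^{k-1}_*$ so the difference term matches the first summand of $B_{k,v}$ and the $\lambda^{k-1}_*$ term becomes a $Q$-expectation via the Radon--Nikodym property, then rearrange the normalizing ratios to produce the second summand of $B_{k,v}$ plus the claimed remainder; the second identity follows, as in the paper, from the $Q$-a.e.\ equality of the conditional laws and the tower rule under $Q$. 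The only blemish is the garbled sentence identifying the $(\hat{\lambda}^{k-1}_v-\lambda^{k-1}_*)$ piece, but the corrected statement you give immediately afterwards is the right one, and all the bookkeeping of the $\pi^a_*$ versus $\hat{\pi}^a_v$ factors checks out.
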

\begin{proof}
    Equation~\ref{eq: general independence k-th bias} follows by adding and subtracting $\lambda^{k-1}_*$ from $\hat{\lambda}^{k-1}_{v}$,
    by the definition of $\lambda^{k-1}_*$ as the Radon-Nikodym derivative of the distribution of $\bar{Z}_{k-1}$ under $Q$ relative to that of $\bar{Z}_{k-1} \mid A \in \mathcal{S}'_k$ under $P_*$, 
    by $P_*(A \in \mathcal{S}'_k) = \sum_{a \in \mathcal{S}'_k} \pi^{a}_*$,
    and by using the law of iterated expectation. 
    Under Condition~\ref{DScondition: general independence}, for $Q$-almost every $\bar{z}_{k-1} \in \widetilde{\mathcal{Z}}_{k-1}$,
    $$\expect_{P_*} \left[ \hat{\ell}^k_{v}(\bar{Z}_k) \mid \bar{Z}_{k-1}=\bar{z}_{k-1}, A \in \mathcal{S}'_k \right] = \expect_Q \left[ \hat{\ell}^k_{v}(\bar{Z}_k) \mid \bar{Z}_{k-1}=\bar{z}_{k-1} \right]$$
    and the second claim follows.
\end{proof}

\begin{proof}[Proof of Theorem~\ref{thm: independence eff and DR}]
    We first focus on the more general case, namely Condition~\ref{DScondition: general independence} and Alg.~\ref{alg: general independence estimator}.
    We study the asymptotic behavior of the estimator $\hat{r}_{v}$ for each fold $v$ via methods for analyzing Z-estimators \citep[see, e.g., Section~3.3 in][]{vandervaart1996}, 
    and the asymptotic behavior of the cross-fit estimator $\hat{r}$ follows immediately.
    By the definitions of the nuisance functions $\boldsymbol \ell_*$ from \eqref{genell}, the Radon-Nikodym derivatives $\boldsymbol \lambda_*=(\lambda^k_*)_{k=1}^{K-1}$,
    the marginal probabilities $\boldsymbol \pi_* := (\pi_*^{a})_{a \in \mathcal{A}}$ of the source populations,
    and $D_\generalindependence$ from \eqref{eq: general independence EIF}, we have that
    \begin{equation}
        P_* \left( \sum_{a \in \mathcal{S}'_1} \pi^{a}_* \right) D_\generalindependence(\boldsymbol \ell_*,\boldsymbol \lambda_*,\boldsymbol \pi_*,r_*) = 0. \label{eq: general independence EIF mean zero}
    \end{equation}
    By the definitions of
    $\widetilde{\mathcal{T}}$ from \eqref{ttilde},
    $\hat{r}_{v}$ from \eqref{eq: general independence foldwise estimator}, 
    and $D_\generalindependence$ from \eqref{eq: general independence EIF}, 
    noting that 
    \begin{equation*}
    \sum_{i\in I_v} \frac{\ind(A_i \in \mathcal{S}'_1)}{\sum_{a \in \mathcal{S}'_1} \hat\pi_v^{a}}
    =
    \sum_{b \in \mathcal{S}'_1} \frac{\sum_{i\in I_v}\ind(A_i = b)}{\sum_{a \in \mathcal{S}'_1} \hat\pi_v^{a}} = |I_v|,
\end{equation*}
    we find that
    $$P^{n,v} \left( \sum_{a \in \mathcal{S}'_1} \hat\pi^{a}_{v} \right) D_\generalindependence(\widehat{\boldsymbol \ell}_{v}, \widehat{\boldsymbol \lambda}_{v}, \widehat{\boldsymbol \pi}_{v}, \hat{r}_{v}) = 0.$$
    Then we have the following equation:
    \begin{align}
        0 &= (P^{n,v}-P_*) \left\{ \left( \sum_{a \in \mathcal{S}'_1} \hat\pi^{a}_{v} \right) D_\generalindependence(\widehat{\boldsymbol \ell}_{v}, \widehat{\boldsymbol \lambda}_{v}, \widehat{\boldsymbol \pi}_{v}, \hat{r}_{v}) - \left( \sum_{a \in \mathcal{S}'_1} \pi^{a}_* \right) D_\generalindependence(\boldsymbol \ell_*,\boldsymbol \lambda_*,\boldsymbol \pi_*,r_*) \right\} \nonumber\\
        &\quad+ P_* \left\{ \left( \sum_{a \in \mathcal{S}'_1} \hat\pi^{a}_{v} \right) D_\generalindependence(\widehat{\boldsymbol \ell}_{v}, \widehat{\boldsymbol \lambda}_{v}, \widehat{\boldsymbol \pi}_{v}, \hat{r}_{v}) - \left( \sum_{a \in \mathcal{S}'_1} \pi^{a}_* \right) D_\generalindependence(\boldsymbol \ell_*,\boldsymbol \lambda_*,\boldsymbol \pi_*,r_*) \right\} \nonumber\\
        &\quad+ (P^{n,v}-P_*) \left( \sum_{a \in \mathcal{S}'_1} \pi^{a}_* \right) D_\generalindependence(\boldsymbol \ell_*,\boldsymbol \lambda_*,\boldsymbol \pi_*,r_*).
    \label{eq: general independence Z expanssion}
    \end{align}
    We next further analyze the first two terms on the right-hand side of \eqref{eq: general independence Z expanssion}.

    \textbf{Term~1:}
    By the definition of
    $D_\generalindependence$ from \eqref{eq: general independence EIF}, this term equals
    $$(P^{n,v}-P_*) \left\{ \left( \sum_{a \in \mathcal{S}'_1} \hat\pi^{a}_{v} \right) \widetilde{\mathcal{T}}(\widehat{\boldsymbol \ell}_{v}, \widehat{\boldsymbol \lambda}_{v}, \widehat{\boldsymbol \pi}_{v}) - \left( \sum_{a \in \mathcal{S}'_1} \pi^{a}_* \right) \widetilde{\mathcal{T}}(\boldsymbol \ell_*,\boldsymbol \lambda_*,\boldsymbol \pi_*) - \ind_{\mathcal{S}'_1} \left( \hat{r}_{v} - r_* \right) \right\},$$
    which further equals
    $$(P^{n,v}-P_*) \left\{ \left( \sum_{a \in \mathcal{S}'_1} \hat\pi^{a}_{v} \right) \widetilde{\mathcal{T}}(\widehat{\boldsymbol \ell}_{v}, \widehat{\boldsymbol \lambda}_{v}, \widehat{\boldsymbol \pi}_{v}) - \left( \sum_{a \in \mathcal{S}'_1} \pi^{a}_* \right) \widetilde{\mathcal{T}}(\boldsymbol \ell_*,\boldsymbol \lambda_*,\boldsymbol \pi_*) \right\} - \left( \sum_{a \in \mathcal{S}'_1} \hat{\pi}^a_v - \sum_{a \in \mathcal{S}'_1} \pi^a_* \right) (\hat{r}_v - r_*).$$

    \textbf{Term~2:} By the definitions of $D_\generalindependence$ and $\widetilde{\mathcal{T}}$ from \eqref{eq: general independence EIF} and \eqref{ttilde} respectively, 
    \eqref{eq: general independence EIF mean zero}, and Lemma~\ref{lemma: general independence k-th bias},
    this term equals
    \begin{align*}
        & P_* \left\{ \left( \sum_{a \in \mathcal{S}'_1} \hat\pi^{a}_{v} \right) \widetilde{\mathcal{T}}(\widehat{\boldsymbol \ell}_{v}, \widehat{\boldsymbol \lambda}_{v}, \widehat{\boldsymbol \pi}_{v}) - \ind_{\mathcal{S}'_1} \hat{r}_{v} \right\} \\
        &= \left( \sum_{a \in \mathcal{S}'_1} \hat{\pi}^a_v \right) \sum_{k=2}^K B_{k,v} + \left( \sum_{a \in \mathcal{S}'_1} \pi^a_* \right) \sum_{k=2}^K \expect_Q \left[ \expect_{P_*} \left[ \hat{\ell}^k_{v}(\bar{Z}_k) \mid \bar{Z}_{k-1}, A \in \mathcal{S}'_k \right] - \hat{\ell}^{k-1}_{v}(\bar{Z}_{k-1}) \right] \\
        &\quad+ \sum_{a \in \mathcal{S}'_1} \pi^{a}_* \expect_{P_*}[\hat{\ell}^1_v(Z_1) \mid A \in \mathcal{S}'_1] - \sum_{a \in \mathcal{S}'_1} \pi^{a}_* \hat{r}_v.
    \end{align*}
    By the definition of $\Delta_v$ from \eqref{eq: Delta v}, 
    and as $\hat{\ell}^K_{v}=\ell$, 
    this further equals
    \begin{align*}        
        &\left( \sum_{a \in \mathcal{S}'_1} \hat{\pi}^a_v \right) \sum_{k=2}^K B_{k,v} + \left( \sum_{a \in \mathcal{S}'_1} \pi^a_* \right) \Bigg\{ \sum_{k=2}^K \expect_Q \left[ \expect_{P_*} \left[ \hat{\ell}^k_{v}(\bar{Z}_k) \mid \bar{Z}_{k-1}, A \in \mathcal{S}'_k \right] - \hat{\ell}^{k-1}_{v}(\bar{Z}_{k-1}) \right] \\
        &\qquad\qquad+ \expect_{P_*}[\hat{\ell}^1_v(Z_1) \mid A \in \mathcal{S}'_1] -  \hat{r}_v \Bigg\} \\
        &= \left( \sum_{a \in \mathcal{S}'_1} \hat{\pi}^a_v \right) \sum_{k=2}^K B_{k,v} + \left( \sum_{a \in \mathcal{S}'_1} \pi^a_* \right) \left( r_* - \hat{r}_v + \frac{\sum_{a \in \mathcal{S}'_1} \hat{\pi}^a_v}{\sum_{a \in \mathcal{S}'_1} \pi^a_*} \Delta_v \right).
    \end{align*}

    With the above analyses, rearranging the terms in \eqref{eq: general independence Z expanssion}, we have that
    \begin{align}
        \hat{r}_v - r_* - \Delta_v &= \frac{1}{\sum_{a \in \mathcal{S}'_1} \hat{\pi}^a_v} (P^{n,v}-P_*) \left\{ \left( \sum_{a \in \mathcal{S}'_1} \hat\pi^{a}_{v} \right) \widetilde{\mathcal{T}}(\widehat{\boldsymbol \ell}_{v}, \widehat{\boldsymbol \lambda}_{v}, \widehat{\boldsymbol \pi}_{v}) - \left( \sum_{a \in \mathcal{S}'_1} \pi^{a}_* \right) \widetilde{\mathcal{T}}(\boldsymbol \ell_*,\boldsymbol \lambda_*,\boldsymbol \pi_*) \right\} \nonumber \\
        &\quad+ \sum_{k=2}^K B_{k,v} + \frac{\sum_{a \in \mathcal{S}'_1} \pi^{a}_*}{\sum_{a \in \mathcal{S}'_1} \pi^{a}_v} (P^{n,v}-P_*) D_\generalindependence(\boldsymbol \ell_*,\boldsymbol \lambda_*,\boldsymbol \pi_*,r_*). \label{eq: r hat v expansion}
    \end{align}
    By the definitions of $\hat{r}$ in Alg.~\ref{alg: general independence estimator} and of $\Delta$ in Theorem~\ref{thm: independence eff and DR}, we obtain \eqref{eq: r hat expansion}.

    We next show efficiency and multiply robust consistency. We first focus on $\hat{r}_v$ and condition on data out of fold $v$.
    We first focus on the most challenging term
    \begin{equation}
        (P^{n,v}-P_*) \left\{ \left( \sum_{a \in \mathcal{S}'_1} \hat\pi^{a}_{v} \right) \widetilde{\mathcal{T}}(\widehat{\boldsymbol \ell}_{v}, \widehat{\boldsymbol \lambda}_{v}, \widehat{\boldsymbol \pi}_{v}) - \left( \sum_{a \in \mathcal{S}'_1} \pi^{a}_* \right) \widetilde{\mathcal{T}}(\boldsymbol \ell_*,\boldsymbol \lambda_*,\boldsymbol \pi_*) \right\} \label{eq: empirical process}
    \end{equation}
    By the 
    definition of $\widetilde{\mathcal{T}}$ from \eqref{ttilde}, as well as the
    facts that
    (i) $\hat{\pi}^{a}_{v}$ is root-$n$ consistent for $\pi^{a}_*$, $a \in \mathcal{A}$ in $P^{n,v}$-probability, 
    and (ii) we condition on $\widehat{\boldsymbol \ell}_{v}, \widehat{\boldsymbol \theta}_{v}$, 
    by Lemmas~9.6 and 9.9 in \cite{Kosorok2008}, we have that, with $P^{n,v}$-probability tending to one as $n\to\infty$,
    the random function $o\mapsto \left( \sum_{a \in \mathcal{S}'_1} \hat\pi^{a}_{v} \right) \widetilde{\mathcal{T}}(\widehat{\boldsymbol \ell}_{v}, \widehat{\boldsymbol \lambda}_{v}, \widehat{\boldsymbol \pi}_{v})(o)- \left( \sum_{a \in \mathcal{S}'_1} \pi^{a}_* \right) \widetilde{\mathcal{T}}(\boldsymbol \ell_*,\boldsymbol \lambda_*,\boldsymbol \pi_*)(o)$,  
    whose randomness comes from $\widehat{\boldsymbol \pi}_{v}$,
    falls into a fixed VC-subgraph, and thus $P_*$-Donsker, class.
    Thus, under Condition~\ref{STcondition: independence DR}, the term in \eqref{eq: empirical process} is $\bigO_p(n^{-1/2})$
    conditional on data out of fold $v$.
    Under either Condition~\ref{STcondition: independence eff} or \ref{STcondition: independence DR}, since the $L^2(P_*)$ norm of $\left( \sum_{a \in \mathcal{S}'_1} \hat\pi^{a}_{v} \right) \widetilde{\mathcal{T}}(\widehat{\boldsymbol \ell}_{v}, \widehat{\boldsymbol \lambda}_{v}, \widehat{\boldsymbol \pi}_{v}) - \left( \sum_{a \in \mathcal{S}'_1} \pi^{a}_* \right) \widetilde{\mathcal{T}}(\boldsymbol \ell_*,\boldsymbol \lambda_*,\boldsymbol \pi_*)$ is $\bigO_p(1)$, we have that the term in \eqref{eq: empirical process} is $\bigO_p(n^{-1/2})$ that does not depend on nuisance estimators $(\widehat{\boldsymbol \ell}_{v}, \widehat{\boldsymbol \lambda}_{v})$, by the proof of Theorem~2.5.2 in \citetsupp{vandervaart1996} (specifically, Line~17 on page~128, the fourth displayed equation on that page).
    Therefore, the term in \eqref{eq: empirical process} is also $\bigO_p(n^{-1/2})$ unconditionally by Lemma~6.1 in \citetsupp{Chernozhukov2018debiasedML}.
    
    Similarly, under Condition~\ref{STcondition: independence eff}, 
    since the $L^2(P_*)$-norm of $o\mapsto \left( \sum_{a \in \mathcal{S}'_1} \hat\pi^{a}_{v} \right) \widetilde{\mathcal{T}}(\widehat{\boldsymbol \ell}_{v}, \widehat{\boldsymbol \lambda}_{v}, \widehat{\boldsymbol \pi}_{v})(o)- \left( \sum_{a \in \mathcal{S}'_1} \pi^{a}_* \right) \widetilde{\mathcal{T}}(\boldsymbol \ell_*,\boldsymbol \lambda_*,\boldsymbol \pi_*)(o)$
    converges to zero in probability,
    by the asymptotic uniform equicontinuity of $P_*$-Donsker classes \citep[Theorem~1.5.7 in][]{vandervaart1996}, 
    taking the distance on the space of functions to be the $L^2(P_*)$ distance, and evaluating it at zero and the above function,
    the associated empirical process converges to zero in probability.
    We conclude that the term in \eqref{eq: empirical process} is $\smallo_p(n^{1/2})$ under Condition~\ref{STcondition: independence eff}.
    \begin{compactitem}
        \item Under Condition~\ref{STcondition: independence DR}, by \eqref{eq: r hat v expansion}, $\hat{r}_{v} - \Delta_{v} - r_* = \smallo_p(1)$ conditional on data out of fold $v$. Additionally under Condition~\ref{DScondition: general independence}, clearly $\Delta_{v}=\Delta=0$. By the definitions of $\hat{r}$ in \eqref{eq: general independence estimator} and of $\Delta$, part~2 of Theorem~\ref{thm: independence eff and DR} follows from Lemma~6.1 in \citetsupp{Chernozhukov2018debiasedML}.

        \item Under Condition~\ref{STcondition: independence eff}, 
        by \eqref{eq: r hat v expansion}, $\hat{r}_{v} - \Delta_{v} - r_* = (P^{n,v} - P_*) D_\generalindependence(\boldsymbol \ell_*,\boldsymbol \lambda_*,\boldsymbol \pi_*,r_*) + \smallo_p(n^{-1/2})$.
        We then use the definitions of $\hat{r}$ from Line~9 of Alg.~\ref{alg: general independence estimator} and of $\Delta$ to obtain part~1 of Theorem~\ref{thm: independence eff and DR}.
        The regularity and efficiency of $\hat{r}$ follows from Lemma~2.9 in Part~III of \cite{Bolthausen2002}.
    \end{compactitem}

    The results for Condition~\ref{DScondition: independence} follows from the fact that Alg.~\ref{alg: general independence estimator} reduces to Alg.~\ref{alg: independence estimator} with $\hat{\lambda}^{k-1}_{v} = \sum_{a \in \mathcal{S}'_k} \hat{\pi}^a_v/\{\hat{\pi}^0_{v} (1+\hat{\theta}^{k-1}_{v})\}$.

    We finally prove \eqref{eq: r hat finite sample confidence}. We use $\const$ and $\const'$ to denote generic positive constants that may depend on $\epsilon$, $P_*$ and the bound on $\hat{\lambda}^{k-1}_v$ only and may vary from line to line. \eqref{eq: r hat expansion} is equivalent to
    \begin{align}
        & \hat{r} - \Delta -r_* = \frac{1}{n} \sum_{i=1}^n D_\generalindependence(\boldsymbol \ell_*,\boldsymbol \lambda_*,\boldsymbol \pi_*,r_*)(O_i) \nonumber \\
        &\quad+ \sum_{v \in [V]} \frac{|I_v|}{n \sum_{a \in \mathcal{S}'_1} \hat{\pi}^a_v} (P^{n,v}-P_*) \left\{ \left( \sum_{a \in \mathcal{S}'_1} \hat\pi^{a}_{v} \right) \widetilde{\mathcal{T}}(\widehat{\boldsymbol \ell}_{v}, \widehat{\boldsymbol \lambda}_{v}, \widehat{\boldsymbol \pi}_{v}) - \left( \sum_{a \in \mathcal{S}'_1} \pi^{a}_* \right) \widetilde{\mathcal{T}}(\boldsymbol \ell_*,\boldsymbol \lambda_*,\boldsymbol \pi_*) \right\} \nonumber \\
        &\quad+ \sum_{v \in [V]}  \frac{|I_v| \sum_{a \in \mathcal{S}'_1} \hat{\pi}^a_v}{n} \sum_{k=2}^K B_{k,v} + \sum_{v \in [V]} \frac{|I_v|}{n} \left( \frac{ \sum_{a \in \mathcal{S}'_1} \pi^{a}_*}{\sum_{a \in \mathcal{S}'_1} \pi^{a}_v}-1 \right) (P^{n,v}-P_*) D_\generalindependence(\boldsymbol \ell_*,\boldsymbol \lambda_*,\boldsymbol \pi_*,r_*). \label{eq: eq r hat expansion2}
    \end{align}
    
    We first show a basic result. For any $\delta>0$, by Hoeffding’s inequality and the fact that $|I_v|/n$ converges to a constant as $n \to \infty$,
    $$\Prob \left( \left|\sum_{a \in \mathcal{S}'_1} \pi^{a}_v - \sum_{a \in \mathcal{S}'_1} \pi^{a}_* \right| > \delta \right) \leq \exp(- \const \delta^2 |I_v|) \leq \exp(- \const \delta^2 n).$$

    Let $U$ denote the following union event:
    \begin{align*}
        & \| \hat{\lambda}^{k-1}_v - \lambda^{k-1}_* \|_{L^2(P_*)} > a_{n,k,v} \text{ for some $k$ and $v$}, \\
        \text{or } & \| \hat{\ell}^{k-1}_v - h^{k-1}_v \|_{L^2(P_*)} > b_{n,k,v} \text{ for some $k$ and $v$}.
    \end{align*}
    We consider two complementary events, $U$ and $U^C$.

    When $U$ occurs, by the union bound, $\Prob(U) \leq \sum_{v \in [V]} \sum_{k=2}^K (c_{n,k,v} + d_{n,k,v})$. We next focus on the case where $U^C$ occurs. We first bound the last three terms on the right-hand side of \eqref{eq: eq r hat expansion2}.

    For the second term on the right-hand side of \eqref{eq: eq r hat expansion2}, by the boundedness of $\hat{\lambda}^{k-1}_v$ and the proof of Theorem~2.5.2 in \citetsupp{vandervaart1996} (specifically, Line~17 on page~128, the fourth displayed equation on that page), we have that, conditioning on the data out of fold $v$,
        \begin{align*}
            & \sqrt{n} (P^{n,v}-P_*) \left\{ \left( \sum_{a \in \mathcal{S}'_1} \hat\pi^{a}_{v} \right) \widetilde{\mathcal{T}}(\widehat{\boldsymbol \ell}_{v}, \widehat{\boldsymbol \lambda}_{v}, \widehat{\boldsymbol \pi}_{v}) - \left( \sum_{a \in \mathcal{S}'_1} \pi^{a}_* \right) \widetilde{\mathcal{T}}(\boldsymbol \ell_*,\boldsymbol \lambda_*,\boldsymbol \pi_*) \right\} \\
            &= \bigO_p \left( \sum_{k=2}^K \left\{ \| \hat{\lambda}^{k-1}_v - \lambda^{k-1}_* \|_{L^2(P_*)} + \| \hat{\ell}^{k-1}_v - h^{k-1}_v \|_{L^2(P_*)} \right\} \right).
        \end{align*}
        Since $\frac{|I_v|}{n \sum_{a \in \mathcal{S}'_1} \hat{\pi}^a_v} = \bigO_p(1)$, for some constant $\const > 0$,
        \begin{align*}
            &\Prob \Bigg( \left| \frac{|I_v|}{n \sum_{a \in \mathcal{S}'_1} \hat{\pi}^a_v} (P^{n,v}-P_*) \left\{ \left( \sum_{a \in \mathcal{S}'_1} \hat\pi^{a}_{v} \right) \widetilde{\mathcal{T}}(\widehat{\boldsymbol \ell}_{v}, \widehat{\boldsymbol \lambda}_{v}, \widehat{\boldsymbol \pi}_{v}) - \left( \sum_{a \in \mathcal{S}'_1} \pi^{a}_* \right) \widetilde{\mathcal{T}}(\boldsymbol \ell_*,\boldsymbol \lambda_*,\boldsymbol \pi_*) \right\} \right| \\
            &\qquad> \const n^{-1/2} \sum_{k=2}^K (a_{n,k,v} + b_{n,k,v}) \text{ and $U^C$ occurs} \Bigg) \leq \frac{\epsilon}{3 V}.
        \end{align*}
        Therefore, for the second term on the right-hand side of \eqref{eq: eq r hat expansion2},
        \begin{align*}
            &\Prob \Bigg( \left| \sum_{v \in [V]} \frac{|I_v|}{n \sum_{a \in \mathcal{S}'_1} \hat{\pi}^a_v} (P^{n,v}-P_*) \left\{ \left( \sum_{a \in \mathcal{S}'_1} \hat\pi^{a}_{v} \right) \widetilde{\mathcal{T}}(\widehat{\boldsymbol \ell}_{v}, \widehat{\boldsymbol \lambda}_{v}, \widehat{\boldsymbol \pi}_{v}) - \left( \sum_{a \in \mathcal{S}'_1} \pi^{a}_* \right) \widetilde{\mathcal{T}}(\boldsymbol \ell_*,\boldsymbol \lambda_*,\boldsymbol \pi_*) \right\} \right| \\
            &\qquad> \const n^{-1/2} \sum_{v \in [V]} \sum_{k=2}^K (a_{n,k,v} + b_{n,k,v}) \text{ and $U^C$ occurs} \Bigg) \\
            &\leq \sum_{v \in [V]} \Prob \Bigg( \left| \frac{|I_v|}{n \sum_{a \in \mathcal{S}'_1} \hat{\pi}^a_v} (P^{n,v}-P_*) \left\{ \left( \sum_{a \in \mathcal{S}'_1} \hat\pi^{a}_{v} \right) \widetilde{\mathcal{T}}(\widehat{\boldsymbol \ell}_{v}, \widehat{\boldsymbol \lambda}_{v}, \widehat{\boldsymbol \pi}_{v}) - \left( \sum_{a \in \mathcal{S}'_1} \pi^{a}_* \right) \widetilde{\mathcal{T}}(\boldsymbol \ell_*,\boldsymbol \lambda_*,\boldsymbol \pi_*) \right\} \right| \\
            &\qquad\qquad> \const n^{-1/2} \sum_{k=2}^K (a_{n,k,v} + b_{n,k,v}) \text{ and $U^C$ occurs} \Bigg) \leq \frac{\epsilon}{3}.
        \end{align*}

    For the third term on the right-hand side of \eqref{eq: eq r hat expansion2}, since $|\sum_{k=2}^K B_{k,v}| \leq \const \sum_{k=2}^K \| \hat{\lambda}^{k-1}_v - \lambda^{k-1}_* \|_{L^2(P_*)} \| \hat{\ell}^{k-1}_v - h^{k-1}_v \|_{L^2(P_*)}$ and $\frac{|I_v| \sum_{a \in \mathcal{S}'_1} \hat{\pi}^a_v}{n} = \bigO_p(1)$, we have that
        $$\Prob \left( \left| \frac{|I_v| \sum_{a \in \mathcal{S}'_1} \hat{\pi}^a_v}{n} \sum_{k=2}^K B_{k,v} \right| > \const \sum_{k=2}^K a_{n,k,v} b_{n,k,v} \text{ and $U^C$ occurs} \right) \leq \frac{\epsilon}{3 V}$$
        and thus
        $$\Prob \left( \left| \sum_{v \in [V]}  \frac{|I_v| \sum_{a \in \mathcal{S}'_1} \hat{\pi}^a_v}{n} \sum_{k=2}^K B_{k,v} \right| > \const \sum_{v \in [V]} \sum_{k=2}^K a_{n,k,v} b_{n,k,v} \text{ and $U^C$ occurs} \right) \leq \frac{\epsilon}{3}.$$

    For the last term on the right-hand side of \eqref{eq: eq r hat expansion2}, since
        \begin{align*}
            \frac{|I_v|}{n} \left( \frac{ \sum_{a \in \mathcal{S}'_1} \pi^{a}_*}{\sum_{a \in \mathcal{S}'_1} \pi^{a}_v}-1 \right) (P^{n,v}-P_*) D_\generalindependence(\boldsymbol \ell_*,\boldsymbol \lambda_*,\boldsymbol \pi_*,r_*) = \bigO_p(n^{-1/2}) \bigO_p(n^{-1/2}) = \bigO_p(n^{-1}),
        \end{align*}
        we have that
        $$\Prob \left( \left| \frac{|I_v|}{n} \left( \frac{ \sum_{a \in \mathcal{S}'_1} \pi^{a}_*}{\sum_{a \in \mathcal{S}'_1} \pi^{a}_v}-1 \right) (P^{n,v}-P_*) D_\generalindependence(\boldsymbol \ell_*,\boldsymbol \lambda_*,\boldsymbol \pi_*,r_*) \right| > \frac{\const}{n} \text{ and $U^C$ occurs} \right) \leq \frac{\epsilon}{3V}$$
        and thus
        $$\Prob \left( \left| \sum_{v \in [V]} \frac{|I_v|}{n} \left( \frac{ \sum_{a \in \mathcal{S}'_1} \pi^{a}_*}{\sum_{a \in \mathcal{S}'_1} \pi^{a}_v}-1 \right) (P^{n,v}-P_*) D_\generalindependence(\boldsymbol \ell_*,\boldsymbol \lambda_*,\boldsymbol \pi_*,r_*) \right| > \frac{\const}{n} \text{ and $U^C$ occurs} \right) \leq \frac{\epsilon}{3}.$$

        Combining the above three bounds, for any $t > \const \sum_{v \in [V]} \sum_{k=2}^K (a_{n,k,v} b_{n,k,v} + a_{n,k,v} + b_{n,k,v} + n^{-1})$, we have that
        \begin{align*}
            &\Prob(|\hat{r} - \Delta - r_*| > t \text{ and $U^C$ occurs}) \\
            &\leq \Prob \Bigg( \left|\frac{1}{n} \sum_{i=1}^n D_\generalindependence(\boldsymbol \ell_*,\boldsymbol \lambda_*,\boldsymbol \pi_*,r_*)(O_i)\right| > \\
            &\qquad\qquad t - \const \sum_{v \in [V]} \sum_{k=2}^K (a_{n,k,v} b_{n,k,v} + n^{-1/2} (a_{n,k,v} + b_{n,k,v}) + n^{-1}) \text{ and $U^C$ occurs} \Bigg)  \\
            &\quad+ \Prob \Bigg( \left| \sum_{v \in [V]} \frac{|I_v|}{n \sum_{a \in \mathcal{S}'_1} \hat{\pi}^a_v} (P^{n,v}-P_*) \left\{ \left( \sum_{a \in \mathcal{S}'_1} \hat\pi^{a}_{v} \right) \widetilde{\mathcal{T}}(\widehat{\boldsymbol \ell}_{v}, \widehat{\boldsymbol \lambda}_{v}, \widehat{\boldsymbol \pi}_{v}) - \left( \sum_{a \in \mathcal{S}'_1} \pi^{a}_* \right) \widetilde{\mathcal{T}}(\boldsymbol \ell_*,\boldsymbol \lambda_*,\boldsymbol \pi_*) \right\} \right| \\
            &\qquad\qquad> \const n^{-1/2} \sum_{v \in [V]} \sum_{k=2}^K (a_{n,k,v} + b_{n,k,v}) \text{ and $U^C$ occurs} \Bigg) \\
            &\quad+ \Prob \left( \left| \sum_{v \in [V]}  \frac{|I_v| \sum_{a \in \mathcal{S}'_1} \hat{\pi}^a_v}{n} \sum_{k=2}^K B_{k,v} \right| > \const \sum_{v \in [V]} \sum_{k=2}^K a_{n,k,v} b_{n,k,v} \text{ and $U^C$ occurs} \right) \\
            &\quad+ \Prob \left( \left| \sum_{v \in [V]} \frac{|I_v|}{n} \left( \frac{ \sum_{a \in \mathcal{S}'_1} \pi^{a}_*}{\sum_{a \in \mathcal{S}'_1} \pi^{a}_v}-1 \right) (P^{n,v}-P_*) D_\generalindependence(\boldsymbol \ell_*,\boldsymbol \lambda_*,\boldsymbol \pi_*,r_*) \right| > \frac{\const}{n} \text{ and $U^C$ occurs} \right) \\
            &\leq \Prob \left( \left|\frac{1}{n} \sum_{i=1}^n D_\generalindependence(\boldsymbol \ell_*,\boldsymbol \lambda_*,\boldsymbol \pi_*,r_*)(O_i)\right| > t - \const \sum_{v \in [V]} \sum_{k=2}^K (a_{n,k,v} b_{n,k,v} + a_{n,k,v} + b_{n,k,v} + n^{-1}) \right) + \epsilon.
        \end{align*}
        If $\sigma_{*,\generalindependence}=0$, then $D_\generalindependence(\boldsymbol \ell_*,\boldsymbol \lambda_*,\boldsymbol \pi_*,r_*)(O)=0$ $P_*$-almost surely and
        \begin{align*}
            &\Prob \left( \left|\frac{1}{n} \sum_{i=1}^n D_\generalindependence(\boldsymbol \ell_*,\boldsymbol \lambda_*,\boldsymbol \pi_*,r_*)(O_i)\right| > t - \const \sum_{v \in [V]} \sum_{k=2}^K (a_{n,k,v} b_{n,k,v} + n^{-1/2} (a_{n,k,v} + b_{n,k,v}) + n^{-1}) \right) \\
            &= 0.
        \end{align*}
        Otherwise, the above bound equals
        \begin{align*}
            &\Prob \Bigg( \left|\frac{1}{\sqrt{n} \sigma_{*,\generalindependence}} \sum_{i=1}^n D_\generalindependence(\boldsymbol \ell_*,\boldsymbol \lambda_*,\boldsymbol \pi_*,r_*)(O_i)\right| \\
            &\qquad> \sqrt{n} \frac{t - \const \sum_{v \in [V]} \sum_{k=2}^K (a_{n,k,v} b_{n,k,v} + n^{-1/2} (a_{n,k,v} + b_{n,k,v}) + n^{-1})}{\sigma_{*,\generalindependence}} \Bigg) + \epsilon.
        \end{align*}
        Since $\frac{1}{\sqrt{n} \sigma_{*,\generalindependence}} \sum_{i=1}^n D_\generalindependence(\boldsymbol \ell_*,\boldsymbol \lambda_*,\boldsymbol \pi_*,r_*)(O_i)$ converges in distribution to the standard normal distribution, by $\expect_{P_*} |D_\independence(\boldsymbol \ell_*,\boldsymbol \theta_*,\boldsymbol \pi_*,r_*)(O)|^3 < \infty$ and Berry-Esseen Theorem, the above bound is less than or equal to
        \begin{align*}
            & 2 \left\{ 1- \Phi \left( \sqrt{n} \frac{t - \const \sum_{v \in [V]} \sum_{k=2}^K (a_{n,k,v} b_{n,k,v} + n^{-1/2} (a_{n,k,v} + b_{n,k,v}) + n^{-1})}{\sigma_{*,\generalindependence}} \right) \right\} + \frac{\const'}{\sqrt{n}} + \epsilon \\
            &= 2 \Phi \left( -\sqrt{n} \frac{t - \const \sum_{v \in [V]} \sum_{k=2}^K (a_{n,k,v} b_{n,k,v} + n^{-1/2} (a_{n,k,v} + b_{n,k,v}) + n^{-1})}{\sigma_{*,\generalindependence}} \right) + \frac{\const'}{\sqrt{n}} + \epsilon,
        \end{align*}
        where the last equality follows from the symmetry of the standard normal distribution.
        Therefore, regardless of whether $\sigma_{*,\generalindependence}$ is zero or not,
        \begin{align*}
            &\Prob(|\hat{r} - \Delta - r_*| > t \text{ and $U^C$ occurs}) \\
            &\leq 2 \Phi \left( -\sqrt{n} \frac{t - \const \sum_{v \in [V]} \sum_{k=2}^K (a_{n,k,v} b_{n,k,v} + n^{-1/2} (a_{n,k,v} + b_{n,k,v}) + n^{-1})}{\sigma_{*,\generalindependence}} \right) + \frac{\const'}{\sqrt{n}} + \epsilon.
        \end{align*}

    With the above results for the two complementary events $U$ and $U^C$, we have that
    \begin{align*}
        & \Prob(|\hat{r}-\Delta-r_*| > t) \\
        &= \Prob(|\hat{r}-\Delta-r_*| > t \text{ and $U$ occurs}) + \Prob(|\hat{r}-\Delta-r_*| > t \text{ and $U^C$ occurs}) \\
        &\leq \Prob(U) + \Prob(|\hat{r}-\Delta-r_*| > t \text{ and $U^C$ occurs}) \\
        &\leq 2 \Phi \left( -\sqrt{n} \frac{t - \const \sum_{v \in [V]} \sum_{k=2}^K (a_{n,k,v} b_{n,k,v} + n^{-1/2} (a_{n,k,v} + b_{n,k,v}) + n^{-1})}{\sigma_{*,\generalindependence}} \right) \\
        &\quad+ \frac{\const'}{\sqrt{n}} + \epsilon + \sum_{v \in [V]} \sum_{k=2}^K (c_{n,k,v} + d_{n,k,v}),
    \end{align*}
    namely \eqref{eq: r hat finite sample confidence} holds.
\end{proof}

Corollary~\ref{corollary: model comparison} is follows immediately from Theorem~\ref{thm: independence eff and DR}.

\subsection{Concept shift in the features} \label{sec: X con shift proof}

\begin{lemma} \label{lemma: X con shift tangent and orthogonal complement}
    Under Condition~\ref{DScondition: X con shift}, the tangent space is $\tangent_\xconshift := \tangent_X \oplus \tangent_A \oplus \tangent_{Y \mid X,A}$ and its orthogonal complement is $\tangent_\xconshift^\perp = \{ f \in L^2_0(P_{*,X,A}): \expect_{P_*}[f(X,A) \mid X] = \expect_{P_*}[f(X,A) \mid A] = 0 \}$.
\end{lemma}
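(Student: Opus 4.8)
The plan is to identify the semiparametric model induced by Condition~\ref{DScondition: X con shift} as a ``cut'' (product) model with variation-independent parameters $(P_A,P_X,P_{Y\mid X,A})$, compute its tangent space as the orthogonal direct sum of the three component tangent spaces, and then read off the orthogonal complement from the nonparametric decomposition $L^2_0(P_*)=\tangent_A\oplus\tangent_{X\mid A}\oplus\tangent_{Y\mid X,A}$ in Lemma~\ref{lemma: orthogonality}.

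First I would record that under $X\independent A$ the joint law factors as $P_X\otimes P_A$ on $(X,A)$ times $P_{Y\mid X,A}$, so these three components are free and variation-independent. To show $\tangent_\xconshift$ is contained in the model tangent space, I would exhibit, for each bounded mean-zero score — $t(X)\in\tangent_X$, $s(A)\in\tangent_A$, $u\in\tangent_{Y\mid X,A}$ — a one-parameter submodel $\intd P^\epsilon/\intd P_*=1+\epsilon\cdot(\text{score})$, exactly as in the proof of Lemma~\ref{lemma: nonparametric efficiency}, and verify that each such tilt preserves $X\independent A$: tilting by $t(X)$ or $s(A)$ keeps the $(X,A)$-marginal in product form, while tilting by $u$ leaves the $(X,A)$-marginal unchanged because $\expect_{P_*}[u\mid X,A]=0$. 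Since such bounded scores are dense in $\tangent_X\oplus\tangent_A\oplus\tangent_{Y\mid X,A}$, and these three subspaces are mutually orthogonal (by Lemma~\ref{lemma: orthogonality}, $\tangent_{Y\mid X,A}$ is orthogonal to $L^2(P_{*,X,A})$, which contains both $\tangent_X$ and $\tangent_A$, and $\tangent_X\perp\tangent_A$ by independence together with $\expect_{P_*}f=0$), this gives one inclusion. For the reverse inclusion I would decompose any model score along $L^2_0(P_*)=\tangent_A\oplus\tangent_{X\mid A}\oplus\tangent_{Y\mid X,A}$: the structural restriction that $P_{X\mid A=a}$ not depend on $a$ forces the $\tangent_{X\mid A}$-component of the score to be $a$-free, hence to lie in $\tangent_X$, where one uses that under $X\independent A$ one has $\tangent_X\subseteq\tangent_{X\mid A}$ since $\expect_{P_*}[\bar h(X)\mid A]=\expect_{P_*}[\bar h(X)]=0$ for $\bar h\in\tangent_X$. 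Thus the tangent space is exactly $\tangent_\xconshift$.

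For the orthogonal complement I would prove both inclusions directly. Let $\mathcal{M}:=\{f\in L^2_0(P_{*,X,A}):\expect_{P_*}[f\mid X]=\expect_{P_*}[f\mid A]=0\}$. The inclusion $\mathcal{M}\subseteq\tangent_\xconshift^\perp$ follows from the tower property: for $f\in\mathcal{M}$ and $u\in\tangent_{Y\mid X,A}$, $\langle f,u\rangle=\expect_{P_*}\bigl[f\,\expect_{P_*}[u\mid X,A]\bigr]=0$; for $t(X)\in\tangent_X$, $\langle f,t\rangle=\expect_{P_*}\bigl[t(X)\expect_{P_*}[f\mid X]\bigr]=0$; and similarly for $\tangent_A$. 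Conversely, if $f\in\tangent_\xconshift^\perp\subseteq L^2_0(P_*)$, then $f\perp\tangent_{Y\mid X,A}$ together with Lemma~\ref{lemma: orthogonality} forces the projection $f-\expect_{P_*}[f\mid X,A]$ of $f$ onto $\tangent_{Y\mid X,A}$ to vanish, so $f$ is $P_*$-a.s.\ a function of $(X,A)$; then $f\perp\tangent_X$ forces $\expect_{P_*}[f\mid X]$ to be $P_*$-a.s.\ constant, hence zero since its mean is $\expect_{P_*}f=0$, and $f\perp\tangent_A$ forces $\expect_{P_*}[f\mid A]=0$ likewise, so $f\in\mathcal{M}$.

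The main obstacle I anticipate is the ``no larger tangent space'' direction: one must carefully argue that a regular submodel obeying the independence restriction perturbs $P_{X\mid A}$ only through an $a$-free direction, i.e.\ translate the structural constraint into the precise statement that the $\tangent_{X\mid A}$-component of every model score lies in $\tangent_X$. Everything else — the submodel constructions and the orthogonality bookkeeping — is routine given Lemmas~\ref{lemma: orthogonality} and \ref{lemma: nonparametric efficiency}.
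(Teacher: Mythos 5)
Your proof is correct and follows essentially the same route as the paper, whose own proof is a one-line appeal to Lemma~\ref{lemma: orthogonality} and the definition of $\tangent_\xconshift^\perp$: you simply spell out the submodel constructions, the score decomposition along $\tangent_A\oplus\tangent_{X\mid A}\oplus\tangent_{Y\mid X,A}$, and the two inclusions for the complement that the paper leaves implicit. No gaps.
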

This lemma follows from Lemma~\ref{lemma: orthogonality} and the definition of $\tangent_\xconshift^\perp$.

\begin{proof}[Proof of Corollary~\ref{corollary: X con shift EIF}]
Since Condition~\ref{DScondition: X con shift} is a special case of Condition~\ref{DScondition: independence} with $K=2$, $\mathcal{A}=\{0,1\}$, $\mathcal{S}_1=\{1\}$, $\mathcal{S}_2 = \emptyset$, $Z_1=X$, and $Z_2=Y$, Corollary~\ref{corollary: X con shift EIF} is thus a consequence of the efficiency bound under Condition~\ref{DScondition: independence} from \eqref{eq: independence EIF}. Specifically, we have that
$$\ell^1_* = \mathcal{E}_*, \qquad \pi^{0}_* = P_*(A=0) = \rho_* = 1-\pi^{1}_*, \qquad \theta^0_*=\frac{1-\rho_*}{\rho_*}.$$
The desired result follows by plugging the above equalities into $D_\independence$.

\end{proof}

\begin{proof}[Proof of Corollary~\ref{corollary: X con shift eff gain}]
    Since $D_\xconshift(\rho_*,\mathcal{E}_*,r_*) - D_\nonparametric(\rho_*,r_*)$ lies in $\tangent_\xconshift^\perp$ and is orthogonal to $D_\xconshift(\rho_*,\mathcal{E}_*,r_*)$ , we have that the relative efficiency gain is
    $$1-\frac{P_* D_\xconshift(\rho_*,\mathcal{E}_*,r_*)^2}{P_* D_\nonparametric(\rho_*,r_*)^2} = \frac{P_* \{ D_\nonparametric(\rho_*,r_*) - D_\xconshift(\rho_*,\mathcal{E}_*,r_*) \}^2}{P_* D_\nonparametric(\rho_*)^2}.$$
    By direct calculation, we find that
    \begin{align*}
        & D_\nonparametric(\rho_*,r_*)(o) - D_\xconshift(\rho_*,\mathcal{E}_*,r_*)(o) = \frac{1}{\rho_*} (a-1+\rho_*) (\mathcal{E}_*(x) - r_*), \\
        & P_* \{ D_\nonparametric(\rho_*,r_*) - D_\xconshift(\rho_*,\mathcal{E}_*,r_*) \}^2 = \frac{1-\rho_*}{\rho_*} \expect_{P_*} \left[ (\mathcal{E}_*(X) - r_*)^2 \right], \\
        & P_* D_\nonparametric(\rho_*,r_*)^2 = \frac{1}{\rho_*} \left\{ \expect_{P_*} \left[ \expect_{P_*} \left[ \{ \ell(X,Y) - \mathcal{E}_*(X) \}^2 \mid A=0,X \right] \right] + \expect_{P_*} \left[ \{ \mathcal{E}_*(X) - r_* \}^2 \right] \right\}.
    \end{align*}
    Corollary~\ref{corollary: X con shift eff gain} follows.
\end{proof}

\begin{proof}[Proof of Theorem~\ref{thm: X con shift efficiency robust}]
    We focus on $\hat{r}_{\xconshift}^v$ for a fixed fold $v$ as the desired property of $\hat{r}_{\xconshift}$ follows immediately. We first condition on the out-of-fold data.
    It follows directly that
    $$P_* D_\xconshift(\rho_*, \mathcal{E}_\infty,r_*) = 0 \quad \text{and} \quad P^{n,v} D_\xconshift(\hat{\rho}^v,\hat{\mathcal{E}}^{-v},\hat{r}_{\xconshift}^v) = 0.$$
    Then we have the following equation:
    \begin{align}
    \begin{split}
        0 &= (P^{n,v}-P_*) \left\{ D_\xconshift(\hat{\rho}^{-v},\hat{\mathcal{E}}^{-v},\hat{r}_{\xconshift}^v) - D_\xconshift(\rho_*,\mathcal{E}_\infty,r_*) \right\} \\
        &\quad+ P_* \left\{ D_\xconshift(\hat{\rho}^{-v},\hat{\mathcal{E}}^{-v},\hat{r}_{\xconshift}^v) - D_\xconshift(\rho_*,\mathcal{E}_\infty,r_*) \right\} \\
        &\quad+ (P^{n,v}-P_*) D_\xconshift(\rho_*,\mathcal{E}_\infty,r_*).
    \end{split} \label{eq: X con shift Z expanssion}
    \end{align}
    We next study the first two terms on the right-hand side of \eqref{eq: X con shift Z expanssion}.
    
    \textbf{Term~2:} 
    Using the definition 
    of $D_\xconshift$ from 
    \eqref{eq: X con shift EIF},
    direct calculation shows that term~2 equals
    \begin{align*}
    & P_* \Bigg\{ \left[ \frac{1-A}{\hat{\rho}^v} - \frac{1-A}{\rho_*} \right]\ell(X,Y) - \left[ \frac{1-A}{\hat{\rho}^v} \hat{\mathcal{E}}^{-v}(X) - \frac{1-A}{\rho_*} \mathcal{E}_\infty(X) \right] \\
    &\quad\quad+ [\hat{\mathcal{E}}^{-v}(X) - \mathcal{E}_\infty(X)] - [\hat{r}_{\xconshift}^v - r_*] \Bigg\}.
     \end{align*}
Conditioning on $A=0$,
and using that
$\mathcal{E}_*: x \mapsto \expect_{P_*}[\ell(X,Y) \mid X=x,A=0]$, this further equals
\begin{align*}
    & P_* \left\{ \left[ \frac{\rho_*}{ \hat{\rho}^v} - 1 \right] \mathcal{E}_* - \left[ \frac{\rho_*}{\hat{\rho}^v} \hat{\mathcal{E}}^{-v} - \mathcal{E}_\infty \right] + [\hat{\mathcal{E}}^{-v} - \mathcal{E}_\infty] \right\} - (\hat{r}_{\xconshift} - r_*) \\
    &= \frac{\hat{\rho}^v-\rho_*}{\hat{\rho}^v} P_* (\hat{\mathcal{E}}^{-v} - \mathcal{E}_*) - (\hat{r}_{\xconshift}^v - r_*) \\
    &= \frac{\hat{\rho}^v-\rho_*}{\hat{\rho}^v} P_* (\hat{\mathcal{E}}^{-v} - \mathcal{E}_\infty) + \frac{\hat{\rho}^v-\rho_*}{\hat{\rho}^v} P_* (\mathcal{E}_\infty - \mathcal{E}_*) - (\hat{r}_{\xconshift}^v - r_*).
    \end{align*}
    Since $\hat{\rho}^v = \rho_* + (P^{n,v} - P_*) ( 1-A - \rho_* ) = \rho_* + \bigO_p(n^{-1/2})$ and $\| \hat{\mathcal{E}}^{-v} - \mathcal{E}_\infty \|_{L^2(P_*)} = \smallo_p(1)$, term~2 further equals
    \begin{align*}
        & \frac{P_* (\mathcal{E}_\infty - \mathcal{E}_*)}{\rho_*} (P^{n,v} - P_*) (1 - A - \rho_*) - (\hat{r}_{\xconshift}^v - r_*) + \frac{\hat{\rho}^v-\rho_*}{\hat{\rho}^v} P_* (\hat{\mathcal{E}}^{-v} - \mathcal{E}_\infty) \\
        &= (P^{n,v} - P_*) \left\{ \frac{P_* \mathcal{E}_\infty - r_*}{\rho_*} (1 - A - \rho_*) \right\} - (\hat{r}_{\xconshift}^v - r_*) + \frac{\hat{\rho}^v-\rho_*}{\hat{\rho}^v} P_* (\hat{\mathcal{E}}^{-v} - \mathcal{E}_\infty),
    \end{align*}
    where $\frac{\hat{\rho}^v-\rho_*}{\hat{\rho}^v} P_* (\hat{\mathcal{E}}^{-v} - \mathcal{E}_\infty) = \smallo_p(n^{-1/2})$.

    \textbf{Term~1:} 
        Using the definition 
    of $D_\xconshift$ from 
    \eqref{eq: X con shift EIF}, clearly,
    \begin{align*}
        & D_\xconshift(\hat{\rho}^v,\hat{\mathcal{E}}^{-v},\hat{r}_{\xconshift}^v)(o) - D_\xconshift(\rho_*,\mathcal{E}_\infty,r_*)(o) \\
        &= \left[ \frac{1-a}{\hat{\rho}^v} - \frac{1-a}{\rho_*} \right]\ell(x,y) - \left[ \frac{1-a}{\hat{\rho}^v} \hat{\mathcal{E}}^{-v}(x) - \frac{1-a}{\rho_*} \mathcal{E}_\infty(x) \right] \\
        &\quad+ [\hat{\mathcal{E}}^{-v}(x) - \mathcal{E}_\infty(x)] - [\hat{r}_{\xconshift} - r_*].
    \end{align*}
    Define
    $$f(\rho,\mathcal{E}): o \mapsto \left[ \frac{1-a}{\rho} - \frac{1-a}{\rho_*} \right]\ell(x,y) - \left[ \frac{1-a}{\rho} \mathcal{E}(x) - \frac{1-a}{\rho_*} \mathcal{E}_\infty(x) \right] + [\mathcal{E}(x) - \mathcal{E}_\infty(x)].$$
    Since $\hat{r}_{\xconshift} - r_*$ is a scalar, term~1 equals $(P^{n,v}-P_*) f(\hat{\rho}^v,\mathcal{E}^{-v})$.
    For any positive sequence $\{\delta\}_{n \geq 1}$ such that $\sqrt{n} \delta \to \infty$ and $\delta \to 0$, the function class
    $$\funclass_{\delta} := \left\{ f(\rho,\mathcal{E}^{-v}): \rho \in [\rho_* - \delta, \rho_* + \delta] \right\}$$
    contains $f(\hat{\rho}^v,\mathcal{E}^{-v})$
    with probability tending to one. By Lemmas~9.6 and 9.9 in \cite{Kosorok2008}, this function class is VC-subgraph and thus of bounded uniform entropy integral.
    An envelope of $\funclass_{\delta}$ is
    \begin{align*}
        o &\mapsto (1-a) \left[ \frac{1}{\rho_*-\delta} - \frac{1}{\rho_*} \right] \ell(x,y)
        + \frac{1-a}{\rho_*-\delta} \left| \hat{\mathcal{E}}^{-v}(x) - \mathcal{E}_\infty(x) \right| \\
        &+ (1-a) \left[ \frac{1}{\rho_*-\delta} - \frac{1}{\rho_*} \right] \left| \mathcal{E}_\infty(x) \right|
        + \left| \hat{\mathcal{E}}^{-v}(x) - \mathcal{E}_\infty(x) \right|,
    \end{align*}
    whose $L^2(P_*)$-norm is of order $\delta + \left\|  \hat{\mathcal{E}}^{-v} - \mathcal{E}_\infty \right\|_{L^2(P_*)}$. Therefore, by the proof of Theorem~2.5.2 in \citetsupp{vandervaart1996} (specifically Line~17 on page~128, the fourth displayed equation on that page), we have that term~1 is upper bounded by $n^{-1/2} \left( \delta + \left\| \hat{\mathcal{E}}^{-v} - \mathcal{E}_\infty \right\|_{L^2(P_*)} \right)$ up to a multiplicative constant, conditional on the event that $\funclass_{\delta}$ contains $f(\hat{\rho}^v,\mathcal{E}^{-v})$. Since this event occurs with probability tending to one, and, $\left\| \hat{\mathcal{E}}^{-v} - \mathcal{E}_\infty \right\|_{L^2(P_*)} = \smallo_p(1)$ with randomness in the out-of-fold data accounted for, we conclude that term~1 is $\smallo_p(n^{-1/2})$ unconditionally by Lemma~6.1 in \citetsupp{Chernozhukov2018debiasedML}.

    With the above results for the terms in \eqref{eq: X con shift Z expanssion}, we conclude that
    \begin{align*}
        &\hat{r}_{\xconshift}^v - r_* \\
        &= (P^{n,v} - P_*) \left\{ D_\xconshift(\rho_*,\mathcal{E}_\infty,r_*)(O) + \frac{P_* \mathcal{E}_\infty - r_*}{\rho_*} (1 - A - \rho_*) \right\} \\
        &\quad+ \frac{\hat{\rho}^v-\rho_*}{\hat{\rho}^v} P_* (\hat{\mathcal{E}}^{-v} - \mathcal{E}_\infty) + (P^{n,v}-P_*) \left\{ D_\xconshift(\hat{\rho}^{-v},\hat{\mathcal{E}}^{-v},\hat{r}_{\xconshift}^v) - D_\xconshift(\rho_*,\mathcal{E}_\infty,r_*) \right\} \\
        &= (P^{n,v} - P_*) \left\{ D_\xconshift(\rho_*,\mathcal{E}_\infty,r_*)(O) + \frac{P_* \mathcal{E}_\infty - r_*}{\rho_*} (1 - A - \rho_*) \right\} + \smallo_p(n^{-1/2}).
    \end{align*}
    Then \eqref{eq: X con shift robust AL} follows from the definition of $\hat{r}_{\xconshift}$ in Alg.~\ref{alg: X con shift estimator};
    \eqref{eq: X con shift eff} follows from \eqref{eq: X con shift robust AL} by replacing $\mathcal{E}_\infty$ with $\mathcal{E}_*$.
    We note that the influence function of $\hat{r}_{\xconshift}$ in \eqref{eq: X con shift robust AL} equals $D_\xconshift(\rho_*,\mathcal{E}_*,r_*)$ plus
    $$o \mapsto \left( 1 - \frac{1-a}{\rho_*} \right) \left\{ \mathcal{E}_\infty(x) - P_* \mathcal{E}_\infty - \mathcal{E}_*(x) + r_* \right\},$$
    which, since $r_* = P_* \mathcal{E}_*$, is an element in $\tangent_\xconshift^\perp$ from Lemma~\ref{lemma: X con shift tangent and orthogonal complement}, and so the influence function of $\hat{r}_{\xconshift}$ is a gradient. The assertion of the regularity and asymptotic efficiency follows from Lemma~2.9 in Part~III of \cite{Bolthausen2002}.
\end{proof}

\subsection{Full-data covariate shift} \label{sec: cov shift proof}

We first characterize the tangent space and its orthogonal complement in the following lemma.

\begin{lemma}[Tangent space and orthogonal complement under full-data covariate shift] \label{lemma: cov shift tangent and orthogonal complement}
    In the semiparametric model under Condition~\ref{DScondition: cov shift}, the tangent space at $P_*$ is $\tangent_\covshift := \tangent_X \oplus \tangent_{A \mid X} \oplus \tangent_{Y \mid X}$ and its orthogonal complement is
    $$\tangent_\covshift^\perp = \left\{ o \mapsto \begin{cases}
    0 & \text{if } g_*(x)=0 \\
    (1-a-g_*(x)) \frac{b(x,y)}{g_*(x)} & \text{if } g_*(x) \neq 0
    \end{cases}: \text{ $b$ such that } \expect_{P_*}[b(X,Y) \mid X] = 0
    \right\}.$$
\end{lemma}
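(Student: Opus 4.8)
\textbf{Proof plan for Lemma~\ref{lemma: cov shift tangent and orthogonal complement}.}

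The plan is to first identify the tangent space $\tangent_\covshift$ using the parametrization $(P_X, P_{A\mid X}, P_{Y\mid X})$ implied by Condition~\ref{DScondition: cov shift}, and then compute its orthogonal complement inside $L^2_0(P_*)$. For the tangent space: under covariate shift, $Y \independent A \mid X$, so the likelihood factorizes as $p(x)\,p(a\mid x)\,p(y\mid x)$. Varying each factor along regular one-dimensional parametric submodels (as in the construction in the proof of Lemma~\ref{lemma: nonparametric efficiency}, via perturbations of the form $1+\epsilon h$) yields scores that live in $\tangent_X = L^2_0(P_{*,X})$, $\tangent_{A\mid X}$, and $\tangent_{Y\mid X}$ respectively. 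By Lemma~\ref{lemma: orthogonality}, these three subspaces are mutually orthogonal, so their (closed) sum is a direct sum, and it is a standard fact that the tangent space equals the closed linear span of all such scores; hence $\tangent_\covshift = \tangent_X \oplus \tangent_{A\mid X} \oplus \tangent_{Y\mid X}$. Here one should note that $\tangent_{Y\mid X}$ rather than $\tangent_{Y\mid X,A}$ appears precisely because $P_{Y\mid X,A}$ is restricted to not depend on $A$.

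For the orthogonal complement, I would start from the nonparametric decomposition $L^2_0(P_*) = \tangent_X \oplus \tangent_{A\mid X} \oplus \tangent_{Y\mid X,A}$ (Lemma~\ref{lemma: orthogonality} with $K=2$, or the $X,A$-version). Since $\tangent_\covshift$ already contains $\tangent_X \oplus \tangent_{A\mid X}$ and $\tangent_{Y\mid X}\subseteq \tangent_{Y\mid X,A}$, the orthogonal complement $\tangent_\covshift^\perp$ sits inside $\tangent_{Y\mid X,A}$ and consists exactly of those $f\in\tangent_{Y\mid X,A}$ orthogonal to $\tangent_{Y\mid X}$. So the task reduces to: characterize $\{f : \expect_{P_*}[f\mid X,A]=0,\ \expect_{P_*}[f\mid X,A=0]+\text{(reweighting)} \perp \tangent_{Y\mid X}\}$. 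The key computation is to impose $\langle f, h(X,Y)\rangle = 0$ for all $h$ with $\expect_{P_*}[h\mid X]=0$. Writing $f$ as a generic element of $\tangent_{Y\mid X,A}$, decompose by conditioning on $A$: $f(o) = \ind(a=0) f_0(x,y) + \ind(a=1) f_1(x,y)$ with $\expect_{P_*}[f_a \mid X, A=a]=0$. Then $\langle f,h\rangle = \expect_{P_*}[\ind(A=0)f_0 h] + \expect_{P_*}[\ind(A=1)f_1 h]$; conditioning on $X$ and using $Y\independent A\mid X$, this becomes $\expect_{P_*}\big[ \expect_{P_*}[(g_*(X)f_0(X,Y) + (1-g_*(X))f_1(X,Y))h(X,Y)\mid X]\big]$. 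Requiring this to vanish for all such $h$ forces $g_*(X) f_0(X,Y) + (1-g_*(X)) f_1(X,Y)$ to be a function of $X$ alone (hence zero, by the mean-zero constraint on each $f_a$). Solving $g_* f_0 = -(1-g_*) f_1$ and setting $b := f_1$ (say) when $g_*\neq 0$ gives $f_0 = -(1-g_*)b/g_*$, and assembling $f(o) = \ind(a=0)f_0 + \ind(a=1)f_1 = (1-a-g_*(x))\,b(x,y)/g_*(x)$ on $\{g_*\neq 0\}$, with the degenerate convention $f\equiv 0$ when $g_*(x)=0$; the constraint $\expect_{P_*}[b(X,Y)\mid X]=0$ is exactly the residual mean-zero condition. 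This matches the stated form.

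The main obstacle I anticipate is the careful handling of the support issue $\{g_*(x)=0\}$, i.e., covariates appearing only in the source population: on that set $\ind(A=0)=0$ almost surely, so $f_0$ is unconstrained from the data but contributes nothing to any $L^2(P_*)$ inner product, which is why the canonical representative is taken to be $0$ there. One must check that (i) the set $\{g_*=0\}$ has the right measure-theoretic behavior so that dividing by $g_*$ is legitimate on its complement and (ii) no element of $\tangent_\covshift^\perp$ is ``lost'' by this convention — both follow from the convention $0/0=0$ adopted in the paper and the fact that $\tangent_\covshift^\perp$ is defined as a subset of $L^2_0(P_*)$, where functions are identified up to $P_*$-null sets. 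A secondary, routine point is verifying that the displayed set is actually closed and is contained in $L^2(P_*)$, which needs boundedness of $(1-g_*)/g_*$ on the relevant support or at least square-integrability of the resulting $f$; this is typically ensured under the standing overlap-type assumption that $g_*$ is bounded away from zero on the support of $X\mid A=0$, as already invoked in Corollary~\ref{corollary: cov shift EIF}. I would close by noting that $D_\covshift(\rho_*,g_*,\mathcal{L}_*,r_*) - D_\nonparametric(\rho_*,r_*)$ indeed lies in this $\tangent_\covshift^\perp$ (with $b(x,y) = \ell(x,y)-\mathcal{L}_*(x)$), which both cross-checks the computation and sets up Corollary~\ref{corollary: cov shift eff gain}.
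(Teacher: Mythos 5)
Your proposal is correct and follows essentially the same route as the paper's proof: the tangent space is obtained by noting that covariate shift replaces $\tangent_{Y\mid X,A}$ with $\tangent_{Y\mid X}$, and the orthogonal complement is computed by splitting $f$ over the binary $A$ and imposing $\expect_{P_*}[f\mid X,A]=0$ together with orthogonality to $\tangent_{Y\mid X}$, which yields the same relation $g_* f_0=-(1-g_*)f_1$ and the same treatment of the set $\{g_*=0\}$. The only blemishes are immaterial: your assembled formula is off by a sign (the paper sets $b=-f_1$), and in the closing cross-check the correct choice is $b=-g_*\cdot(\ell-\mathcal{L}_*)/\rho_*$ rather than $\ell-\mathcal{L}_*$; neither affects the set being characterized, since it is invariant under multiplying $b$ by nonzero $X$-measurable factors.
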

\begin{proof}
    By Condition~\ref{DScondition: cov shift}, $\tangent_{Y \mid X,A}=\tangent_{Y \mid X}$ and therefore the tangent space is $\tangent_\covshift$.
    
    Next, we characterize  the orthogonal complement $\tangent_\covshift^\perp$. By definition, $\tangent_\covshift^\perp = \{ f \in L^2_0(P_*): P_* fg=0 \text{ for all } g \in \tangent_\covshift \} = \{ f \in \tangent_{A,Y \mid X}: \expect_{P_*}[f(O) \mid X,Y] = \expect_{P_*}[f(O) \mid X,A] = 0 \}$. Since $A$ is binary, any function $f$ of $o$ can be written as $f(o) = a f_1(x,y) + (1-a) f_*(x,y)$ for two functions $f_*$ and $f_1$. 
    If $f: o \mapsto a f_1(x,y) + (1-a) f_*(x,y)$ lies in $\tangent_\covshift^\perp$, then
    recalling that 
    $g_*(x)=P_*(A=1 \mid X=x)$ 
    from \Cref{sec: cov shift EIF},
     \begin{align*}
        0 &= \expect_{P_*}[f(O) \mid X=x,A=a] = \expect_{P_*}[f_a(X,Y) \mid X=x], \\
        0 &= \expect_{P_*}[f(O) \mid X=x,Y=y] = (1-g_*(x)) f_1(x,y) + g_*(x) f_*(x,y).
    \end{align*}
    If $g_*(x)=0$, 
    then
    $A=1$ with probability one conditional on $X=x$, and the above equations are equivalent to $f_1(x,y)= 0$. In this case, $f(o)=0$.
    
    If $g_*(x) \neq 0$, the above equations are equivalent to $\expect_{P_*}[f_1(X,Y) \mid X] = 0$
    and $f_*(x,y) = - \frac{1-g_*(x)}{g_*(x)} f_1(x,y)$. In this case,
    \begin{align*}
        f(o) &= a f_1(x,y) - (1-a) \frac{1-g_*(x)}{g_*(x)} f_1(x,y) 
        = (1-a-g_*(x)) \frac{-f_1(x,y)}{g_*(x)}
    \end{align*}
    for $f_1$ such that $\expect_{P_*}[f_1(X,Y) \mid X] = 0$.
    The claimed form of $\tangent_\covshift^\perp$ follows by replacing the notation $-f_1$ with $b$.
\end{proof}

\begin{proof}[Proof of Corollary~\ref{corollary: cov shift EIF}]
     Condition~\ref{DScondition: cov shift} is a special case of Condition~\ref{DScondition: independence} with $K=2$, $\mathcal{A}=\{0,1\}$, $\mathcal{S}_1 = \emptyset$, $\mathcal{S}_2=\{1\}$, $Z_1=X$, and $Z_2=Y$. Thus, Corollary~\ref{corollary: cov shift EIF} follows from the efficiency bound from \eqref{eq: independence EIF} under Condition~\ref{DScondition: independence}. Specifically,
     \begin{align*}
        & \ell^1_* = \mathcal{L}_*, \qquad \rho^{1,0} = g_* = 1 - \rho^{1,1}_*, \qquad \pi^{0}_* = \rho_* = 1 - \pi^{1}, \qquad \theta^1_* = \frac{1-g_*}{g_*}.
     \end{align*}
     Plugging the above expressions into $D_\independence$ in \eqref{eq: independence EIF}, we have shown that $D_\covshift(\rho_*,g_*,\mathcal{L}_*,r_*)$ is the efficient influence function.
     
\end{proof}

Corollaries~\ref{corollary: cov shift efficiency} and \ref{corollary: cov shift DR consistent} are implied by Theorem~\ref{thm: independence eff and DR}.

\begin{proof}[Proof of Lemma~\ref{lemma: cov shift no eff and robust}]
    We first consider the parametrization $(P_X,P_{A \mid X},P_{Y \mid X})$. The influence function $\IF(P_{*,X},P_{A \mid X},P_{Y \mid X},r_*)$ must be a sum of $D_\nonparametric(\rho_*,r_*)$ and an element in the orthogonal complement $\tangent_\covshift^\perp$ of the tangent space. Without further conditions on $P_*$, we may consider a data-generating distribution $P_*$ such that $g_*(x) \in (0,1)$. 
    By Lemma~\ref{lemma: cov shift tangent and orthogonal complement} and the condition on $\IF(P_{*,X},P_{A \mid X},P_{Y \mid X},r_*)$, 
    there is a function  $b$ of $(P_{*,X},P_{A \mid X},P_{Y \mid X},r_*)$,
    such that
    $$\expect_{P_*}[b(P_{*,X},P_{A \mid X},P_{Y \mid X},r_*)(X,Y) \mid X]=0$$
    and 
    such that for, any  $(P_{A \mid X},P_{Y \mid X})$,
    \begin{equation}
        \IF(P_{*,X},P_{A \mid X},P_{Y \mid X},r_*)(o) = D_\nonparametric(\rho_*,r_*)(o) + (1-a-g_*(x)) \frac{b(P_{*,X},P_{A \mid X},P_{Y \mid X},r_*)(x,y)}{g_*(x)}. \label{eq: cov shift no eff and robust}
    \end{equation}
    Since the left-hand side of \eqref{eq: cov shift no eff and robust} and $b(P_{*,X},P_{A \mid X},P_{Y \mid X},r_*)$ 
    are both variationally independent of $g_*$, \eqref{eq: cov shift no eff and robust} can only hold if $b$ is the constant zero function $(x,y) \mapsto 0$. Thus, $\IF(P_{*,X},P_{A \mid X},P_{Y \mid X},r_*) = D_\nonparametric(\rho_*,r_*)$.

    The argument for the other parametrization $(P_A,P_{X \mid A},P_{Y \mid X})$ is similar. 
    By Bayes' Theorem,
    $$g_*(x) = \frac{\rho_* P_{*,X \mid A}(x \mid 0)}{\rho_* P_{*,X \mid A}(x \mid 0) + (1-\rho_*) P_{*,X \mid A}(x \mid 1)} = \frac{\rho_* w_*(x)}{\rho_* w_*(x) + 1-\rho_*}$$
    where $P_{*,X \mid A}$ denotes the density of $X \mid A$ under $P_*$ with respect to some measure, and $w_*: x \mapsto P_{*,X \mid A}(x \mid 0)/P_{*,X \mid A}(x \mid 1)$ is a likelihood ratio.
    Using the same argument as above, we have that $\IF(P_A^0,P_{X \mid A},P_{Y \mid X},r_*)(o)$ equals the right-hand side of \eqref{eq: cov shift no eff and robust}.
    Since $\IF(P_A^0,P_{X \mid A},P_{Y \mid X},r_*)(o)$ and $b(P_{*,X},P_{A \mid X},P_{Y \mid X},r_*)$ are both variationally independent of $w_*$, we also conclude that $b$ must be the constant zero function and thus $\IF(P_A^0,P_{X \mid A},P_{Y \mid X},r_*) = D_\nonparametric(\rho_*,r_*)$.
\end{proof}

\subsection{Posterior drift conditions} \label{sec: posterior drift proof}

\begin{proof}[Proof of Theorems~\ref{thm: posterior drift EIF} and \ref{thm: posterior drift EIF2}]
    We focus on proving the result Theorem~\ref{thm: posterior drift EIF} for the weaker condition~\ref{DScondition: cov shift posterior drift}. The argument for the other condition is similar. Our strategy of proof is to first characterize the tangent space $\tangent_{\covshiftpostdrift}$ under Condition~\ref{DScondition: cov shift posterior drift}, then characterize its orthogonal complement $\tangent_{\covshiftpostdrift}^\perp$, and finally find the efficient influence function by projecting $D_\nonparametric(\rho_*, r_*)$ onto $\tangent_{\covshiftpostdrift}$. When calculating the tangent space, we focus on the density function itself and use Lemma~1.8 in Part~III of \cite{Bolthausen2002} to obtain differentiability in square-root of measures.

    \noindent\textbf{Characterization of tangent space:} Since both $A$ and $Y$ are binary, the density $p$ of a general distribution $P$ satisfying Condition~\ref{DScondition: cov shift posterior drift} can be written as
    \begin{align*}
        p(o) &= p_X(x) \times p_{A \mid X}(a \mid x) \times \expit(\phi_P\circ\theta_P(x))^{ay} \{1 - \expit(\phi_P\circ\theta_P(x))\}^{a(1-y)} \\
        &\times  \expit(\theta_P(x))^{(1-a)y} (1-\expit(\theta_P(x)))^{(1-a)(1-y)}
    \end{align*}
    where $p_X$ is the density of $X$ with respect to some dominating measure, $p_{A \mid X}(a \mid x) = \Prob_P(A=a \mid X=x)$, $\theta_P := \logit \Prob_{P}(A=1 \mid X=x)$, and $\phi_P$, corresponding to  $\phi_*$ for $P_*$, is the assumed transformation between the log odds of $Y \mid A,X$ for the distribution $P$. Consider a regular path $\{P^\epsilon: \epsilon \text{ in a neighborhood of } 0\}$ that passes through $P_*$ at $\epsilon=0$. 
    We use a superscript or subscript $\epsilon$ to denote components of $P^\epsilon$. The score function at $\epsilon=0$ is
    \begin{align*}
        o &\mapsto \left. \frac{\partial \log p^\epsilon(o)}{\partial \epsilon} \right|_{\epsilon=0} 
        = \left. \frac{\partial \log p^\epsilon_X(x)}{\partial \epsilon} \right|_{\epsilon=0} + \left. \frac{\partial \log p^\epsilon_{A \mid X}(a \mid x)}{\partial \epsilon} \right|_{\epsilon=0} \\
        &\quad+ ay \left\{ 1-\expit \circ \phi_* \circ \theta_*(x) \right\} \left\{ \left. \frac{\partial \phi_\epsilon(\theta_*(x))}{\partial \epsilon} \right|_{\epsilon=0} + \phi_*'\circ\theta_*(x) \left. \frac{\partial \theta_\epsilon(x)}{\partial \epsilon} \right|_{\epsilon=0}\right\} \\
        &\quad- a(1-y) \expit \circ \phi_* \circ \theta_*(x) \left\{ \left. \frac{\partial \phi_\epsilon(\theta_*(x))}{\partial \epsilon} \right|_{\epsilon=0} + \phi_*'\circ\theta_*(x) \left. \frac{\partial \theta_\epsilon(x)}{\partial \epsilon} \right|_{\epsilon=0}\right\} \\
        &\quad+ (1-a)y \left\{ 1 - \expit \circ \theta_*(x) \right\} \left. \frac{\partial \theta_\epsilon(x)}{\partial \epsilon}\right|_{\epsilon=0} 
        - (1-a)(1-y) \expit \circ \theta_*(x) \left. \frac{\partial \theta_\epsilon(x)}{\partial \epsilon}\right|_{\epsilon=0}.
    \end{align*}
    This further equals
    \begin{align*}
        o &\mapsto \left. \frac{\partial \log p^\epsilon_X(x)}{\partial \epsilon} \right|_{\epsilon=0} + \left. \frac{\partial \log p^\epsilon_{A \mid X}(a \mid x)}{\partial \epsilon} \right|_{\epsilon=0} \\
        &\quad+ a \left\{ y-\expit \circ \phi_* \circ \theta_*(x)  \right\} 
        \times \left\{ \left. \frac{\partial \phi_\epsilon(\theta_*(x))}{\partial \epsilon} \right|_{\epsilon=0} + \phi_*'\circ\theta_*(x) \left. \frac{\partial \theta_\epsilon(x)}{\partial \epsilon} \right|_{\epsilon=0}\right\} \\
        &\quad+ (1-a) \left\{ y - \expit \circ \theta_*(x)  \right\} \left. \frac{\partial \theta_\epsilon(x)}{\partial \epsilon}\right|_{\epsilon=0}.
    \end{align*}
    In the above display, the first two terms are arbitrary elements of $\tangent_{X}$ and $\tangent_{A \mid X}$, respectively.
    Therefore, with the notations defined in Theorem~\ref{thm: posterior drift EIF}, the tangent space $\tangent_{\covshiftpostdrift} \subseteq \tangent_{X} \oplus \tangent_{A \mid X} \oplus \tangent_{\covshiftpostdrift, Y \mid X,A}$
    where $\tangent_{\covshiftpostdrift, Y \mid X,A}$ is the $L^2_0(P_*)$-closure of
    \begin{align}
    \begin{split}
        \tilde{\mathcal{F}} &:= \Bigg\{ G=F(\tilde{\lambda}_1\circ\theta_* + \phi_*'\circ\theta_*\cdot \tilde{\lambda}_2,\,\tilde{\lambda}_2): 
        \tilde{\lambda}_1,\,\tilde{\lambda}_2 \text{ such that } 
        G \in L^2(P_*) \Bigg\}.
    \end{split} \label{eq: covshiftpostdrift Y|X,A tangent set}
    \end{align}

    We next argue that for any given bounded function $s \in \tangent_{X} \oplus \tangent_{A \mid X} \oplus \tilde{\mathcal{F}}$, there exists a regular path $\{P^\epsilon\}$ for $\epsilon$ sufficiently small such that the score function equals $s$. Let $s = s_X + s_{A \mid X} + s_{Y \mid X,A}$ such that $s_X \in \tangent_{X}$, $s_{A \mid X} \in \tangent_{A \mid X}$ and $s_{Y \mid X,A} = F(\tilde{\lambda}_1\circ\theta_* + \phi_*'\circ\theta_*\cdot \tilde{\lambda}_2,\,\tilde{\lambda}_2)$. For all $\epsilon$ sufficiently close to zero, let the density of $P^\epsilon$ be
    \begin{align*}
        &o \mapsto P_{*,X}(x) \left\{ 1 + \epsilon s_X(x) \right\} \times P_{*,A \mid X}(a \mid x) \left\{ 1 + \epsilon s_{A \mid X}(a \mid x) \right\} \\
        &\times \expit \left\{ (\phi_* + \epsilon \tilde{\lambda}_1) \circ (\theta_* + \epsilon \tilde{\lambda}_2)(x) \right\}^{ay} \times \left[ 1 - \expit \left\{ (\phi_* + \epsilon \tilde{\lambda}_1) \circ (\theta_* + \epsilon \tilde{\lambda}_2)(x) \right\} \right]^{a (1-y)} \\
        &\times \expit \left\{ (\theta_* + \epsilon \tilde{\lambda}_2)(x) \right\}^{(1-a) y} \times \left[ 1 - \expit \left\{ (\theta_* + \epsilon \tilde{\lambda}_2)(x) \right\} \right]^{(1-a) (1-y)}.
    \end{align*}
    It is straightforward to verify that the score function of $P^\epsilon$ at $\epsilon=0$ equals $s$ and $\{P^\epsilon\}$ is a regular path passing through $P_*$ at $\epsilon=0$. Therefore, we have shown that $\tangent_{\covshiftpostdrift} \supseteq \tangent_{X} \oplus \tangent_{A \mid X} \oplus \tangent_{\covshiftpostdrift, Y \mid X,A}$, which yields that $\tangent_{\covshiftpostdrift} = \tangent_{X} \oplus \tangent_{A \mid X} \oplus \tangent_{\covshiftpostdrift, Y \mid X,A}$.

    \noindent\textbf{Characterization of orthogonal complement of tangent space:} Since the tangent space under a nonparametric model is the entire space $L^2_0(P_*) = \tangent_{X} \oplus \tangent_{A \mid X} \oplus \tangent_{Y \mid X,A}$ by Lemma~\ref{lemma: orthogonality}, $\tangent_{\covshiftpostdrift}$ only restricts the subspace $\tangent_{Y \mid X,A}$, and thus $\tangent_{\covshiftpostdrift}^\perp$ is also a subspace of $\tangent_{Y \mid X,A}$. Since $A$ and $Y$ are binary, any function in $\tangent_{Y \mid X,A}$ can be written as $F(f_1,f_2)$ for some functions $f_1, f_2 \in L^2(P_{*,X})$. Therefore, $\tangent_{\covshiftpostdrift}^\perp$ consists of all functions in $\tangent_{Y \mid X,A}$ with zero covariance inner product with functions in $\tangent_{\covshiftpostdrift,Y\mid X,A}$, namely, with $P_1(X) = \Prob(A=1 \mid X)$
    \begin{align*}
        & \tangent_{\covshiftpostdrift}^\perp
        = \left\{ f \in \tangent_{Y \mid X,A}: \expect[ f(O) g(O) ]=0, \text{ for all } g \in \tangent_{\covshiftpostdrift,Y \mid X,A} \right\}.
    \end{align*}
    Now, we can write
    \begin{align*}
        &\expect[ f(O) g(O) ]= \expect[F(f_1,f_2) F(\tilde{\lambda}_1\circ\theta_* + \phi_*'\circ\theta_*\cdot \tilde{\lambda}_2,\,\tilde{\lambda}_2)]\\
        &= \expect[
        ( A \left\{ Y-\E(Y|A=1,X)  \right\} f_1(X) + (1-A) \left\{ Y - \E(A|Y=0,X)  \right\} f_2(X))\\
        &\times ( A \left\{ Y-\E(Y|A=1,X)  \right\} [\tilde{\lambda}_1\circ\theta_* + \phi_*'\circ\theta_*\cdot \tilde{\lambda}_2](X) + (1-A) \left\{ Y - \E(A|Y=0,X)  \right\} \tilde{\lambda}_2(X))]\\
        &= \expect[
        ( A \left\{ Y-\E(Y|A=1,X)  \right\}^2 f_1(X)[\tilde{\lambda}_1\circ\theta_* + \phi_*'\circ\theta_*\cdot \tilde{\lambda}_2](X)\\
        &\qquad+ (1-A) \left\{ Y - \E(A|Y=0,X)  \right\}^2 f_2(X)\tilde{\lambda}_2(X)].
    \end{align*}
    Therefore,
        \begin{align*}
        &\Bigg\{ F(f_1,f_2): \expect \Bigg[ \Var(Y \mid A=1,X) f_1(X) \left\{ \tilde{\lambda}_1\circ\theta_*(X) + \phi_*'\circ\theta_*(X)\cdot \tilde{\lambda}_2(X) \right\} P_1(X) \\
        &\qquad+ \Var(Y \mid A=0,X) f_2(X) \tilde{\lambda}_2(X) (1-P_1(X)) \Bigg] =0,
        \quad\text{for all functions } \tilde{\lambda}_1,\,\tilde{\lambda}_2 \Bigg\},
    \end{align*}
    where the term ``all functions'' should be interpreted as in \eqref{eq: covshiftpostdrift Y|X,A tangent set}. With the notations defined in Theorem~\ref{thm: posterior drift EIF}, since $\tilde{\lambda}_1$ and $\tilde{\lambda}_2$ are arbitrary, we have that
    \begin{equation}
        \tangent_{\covshiftpostdrift}^\perp = \left\{ F(f_1,f_2):\, \expect[v(X;1)\cdot  f_1(X) \mid \theta_*(X)] = 0, f_2(x) = - \frac{v(x;1) \cdot\phi_*'\circ\theta_*(x) \cdot f_1(x) }{v(x;0)} \right\}. \label{eq: covshiftpostdrift tangent orthogonal}
    \end{equation}

    \noindent\textbf{Characterization of efficient inference function:} By Lemma~\ref{lemma: orthogonality}, we can decompose $D_\nonparametric(\rho_*,r_*)$ as
    $$D_\nonparametric(\rho_*,r_*)(o) = \underbrace{\frac{1-a}{\rho_*} \left\{ \ell(x,y) - \mathcal{E}_*(x) \right\} }_{\text{projection of $D_\nonparametric(\rho_*,r_*)$ onto $\tangent_{Y \mid X,A}$}} + \underbrace{ \frac{1-a}{\rho_*} \left\{ \mathcal{E}_*(x) - r_* \right\}}_{\text{projection of $D_\nonparametric(\rho_*,r_*)$ onto $\tangent_{A,X}$}}.$$
    Thus, the projection of $D_\nonparametric(\rho_*,r_*)$ onto $\tangent_\covshiftpostdrift$ is the sum of (i) the projection $ o\mapsto \frac{1-a}{\rho_*} \left\{ \mathcal{E}_*(x) - r_* \right\}$ of $D_\nonparametric(\rho_*,r_*)$ onto $\tangent_{A,X} = \tangent_{X} \oplus \tangent_{A \mid X}$, and (ii) the projection of $o \mapsto \frac{1-a}{\rho_*} \left\{ \ell(x,y) - \mathcal{E}_*(x) \right\}$ (i.e., of the projection of $D_\nonparametric(\rho_*,r_*)$ onto $\tangent_{Y \mid X,A}$) onto $D_{\covshiftpostdrift,Y\mid X, A}$.
    
    We now characterize the second projection. We first write $o \mapsto \frac{1-a}{\rho_*} \left\{ \ell(x,y) - \mathcal{E}_*(x) \right\}$ in the form of $F(f_1,f_2)$ by noting that $Y$ is binary:
    \begin{align*}
        & \frac{1-a}{\rho_*} \left\{ \ell(x,y) - \mathcal{E}_*(x) \right\} \\
        &= \frac{1-a}{\rho_*} \left\{ y \ell(x,1) + (1-y) \ell(x,0) - \expit \circ \theta_*(x) \ell(x,1) - \left\{ 1-\expit \circ \theta_*(x) \right\} \ell(x,0) \right\} \\
        &= (1-a) \left\{ y - \expit \circ \theta_*(x)  \right\} \frac{\ell(x,1)-\ell(x,0)}{\rho_*} 
        = F \left( 0, \frac{\ell(\cdot,1)-\ell(\cdot,0)}{\rho_*} \right)(o).
    \end{align*}
    
    Suppose that the projection of $F( 0, \{\ell(\cdot,1)-\ell(\cdot,0)\}/\rho_* )$ onto $\tangent_{\covshiftpostdrift,Y \mid X,A}$ equals $F(\tilde{\lambda}_1\circ\theta_* + \phi_*'\circ\theta_*\cdot \tilde{\lambda}_2,\,\tilde{\lambda}_2)$ for some functions $\tilde{\lambda}_1$ and $\tilde{\lambda}_2$. 
    Then
    it must hold that the difference between $F( 0, \{\ell(\cdot,1)-\ell(\cdot,0)\}/\rho_* )$ and $F(\tilde{\lambda}_1\circ\theta_* + \phi_*'\circ\theta_*\cdot \tilde{\lambda}_2,\,\tilde{\lambda}_2)$, 
    which, since  $F$ is bilinear, equal $F(\tilde{\lambda}_1\circ\theta_* + \phi_*'\circ\theta_*\cdot \tilde{\lambda}_2,\,\tilde{\lambda}_2-\{\ell(\cdot,1)-\ell(\cdot,0)\}/\rho_* )$, lies in $\tangent_{\covshiftpostdrift,Y \mid X,A}^\perp$. By \eqref{eq: covshiftpostdrift tangent orthogonal}, we have that, for $P_*$-a.e. $x$,
    \begin{align}
        & \expect \left[ v(X;1) \left\{ \tilde{\lambda}_1\circ\theta_*(X) + \phi_*'\circ\theta_*(X)\cdot \tilde{\lambda}_2(X) \right\} \mid \theta_*(X) = \theta_*(x) \right] = 0, \label{eq: covshiftpostdrift EIF1} \\
        & \tilde{\lambda}_2(x) - \frac{\ell(x,1)-\ell(x,0)}{\rho_*} 
        = - \frac{v(x;1) \cdot\phi_*'\circ\theta_*(x) \left\{ \tilde{\lambda}_1\circ\theta_*(x) + \phi_*'\circ\theta_*(x)\cdot \tilde{\lambda}_2(x) \right\} }{v(x;0)}. \label{eq: covshiftpostdrift EIF2}
    \end{align}
    Equation~\ref{eq: covshiftpostdrift EIF1} implies that
    $$\tilde{\lambda}_1\circ\theta_*(x) = - \frac{\expect[v(X;1) \tilde{\lambda}_2(X) \mid \theta_*(X) = \theta_*(x)] \phi_*'\circ\theta_*(x)}{\expect[v(X;1) \mid \theta_*(X) = \theta_*(x)]}.$$
    Plug this into \eqref{eq: covshiftpostdrift EIF2} to find that
    \begin{align*}
        & v(x;0) \left\{ \frac{\ell(x,1) - \ell(x,0)}{\rho_*} - \tilde{\lambda}_2(x) \right\} \\
        &= - v(x;1) \frac{\expect[v(X;1) \tilde{\lambda}_2(X) \mid \theta_*(X) = \theta_*(x)]}{\expect[v(X;1) \mid \theta_*(X) = \theta_*(x)]} \left\{ \phi_*'\circ\theta_*(x) \right\}^2 + v(x;1) \left\{ \phi_*'\circ\theta_*(x) \right\}^2 \tilde{\lambda}_2(x).
    \end{align*}
    With notations defined in Theorem~\ref{thm: posterior drift EIF} and $\zeta := \tilde{\lambda}_2 \mu$, the above equation is equivalent to the linear integral equation
    $$\mathscr{A} \zeta + \kappa = \zeta.$$
    With $\zeta_*$ being the solution, we have that $\tilde{\lambda}_2=\zeta_*/\mu$ and $\Tilde{\lambda}_1 \circ \theta_*(x) = -\mathscr{A} \zeta_* \cdot \phi' \circ \theta_*/\mathscr{A} \mu$, and it follows that the projection of $F( 0, \{\ell(\cdot,1)-\ell(\cdot,0)\}/\rho_* )$ onto $\tangent_{\covshiftpostdrift,Y \mid X,A}$ is $F(g_1,g_2)$,
    with $g_1$ and $g_2$ defined in Theorem~\ref{thm: posterior drift EIF}. 
    The desired result follows.

    The proof for Theorem~\ref{thm: posterior drift EIF2}, corresponding to the other condition~\ref{DScondition: posterior drift}, is similar. Because of the additional condition that $X \independent A$, we have that the tangent space is $\tangent_X \oplus \tangent_A \oplus \tangent_{\covshiftpostdrift,Y \mid X, A}$. Therefore, the efficient influence function is the sum of (i) the projection of $D_\nonparametric(\rho_*,r_*)$ onto $\tangent_X$, namely $o \mapsto \mathcal{E}_*(x) - r_*$, (ii) the projection of $D_\nonparametric(\rho_*,r_*)$ onto $\tangent_A$, namely $o \mapsto 0$, and (iii) the projection of $D_\nonparametric(\rho_*,r_*)$ onto $\tangent_{\covshiftpostdrift, Y \mid X, A}$ found above.
\end{proof}

\subsection{Location-scale shift conditions} \label{sec: location-scale shift proof}

\begin{proof}[Proof of Theorems~\ref{thm: ls shift EIF} and \ref{thm: ls shift EIF2}]
    Similarly to the proof of Theorem~\ref{thm: posterior drift EIF}, we focus on the result Theorem~\ref{thm: ls shift EIF} for the weaker condition~\ref{DScondition: ls generalized target shift} and use the same proof strategy. Hence, we abbreviate the presentation.

    \noindent\textbf{Characterization of tangent space:} Condition~\ref{DScondition: ls generalized target shift} is equivalent to
    $$\intd P_{X \mid A,Y}^0(x \mid a=1,y) = |\det(W(y))| \intd P_{X \mid A,Y} (W(y) x + b(y) \mid a=0, y).$$
    Thus, for a distribution $P$ satisfying Condition~\ref{DScondition: ls generalized target shift}, its density $p$ can be written as
    \begin{align*}
        p(o) &= p_Y(y) \times p_{A \mid Y}(a \mid y) \\
        &\quad\times \left\{ |\det(W_P(y))| \tau_P(W_P(y) x + b_P(y) \mid y) \right\}^a \{ \tau_P(x \mid y) \}^{1-a},
    \end{align*}
    where $p_Y$ is the density of $Y$ with respect to some dominating measure, $p_{A \mid Y}(a \mid y) = \Prob_P(A=a \mid Y=y)$; $W_P$, $b_P$ and $\tau_P$ are the counterparts (for $P$) of $W_*$, $b_*$ and $\tau_*$ (for $P_*$).  Consider a regular path $\{P^\epsilon: \epsilon \text{ in a neighborhood of } 0\}$ that passes through $P_*$ at $\epsilon=0$. We use a superscript or subscript $\epsilon$ to denote components of $P^\epsilon$. With the notations defined in Theorem~\ref{thm: ls shift EIF}, the score function at $\epsilon=0$ is
    \begin{align*}
        \left. \frac{\partial \log p^\epsilon(o)}{\partial \epsilon} \right|_{\epsilon=0} &= \left. \frac{\partial \log p_Y^\epsilon(y)}{\partial \epsilon} \right|_{\epsilon=0} + \left. \frac{\partial \log p_{A \mid Y}^\epsilon(a \mid y)}{\partial \epsilon} \right|_{\epsilon=0} \\
        &\quad+ (1-a) \left. \frac{\partial \log \tau_\epsilon(x \mid y)}{\partial \epsilon} \right|_{\epsilon=0} + a \tr \left( \left. \frac{\partial W_\epsilon(y)}{\partial \epsilon} \right|_{\epsilon=0} W_*(y)^{-1} \right) \\
        &\quad+ a \left. \frac{\partial \log \tau_\epsilon(\psi_y(x) \mid y)}{\partial \epsilon} \right|_{\epsilon=0} \\
        &\quad+ a \left\{ \left. \frac{\partial \log \tau_*(\sharp \mid y)}{\partial \sharp} \right|_{\sharp=\psi_y(x)} \right\}^\top \left\{ \left. \frac{\partial W_\epsilon(y)}{\partial \epsilon} \right|_{\epsilon=0} x + \left. \frac{\partial b_\epsilon(y)}{\partial \epsilon} \right|_{\epsilon=0} \right\},
    \end{align*}
    where we have used Jacobi's formula to calculate the derivative $\partial \log\det(W_\epsilon(y))/\partial \epsilon$.
    In the above display, the first two terms are arbitrary elements of $\tangent_Y$ and $\tangent_{A \mid Y}$, respectively; $\left. \frac{\partial \log \tau_\epsilon(x \mid y)}{\partial \epsilon} \right|_{\epsilon=0}$ is an arbitrary element of $\tangent_{X \mid Y,A=0}$. Therefore, with the notations defined in Theorem~\ref{thm: ls shift EIF}, and with 
    $$
    H_1(x \mid y; \lambda_1,\lambda_2) = 
    \tr(\lambda_1(y) W_*(y)^{-1}) + H_{*}(\psi_y(x) \mid y)+
    \left\{ (\log \tau_*)'(\psi_y(x) \mid y) \right\}^\top \left[ \lambda_1(y) x + \lambda_2(y) \right],
    $$
    the tangent space $\tangent_{\LSgentargetshift} \subseteq \tangent_{Y} \oplus \tangent_{A \mid Y} \oplus \tangent_{\LSgentargetshift, X \mid Y,A}$
    where $\tangent_{\LSgentargetshift, X \mid Y,A}$ is the $L^2_0(P_*)$-closure of
    \begin{align*}
        \funclass &:= \Bigg\{ o \mapsto (1-a) H_{*}(x \mid y) + a H_1(x \mid y; \lambda_1,\lambda_2):\,  \expect[H_{*}(X \mid Y) \mid A=0,Y]=0, \\
        &\qquad\qquad \lambda_1(y) \in \real^{d_X \times d_X},\lambda_2(y) \in \real^{d_X} \text{ such that the above function lies in } L^2(P_*) \Bigg\}.
    \end{align*}
    
    Let $s \in \tangent_{Y} \oplus \tangent_{A \mid Y} \oplus \funclass$ be bounded. Then we can decompose $s$ as $s=s_Y + s_{A \mid Y} + s_{X \mid Y,A}$ such that $s_Y \in \tangent_{Y}$, $s_{A \mid Y} \in \tangent_{A \mid Y}$ and
    \begin{align*}
        s_{X \mid Y,A}: o &\mapsto (1-a) H_{*}(x \mid y) + a  H_{1}(x \mid y; \lambda_1,\lambda_2)\in \funclass
    \end{align*}
    for some functions $H_*$, $\lambda_1$ and $\lambda_2$.
    For all $\epsilon$ sufficiently close to zero, let the density of $P^\epsilon$ be
    \begin{align*}
        o &\mapsto P_{*,Y(y)} \{ 1 + \epsilon s_Y(y) \} \times P_{*,A \mid Y}(a \mid y) \{ 1 + \epsilon s_{A \mid Y}(a \mid y)\} \\
        &\quad\times \big\{ |\det(W_*(y) + \epsilon \lambda_1(y))| \tau_*((W_*(y) + \epsilon \lambda_1(y)) x + (b_*(y) + \epsilon \lambda_2(y)) \mid y) \\
        &\qquad\qquad\times (1+\epsilon H_*((W_*(y) + \epsilon \lambda_1(y)) x + (b_*(y) + \epsilon \lambda_2(y)) \mid y)) \big\}^a \\
        &\quad\times \{ \tau_*(x \mid y) (1 + \epsilon H_*(x \mid y)) \}^{1-a}.
    \end{align*}
    It is straightforward to verify that the score function of $P^\epsilon$ at $\epsilon=0$ equals $s$ and $\{P^\epsilon\}$ is a regular path passing through $P_*$ at $\epsilon=0$. Therefore, we have shown that $\tangent_{\LSgentargetshift} = \tangent_{Y} \oplus \tangent_{A \mid Y} \oplus \tangent_{\LSgentargetshift, X \mid Y,A}$.

    \noindent\textbf{Characterization of orthogonal complement of tangent space:} The orthogonal complement $\tangent_{\LSgentargetshift}^\perp$ is a subspace of $\tangent_{X \mid A,Y}$. Since $A$ is binary, we further have that
    $$\tangent_{X \mid A,Y} = \left\{ o \mapsto (1-a) h_0(x \mid y) + a h_1(x \mid y): h_0 \in L^2_0(P_{*,X \mid Y,A=0}), h_1 \in L^2_0(P_{*,X \mid Y,A=1}) \right\}.$$
    Therefore, recalling the notations $g_{*,Y}$ and $\tau_*$ from the theorem statement,
    $\tangent_{\LSgentargetshift}^\perp$ equals
    \begin{align*}
        &\left\{ f: o \mapsto (1-a) h_0(x \mid y) + a h_1(x \mid y): f \in \tangent_{X \mid A,Y}, \expect[f(O) g(O)]=0 \text{ for all $g \in \funclass$} \right\} \\
        &= \Big\{ o \mapsto (1-a) h_0(x \mid y) + a h_1(x \mid y): \\
        &\qquad \expect[h_0(X \mid Y) \mid A=0,Y] = \expect[h_1(X \mid Y) \mid A=1,Y] = 0, \\
        &\qquad\iint g_{*,Y}(y) \tau_*(x \mid y) H_{*}(x \mid y) h_0(x \mid y) \intd x \  P_Y^0(\intd y) \\
        &\quad\qquad+ \iint (1-g_{*,Y}(y)) \cdot |\det(W_*(y))| \cdot \tau_*(\psi_y(x) \mid y) \cdot  h_1(x \mid y) \\
        &\qquad\qquad\times 
        H_1(x \mid y; \lambda_1,\lambda_2) \intd x \ P_Y^0(\intd y) =0 \text{ for any } H_{*}, \lambda_1,\lambda_2 \Big\}.
    \end{align*}
    By the change of variables $z = \psi_y(x)$, the second integral in the condition above can be written as 
    \begin{align*}
        &\iint (1-g_{*,Y}(y)) \tau_*(z \mid y) h_1 \left( \psi_y^{-1}(z) \mid y \right) \\
        &\times \left\{ \tr(\lambda_1(y) W_*(y)^{-1}) + H_{*}(z \mid y) + \{(\log \tau_*)'(z \mid y)\}^\top \left[ \lambda_1(y) \psi_y^{-1}(z) + \lambda_2(y) \right] \right\}  \intd z \ P_Y^0(\intd y).
    \end{align*}
    Changing the notation from $z$ to $x$ shows that this equals
    \begin{align*}
        &\iint (1-g_{*,Y}(y)) \tau_*(x \mid y) h_1 \left( \psi_y^{-1}(x) \mid y \right) \\
        &\times \left\{ \tr(\lambda_1(y) W_*(y)^{-1}) + H_{*}(x \mid y) + \{(\log \tau_*)'(x \mid y)\}^\top \left[ \lambda_1(y) \psi^{-1}_y(x) + \lambda_2(y) \right] \right\} \intd x \ P_Y^0(\intd y).
    \end{align*}
    Now note that since $\psi_y^{-1}(x) =  W_*(y)^{-1}(x-b_*(y)) $, we have, 
    with 
    $G := \tr(\lambda_1(y) W_*(y)^{-1}) + \{(\log \tau_*)'(x \mid y)\}^\top \lambda_1(y) \psi_y^{-1}(x),$
    that
    \begin{align*}
        & h_1 \left( \psi_y^{-1}(x) \mid y \right) G \\
        &= h_1 \left( \psi_y^{-1}(x) \mid y \right) \left\{ \tr(\lambda_1(y) W_*(y)^{-1}) + \{(\log \tau_*)'(x \mid y)\}^\top \lambda_1(y) W_*(y)^{-1}(x-b_*(y)) \right\}. 
       \end{align*}
    Further, since
    $$\{(\log \tau_*)'(x \mid y)\}^\top \lambda_1(y) W_*(y)^{-1}(x-b_*(y))  = \tr \left( \lambda_1(y) W_*(y)^{-1}(x-b_*(y)) \right. \left.\{(\log \tau_*)'(x \mid y)\}^\top \right),$$
    this also equals
    \begin{align*}
        & h_1 \left( \psi_y^{-1}(x) \mid y \right) \left\{ \tr(\lambda_1(y) W_*(y)^{-1}) + \tr \left( \lambda_1(y) W_*(y)^{-1}(x-b_*(y)) \{(\log \tau_*)'(x \mid y)\}^\top \right) \right\} \\
        &= \tr \left( \lambda_1(y) W_*(y)^{-1} \cdot h_1 \left( \psi_y^{-1}(x) \mid y \right)  \cdot \Lambda(x,y) \right).
    \end{align*}
    Next, since expectation and trace are commutative linear operators, we have
    that
    \begin{align*}
        & \iint (1-g_{*,Y}(y)) \tau_*(x \mid y) h_1 \left( \psi_y^{-1}(x) \mid y \right) G \intd x P_Y^0(\intd y) \\
        &= \expect \left[ \tr \left( \lambda_1(Y) W_*(Y)^{-1} 
        \expect \left[ h_1 \left( \psi_Y^{-1}(X) \mid Y \right)  \Lambda(X,Y) \mid A=0,Y \right] \right) \right].
    \end{align*}
    Since $H_*$, $\lambda_1$ and $\lambda_2$ are arbitrary and $W_*(y)$ is invertible, we can transform the constraints in $\tangent_{\LSgentargetshift}^\perp$ to equality constraints on $h_0$ and $h_1$ and obtain that
    \begin{align*}
        \tangent_{\LSgentargetshift}^\perp &= \Big\{ o \mapsto (1-a) h_0(x \mid y) + a h_1(x \mid y): \\
        &\qquad g_{*,Y}(y) h_0(x \mid y) + (1-g_{*,Y}(y)) h_1 \left( \psi_y^{-1}(x) \mid y \right) = 0, \\
        &\qquad \expect \left[ h_1 \left( \psi_Y^{-1}(X) \mid Y \right) \Lambda(X,Y) \mid A=0,Y=y \right] = 0_{d \times d}, \\
        &\qquad \expect \left[ h_1 \left( \psi_Y^{-1}(X) \mid Y \right) (\log \tau_*)'(X \mid Y) \mid A=0,Y=y \right] = 0_{d}, \\
        &\qquad \expect[h_0(X \mid Y) \mid A=0,Y=y] = \expect[h_1(X \mid Y) \mid A=1,Y=y] = 0 \Big\},
    \end{align*}
    where we display the dimension of each multivariate equation. 
    By the change of variable $z=\psi_y^{-1}(x)$, the first constraint is equivalent to
    $$h_1(x \mid y) = -\frac{g_{*,Y}(y)}{1-g_{*,Y}(y)} h_0(\psi_y(x) \mid y).$$
    Moreover, the constraint $\expect[h_0(X \mid Y) \mid A=0,Y]=0$ implies $\expect[h_1(X \mid Y) \mid A=1,Y] = 0$, and so it is enough to consider the first constraint. 
    With this, the orthogonal complement becomes
        \begin{align*}
        \tangent_{\LSgentargetshift}^\perp &= \Big\{ o \mapsto (1-a) h_0(x \mid y) - a \frac{g_{*,Y}(y)}{1-g_{*,Y}(y)} h_0(\psi_y(x) \mid y): \\
        &\qquad \expect \left[ h_0(X \mid Y) \mid A=0,Y \right] = 0 ,\qquad \expect \left[ h_0(X \mid Y) \Lambda(X,Y) \mid A=0,Y \right] = 0_{d \times d}, \\
        &\qquad \expect \left[ h_0(X \mid Y) (\log \tau_*)'(X \mid Y) \mid A=0,Y \right] = 0_{d}\Big\}.
    \end{align*}

    \noindent\textbf{Characterization of the efficient influence function:} By Lemma~\ref{lemma: orthogonality},
    $$D_\nonparametric(\rho_*,r_*)(o) = \underbrace{\frac{1-a}{\rho_*} \left\{ \ell(x,y) - \mathcal{E}_{*,Y}(y) \right\} }_{\text{projection of $D_\nonparametric(\rho_*,r_*)$ onto $\tangent_{X \mid Y,A}$}} + \underbrace{ \frac{1-a}{\rho_*} \left\{ \mathcal{E}_{*,Y}(y) - r_* \right\}}_{\text{projection of $D_\nonparametric(\rho_*,r_*)$ onto $\tangent_{A,Y}$}}.$$
    The projection of $D_\nonparametric(\rho_*,r_*)$ onto $\tangent_\LSgentargetshift$ is the sum of (i) $o \mapsto \frac{1-a}{\rho_*} \left\{ \mathcal{E}_{*,Y}(y) - r_* \right\}$ and (ii) the projection of $o \mapsto \frac{1-a}{\rho_*} \left\{ \ell(x,y) - \mathcal{E}_{*,Y}(y) \right\}$ onto $\tangent_{\LSgentargetshift,X \mid Y,A}$. We next find the latter.

    Suppose that,   for some functions $H_*$, $\lambda_1$ and $\lambda_2$, the latter projection is
    \begin{align*}
        o &\mapsto (1-a) H_{*}(x \mid y) + a H_1(x \mid y; \lambda_1,\lambda_2).
    \end{align*}
  With the notation $\Phi$ defined in Theorem~\ref{thm: ls shift EIF}, we have that the difference between $o \mapsto \frac{1-a}{\rho_*}$ $\times \left\{ \ell(x,y) - \mathcal{E}_{*,Y}(y) \right\}$ and the above function is
    \begin{align*}
        & (1-a) \left\{ \Phi(x,y) - H_{*}(x \mid y) \right\} 
        - a H_1(x \mid y; \lambda_1,\lambda_2),
    \end{align*}
     which lies in $\tangent_{\LSgentargetshift}^\perp$. Therefore,
     \begin{align}
        &\expect \left[ \left\{ \Phi(X,Y) - H_{*}(X \mid Y) \right\} \Lambda(X,Y) \mid A=0,Y \right] = 0, \label{eq: LS shift EIF1} \\
        &\expect \left[ \left\{ \Phi(X,Y) - H_{*}(X \mid Y) \right\} (\log \tau_*)'(X \mid Y) \mid A=0,Y \right] = 0, \label{eq: LS shift EIF2} \\
        \begin{split}
        & H_1(x \mid y; \lambda_1,\lambda_2) = \frac{g_{*,Y}(y)}{1-g_{*,Y}(y)} \left\{ \Phi(x,y) - H_{*}(\psi_y(x) \mid y) \right\}.
        \end{split} \label{eq: LS shift EIF3}
    \end{align}
    Rearranging \eqref{eq: LS shift EIF3}, by the change of variables $z=\psi_y^{-1}(x)$ and then changing the notation from $z$ to $x$, we have
    \begin{align}
    \begin{split}
        & H_{*}(x \mid y) 
        = g_{*,Y}(y) \Phi(x,y) - (1-g_{*,Y}(y)) \\
        &\times \left\{ \tr(\lambda_1(y) W_*(y)^{-1}) + \{(\log \tau_*)'(x \mid y)\}^\top \left[ \lambda_1(y) W_*(y)^{-1} (x-b_*(y)) + \lambda_2(y) \right] \right\} \\
        &= g_{*,Y}(y) \Phi(x,y) - (1-g_{*,Y}(y)) \cdot \big\{ \tr \left( \lambda_1(y) W_*(y)^{-1} \Lambda(x,y) \right) + \{ (\log \tau_*)'(x \mid y) \}^\top \lambda_2(y) \big\}.
    \end{split} \label{eq: LS shift EIF3 solved}
    \end{align}
    Plug $H_*$ in \eqref{eq: LS shift EIF3 solved} into \eqref{eq: LS shift EIF1} and \eqref{eq: LS shift EIF2} to obtain that, with
    $$
    N = \Phi(X,Y) + \tr \left( \lambda_1(Y) W_*(Y)^{-1} \Lambda(X,Y) \right) + \{ (\log \tau_*)'(X \mid Y) \}^\top \lambda_2(Y),
    $$
    we have
    \begin{align*}
        & \expect [ N \cdot \Lambda(X,Y) \mid A=0,Y ] = 0, \qquad
         \expect [  N \cdot (\log \tau_*)'(X \mid Y) \mid A=0,Y ] = 0.
    \end{align*}
    Although the dimensions of the above two equations are $d \times d$ and $d$, respectively, only one-dimensional summaries of the unknowns $\lambda_1$ and $\lambda_2$ are involved. 
    Thus, we search for solutions of the form $\lambda_1(y) = \xi_1(y) I_{d \times d}$ and $\lambda_2(y) = \xi_2(y) 1_d$ for real-valued functions $\xi_1$ and $\xi_2$. With the notations defined in Theorem~\ref{thm: ls shift EIF}, the above two equations imply the following linear system:
    $$a(y) + M(y) \begin{pmatrix}
    \xi_1(y) \\
    \xi_2(y)
    \end{pmatrix} = 0,$$
    whose solution is
     $$\begin{pmatrix}
    \xi_1(y) \\
    \xi_2(y)
    \end{pmatrix} = - M(y)^{-1} a(y).$$
    Using this solution in \eqref{eq: LS shift EIF3 solved}, we obtain that
    $$H_{*}(x \mid y) = g_{*,Y}(y) \Phi(x,y) - (1-g_{*,Y}(y)) c(x,y)^\top M(y)^{-1} a(y).$$
    Then the projection of $o \mapsto \frac{1-a}{\rho_*} \left\{ \ell(x,y) - \mathcal{E}_{*,Y}(y) \right\}$ onto $\tangent_{\LSgentargetshift,X \mid Y,A}$ is as stated in Theorem~\ref{thm: ls shift EIF}.
    The desired result follows.

    For the other condition~\ref{DScondition: ls conditional shift}, it is additionally assumed that $Y \independent A$, and so the tangent space is $\tangent_{Y} \oplus \tangent_{A} \oplus \tangent_{\LSgentargetshift, X \mid Y,A}$. Theorem~\ref{thm: ls shift EIF2} follows by summing up the projections of $D_\nonparametric(\rho_*,r_*)$ onto these three subspaces.
\end{proof}

\subsection{Invariant density ratio shape}

\begin{proof}[Proof of Theorem~\ref{thm: invariant density ratio shape}]
    \noindent\textbf{Characterization of tangent space:} Since $\const_* = (\int \iota_* \intd \mu)^{-1}$ is the normalizing constant such that $\const_* \iota_*$ is the density of $P_{*,X \mid Y=0,A=0}$, for any distribution $P$ satisfying Condition~\ref{DScondition: invariant density ratio shape}, its density $p$ can be written as
    \begin{align*}
        p(o) &= w_P(a,y) \times \eta_P(x \mid 1,1)^{ay} \eta_P(x \mid 0,1)^{a(1-y)} \\
        &\quad\times \eta_P(x \mid 1,0)^{(1-a)y} \left\{ \const_P \iota_P(x) \right\}^{(1-a)(1-y)},
    \end{align*}
    where $w_P$, $\eta_P$, $\iota_P$ and $\const_P$ are the counterparts (for $P$) of $w_*$, $\eta_*$, $\iota_*$ and $\const_*$ (for $P_*$). We still consider a regular path $\{P^\epsilon: \epsilon \text{ in a neighborhood of } 0\}$ that passes through $P_*$ at $\epsilon=0$, and use a superscript or subscript $\epsilon$ to denote components of $P^\epsilon$.
    With the notations in Theorem~\ref{thm: invariant density ratio shape}, the score function at $\epsilon=0$ equals
    \begin{align*}
        &\left. \frac{\partial \log p^\epsilon(o)}{\partial \epsilon} \right|_{\epsilon=0} \\
        &= \left. \frac{\partial \log w_\epsilon(y,a)}{\partial \epsilon} \right|_{\epsilon=0} + ay \left. \frac{\partial \log \eta_\epsilon(x \mid 1,1)}{\partial \epsilon} \right|_{\epsilon=0} + a(1-y) \left. \frac{\partial \log \eta_\epsilon(x \mid 0,1)}{\partial \epsilon} \right|_{\epsilon=0} \\
        &\quad+ (1-a) y \left. \frac{\partial \log \eta_\epsilon(x \mid 1,0)}{\partial \epsilon} \right|_{\epsilon=0} \\
        &\quad+ (1-a)(1-y) \Bigg\{ \left. \frac{\partial \{ \log \eta_\epsilon(x \mid 1,0) + \log \eta_\epsilon(x \mid 0,1) - \log \eta_\epsilon(x \mid 1,1) \}}{\partial \epsilon} \right|_{\epsilon=0} + \left. \frac{\intd \log \const_\epsilon}{\intd \epsilon} \right|_{\epsilon=0} \Bigg\}
    \end{align*}
    where
    \begin{align*}
        \left. \frac{\intd \log \const_\epsilon}{\intd \epsilon} \right|_{\epsilon=0} &= - \const_* \int \Bigg\{ \left. \frac{\partial \eta_\epsilon(\cdot \mid 1,0)}{\partial \epsilon} \right|_{\epsilon=0} \frac{\eta_*(\cdot \mid 0,1)}{\eta_*(\cdot \mid 1,1)} + \left. \frac{\partial \eta_\epsilon(\cdot \mid 0,1)}{\partial \epsilon} \right|_{\epsilon=0} \frac{\eta_*(\cdot \mid 1,0)}{\eta_*(\cdot \mid 1,1)} \\
        &\qquad\qquad\qquad- \left. \frac{\partial \eta_\epsilon(\cdot \mid 1,1)}{\partial \epsilon} \right|_{\epsilon=0} \frac{\eta_*(\cdot \mid 0,1) \eta_*(\cdot \mid 1,0)}{\eta_*(\cdot \mid 1,1)^2} \Bigg\} \intd \mu \\
        &= - \const_* \int \iota_*(\cdot) \left. \frac{\partial \{ \log \eta_\epsilon(\cdot \mid 1,0) + \log \eta_\epsilon(\cdot \mid 0,1) - \log \eta_\epsilon(\cdot \mid 1,1) \}}{\partial \epsilon} \right|_{\epsilon=0} \intd \mu
    \end{align*}
    Since the following score functions are arbitrary under Condition~\ref{DScondition: invariant density ratio shape}:
    \begin{align*}
        & (y,a) \mapsto \left. \frac{\partial \log w_\epsilon(y,a)}{\partial \epsilon} \right|_{\epsilon=0}, \qquad x \mapsto \left. \frac{\partial \log \eta_\epsilon(x \mid y,a)}{\partial \epsilon} \right|_{\epsilon=0} \quad \eqfor (y,a) \in \{(1,1),(0,1),(1,0)\},
    \end{align*}
    the tangent space $\tangent_{\IDRS} = \tangent_{A,Y} \oplus \tangent_{\IDRS, X \mid Y,A} = \tangent_{Y} \oplus \tangent_{A \mid Y} \oplus \tangent_{\IDRS, X \mid Y,A}$ where $\tangent_{\IDRS, X \mid Y,A}$ is the $L^2_0(P_*)$-closure of
    \begin{align*}
        \mathcal{H} &:= \Bigg\{ o \mapsto ay H_{1,1}(x) + a (1-y) H_{0,1}(x) + (1-a) y H_{1,0}(x) \\
        &\qquad\qquad+ (1-a)(1-y) \left[ H_{1,0}(x) + H_{0,1}(x) - H_{1,1}(x) - \const_* \int \iota_* (H_{1,0} + H_{0,1} - H_{1,1}) \intd \mu \right]:\\
        &\qquad\quad H_{y,a} \in L^2_0(P_{*,X \mid Y=y,A=a}) \eqfor (y,a) \in \{(1,1),(0,1),(1,0)\} \Bigg\}.
    \end{align*}

    \noindent\textbf{Characterization of orthogonal complement of tangent space:} Since
    \begin{align*}
        \tangent_\IDRS^\perp \subset \tangent_{X \mid Y,A} &= \Big\{f: o \mapsto ay f_{1,1}(x) + a(1-y) f_{0,1}(x) + (1-a)y f_{1,0}(x) + (1-a) (1-y) f_{0,0}(x): \\
        &\qquad\qquad f_{y,a} \in L^2_0(P_{*,X \mid Y=y,A=a}) \eqfor y,a \in \{0,1\} \Big\},
    \end{align*}
    we have that $\tangent_\IDRS^\perp$ equals
    \begin{align*}
        & \Bigg\{ f: o \mapsto ay f_{1,1}(x) + a(1-y) f_{0,1}(x) + (1-a)y f_{1,0}(x) + (1-a) (1-y) f_{0,0}(x): \\
        &\quad f_{y,a} \in L^2_0(P_{*,X \mid Y=y,A=a}) \eqfor y,a \in \{0,1\}, \expect_{P_*}[f(O) g(O)] = 0 \text{ for all } g \in \mathcal{H} \Bigg\} \\
        &= \Bigg\{ f: o \mapsto ay f_{1,1}(x) + a(1-y) f_{0,1}(x) + (1-a)y f_{1,0}(x) + (1-a) (1-y) f_{0,0}(x): \\
        &\qquad \int \left[ w_*(1,1) \eta_*(\cdot \mid 1,1) f_{1,1}(\cdot) - w_*(0,0) \const_* \iota_*(\cdot) f_{0,0}(\cdot) \right] H_{1,1}(\cdot) \intd \mu \\
        &\qquad\quad+ \int \left[ w_*(0,1) \eta_*(\cdot \mid 0,1) f_{0,1}(\cdot) + w_*(0,0) \const_* \iota_*(\cdot) f_{0,0}(\cdot) \right] H_{0,1}(\cdot) \intd \mu \\
        &\qquad\quad+ \int \left[ w_*(1,0) \eta_*(\cdot \mid 1,0) f_{1,0}(\cdot) + w_*(0,0) \const_* \iota_*(\cdot) f_{0,0}(\cdot) \right] H_{1,0}(\cdot) \intd \mu = 0 \\
        &\qquad\quad \text{for any } H_{y,a} \in L^2_0(P_{*,X \mid Y=y,A=a}) \where (y,a) \in \{(1,1),(0,1),(1,0)\} \\
        &\qquad f_{y,a} \in L^2_0(P_{*,X \mid Y=y,A=a}) \eqfor y,a \in \{0,1\} \Bigg\},
    \end{align*}
    where we have used the fact that $\const_* \int \iota_* (H_{1,0} + H_{0,1} - H_{1,1}) \intd \mu$ is a scalar and $\expect_{P_*}[f_{0,0}(X) \mid Y=0,A=0]=0$.
    Since each $H_{y,a}$ is arbitrary, we have that $\tangent_\IDRS^\perp$ equals
    \begin{align*}
        &\Big\{ f: o \mapsto ay f_{1,1}(x) + a(1-y) f_{0,1}(x) + (1-a)y f_{1,0}(x) + (1-a) (1-y) f_{0,0}(x): \\
        &\qquad  w_*(1,1) \eta_*(x \mid 1,1) f_{1,1}(x) - w_*(0,0) \const_* \iota_*(x) f_{0,0}(x) = 0, \\
        &\qquad w_*(0,1) \eta_*(x \mid 0,1) f_{0,1}(x) + w_*(0,0) \const_* \iota_*(x) f_{0,0}(x) = 0, \\
        &\qquad w_*(1,0) \eta_*(x \mid 1,0) f_{1,0}(x) + w_*(0,0) \const_* \iota_*(x) f_{0,0}(x) = 0, \\
        &\qquad f_{y,a} \in L^2_0(P_{*,X \mid Y=y,A=a}) \eqfor y,a \in \{0,1\} \Big\} \\
        &= \Bigg\{ f: o \mapsto f_{0,0}(x) w_*(0,0) \const_* \iota_*(x) \times \Bigg[ \frac{ay}{w_*(1,1) \eta_*(x \mid 1,1)} - \frac{a(1-y)}{w_*(0,1) \eta_*(x \mid 0,1)} \\
        &\qquad\qquad\quad- \frac{(1-a) y}{w_*(1,0) \eta_*(x \mid 1,0)} + \frac{(1-a)(1-y)}{w_*(0,0) \const_* \iota_*(x)} \Bigg]: f_{0,0} \in L^2_0(P_{*, X \mid Y=0,A=0}) \Bigg\}.
    \end{align*}

    \noindent\textbf{Characterization of the efficient influence function:} By Lemma~\ref{lemma: orthogonality}, using a similar argument as in the proof of Theorems~\ref{thm: ls shift EIF} and \ref{thm: ls shift EIF2}, the efficient ficient influence function must be the sum of (i) $o \mapsto \frac{1-a}{\rho_*} \left\{ \mathcal{E}_{*,Y}(y) - r_* \right\}$ and (ii) the projection of $o \mapsto \frac{1-a}{\rho_*} \left\{ \ell(x,y) - \mathcal{E}_{*,Y}(y) \right\}$ onto $\tangent_{\IDRS,X \mid Y,A}$. Suppose that this projection is
    \begin{align}
    \begin{split}
        o &\mapsto ay H_{1,1}(x) + a (1-y) H_{0,1}(x) + (1-a) y H_{1,0}(x) \\
        &\quad+ (1-a)(1-y) \left[ H_{1,0}(x) + H_{0,1}(x) - H_{1,1}(x) - \const_* \int \iota_* (H_{1,0} + H_{0,1} - H_{1,1}) \intd \mu \right]
    \end{split} \label{eq: invariant density ratio shape assumed projection}
    \end{align}
    for some functions $H_{1,1}$, $H_{0,1}$ and $H_{1,0}$. Then, the difference between $o \mapsto \frac{1-a}{\rho_*} \left\{ \ell(x,y) - \mathcal{E}_{*,Y}(y) \right\}$ and this function is
    \begin{align*}
        o &\mapsto -ay H_{1,1}(x) - a(1-y) H_{0,1}(x) + (1-a) y \left\{ \frac{\ell(x,1) - \mathcal{E}_{*,Y}(1)}{\rho_*} - H_{1,0}(x) \right\} \\
        &+ (1-a)(1-y) \Bigg\{ \frac{\ell(x,0) - \mathcal{E}_{*,Y}(0)}{\rho_*} - H_{1,0}(x) - H_{0,1}(x) + H_{1,1}(x) \\
        &\qquad\qquad\qquad\qquad\qquad+ \const_* \int \iota_* (H_{1,0} + H_{0,1} - H_{1,1}) \intd \mu \Bigg\},
    \end{align*}
    which must lie in $\tangent_{\IDRS,X \mid Y,A}^\perp$, namely equals
    \begin{align*}
        o &\mapsto f_{0,0}(x) w_*(0,0) \const_* \iota_*(x) \times \Bigg\{ \frac{ay}{w_*(1,1) \eta_*(x \mid 1,1)} - \frac{a(1-y)}{w_*(0,1) \eta_*(x \mid 0,1)} \\
        &\qquad- \frac{(1-a) y}{w_*(1,0) \eta_*(x \mid 1,0)} + \frac{(1-a)(1-y)}{w_*(0,0) \const_* \iota_*(x)} \Bigg\}
    \end{align*}
    for some function $f_{0,0} \in L^2_0(P_{*,X \mid Y=0,A=0})$. This implies that
    \begin{align*}
        & H_{1,1}(x) = -f_{0,0}(x) \frac{w_*(0,0) \const_* \iota_*(x)}{w_*(1,1) \eta_*(x \mid 1,1)}, \quad
         H_{0,1}(x) = f_{0,0}(x) \frac{w_*(0,0) \const_* \iota_*(x)}{w_*(0,1) \eta_*(x \mid 0,1)}, \\
        & \frac{\ell(x,1) - \mathcal{E}_{*,Y}(1)}{\rho_*} - H_{1,0}(x) = -f_{0,0}(x) \frac{w_*(0,0) \const_* \iota_*(x)}{w_*(1,0) \eta_*(x \mid 1,0)}, \\
        & \frac{\ell(x,0) - \mathcal{E}_{*,Y}(0)}{\rho_*} - H_{1,0}(x) - H_{0,1}(x) + H_{1,1}(x) + \const_* \int \iota_* (H_{1,0} + H_{0,1} - H_{1,1}) \intd \mu = f_{0,0}(x).
    \end{align*}
    Plugging $H_{1,1}$, $H_{0,1}$ and $H_{1,0}$ from the first three equations into the last equation, with notations from Theorem~\ref{thm: invariant density ratio shape} , we obtain the following linear integral equation for unknown $f_{0,0}$:
    \begin{align}
        & \mathcal{J} w_*(0,0) \const_* \iota_* f_{0,0} \nonumber \\
        &= \mathcal{I} + \expect_{P_*} \left[ \left( \mathcal{J}(X) - \frac{1}{w_*(0,0) \const_* \iota_*(X)} \right) w_*(0,0) \const_* \iota_*(X) f_{0,0}(X) \mid Y=0,A=0 \right]. \label{eq: invariant density ratio shape integral eq}
    \end{align}
    Multiplying both sides by the function $\{ \mathcal{J} - 1/(w_*(0,0) \const_* \iota_*) \}/\mathcal{J}$ and integrating both sides with respect to the measure $P_{*,X \mid Y=0,A=0}$, we have that
    \begin{align*}
        & \expect_{P_*} \left[ \left( \mathcal{J}(X) - \frac{1}{w_*(0,0) \const_* \iota_*(X)} \right) w_*(0,0) \const_* \iota_*(X) f_{0,0}(X) \mid Y=0,A=0 \right] \\
        &= \expect_{P_*} \left[ \frac{\left( \mathcal{J}(X) - \frac{1}{w_*(0,0) \const_* \iota_*(X)} \right) \mathcal{I}(X) }{\mathcal{J}(X)} \mid Y=0,A=0 \right] \\
        &\quad+ \expect_{P_*} \left[ \left( \mathcal{J}(X) - \frac{1}{w_*(0,0) \const_* \iota_*(X)} \right) w_*(0,0) \const_* \iota_*(X) f_{0,0}(X) \mid Y=0,A=0 \right] \\
        &\qquad\times \expect_{P_*} \left[ \frac{\left( \mathcal{J}(X) - \frac{1}{w_*(0,0) \const_* \iota_*(X)} \right)}{\mathcal{J}(X)} \mid Y=0,A=0 \right].
    \end{align*}
    Thus,
    $$\expect_{P_*} \left[ \left( \mathcal{J}(X) - \frac{1}{w_*(0,0) \const_* \iota_*(X)} \right) w_*(0,0) \const_* \iota_*(X) f_{0,0}(X) \mid Y=0,A=0 \right] = \mathcal{D}.$$
    Plug this into \eqref{eq: invariant density ratio shape integral eq} to obtain that the solution in $f_{0,0}$ is
    $$f_{0,0} = \frac{\mathcal{I} + \mathcal{D}}{\mathcal{J} w_*(0,0) \const_* \iota_*}.$$
    Plug this solution into \eqref{eq: invariant density ratio shape assumed projection} to obtain the projection and thus the efficient influence function claimed in Theorem~\ref{thm: invariant density ratio shape}.
\end{proof}

\bibliographystylesupp{abbrvnat}
\bibliographysupp{ref}

\end{document}